 	\definecolor{myorange}{rgb}{1,0.5,0}
 	 	\definecolor{mygreen}{rgb}{0,1,0}
 	 	\definecolor{yellowgreen}{rgb}{0.5,1,0}
 	 	\definecolor{bluegreen}{rgb}{0,1,1}
 	 	\definecolor{bluebluered}{rgb}{0.5,0,1}
 	 	\definecolor{bluered}{rgb}{1,0,1}
 	 	\definecolor{blueredred}{rgb}{1,0,0.5}
\tikzset{
hatch size/.store in=\hatchsize,
hatch angle/.store in=\hatchangle,
hatch line width/.store in=\hatchlinewidth,
hatch size=5pt,
hatch angle=0pt,
hatch line width=.5pt,
}
\renewcommand{\AA}{\mathbb{A}} 
\newcommand{\BB}{\mathbb{B}} 
\newcommand{\CC}{\mathbb{C}}
\newcommand{\NN}{\mathbb{N}}
\newcommand{\PP}{\mathbb{P}} 
\newcommand{\RR}{\mathbb{R}} 
\renewcommand{\SS}{\mathbb{S}}
\newcommand{\TT}{\mathbb{T}} 
\newcommand{\VV}{\mathbb{V}}
\newcommand{\ZZ}{\mathbb{Z}}
\newcommand{\cE}{\mathcal{E}}
\newcommand{\cH}{\mathcal{H}}
\newcommand{\cO}{\mathcal{O}}
\renewcommand{\a}{\alpha}
\newcommand{\D}{\Delta} 
\renewcommand{\d}{\delta} 
\newcommand{\G}{\Gamma}
\newcommand{\g}{\gamma}
\newcommand{\la}{\lambda}
\renewcommand{\b}{\beta} 
\newcommand{\Om}{\Omega}
\newcommand{\om}{\omega} 
\newcommand{\s}{\sigma}
\newcommand{\eps}{\varepsilon}
\newcommand{\tr}{\mathrm{tr}}
\renewcommand{\b}{\beta}
\newcommand{\oo}{\infty}
\newcommand{\sm}{\setminus}
\newcommand{\es}{\varnothing}
\newcommand{\se}{\subseteq}
\newcommand{\crit}{\mathrm{c}}
\newcommand{\GL}{\mathrm{GL}}
\newcommand{\ct}{\mathrm{ct}}
\newcommand{\one}{\hbox{\rm 1\kern-.27em I}}
\newcommand{\End}{\text{End}}
\newcommand{\Id}{\mathtt{Id}}
\newcommand{\ab}{\textsc{ab}}
\newcommand{\wb}{\textsc{wb}}
\newcommand{\mb}{\textsc{mb}}
\newcommand{\be}{\begin{equation}}
\newcommand{\ee}{\end{equation}}
\newtheoremstyle{slthm}% theoremstyle with slanted body type
{}%      Space above (empty=\topspace)
{\baselineskip}%      Space below 
{\slshape}%         Body font
{\parindent}%    Indent amount (\parindent = para indent)
{\scshape}% Thm head font
{.}%        Punctuation after thm head
{ }%     Space after thm head: " " = normal interword space
{}%         Thm head spec (can be left empty, meaning `normal')
\theoremstyle{slthm}
\newtheorem{definition}{Definition}[section]
\newtheorem{theorem}[definition]{Theorem}
\newtheorem{proposition}[definition]{Proposition}
\newtheorem{lemma}[definition]{Lemma}
\title
\author{J. E. Bj\"ornberg}\thanks{JEB: Chalmers University of Technology and University of
  Gothenburg, Sweden. jakob.bjornberg@gu.se}
\author{H. Rosengren}\thanks{HR: Chalmers University of Technology and University of Gothenburg,
    Sweden. hjalmar@chalmers.se}
\author{K. Ryan}\thanks{KR: University of Vienna, Austria. kieran.ryan@univie.ac.at}
\date{\today}
\begin{document}
	
\begin{abstract}
  We present a detailed analysis of certain  quantum spin
  systems with inhomogeneous (non-random) mean-field
  interactions.
  Examples include, but are not limited to, the interchange- and spin
  singlet projection interactions on complete bipartite graphs.  Using
  two instances of the 
representation theoretic framework of Schur--Weyl duality, we
  can explicitly compute the free energy and other thermodynamic
  limits in the models we consider.  This allows us  to
  describe the phase-transition,
  the ground-state phase diagram, and
  the expected structure of extremal
  states.
	\end{abstract}
	
	\maketitle
	
		\tableofcontents
	
	\section{Introduction and results}

	When Werner Heisenberg in 1928 introduced his famous model for
	ferromagnetism, he described it in terms of an \emph{exchange
		interaction} between neighbouring valence electrons (``Austausch von
	Elektronen'', 
	\cite[p. 621]{heisenberg}).  In modern notation, for the
	spin-$\tfrac12$ system he was considering, this interaction can be
	written as $T_{i,j}=2(\SS_i\cdot\SS_j)+\tfrac12$, where $T_{i,j}$ acts
	on a pure tensor $v_i\otimes v_j$ in $\CC^2\otimes\CC^2$ by
	transposing the factors, $T_{i,j}(v_i\otimes v_j)=v_j\otimes v_i$,
	and $\SS=(S^{(1)},S^{(2)},S^{(3)})$ are spin-$\tfrac12$-matrices.  Two natural
	generalisations to higher spin immediately suggest themselves:  we can
	take the interaction to be the transposition $T_{i,j}$ acting on
	$\CC^r\otimes \CC^r$, or to be 
	$\SS_i\cdot\SS_j$, where the $\SS$ are now spin-$S$-matrices and
	$r=2S+1$.  For $S>\tfrac12$, these choices are no longer
	equivalent; while both are natural generalisations, some
	authors 
	reserve the name Heisenberg model for the 
	model with interaction
	$\SS_i\cdot\SS_j$.  The model with interaction $T_{i,j}$ has been
	called the \emph{interchange model} and is one of the main topics of
	this paper.
	
	The name \emph{interchange model} can be traced back 
	to works by Harris
	\cite{harris}, Powers \cite{powers}, and T\'oth \cite{toth-93}, and is
	motivated by a probabilistic representation of the model.  Powers
	\cite{powers} was first to notice that the ferromagnetic 
	(spin-$\tfrac12$) Heisenberg model can be represented in
	terms of a random walk on permutations generated by transpositions.
	The latter random walk was  constructed on infinite lattices by Harris
	\cite{harris}.  T\'oth
	\cite{toth-93} was first to use this representation to obtain an
	important result for the Heisenberg model:  
	a bound on the free energy of the model on
	$\ZZ^3$ that was the best known for many years \cite{seiringer}. 
	The underlying random walk on permutations has come to be known as the
	interchange \emph{process} in the literature on mixing times of
	Markov chains \cite{aldous-fill}.
	The present paper does not use the probabilistic representation,
	however; indeed our methods apply also in cases where such a
	representation is not available.
	
	For the \emph{antiferromagnetic} spin-$\tfrac12$ Heisenberg model,
	Aizenman and Nachtergaele \cite{an} discovered a similar
	probabilistic representation based on the identity 
	$P_{i,j}=\tfrac12-2\SS_i\cdot\SS_j$ where $P_{i,j}$ is (twice) the projection
	onto the singlet subspace of $\CC^2\otimes\CC^2$ (eigenspace for the
	total spin operator with eigenvalue 0).  On a bipartite graph, such as
	the line $\ZZ$ considered by Aizenman and Nachtergaele, the Hamiltonian with interactions $P_{i,j}$ is
	unitarily equivalent to that with interactions $Q_{i,j}$ defined by 
	\be\label{eq:Q}
	\langle e_{\a_1}\otimes e_{\a_2}|
	Q_{i,j}
	|e_{\a_3}\otimes e_{\a_4}\rangle=
	\d_{\a_1,\a_2} \d_{\a_3,\a_4},
	\ee
	where the $e_\alpha$ are a basis for $\CC^2$.
	The interaction $Q_{i,j}$ has a natural interpretation in terms of
	random loops, and
	plays a central role in the present
	work.  The definition \eqref{eq:Q} generalises straightforwardly to
	higher spin.  
	%For integer spin (odd $r$) $P_{i,j}$ and $Q_{i,j}$ are
	%always unitarily equivalent.
	
	If we take the underlying lattice to be the \emph{complete graph}
	$K_n$, consisting of $n$ vertices with an edge between each pair of
	distinct vertices, then the interchange model is a mean-field system
	with Hamiltonian
	\be\label{eq:intch}
	-\frac1n\sum_{1\leq i<j\leq n} T_{i,j},
	\quad\mbox{acting on } (\CC^r)^{\otimes n},\quad  r\geq 2.
	\ee
	This model was studied in the papers of Bj\"ornberg \cite{Bjo16,BFU20}, where the key
	step of the analysis was to note that the Hamiltonian \eqref{eq:intch} is a central element of
	the group algebra $\CC[S_n]$ of the symmetric group, represented on
	the tensor space $(\CC^r)^{\otimes n}$.  This means that the
	eigenspace decomposition for the Hamiltonian \eqref{eq:intch} coincides with the decomposition of 
	$(\CC^r)^{\otimes n}$ into irreducible $S_n$-modules, which is
	well-studied.  Ryan \cite{Ryan21} implemented a similar approach for the
	model with Hamiltonian
	\be\label{eq:ryan}
	-\frac1n\sum_{1\leq i<j\leq n} (a \,T_{i,j}+b \,Q_{i,j})
	\quad\mbox{acting on } (\CC^r)^{\otimes n},
	\ee
	with $a,b\in\RR$ and $r\ge2$, which can similarly be diagonalised using the irreducible
	representations of the \emph{Brauer algebra} (defined below).  
	
	The unifying principle behind this approach to determining the eigenspace decomposition of the Hamiltonian %of the papers \cite{Bjo16,BFU20,ryan} 
	is a classical algebraic theory called \emph{Schur--Weyl duality}.  
	This term is used for specific instances
	of a general result in representation theory called the \emph{double
		centraliser theorem}, which states the following 
	\cite[Theorem~4.54]{etingof}.
	Let $\VV$ be a finite-dimensional vector space, and 
	$\AA\se \End(\VV)$ a semi-simple algebra of linear mappings
	(endomorphisms)
	$\VV\to\VV$.  Then the centraliser $\BB$ of $\AA$, i.e.\ the algebra
	of endomorphisms commuting with all elements of $\AA$, is also
	semi-simple, and as a representation of $\AA\otimes\BB$ we have
	\be\label{eq:dct}
	\VV=\bigoplus_i U_i\otimes V_i,
	\ee
	where the $U_i$ (respectively $V_i$) are non-isomorphic
	irreducible representations of $\AA$ (respectively $\BB$).
	The most famous instances of this (and relevant in the
	present work) are obtained by letting $\VV=(\CC^r)^{\otimes n}$.
	If we let $\AA$ consist of all invertible endomorphisms of $\CC^r$,
	acting diagonally on $\VV$, then $\BB$ is generated by the
	permutations of the tensor factors of $\VV$:  this gives the
	Schur--Weyl duality between the general linear group $\GL_r(\CC)$ and
	the symmetric group $S_n$ (see \eqref{eq:swd} for details) which
	facilitates the analysis of the interchange model \eqref{eq:intch}.   If
	instead we take $\AA$ to consist of \emph{orthogonal} matrices, then
	$\BB$ is the Brauer-algebra used in the analysis of \eqref{eq:ryan}.  
	
	Let us note that the present work follows a line of papers analysing the
	interchange process and Heisenberg model with algebraic methods
	(including the aforementioned \cite{Bjo16}, \cite{BFU20},
	\cite{Ryan21}). Alon and Kozma \cite{A-K-13} analysed the interchange
	process on a general graph, and estimated the number of $k$-cycles at
	a given time; Berestycki and Kozma \cite{B-K-15} gave an exact formula
	for the same on the complete graph; Alon and Kozma \cite{A-K-18} gave
	an exact formula for the magnetisation of the mean-field
	spin-$\frac12$ Heisenberg model.

	In this work we carry the methods described above further, 
	to  inhomogeneous
	models on the complete graph where the coupling constants
	between different vertices take finitely many different values.
	The models for which
	our analysis goes the deepest are what we call \emph{two-block} 
	models, where coupling constants can take at most three values
        (one each for the interactions within each of the two blocks,
        and one for interactions between the two blocks). 
	Our results on these models come in several parts.
        In Theorems \ref{thm:FE-AB} and \ref{thm:FE-WB} 
we explicitly compute the
        free energy.  In Propositions \ref{prop:crittemp} to
        \ref{prop:unique}, we give results on phase transitions, and,
        for certain restrictions on the parameters, we compute the
        critical temperature.  In Theorems
        \ref{thm:TS-twoblock} and \ref{thm:mag} we compute a
        magnetisation and limits of certain 
	correlation functions.   
Using the results mentioned above, in Section \ref{ssec:gs}
we completely describe the gound-state phase diagram of the models;
and in Section \ref{sec:heuristics} we give heuristic descriptions
of the extremal Gibbs states and phase driagrams at finite temperature.
At the end of the paper, in Section \ref{sec:MB},
we give the free energy for what we call
\emph{multi-block} models, where
coupling constants can take any finite number of values, and
where we allow certain many-body interactions.

	%	Perhaps the two most notable 
	Two highlights 
	of the new results in this paper 
	are the following. Firstly, we give a formula for the critical temperature 
	of the spin-$\tfrac{1}{2}$ quantum Heisenberg model on the complete 
	bipartite graph; see Proposition \ref{prop:r2crittemp} 
	with $a=b=0$. Secondly, a 
	curious equality of the free energy of the model on the complete 
	bipartite graph with interaction via transpositions $T_{i,j}$ 
	\eqref{eq:intch}, and the model with interaction via the (scaled) spin-singlet 
	projection $P_{i,j}$;
	see Theorem \ref{thm:FE-WB},
	also with $a=b=0$. We wonder 
	whether this equality holds for arbitrary bipartite graphs.
	%, and whether it implies a relationship between 
	%	the two models' probabilistic representations as interchange processes 
	%	(see \cite{ueltschi}).

	\subsection{Free energy}

	For $a,b,c\in\RR$, and $1\le m\le n$, we define the
	\emph{{\ab}-interchange-model}, or
	\emph{{\ab}-model} for short, through its  Hamiltonian 
	\be\label{eq:H-AB}
	H_n^\ab=-\frac1n\Big(
	a\sum_{1\leq i<j\leq m} T_{i,j}+
	b\sum_{m+1\leq i<j\leq n} T_{i,j}+
	c\sum_{1\leq i\leq m<j\leq n} T_{i,j}
	\Big),
	\ee
	acting on $\VV=(\CC^r)^{\otimes n}$.
	For $\b>0$, introduce the partition function 
	%\be\label{eq:Z-AB}
	$Z_{n}^\ab(\b)=\tr\big[e^{-\beta H^\ab_n}\big].$
	%\ee
	We call this a \emph{two-block} model since we may
	think of it as a 
	spin system on a graph with vertex set $\{1,2,\dotsc,n\}$
	partitioned into the two blocks $A=\{1,\dotsc,m\}$
	and $B=\{m+1,\dotsc,n\}$.  The form of the Hamiltonian \eqref{eq:H-AB}
	means that spins at two vertices within $A$ interact with coupling
	constant $a$, spins at two vertices within $B$ interact with coupling
	constant $b$, and the spin at a vertex in $A$ interacts with the spin at
	a vertex in $B$ with coupling constant $c$.
	In the homogeneous case $a=b=c$ we obtain the interchange model on
	the complete graph \eqref{eq:intch}, while if 
	$a=b=0$ and $c\neq 0$  we obtain a model on the
	complete bipartite graph $K_{m,n-m}$.
	
	We write 
	\be\label{eq:F}
	F(x_1,\dotsc,x_r;y_1,\dotsc,y_r)= \textstyle\sum_{i=1}^r f(x_i,y_i),
	\ee
	where $x_i,y_i\geq 0$ and
	\be
	f(x,y)=-x\log x-y\log y+
	\tfrac\b2\big(a x^2+by^2+2cxy\big).
	\ee 
	We have the following result about the free energy:
	
	\begin{theorem}\label{thm:FE-AB}
		Let $a,b,c\in\RR$ be fixed.
		If $n,m\to\oo$ such that $m/n\to\rho\in(0,1)$, then the free energy
		of the model \eqref{eq:H-AB} satisfies 
		\be\label{eq:fe-max-AB}
		\Phi^\ab_\b(a,b,c):=\lim_{n\to\oo} \tfrac1n\log Z_{n}^\ab(\b)
		=\max\; F(x_1,\dotsc,x_r;y_1,\dotsc,y_r)
		\ee
		where the maximum is taken over $x_1,\dotsc,x_r,y_1,\dotsc,y_r\geq 0$ 
		subject to $\sum_{i=1}^rx_i=1-\sum_{i=1}^ry_i=\rho$.
	\end{theorem}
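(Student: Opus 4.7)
The plan is to diagonalise the three commuting class sums $T_A=\sum_{i<j\in A}T_{i,j}$, $T_B=\sum_{m<i<j\le n}T_{i,j}$, and $T_{\rm full}=\sum_{1\le i<j\le n}T_{i,j}$ simultaneously: they commute because $T_{\rm full}$ is central in $\CC[S_n]$ while $T_A,T_B$ act on disjoint tensor factors. Writing $H_n^\ab=-\tfrac{1}{n}\bigl((a-c)T_A+(b-c)T_B+c\,T_{\rm full}\bigr)$, the Schur--Weyl decomposition $(\CC^r)^{\otimes n}=\bigoplus_\lambda U_\lambda\otimes V_\lambda^{S_n}$, followed by the Littlewood--Richardson branching $V_\lambda^{S_n}|_{S_m\times S_{n-m}}=\bigoplus_{\mu,\nu}c_{\mu,\nu}^\lambda\,V_\mu\otimes V_\nu$, together with the Jucys--Murphy fact that each class sum of transpositions acts on an $S_k$-irreducible $V_\tau$ as the scalar $\ct(\tau)$, yields
\[
Z_n^\ab(\b)=\sum_{\lambda,\mu,\nu}(\dim U_\lambda)\,c_{\mu,\nu}^\lambda\,(\dim V_\mu)(\dim V_\nu)\exp\!\Bigl(\tfrac{\b}{n}\bigl[(a-c)\ct(\mu)+(b-c)\ct(\nu)+c\,\ct(\lambda)\bigr]\Bigr).
\]
As sanity checks, $a=b=c$ collapses the triple sum via $\sum_{\mu,\nu}c_{\mu,\nu}^\lambda\dim V_\mu\dim V_\nu=\dim V_\lambda$ to Bj\"ornberg's homogeneous formula, while $c=0$ factorises the trace into two independent block traces.

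Setting $\mu_i=nx_i$, $\nu_i=ny_i$, $\lambda_i=nz_i$, the number of admissible triples is polynomial in $n$, so on the exponential scale the sum is dominated by its largest summand. Standard asymptotics give $\tfrac{1}{n}\log\dim U_\lambda^{\GL_r}\to 0$ (the Weyl dimension is polynomial in $n$); $\tfrac{1}{n}\log c_{\mu,\nu}^\lambda\to 0$ (stretched Littlewood--Richardson coefficients are polynomial in the stretch factor); $\tfrac{1}{n}\log\dim V_\mu^{S_m}\to-\sum_i x_i\log(x_i/\rho)$ via the hook-length formula, and analogously for $V_\nu^{S_{n-m}}$; and $\ct(\mu)/n^2\to\tfrac12\sum_i x_i^2$, similarly for $\nu$ and $\lambda$. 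Thus the only $\lambda$-dependence in the resulting variational problem is the term $\tfrac{\b c}{2}\sum_i z_i^2$.

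One then eliminates $\lambda$ by optimisation. Whenever $c_{\mu,\nu}^\lambda\neq 0$ one has $\lambda\preceq\mu+\nu$ in dominance order, with $c_{\mu,\nu}^{\mu+\nu}=1$; since $\sum_i z_i^2$ is Schur-convex, the supremum is attained at $\lambda=\mu+\nu$, i.e.\ $z_i=x_i+y_i$. Substituting $\sum_i(x_i+y_i)^2=\sum_i x_i^2+\sum_i y_i^2+2\sum_i x_iy_i$ reassembles the exponent into $\tfrac{\b}{2}\!\bigl(a\sum_i x_i^2+b\sum_i y_i^2+2c\sum_i x_iy_i\bigr)$, and combining with the entropy pieces produces the claimed $\max F$. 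The principal obstacle is the case $c<0$, where the exponential in $\sum_i z_i^2$ does not by itself select $\lambda=\mu+\nu$; this is handled by combining the identity $\ct(\mu+\nu)=\ct(\mu)+\ct(\nu)+\sum_i\mu_i\nu_i$ with the sum rule $\sum_\lambda c_{\mu,\nu}^\lambda\dim U_\lambda=\dim U_\mu\dim U_\nu$, which is polynomial in $n$, to provide a matching upper bound on the log scale.
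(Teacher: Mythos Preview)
Your setup through the explicit partition-function formula is correct and matches the paper (Lemma~2.1 there), as are the asymptotics you quote for $\dim U_\lambda$, $c_{\mu,\nu}^\lambda$, $\dim V_\mu$ and $\ct(\cdot)$.

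For $c\geq 0$ your elimination of $\lambda$ is a genuine simplification of the paper's argument. The paper invokes Horn's conjecture (Klyachko, Knutson--Tao) to identify $\{\lambda:c_{\mu,\nu}^\lambda>0\}$ with the spectra of Hermitian sums $X+Y$, and then uses the two-sided trace inequality $\sum_i x_iy_{r+1-i}\le\tr[XY]\le\sum_i x_iy_i$ to extremise $\sum_i z_i^2$. You replace all of this by the elementary facts $\lambda\preceq\mu+\nu$ (dominance) and $c_{\mu,\nu}^{\mu+\nu}=1$, together with Schur-convexity of $\sum_i z_i^2$. For $c\ge 0$ this indeed gives both a tight upper bound (every exponent is $\le$ the $\lambda=\mu+\nu$ value, which in turn is $\le\max F$) and a matching lower bound (choose $\mu,\nu,\lambda=\mu+\nu$ approximating the same-sort maximiser of $F$).

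For $c<0$ there is a genuine gap. Dominance now gives $c\,\ct(\lambda)\ge c\,\ct(\mu+\nu)$, a \emph{lower} bound on each exponent, hence no upper bound on $Z$. And your lower bound via $\lambda=\mu+\nu$ produces the cross term $c\sum_i x_iy_i$ with $x,y$ both sorted decreasingly; for $c<0$ this is \emph{strictly smaller} than $\max F$, which pairs $x_i$ with $y_{r+1-i}$. The identity $\ct(\mu+\nu)=\ct(\mu)+\ct(\nu)+\sum_i\mu_i\nu_i$ and the sum rule $\sum_\lambda c_{\mu,\nu}^\lambda\dim U_\lambda=\dim U_\mu\dim U_\nu$ do not repair either defect: the sum rule says nothing about $\min_\lambda\ct(\lambda)$ (needed for the upper bound on $Z$), nor does it produce a $\lambda$ realising the opposite-sort pairing (needed for the tight lower bound). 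Concretely, for $r=2$, $\mu=(m,0)$, $\nu=(n-m,0)$ with $m<n-m$, one has $c_{\mu,\nu}^{(n-k,k)}>0$ for $0\le k\le m$; the minimiser of $\sum_i\lambda_i^2$ is $\lambda=(n-m,m)$, corresponding to opposite-sort, not $\mu+\nu=(n,0)$. The paper's Horn-plus-trace-inequality argument handles both signs of $c$ uniformly because the trace inequality is two-sided; this is precisely the ingredient missing from your $c<0$ treatment.
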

	
	Note that if  $(x_1,\dotsc,x_r;y_1,\dotsc,y_r)$ 
	is a maximum point of $F$, and
	we order the $x$-entries so that 
	\begin{equation}\label{eq:xgeq}
	x_1\geq x_2\geq \dots\geq x_r,
	\end{equation}
	then for $c>0$ we necessarily have 
	$y_1\geq\dotsb\geq y_r$,
	while for $c<0$ we necessarily have 
	$y_1\leq\dotsb\leq y_r$.
	Indeed, the only term in $F$ which is dependent on the relative order
	of the entries  is the term
	$\sum_{i=1}^r  x_i y_i$, which is indeed maximised when the
	orders are the same and minimised if they are reversed.\\

	We next consider another two-block model but where the
	interaction ``between'' the blocks uses the operator $Q$
	defined in \eqref{eq:Q}.  We let
	\be\label{eq:H-wb}
	H_n^\wb=-\frac1n\Big(
	a\sum_{1\leq i<j\leq m} T_{i,j}+
	b\sum_{m+1\leq i<j\leq n} T_{i,j}+
	c\sum_{1\leq i\leq m<j\leq n} Q_{i,j}
	\Big).
	\ee
	Also let $Z^\wb_{n}(\b)=\tr[e^{-\b H^\wb_n}]$. Let us note
	here that for all $r\ge2$, this model is unitarily equivalent
	to the same model with each $Q_{i,j}$ replaced with $P_{i,j}$,
	the latter being ($r$ times) the projection onto the singlet state:
	\be\label{eq:P}
	\langle e_{\a_1}\otimes e_{\a_2}|
	P_{i,j}
	|e_{\a_3}\otimes e_{\a_4}\rangle=
	(-1)^{\a_1-\a_3}\d_{\a_1,-\a_2} \d_{\a_3,-\a_4}.
	\ee
	(Here we index the basis $e_\a$ for $\CC^r$ with
	$\a\in\{-S,-S+1,\dotsc,S\}$ where $S=(r-1)/2$.)
	Indeed, for the model with $a=b=0$ and $c>0$ the equivalence of
	partition functions was proved by Aizenman and Nachtergaele in
	\cite{an};  we give an algebraic proof for  general $a,b,c\in\RR$
	in Lemma \ref{lem:iso_reps_WB}.  We use the notation {\wb} for this
	model as its analysis is based on 
	the representation theory of the \emph{walled Brauer algebra}, see
	Section \ref{sec:WB}.  Interestingly, this model has the exact same free
	energy as the two-block interchange model:
	
	\begin{theorem}\label{thm:FE-WB}
		Let $a,b,c\in\RR$ be fixed.
		If $n,m\to\oo$ such that $m/n\to\rho\in(0,1)$, then the free energy
		of the model \eqref{eq:H-wb} satisfies 
		\be\label{eq:fe-max-WB}
		\Phi^\wb_\b(a,b,c):=\lim_{n\to\oo} \tfrac1n\log Z_{n}^\wb(\b)
		=\Phi^\ab_\b(a,b,c),
		\ee
		where $\Phi^\ab_\b(a,b,c)$ is given in Theorem \ref{thm:FE-AB}.
	\end{theorem}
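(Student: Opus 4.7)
The plan is to mirror the proof of Theorem \ref{thm:FE-AB}, replacing the Schur--Weyl duality between $\GL_r(\CC)$ and $S_n$ on $(\CC^r)^{\otimes n}$ by the duality between $\GL_r(\CC)$ and the walled Brauer algebra $B_{m,n-m}(r)$ on the mixed tensor space $(\CC^r)^{\otimes m}\otimes((\CC^r)^*)^{\otimes (n-m)}$. Using Lemma \ref{lem:iso_reps_WB}, I would first replace every $Q_{i,j}$ by $P_{i,j}$, and then transfer the Hamiltonian via the canonical isomorphism $(\CC^r)^*\cong\CC^r$ onto the mixed tensor space. Under this transfer, $T_{i,j}$ within block $A$ (resp.\ $B$) becomes the usual transposition on $(\CC^r)^{\otimes m}$ (resp.\ on $((\CC^r)^*)^{\otimes (n-m)}$), while $P_{i,j}$ between blocks becomes, up to a sign, the contraction generator $c_{i,j}$ of the walled Brauer algebra. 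The modified Hamiltonian commutes with the diagonal $\GL_r(\CC)$-action and therefore lies in the image of $B_{m,n-m}(r)$.

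Walled Schur--Weyl duality then yields a bimodule decomposition
\be
(\CC^r)^{\otimes m}\otimes((\CC^r)^*)^{\otimes (n-m)}=\bigoplus_{(\la,\mu)}V^{\GL_r}_{\la,\mu}\otimes W_{\la,\mu},
\ee
indexed by pairs of partitions $(\la,\mu)$ with $\ell(\la)+\ell(\mu)\le r$ subject to the defect condition $m-|\la|=(n-m)-|\mu|$; here $V^{\GL_r}_{\la,\mu}$ is the mixed tensor irreducible of $\GL_r(\CC)$ and $W_{\la,\mu}$ is the corresponding walled Brauer irreducible. This in turn gives
\be
Z^\wb_n(\b)=\sum_{(\la,\mu)}\dim V^{\GL_r}_{\la,\mu}\cdot \tr_{W_{\la,\mu}}\bigl[e^{-\b H^\wb_n|_{W_{\la,\mu}}}\bigr].
\ee
The within-block sums $\sum_{i<j\in A}T_{i,j}$ and $\sum_{i<j\in B}T_{i,j}$ are central in the subalgebras $\CC[S_m],\CC[S_{n-m}]\subset B_{m,n-m}(r)$ and act on $W_{\la,\mu}$ as scalars expressible through the content sums of $\la$ and $\mu$. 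Under the rescaling $\la_i/n\to x_i$ and $\mu_i/n\to y_i$, they contribute $\tfrac{\b a}{2}\sum x_i^2$ and $\tfrac{\b b}{2}\sum y_i^2$ to the free energy, while the dimension formula for mixed tensor representations combined with Stirling's approximation yields $\tfrac1n\log\dim V^{\GL_r}_{\la,\mu}\to -\sum_i x_i\log x_i-\sum_i y_i\log y_i$.

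Laplace's method then reduces $\tfrac1n\log Z^\wb_n(\b)$ to the maximum of these contributions plus the cross-term contribution addressed below. At the maximizer the walled Brauer defect is $o(n)$, so $|\la|/n\to\rho$ and $|\mu|/n\to 1-\rho$, reproducing the constraint $\sum x_i=\rho$, $\sum y_i=1-\rho$ of Theorem \ref{thm:FE-AB}. The variational problem is then identical to \eqref{eq:fe-max-AB}, yielding $\Phi^\wb_\b=\Phi^\ab_\b$.

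The main obstacle is the cross term $c\sum_{i\in A,j\in B}c_{i,j}$, which is \emph{not} central in $B_{m,n-m}(r)$ -- mirroring the identical difficulty posed by the non-central $c\sum_{i\in A,j\in B}T_{i,j}$ in the proof of Theorem \ref{thm:FE-AB}. To handle it, one further decomposes each $W_{\la,\mu}$ according to the $\GL_r$-weight sectors $(m_1,\dots,m_r)$ on the $(\CC^r)^{\otimes m}$ factor and $(m'_1,\dots,m'_r)$ on the $((\CC^r)^*)^{\otimes (n-m)}$ factor, and shows that on the dominant weight profile (with $m_i/n\to x_i$, $m'_i/n\to y_i$) this term concentrates at the mean-field value $2c\sum x_iy_i$. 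This is presumably also the most delicate step in the proof of Theorem \ref{thm:FE-AB}, and it is precisely the fact that the same saddle-point problem emerges in both cases that accounts for the curious equality of the two free energies highlighted in the introduction.
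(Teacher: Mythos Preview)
Your proposal has a genuine gap at precisely the point you flag as ``the main obstacle.'' You assert that the cross term $c\sum_{i\in A,j\in B}c_{i,j}$ is not central in $B_{m,n-m}(r)$ and therefore requires a further weight-sector decomposition together with a concentration argument. In fact the element
\[
J_{n,m}=\sum_{\substack{1\le i<j\le m\\ \text{or }m<i<j\le n}}(i,j)\;-\;\sum_{1\le i\le m<j\le n}(\overline{i,j})
\]
\emph{is} central in the walled Brauer algebra, and the Hamiltonian can be rewritten as
\[
H_n^\wb=-\tfrac1n\Big((a+c)\sum_{1\le i<j\le m}T_{i,j}+(b+c)\sum_{m<i<j\le n}T_{i,j}-c\,J_{n,m}\Big).
\]
The element $J_{n,m}$ acts on the irreducible $W_{(\la,\mu)}$ as the scalar $\ct(\la)+\ct(\mu)-rt$, so no concentration argument is needed. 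This exactly parallels the \ab\ proof, where the analogous cross term was also \emph{not} handled as a non-central element but rather absorbed into the central $\a_{AB}=\sum_{1\le i<j\le n}(i,j)$; you have misremembered that proof as well.

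There are two downstream consequences of missing this. First, your attribution of the entropy term is wrong: $\tfrac1n\log\dim V^{\GL_r}_{\la,\mu}\to 0$, not $-\sum x_i\log x_i-\sum y_i\log y_i$. The entropy actually comes from the dimensions $d_\pi d_\tau$ of the $S_m\times S_{n-m}$ irreducibles that appear after \emph{branching} each $W_{(\la,\mu)}$ down to $\CC[S_m\times S_{n-m}]$, a step you never take. Second, once you do branch, the nontrivial work is to characterise when the branching coefficient $b^{n,m,r}_{(\la,\mu),(\pi,\tau)}$ is nonzero; the paper shows this is equivalent to the existence of Hermitian $X,Y$ with spectra $\pi,\tau$ and $X-Y$ having spectrum $[\la,\mu]$. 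Expanding $\tr[(X-Y)^2]$ with the coefficients $(a+c),(b+c),-c$ then reproduces \emph{exactly} the function $\phi(X,Y)$ of \eqref{eq:phi-newer}, which is the real reason the two free energies agree. Your claim that the defect is $o(n)$ at the maximiser is neither needed nor obviously true.
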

	
	In the case $r=2$, Theorem \ref{thm:FE-WB} can be deduced from Theorem
	\ref{thm:FE-AB} in the following elementary manner.
	For $r=2$ we have \cite[Section 7.1]{ueltschi}
	\be
	T_{i,j}=2(\SS_i\cdot\SS_j)+\tfrac12,\qquad
	Q_{i,j}=2(S^{(1)}_i S^{(1)}_j-
	S^{(2)}_i S^{(2)}_j+S^{(3)}_i S^{(3)}_j)+\tfrac12.
	\ee
	Letting $W=\big(\begin{smallmatrix}0 & 1 \\ -1&
	0\end{smallmatrix}\big)$
	we have that  $W_j^{-1} T_{i,j} W_j=-Q_{i,j}+1$,
	so conjugating $H_n^\ab(a,b,-c)$ with 
	$\prod_{j=m+1}^nW_j$ gives 
	$H_n^\wb(a,b,c)-cm(n-m)/n$.  Thus
	$\Phi_\b^\wb(a,b,c)=\Phi^\ab_\b(a,b,-c)+c\rho(1-\rho)$.  
	This is  consistent with Theorem \ref{thm:FE-WB}
	since (indicating the dependence on $c$ with a subscript)
	$F_c(x_1,x_2;y_1,y_2)-F_{-c}(x_1,x_2;y_2,y_1)=c(x_1+x_2)(y_1+y_2)=c\rho(1-\rho)$,
	meaning that by Theorem \ref{thm:FE-AB} we have 
	$\Phi^\ab_\b(a,b,-c)+c\rho(1-\rho)=\Phi^\ab_\b(a,b,c)$.
	However, for general $r$ the rank of $T_{i,j}$ is $r(r+1)/2$ while the
	rank of $Q_{i,j}$ is 1, so when $r>2$, conjugating $T_{i,j}$ cannot
	give a linear 
	combination of $Q_{i,j}$ and the identity.

	\subsection{Phase transition and critical temperature}
	
	Next we discuss phase transitions as $\b$ is varied, via the maximiser of the function $F$. Essentially, when a transition is present, we expect the maximiser of $F$ to be fixed (at $\om_0$ \eqref{eq:om0}) for small $\beta$, and then at some critical $\beta_\crit$ to begin to move. This $\beta_\crit$ then corresponds to a point of phase transition in the model. For $\b=\b_\crit$ it can happen either that $\om_0$ is unique or that
	there are other maximum points. We will see that the phase-transition is also reflected in
	the behavior of observables (Theorem \ref{thm:TS-twoblock})
	and the magnetisation (Theorem \ref{thm:mag}). 
	
	In Proposition \ref{prop:crittemp}, we characterise completely
	the values of $a,b,c$ for which there exists such a phase
	transition. When it exists, finding explicit formulae for
	$\beta_\crit$ seems difficult in general; 
	we can do it in two cases, firstly in Proposition \ref{prop:r2crittemp} when $r=2$ 
	(that is, spin $\tfrac12$), and secondly in Proposition
	\ref{prop:tcrittemp} when $c\geq0$, $r\geq3$ and
	\begin{equation}\label{eq:tcond}
	(a-c)\rho=(b-c)(1-\rho)=:t.
	\end{equation}
	In the latter case, we further prove in Proposition
	\ref{prop:unique} that for $\b_\crit<\b<\b_\crit+\eps$ and
	$\eps>0$ small, there is a unique maximiser of $F$ that satisfies \eqref{eq:xgeq}.

	In what follows, we write $\vec x=(x_1,\dotsc,x_r)$,
	$\vec y=(y_1,\dotsc,y_r)$, and
	\be\label{eq:Om}
	\Om=\big\{(\vec x; \vec y):
	x_1,\dotsc,x_r,y_1,\dotsc,y_r\geq 0,\;
	\textstyle\sum_{i=1}^rx_i=1-\sum_{i=1}^ry_i=\rho
	\big\}.
	\ee
	Elements of $\Om$ will
	typically be denoted $\om=(\vec x;\vec y)$.  We write 
	\be\label{eq:om0}
	\om_0=\big(\tfrac\rho r, \tfrac\rho r,\dotsc, \tfrac\rho r; 
	\tfrac{1-\rho}r,\tfrac{1-\rho}r,\dotsc,\tfrac{1-\rho}r\big)
	\in\partial\Om,
	\ee
	and we write
	$Q(x,y)=\tfrac12(ax^2+by^2+2cxy)$
	for the quadratic form appearing in the function $f(x,y)$.

	\begin{proposition}\label{prop:crittemp}
		If $Q$ is negative semidefinite, that is, if
		\be
		a\leq 0,\quad b\leq 0,\quad \mbox{and}\quad  ab\geq c^2, 
		\ee
		then $F$ assumes its maximum value at  $\om_0$ 
		for all $\beta> 0$, and this maximum point is unique.
		Otherwise, there exists a  number $\beta_\crit> 0$ such that 
		$F$ assumes it maximum value at $\om_0$ if and only 
		if $0<\beta\leq \beta_\crit$, and this maximum is unique if
		$0<\b<\b_\crit$.  
	\end{proposition}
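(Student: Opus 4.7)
The plan is to reduce the entire proposition to the behavior of a single convex function of $\beta$, and to isolate all the geometric content in one compactness argument. Writing $S(\omega) = -\sum_i(x_i\log x_i+y_i\log y_i)$ and $Q_{\mathrm{tot}}(\omega) = \sum_i Q(x_i,y_i)$, I would decompose
\[
F_\beta(\omega)-F_\beta(\omega_0) = g(\omega) + \beta\,h(\omega),
\]
where $g(\omega) = S(\omega)-S(\omega_0)$ and $h(\omega) = Q_{\mathrm{tot}}(\omega)-Q_{\mathrm{tot}}(\omega_0)$. Strict concavity of $t\mapsto -t\log t$ gives $g\le 0$ on $\Omega$, with equality only at $\omega_0$. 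Writing $\delta x_i := x_i-\rho/r$, $\delta y_i := y_i-(1-\rho)/r$ and using $\sum\delta x_i = \sum\delta y_i = 0$, the linear part of $h$ drops out, leaving
\[
h(\omega) = \tfrac12\sum_{i=1}^r(\delta x_i,\delta y_i)\,N\,(\delta x_i,\delta y_i)^T,\qquad N = \begin{pmatrix}a&c\\ c&b\end{pmatrix}.
\]

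For the first case, when $Q$ is negative semidefinite, $N\preceq 0$ yields $h\le 0$ pointwise, so $F_\beta(\omega)-F_\beta(\omega_0)\le g(\omega)\le 0$ with equality only at $\omega_0$, which handles both claims. In the remaining case $N\not\preceq 0$, there is a vector $v=(p,q)$ with $v^TNv>0$; perturbing $\omega_0$ by $(\delta x_i,\delta y_i) = \alpha_i v$ with $\sum\alpha_i = 0$ (not all zero) and $|\alpha|$ small stays in $\Omega$ and gives $h(\omega)>0$, whence $\max_\Omega h > 0$. Setting
\[
\phi(\beta) := \max_{\omega\in\Omega}\bigl(g(\omega)+\beta h(\omega)\bigr),
\]
one sees that $\phi$ is convex in $\beta$ (supremum of affine functions), satisfies $\phi(0) = 0$ and $\phi\ge 0$, and $\phi(\beta)\to\infty$. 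Hence $\{\beta\ge 0:\phi(\beta) = 0\}$ is a closed interval $[0,\beta_c]$ with $\beta_c\in[0,\infty)$, given explicitly by
\[
\beta_c = \inf_{\omega\in\Omega,\,h(\omega)>0}\frac{-g(\omega)}{h(\omega)}.
\]
Uniqueness on $(0,\beta_c)$ is then immediate: if $F_\beta(\omega) = F_\beta(\omega_0)$ with $\omega\neq\omega_0$ and $\beta\in(0,\beta_c)$, then $g(\omega)<0$ forces $h(\omega)>0$ and $\beta = -g(\omega)/h(\omega) \ge \beta_c$, a contradiction.

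The main obstacle is showing $\beta_c>0$, which I plan to handle by a compactness-plus-Taylor argument. If $(\omega_n)$ is a minimising sequence in $\{h>0\}$, then boundedness of $h$ on $\Omega$ forces $-g(\omega_n)\to 0$, and since $-g$ vanishes only at $\omega_0$, one gets $\omega_n\to\omega_0$. Setting $\hat u_n = (\omega_n-\omega_0)/|\omega_n-\omega_0|$, second-order Taylor expansion yields
\[
\frac{-g(\omega_n)}{h(\omega_n)} = \frac{\hat u_n^T H_S\hat u_n}{\hat u_n^T H_Q\hat u_n}+o(1),
\]
with $H_S$ the positive-definite Hessian of $-S$ at $\omega_0$ and $H_Q$ the constant Hessian of $h$. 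Extracting a subsequence $\hat u_n\to\hat u$ on the unit sphere of the tangent space to $\Omega$ at $\omega_0$: if $\hat u^T H_Q\hat u > 0$, the limit is a strictly positive generalised eigenvalue of $(H_S,H_Q)$; and if $\hat u^T H_Q\hat u = 0$, the ratio diverges since $\hat u^T H_S\hat u$ is bounded below by the smallest eigenvalue of $H_S>0$. The delicate part is the degenerate limit $\hat u^T H_Q\hat u\downarrow 0$, which requires carefully controlling the numerator and observing that uniform positivity prevents the infimum from reaching zero.
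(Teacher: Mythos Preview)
Your argument is correct and, for most of the proposition, coincides with the paper's proof: the decomposition $F_\beta-F_\beta(\omega_0)=g+\beta h$, the strict concavity of the entropy giving $g\le 0$ with equality only at $\omega_0$, the observation that $h$ is a pure quadratic in the deviations $(\delta x_i,\delta y_i)$ (the paper derives the equivalent identity $h(\omega)=\tfrac1r\sum_{i<j}Q(x_i-x_j,y_i-y_j)$ via a polarisation lemma), the formula $\beta_c=\inf_{h>0}(-g/h)$, and the uniqueness argument for $\beta<\beta_c$ are all the same.

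The one genuine divergence is in proving $\beta_c>0$. The paper does this by a direct concavity argument: the Hessian of $f$ at $(x,y)$ is $\bigl(\begin{smallmatrix}\beta a-1/x & \beta c\\ \beta c & \beta b-1/y\end{smallmatrix}\bigr)$, which is negative definite on the whole box $[0,\rho]\times[0,\rho']$ once $\beta$ is small enough (the conditions \eqref{eq:negdef2} hold at $\beta=0$ and are open). Then $F$ is strictly concave on a box containing $\Omega$, so $\omega_0$ is its unique maximum. This is shorter than your compactness-plus-Taylor route and has the side benefit of yielding, essentially for free, the explicit lower bound $\beta_c\ge\tfrac12\beta_\crit^{\mathrm h}(r)\gamma$ used later (Proposition~\ref{Prop:crittemplow}). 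Your approach is sound---the degenerate case $\hat u^TH_Q\hat u\to 0$ really does give ratio $\to+\infty$ since the numerator stays bounded below by $\lambda_{\min}(H_S)>0$---but it is working harder than necessary here, and the Rayleigh-quotient analysis does not feed into the subsequent quantitative bounds the way the Hessian argument does.
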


	Let us write $\b_\crit(r)$ to highlight the dependence on $r$. The next proposition gives $\beta_\crit(2)$ when it exists.
	For a simple interpretation of the value, see Lemma \ref{lem:perturbation}.
	
\begin{proposition}
  \label{prop:r2crittemp}
  Let $r=2$ and  assume that $Q$ is not negative semidefinite, so that 
  $\beta_\crit(2)$ exists. Then
  \begin{equation}\label{eq:r2crittemp}
    \beta_\crit(2)=\begin{cases}
      \displaystyle\frac{\rho a+(1-\rho) b-\sqrt{(\rho a-(1-\rho) b)^2+4\rho(1-\rho)c^2}}{\rho(1-\rho)(ab-c^2)}, & ab\neq c^2,\\
      \displaystyle\frac{2}{a\rho+b(1-\rho)}, & ab=c^2.
    \end{cases}
  \end{equation}
  Moreover, for
  $\beta=\beta_\crit$, $\omega_0$ is the unique maximum point.
\end{proposition}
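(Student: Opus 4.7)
First I would reduce the problem to two variables via the symmetric parametrization
\[
x_1=\tfrac{\rho(1+u)}2,\quad x_2=\tfrac{\rho(1-u)}2,\quad y_1=\tfrac{(1-\rho)(1+v)}2,\quad y_2=\tfrac{(1-\rho)(1-v)}2,
\]
for $(u,v)\in[-1,1]^2$, under which $\om_0$ corresponds to $(u,v)=(0,0)$. A direct computation yields, up to an additive constant independent of $(u,v)$,
\[
F(x_1,x_2;y_1,y_2)=-\rho\, h(u)-(1-\rho)\, h(v)+\tfrac{\b}{4}\bigl[a\rho^2u^2+2c\rho(1-\rho)uv+b(1-\rho)^2v^2\bigr],
\]
where $h(u)=\tfrac12[(1+u)\log(1+u)+(1-u)\log(1-u)]$ is an even, strictly convex function satisfying $h''(u)=1/(1-u^2)$. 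By evenness, the gradient of $F$ in these coordinates vanishes at $(0,0)$.

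The crux is a ray-wise concavity identity. Using $h''(u)=1+u^2/(1-u^2)$, a direct computation shows that for every direction $(\a,\g)\neq(0,0)$,
\[
\frac{d^2}{dt^2}F(\a t,\g t)=Q_H(\a,\g)-\frac{\rho\,\a^4 t^2}{1-\a^2 t^2}-\frac{(1-\rho)\,\g^4 t^2}{1-\g^2 t^2},
\]
where $Q_H$ is the Hessian quadratic form of $F$ at $(0,0)$. Whenever $Q_H$ is negative semi-definite, this second derivative is $\le 0$ on the entire ray and is strictly negative for $t\neq 0$. Combined with the vanishing gradient at the origin, $F$ is then strictly concave on every line through $(0,0)$, so $(0,0)$ is the unique global maximum; this applies at $\b=\b_\crit$ itself, yielding the claimed uniqueness.

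It remains to identify the smallest $\b>0$ at which $Q_H$ fails to be negative semi-definite. Starting from negative definiteness at $\b=0$, the first failure as $\b$ grows must occur via $\det Q_H=0$, since $\tr Q_H=0$ with $\det Q_H>0$ is impossible for a symmetric $2\times 2$ matrix. Writing $E=a\rho+b(1-\rho)$ and $D=\rho(1-\rho)(ab-c^2)$, one computes $\det Q_H=\rho(1-\rho)\bigl[1-\tfrac\b2 E+\tfrac{\b^2}4 D\bigr]$, so the critical $\b$ solves $D\b^2-2E\b+4=0$, with discriminant $E^2-4D=(a\rho-b(1-\rho))^2+4\rho(1-\rho)c^2\ge 0$. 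Selecting the smaller positive root when $D\neq 0$ yields the first line of \eqref{eq:r2crittemp}, while for $D=0$ the equation reduces to $-2E\b+4=0$ and gives the second. The main technical obstacle is uniqueness at $\b=\b_\crit$ itself: the ray-wise identity above handles this uniformly across all directions (including the null direction of $Q_H$), removing the need for any direction-by-direction quartic analysis.
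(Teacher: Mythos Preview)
Your proof is correct and takes a genuinely different route from the paper's. The paper obtains $\beta_\crit(2)$ by sandwiching it between a general lower bound $\beta_\crit\ge\tfrac12\beta_\crit^{\mathrm h}(r)\gamma$ (coming from the entropy estimate $h(u)\ge u^2/2$, imported from \cite{Bjo16}) and a general upper bound $\beta_\crit\le \tfrac r2\gamma$ (from the Hessian at $\omega_0$); for $r=2$ these coincide. Uniqueness at $\beta_\crit$ is then deduced by tracing equality back through this chain to the equality case of the entropy estimate. Your argument bypasses the sandwiching entirely: the ray-wise identity $g''(t)=Q_H(\alpha,\gamma)-\rho\alpha^4t^2/(1-\alpha^2t^2)-(1-\rho)\gamma^4t^2/(1-\gamma^2t^2)$ shows in one stroke that $\omega_0$ is the unique global maximiser exactly when $Q_H$ is negative semidefinite, and this includes $\beta=\beta_\crit$ without any separate equality-case analysis. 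Note that your $Q_H(\alpha,\gamma)$ is, after the substitution $(x,y)=(\rho\alpha,(1-\rho)\gamma)$, precisely the paper's form $\beta Q(x,y)-x^2/\rho-y^2/(1-\rho)$ from Lemma~\ref{lem:perturbation}, so the threshold calculation is the same. What you gain is a self-contained argument that does not invoke \cite{Bjo16}; what the paper's framework gains is that its bounds are stated for general $r$ and feed into Proposition~\ref{prop:tcrittemp} as well. One small wording issue: ``smaller positive root'' is only the right description when $D>0$; when $D<0$ there is exactly one positive root, and your formula $(E-\sqrt{E^2-4D})/D$ still picks it out, but you should say so explicitly.
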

	
	In the homogeneous spin-$\tfrac12$ {\ab}-model, i.e.\
	$r=2$ and 
	$a=b=c=1$, we recover the critical point 
	$\beta_\crit=2$
	first identified by T\'oth \cite{toth-bec} and by 
	Penrose \cite{penrose}.
	In the bipartite case $a=b=0$  we get the critical value
	$\beta_\crit=2/\sqrt{c^2\rho(1-\rho)}$;
	this has, to the best of our knowledge, not
	appeared previously in the literature.
	
	The next proposition gives $\beta_\crit(r)$, $r\ge3$ in the
	special case that $c\geq0$ and \eqref{eq:tcond} holds. 
	
	\begin{proposition}\label{prop:tcrittemp}
		Suppose that $c\geq 0$, $r\geq 3$, that
		\eqref{eq:tcond} 
		holds and
		that $Q$ is not negative semidefinite so that
		$\beta_\crit$ exists.
		Then
		\begin{equation}\label{eq:tcrittemp}
		\b_\crit=
		\beta_\crit(r)=\frac{2(r-1)\log(r-1)}{(r-2)(c+t)}.
		\end{equation}
		Moreover, if $\beta=\beta_\crit$ there are exactly two maximum
		points satisfying \eqref{eq:xgeq}, namely $\om_0$ of \eqref{eq:om0}
		and $\om_1=(\vec  x; \vec y)$ given by 
		\begin{subequations}\label{eq:specialxy}
			\begin{equation}\label{eq:specialx}
			x_1=\tfrac{(r-1)\rho}{r},\quad x_2=\dots=x_r=\tfrac {\rho}{r(r-1)}, 
			\end{equation}
			\begin{equation}\label{eq:specialy}
			y_1=\tfrac{(r-1)(1-\rho)}{r},\quad y_2=\dots=y_r=\tfrac
			{1-\rho}{r(r-1)}. 
			\end{equation}
		\end{subequations}	
	\end{proposition}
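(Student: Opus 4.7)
The plan is to identify $\omega_0$ and $\omega_1$ directly as critical points of $F$ with the same value at the claimed $\beta$, and then to argue that this $\beta$ is the critical temperature of Proposition~\ref{prop:crittemp} and that no other maxima satisfy \eqref{eq:xgeq}. The algebraic key is the identity $a\rho^2+b(1-\rho)^2+2c\rho(1-\rho)=c+t$, which follows immediately from \eqref{eq:tcond}.

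Parametrize a one-parameter family $\omega(u)\in\Om$, $u\in[0,1]$, by $x_1=\rho(1+(r-1)u)/r$, $x_2=\cdots=x_r=\rho(1-u)/r$, and $y_j$ analogously with $\rho$ replaced by $1-\rho$ (using the same $u$). Then $\omega(0)=\omega_0$ and $\omega(u^*)=\omega_1$ for $u^*=(r-2)/(r-1)$. A direct computation yields
\[
g(u):=F(\omega(u))-F(\omega_0)=-\tfrac{1}{r}E(u)+\tfrac{\beta(r-1)(c+t)}{2r}u^2,
\]
with $E(u)=(1+(r-1)u)\log(1+(r-1)u)+(r-1)(1-u)\log(1-u)$. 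Using $E(u^*)=(r-2)\log(r-1)$ and $E'(u^*)=2(r-1)\log(r-1)$, both $g(u^*)=0$ and $g'(u^*)=0$ reduce to $\beta=2(r-1)\log(r-1)/[(r-2)(c+t)]$. The full Lagrange equations for $F$ on $\Om$ at $\omega_1$ simplify under \eqref{eq:tcond} to this same scalar condition, so $\omega_1$ is a critical point of $F$ on $\Om$ at this value of $\beta$---which we denote $\beta_\crit$---with $F(\omega_1)=F(\omega_0)$. Since $g(u^*)>0$ for $\beta>\beta_\crit$, the critical temperature of Proposition~\ref{prop:crittemp} is at most $\beta_\crit$.

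The main obstacle is proving the reverse inequality, together with uniqueness. The strategy is to classify all critical points of $F$ on $\Om$ obeying \eqref{eq:xgeq}. Subtracting the two Lagrange equations and using \eqref{eq:tcond}, each pair $(\tilde x_i,\tilde y_i):=(x_i/\rho,y_i/(1-\rho))$ lies on a single curve determined by the Lagrange multipliers; adding the equations gives a second such curve. A monotonicity argument (using $c\geq 0$ and the admissible range of $x_i,y_i$ dictated by the sum constraints) shows these two curves intersect in at most two points in the relevant quadrant, so the pairs $(x_i,y_i)$ take at most two distinct values. Hence any non-trivial critical point of $F$ is of ``two-block'' type with block-size $k\in\{1,\dots,r-1\}$, giving one-parameter families $\omega_k(s)$ on which $F$ reduces to
\[
g_k(s)=-\tfrac{1}{r}E_k(s)+\tfrac{\beta k(r-k)(c+t)}{2r}s^2, \quad E_k(s)=k(1+(r-k)s)\log(1+(r-k)s)+(r-k)(1-ks)\log(1-ks).
\]
An analysis of each $g_k$ then shows that the least $\beta$ at which $\max_s g_k(s)\geq 0$ is attained at $k=1$ with value $\beta_\crit$, and strictly exceeds $\beta_\crit$ for $k>1$.

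Finally, for uniqueness within the $k=1$ family at $\beta=\beta_\crit$: using $E''(0)=E''(u^*)=r(r-1)$, one computes $g''(0)=g''(u^*)=(r-1)(\beta_\crit(c+t)-r)/r$, which is strictly negative because $2(r-1)\log(r-1)<r(r-2)$ for all $r\geq 3$ (an elementary monotonicity check in $r$). Combined with the convexity of $E$ and the shape of $E''$ (convex with a unique interior minimum), this forces $g'$ to have exactly three zeros on $[0,1]$ and $g$ to attain its maximum $0$ only at $u=0$ and $u=u^*$, with $g<0$ elsewhere. Together with the block-size classification, this identifies $\omega_0$ and $\omega_1$ as the only maxima satisfying \eqref{eq:xgeq}.
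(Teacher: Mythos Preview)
Your approach is genuinely different from the paper's, and parts of it are fine, but there is a real gap in the lower-bound/uniqueness direction.

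The paper does not classify critical points at all. Instead it sandwiches $\beta_\crit$ between two bounds (Propositions~\ref{Prop:crittemplow} and~\ref{prop:crittemphigh}). The upper bound is your calculation at $\omega_1$. The lower bound applies the one-variable entropy inequality \eqref{eq:entropyestimate} (borrowed from the homogeneous case in \cite{Bjo16}) \emph{separately} to $\vec x$ and to $\vec y$, which yields
\[
\cH(\omega)+\beta\cE(\omega)\le \frac1r\sum_{i<j}\tilde Q(x_i-x_j,y_i-y_j),
\qquad
\tilde Q(x,y)=\beta Q(x,y)-\tfrac{\beta_\crit^{\mathrm h}(r)}{2}\Big(\tfrac{x^2}{\rho}+\tfrac{y^2}{\rho'}\Big).
\]
Under \eqref{eq:tcond} and at $\beta=\beta_\crit^{\mathrm h}(r)/(c+t)$ one checks (using \eqref{eq:qtdiag}) that
$\tilde Q(x,y)=-\dfrac{\beta_\crit^{\mathrm h}(r)\,c\,\rho\rho'}{2(c+t)}\Big(\dfrac{x}{\rho}-\dfrac{y}{\rho'}\Big)^{2}\le 0$,
so the bound holds globally on $\Omega$ with no appeal to critical points. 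Uniqueness then drops out of the \emph{equality} cases: equality in \eqref{eq:entropyestimate} pins down the form of $\vec x$ and of $\vec y$ (from \cite{Bjo16}), and for $c>0$ the vanishing of $\tilde Q$ forces $x_i/\rho=y_i/\rho'$, singling out $\omega_0$ and $\omega_1$.

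Your argument, by contrast, tries to enumerate critical points and then check their values. The two-curve intersection step is plausible (it is essentially the paper's Proposition~\ref{prop:mleq2}), so let us grant that any interior critical point has $(x_i,y_i)$ taking at most two values. The gap is the next sentence: ``giving one-parameter families $\omega_k(s)$''. A two-block critical point has \emph{two} parameters, say $s$ for $\vec x$ and $v$ for $\vec y$, and
\[
F(\omega_{k,s,v})-F(\omega_0)=\cH_x(s)+\cH_y(v)+\frac{\beta k(r-k)}{r}\Big[\frac{t\rho\rho'}{2}(s-v)^2+\frac{c+t}{2}(\rho s+\rho' v)^2\Big].
\]
Nothing you wrote forces $s=v$; when $t>0$ the cross term even \emph{rewards} $s\neq v$. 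Your functions $g_k$ are the restrictions to the diagonal $s=v$, so analysing them does not rule out an off-diagonal two-block point with $F>F(\omega_0)$ at some $\beta<\beta_\crit$. This is precisely the difficulty the paper sidesteps by working with the global inequality above rather than with critical points. (The ``analysis of each $g_k$'' you defer is also nontrivial---it is effectively the homogeneous result of \cite{Bjo16}---but that is a smaller issue.) If you want to salvage your route, the missing ingredient is an a priori reason why maximisers satisfy $x_i/\rho=y_i/\rho'$; the computation of $\tilde Q$ above shows exactly why this holds, but once you have that you are essentially back to the paper's proof.
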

	
	For $\b>\b_\crit$ and under the conditions in 
	Proposition \ref{prop:tcrittemp} we can prove that the maximum point is unique  
	(subject to \eqref{eq:xgeq}) for $\b$ close to the critical point
	(see also Proposition	\ref{prop:maxc>0} for another special case).
	
	\begin{proposition}\label{prop:unique}
		Under the assumptions of Proposition \ref{prop:tcrittemp}, there exists $\varepsilon>0$ such that, if
		$\beta_\crit<\beta<\beta_\crit+\varepsilon$, there is a unique maximiser of
		$F$ in ${\Omega}$ with entries ordered as in \eqref{eq:xgeq}.
		Moreover as $\beta\searrow\beta_\crit$, this maximiser tends to $\omega_1$
		given in \eqref{eq:specialxy}.
	\end{proposition}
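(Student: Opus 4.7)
The plan is a perturbation analysis at $\omega_1$ at $\b=\b_\crit$, combined with a compactness/envelope argument. I first establish strict negative definiteness of the Hessian of $F$ at $\omega_1$; then the implicit function theorem produces a smooth, locally unique curve $\b\mapsto\omega(\b)$ of critical points through $(\b_\crit,\omega_1)$, each a strict local maximum. Global uniqueness then reduces, by compactness and Proposition~\ref{prop:tcrittemp}, to ruling out an ordered maximiser approaching $\omega_0$, which I handle by comparing a quadratic upper bound from the Hessian at $\omega_0$ with the envelope growth.

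Write $H(x,y):=\bigl(\begin{smallmatrix}-1/x+\b a&\b c\\ \b c&-1/y+\b b\end{smallmatrix}\bigr)$ for the per-coordinate Hessian of $f$, and set $H_0:=H(\rho/r,(1-\rho)/r)$, $H_1:=H(x_1,y_1)$, $H_2:=H(u,v)$ with $x_1,y_1,u,v$ as in \eqref{eq:specialxy}. Decomposing the tangent space to $\Omega$ at $\omega_1$ under the $S_{r-1}$-action permuting indices $2,\dotsc,r$ (which stabilises $\omega_1$), the Hessian of $F$ block-diagonalises into a $2$-dimensional $S_{r-1}$-invariant piece and a $2(r-2)$-dimensional standard piece. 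A direct computation using \eqref{eq:tcond} gives the identity
\[
(r-1)H_1+H_2=rH_0,
\]
so the invariant block equals $r(r-1)H_0$ while the standard block acts as $H_2$. Moreover $H_2=H_0-\mathrm{diag}(r(r-2)/\rho,\,r(r-2)/(1-\rho))$ differs from $H_0$ by a positive-definite correction for $r\ge3$, so negative definiteness of $H_2$ follows from that of $H_0$, reducing the problem to $H_0\prec 0$ at $\b=\b_\crit$.

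Expanding $\det H_0$ using \eqref{eq:tcond} and the identity $(c+2t)^2-4t(c+t)=c^2$ yields the factorisation
\[
\det H_0=\tfrac{t(c+t)}{\rho(1-\rho)}\bigl(\b-\tfrac{r}{c+t}\bigr)\bigl(\b-\tfrac{r}{t}\bigr),
\]
while the diagonal entries of $H_0$ are negative for $\b(c+t)<r$ (using $c\ge0$, so $c\rho+t\le c+t$). Hence $H_0\prec 0$ whenever $\b_\crit<r/(c+t)$, i.e.\ whenever $2(r-1)\log(r-1)<r(r-2)$, an elementary inequality valid for all $r\ge3$; this is precisely where the hypothesis $r\ge 3$ enters. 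The IFT now produces the claimed curve $\omega(\b)$. Combined with the envelope identity $\tfrac{d}{d\b}F(\omega(\b);\b)=\sum_iQ(x_i,y_i)$ and the explicit computation $\sum_iQ|_{\omega_1}-\sum_iQ|_{\omega_0}=(r-2)^2(c+t)/[2r(r-1)]>0$ (via \eqref{eq:tcond} and the identity $(r-1)^3+1-r(r-1)=r(r-2)^2$), this yields $F(\omega(\b);\b)>F(\omega_0;\b)$ for $\b>\b_\crit$.

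For global uniqueness, take an ordered maximiser $\omega^\star(\b)$ with $\b\in(\b_\crit,\b_\crit+\eps)$. By compactness and Proposition~\ref{prop:tcrittemp}, any subsequential limit as $\b\searrow\b_\crit$ lies in $\{\omega_0,\omega_1\}$. The limit $\omega_0$ is ruled out: the quadratic upper bound $F(\omega;\b_\crit)\le F(\omega_0;\b_\crit)-\mu\|\omega-\omega_0\|^2$ near $\omega_0$ (from strict negative definiteness of $H_0$ at $\b_\crit$, equivalently of the Hessian of $F$ on $T_{\omega_0}\Omega$) combined with a first-order Taylor expansion in $\b-\b_\crit$ and the strict envelope gap above yields a contradiction with $F(\omega^\star(\b);\b)\ge F(\omega(\b);\b)$. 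Hence $\omega^\star(\b)\to\omega_1$, and IFT-uniqueness identifies $\omega^\star(\b)=\omega(\b)$ for $\b$ sufficiently close to $\b_\crit$. The main technical obstacle is the Hessian computation: finding the clean identity $(r-1)H_1+H_2=rH_0$ and the determinantal factorisation of $\det H_0$ rely crucially on the structural hypothesis \eqref{eq:tcond} and the explicit form of $\b_\crit$ from Proposition~\ref{prop:tcrittemp}.
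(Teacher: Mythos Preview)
Your proof is correct and follows the same overall strategy as the paper: establish strict negative definiteness of the Hessian of $F$ (restricted to the tangent space of $\Omega$) at both $\omega_0$ and $\omega_1$ for $\beta$ near $\beta_\crit$, show $F(\omega_1)>F(\omega_0)$ for $\beta>\beta_\crit$, then use compactness together with Proposition~\ref{prop:tcrittemp} to force any ordered maximiser towards $\omega_1$ and hence into the region of local uniqueness.

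The differences are in packaging rather than substance. For the Hessian at $\omega_1$, the paper rewrites the quadratic form via Lemma~\ref{lem:quadform} (applied to indices $2,\dots,r$) to isolate a term $Q_1+\tfrac{1}{r-1}Q_2$, which it identifies with $\tfrac{r}{r-1}$ times the form at $\omega_0$; your $S_{r-1}$ block-diagonalisation and the identity $(r-1)H_1+H_2=rH_0$ are exactly the same computation in matrix language. Both reduce to $H_0\prec 0$ at $\beta_\crit$: the paper deduces this from Lemma~\ref{lem:perturbation} (as $\beta_\crit<r\gamma/2$), while you give an explicit factorisation of $\det H_0$; since $\gamma=2/(c+t)$ under \eqref{eq:tcond}, the conditions coincide. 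Finally, your exclusion of $\omega_0$ via a quadratic bound at $\beta=\beta_\crit$ plus Taylor expansion in $\beta$ works, but the paper's version is slightly more direct: it observes that $F$ is strictly concave near $\omega_0$ for the actual $\beta\in(\beta_\crit,\beta_0)$, and since $\omega_0$ is always a stationary point this immediately gives $F(\omega)\le F(\omega_0)<F(\omega_1)$ there.
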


	\subsection{Correlations and magnetisation}
	
	We next move on to results about correlations which extend
	\cite[Theorem 2.3]{BFU20}.
	To state them,   introduce the function
	\be
	R(w_1,\dotsc,w_r; z_1,\dotsc,z_r)=
	\det\big[e^{w_iz_j}\big]_{i,j=1}^{r}
	\prod_{1\leq i<j\leq r} \frac{j-i}{(w_i-w_j)(z_i-z_j)}.
	\ee
	For $\#\in\{\ab,\wb\}$, we write 
	\be
	\langle\cO\rangle^\#_{\b,n}=
	\frac{\tr_\VV\big[\cO e^{-\beta H^\#_n}\big]}
	{Z_{n}^\#(\b)}
	\ee
	for the usual equilibrium state expectation of a linear operator $\cO$ 
	on $\VV$.

	\begin{theorem}\label{thm:TS-twoblock}
		Let $a,b,c\in\RR$ and $\b>0$ be such that $F$ has a unique maximum
		point $\om^\star=(\vec x^\star;\vec y^\star)$ satisfying
		\eqref{eq:xgeq}.
		Let $W$ be an $r\times r$ matrix 
		with eigenvalues $w_1,\dotsc,w_r\in\CC$.  As $n,m\to\oo$ such that
		$m/n\to\rho\in(0,1)$, we have that 
		\be\begin{split}\label{eq:TS}
			&\lim_{n\to\infty} \big\langle \exp \big\{
			\tfrac 1n \textstyle\sum_{i=1}^n W_i \big\} \big\rangle_{\beta,n}^\ab
			=R(w_1,\dotsc,w_r;z_1^\star,\dotsc,z^\star_r)\\
			& \lim_{n\to\infty} \big\langle \exp \big\{
			\tfrac 1n \big(\textstyle\sum_{i=1}^m W_i 
			-\textstyle\sum_{i=m+1}^n W^{\intercal}_i\big)
			\big\} \big\rangle_{\beta,n}^\wb
			=R(w_1,\dotsc,w_{r};z^\dagger_1,\dotsc,z^\dagger_{r}),
		\end{split}\ee
		where the superscript $^\intercal$ denotes transpose, and
		\be\label{eq:TS-z}
		z_j^\star=
		x_j^\star+y_j^\star,\qquad
		z_j^\dagger=x_j^\star-y_j^\star.
		\ee
	\end{theorem}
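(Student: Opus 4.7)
The plan is to apply Schur--Weyl duality to express the trace defining each observable as a weighted sum of Schur (or rational Schur) characters evaluated at $e^{w_1/n},\dots,e^{w_r/n}$, concentrate the sum on the maximiser $\om^\star$ of $F$ via Laplace asymptotics inherited from the proof of Theorem~\ref{thm:FE-AB}, and identify the limiting character-to-dimension ratio with $R$ using the Weyl bialternant formula. In the {\ab} case, the operator $\exp\{\tfrac1n\sum_{i=1}^nW_i\}$ is the diagonal action on $\VV=(\CC^r)^{\otimes n}$ of the matrix $g_n=\exp(W/n)\in\GL_r(\CC)$, so it commutes with the $S_n$-action. Decomposing $\VV=\bigoplus_\lambda U_\lambda\otimes V_\lambda$ via Schur--Weyl (the sum runs over partitions $\lambda$ of $n$ with $\ell(\lambda)\leq r$), and noting that $H_n^\ab\in\CC[S_n]$ acts only on the $V_\lambda$ factors, we obtain
\[
\bigl\langle\exp\{\tfrac1n\textstyle\sum_iW_i\}\bigr\rangle_{\b,n}^{\ab}
=\frac{\sum_\lambda s_\lambda(e^{w_1/n},\dots,e^{w_r/n})\,\tr_{V_\lambda}[e^{-\b\sigma_\lambda(H_n^\ab)}]}
{\sum_\lambda s_\lambda(1,\dots,1)\,\tr_{V_\lambda}[e^{-\b\sigma_\lambda(H_n^\ab)}]}.
\]

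From the proof of Theorem~\ref{thm:FE-AB} one knows that the denominator is dominated by partitions whose rescaled row lengths $\lambda_j/n$ tend to $z_j^\star:=x_j^\star+y_j^\star$. Since $w_i/n\to0$ the perturbation factor $s_\lambda(e^{w/n})/s_\lambda(1,\dots,1)$ is uniformly bounded in $\lambda$ (by Kostka positivity, $|s_\lambda(e^{w_1/n},\dots,e^{w_r/n})|\le s_\lambda(e^{\max|w_i|/n},\dots)=e^{\max|w_i|}s_\lambda(1,\dots,1)$), so the numerator is dominated by the same partitions, and the ratio converges to the pointwise limit of $s_\lambda(e^{w_1/n},\dots,e^{w_r/n})/s_\lambda(1,\dots,1)$ along any sequence with $\lambda_j/n\to z_j^\star$. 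This limit is computed using the Weyl bialternant formula $s_\lambda(y_1,\dots,y_r)=\det[y_i^{\lambda_j+r-j}]/\det[y_i^{r-j}]$: with $y_i=e^{w_i/n}$ the numerator tends to $\det[e^{w_iz_j^\star}]$ and the Vandermonde $\det[y_i^{r-j}]=\prod_{i<j}(y_i-y_j)\sim n^{-\binom{r}{2}}\prod_{i<j}(w_i-w_j)$, while the Weyl dimension formula gives $s_\lambda(1,\dots,1)=\dim U_\lambda\sim n^{\binom{r}{2}}\prod_{i<j}(z_i^\star-z_j^\star)/(j-i)$. The quotient is exactly $R(w_1,\dots,w_r;z_1^\star,\dots,z_r^\star)$.

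For the {\wb} case we replace Schur--Weyl for $(\GL_r,S_n)$ by Schur--Weyl for $(\GL_r,B_{m,n-m}(r))$ recalled in Section~\ref{sec:WB}: $\VV\cong\bigoplus_{(\lambda,\mu)}U_{\lambda,\mu}\otimes V^B_{\lambda,\mu}$ with $U_{\lambda,\mu}$ a rational (mixed-tensor) $\GL_r$-irrep. Under the identification of the last $n-m$ tensor factors with their duals inherent in that duality, the diagonal $\GL_r$-action on the mixed tensor space corresponds to $g_n^{\otimes m}\otimes(g_n^{-\intercal})^{\otimes(n-m)}$ on $\VV$, which for $g_n=\exp(W/n)$ equals precisely $\exp\{\tfrac1n(\sum_{i\in A}W_i-\sum_{i\in B}W_i^{\intercal})\}$. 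Its trace on $U_{\lambda,\mu}$ is therefore the rational Schur character $s_{\lambda,\mu}(e^{w_1/n},\dots,e^{w_r/n})$, and the same Laplace and bialternant analyses---now with $\lambda_j/n\to x_j^\star$ and $\mu_j/n\to y_j^\star$, using the rational analogues of the bialternant and dimension formulas---yield $R$ with arguments $z_j^\dagger=x_j^\star-y_j^\star$.

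The main obstacle is the Laplace-type concentration step: one must show the character sums concentrate tightly enough on $\om^\star$ that the $O(1/n)$ variation of $s_\lambda(e^{w/n})/s_\lambda(1,\dots,1)$ across the concentration neighbourhood is negligible. By hypothesis $\om^\star$ is the unique maximum of the continuous function $F$ on the compact domain $\Om$, so summands outside any fixed neighbourhood of $\om^\star$ are suppressed by $e^{-\Theta(n)}$; combined with the $\lambda$-uniform bound on the Schur-ratio perturbation noted above, this yields the required localisation and hence the theorem.
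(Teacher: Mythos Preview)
Your proposal is correct and follows essentially the same route as the paper: write the observable via Schur--Weyl duality as a ratio with character factors $s_\lambda(e^{w/n})/s_\lambda(1,\dots,1)$ (or their rational analogues), invoke the Laplace concentration already established in the free-energy proof together with the uniform boundedness of the character ratio, and identify the limiting ratio with $R$ via the Weyl bialternant and dimension formulas. One small slip: in the {\wb} case you write ``$\lambda_j/n\to x_j^\star$ and $\mu_j/n\to y_j^\star$'', but in fact $\lambda\vdash m-t$ and $\mu\vdash n-m-t$ with $t/n$ not in general tending to zero; what actually converges is the combined weight $[\lambda,\mu]_j/n\to z_j^\dagger$, and since the rational character depends only on $[\lambda,\mu]$ this is all you need.
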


	As a concrete example, for 
	$W=h\,\mathrm{diag}(0,1,2,\dotsc,r-1)$
	we have  
	\be\label{eq:R}
	R(w_1,\dotsc,w_r;z_1,\dotsc,z_r)=
	\prod_{1\leq i<j\leq r} \frac{e^{h z_i}-e^{h z_j}}{h(z_i-z_j)}.
	\ee
	The phase-transition at $\b_\crit$ is reflected in the fact
	that $R\equiv 1$ when $\om^\star=(\vec{x}^\star;\vec{y}^\star)=\om_0$, while
	$R$ is non-trivial if the entries of $\vec z$ are non-constant.  The
	latter occurs e.g.\ in the $\ab$-model for $\b>\b_\crit$.
	
	For a second concrete example, let $c>0$.
	We will prove in Proposition \ref{prop:maxc>0} that any maximiser 
	$(\vec{x}^\star;\vec{y}^\star)$ of $F$ satisfying \eqref{eq:xgeq} is then of the form
	\be%\label{eq:maxformleq2}
	\begin{split}
		x_1^\star&\geq x_{2}^\star=\dotsb=x_r^\star,
		\qquad
		y_1^\star\geq y_{2}^\star=\dotsb=y_r^\star,
	\end{split}
	\ee
	in which case $z^\star$ \eqref{eq:TS-z} will be of the same form.  Letting
	$W$ be an arbitrary rank 1 projection, with eigenvalues
	$1,0,\dotsc,0$, and writing  $u^\star=z_1^\star-z_2^\star$,  we have
	\be\label{eq:TS-projector}
	\lim_{n\to\infty} \big\langle \exp \big\{
	\tfrac 1n \textstyle\sum_{i=1}^n W_i \big\} \big\rangle_{\beta,n}^\ab= 
	\frac{(2S)!}{(hu^\star)^{2S}}e^{\frac{h}{2S+1}(1-u^\star)}
	\sum_{j=2S}^\oo\frac{(hu^\star)^j}{j!}.
	\ee
	(The calculation of $R$ is performed in \cite[Section~6]{BFU20}.)
	
	Theorem \ref{thm:TS-twoblock} 
	also shows that the {\ab}- and {\wb}-models are not equivalent, despite
	having the same free energy (for any anti-symmetric matrix $W$, the
	observables on the left in \eqref{eq:TS} are the same, while their
	limiting expectations are different).
	The result is also relevant for understanding 
	extremal states, see Section \ref{sec:heuristics}.
	
	Finally we have the following result about the (thermodynamic)
	magnetisation.  
	Let $W$ be an $r\times r$ matrix with real 
	eigenvalues $w_1 \ge \cdots\ge w_r$,
	let $h\in\RR$, and write
	\be\label{eq:H-AB-mag}
	Z_{n}^{\ab}(\b,h) = \tr_\VV[\exp\big(-\beta H^{\ab}_{n}
	+ h\textstyle\sum_{1\le i\le n} W_i\big)],
	\ee
	\be\label{eq:H-WB-mag}
	Z_{n}^{\wb}(\b,h) = \tr_\VV[\exp\big(-\beta H^{\wb}_{n}
	+ h\big({\textstyle\sum_{1\le i\le m} W_i 
		- \sum_{m<i\le n} W_i^\intercal}\big)
	\big)].
	\ee
	In Theorem \ref{thm:fe-mag} we will obtain explicit expressions for
	the limits
	\be\label{eq:phi-mag}
	\Phi^\#(\b,h)
	:=\lim_{n\to\oo} \tfrac1n\log Z_{n}^\#(\b,h),
	\ee
	where $\#\in\{\ab,\wb\}$ (this turns out to depend on $W$ only through
	its spectrum $\vec w$).
	The \emph{magnetisation} is given by
	the left and right derivatives of
	this free energy with respect to $h$, at $h=0$. 
	
	\begin{theorem}\label{thm:mag}
		Let $\Phi$ be defined by \eqref{eq:phi-mag}, either for the
		{\ab}- or {\wb}-model. 
		Then 
		\be\label{eq:partial-fe-mag}
			\frac{\partial \Phi}{\partial h}\Big|_{h \downarrow 0}=
			\max_{(\vec x^\star;\vec y^\star)} 
			\sum_{i=1}^r z_i w_i,
			\qquad
			\frac{\partial \Phi}{\partial h}\Big|_{h \uparrow 0}=
			\min_{(\vec x^\star;\vec y^\star)} 
			\sum_{i=1}^r z_i w_{r+1-i},
		\ee
		where the maxima and minima are  over 
		all maximisers $(\vec{x}^\star;\vec{y}^\star)\in\Omega$ of
		$F(\vec{x}; \vec{y})$ such that $x_1^\star\geq\dots\geq x_r^\star$.
		The vector $\vec z$ is obtained by 
		rearranging the entries in the vector $x^\star\pm y^\star$ in decreasing order, where one should take the plus sign for the
		{\ab}-model and the minus sign for the {\wb}-model.
	\end{theorem}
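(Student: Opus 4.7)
The strategy is to combine the variational formula for $\Phi^\#(\beta,h)$ furnished by Theorem \ref{thm:fe-mag} with convexity of $h\mapsto\Phi^\#(\beta,h)$ and an envelope-theorem argument, and to use the rearrangement inequality to identify the extremisers.

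Note first that $h\mapsto \tfrac1n\log Z_n^\#(\beta,h)$ is convex in $h$---a standard property of $\log\tr\exp(A+hB)$ for self-adjoint $A,B$---so its limit $\Phi^\#(\beta,h)$ is convex, and both one-sided derivatives at $h=0$ exist. By Theorem \ref{thm:fe-mag}, $\Phi^\#(\beta,h)$ admits a representation of the form
\[
\Phi^\#(\beta,h) \;=\; \max_{(\vec x;\vec y)\in\Omega,\;\sigma\in S_r}\Big(F(\vec x;\vec y) \;+\; h\,\textstyle\sum_{i=1}^r w_i\bigl(x_{\sigma(i)}\pm y_{\sigma(i)}\bigr)\Big),
\]
with $+$ in the $\ab$ case and $-$ in the $\wb$ case, the permutation $\sigma$ encoding the freedom to match the eigenvalues of $W$ against the Schur--Weyl coordinates. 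The dependence on $W$ only through its spectrum follows by conjugating by $U^{\otimes n}$ for $\ab$ and by $U^{\otimes m}\otimes\overline U^{\otimes (n-m)}$ for $\wb$, where $U$ diagonalises $W$; this leaves $H^\#_n$ invariant (in the $\wb$ case using that $Q_{i,j}$ is $(U\otimes\overline U)$-invariant) and reduces the magnetic term to diagonal form. At $h=0$, this collapses to Theorems \ref{thm:FE-AB} and \ref{thm:FE-WB}, and one may restrict the maximum in $\omega$ to points satisfying \eqref{eq:xgeq}.

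Next I would take $h\to 0^\pm$ inside the maximum. By compactness of $\Omega\times S_r$ and continuity of the integrand in $(\omega,\sigma,h)$, any sequence of maximisers along $h_k\downarrow 0$ subconverges to a pair $(\omega^\star,\sigma^\star)$ in which $\omega^\star$ is an $F$-maximiser satisfying \eqref{eq:xgeq} and $\sigma^\star$ maximises the linear functional $\sum_i w_i(x_{\sigma(i)}^\star\pm y_{\sigma(i)}^\star)$ among such $F$-maximisers. A standard envelope argument (monotonicity of difference quotients of the convex function $\Phi^\#(\beta,\cdot)$) then gives
\[
\tfrac{\partial\Phi^\#}{\partial h}\Big|_{h\downarrow 0} \;=\; \max_{(\vec x^\star;\vec y^\star),\,\sigma}\textstyle\sum_{i=1}^r w_i\bigl(x_{\sigma(i)}^\star\pm y_{\sigma(i)}^\star\bigr),
\]
with the analogous formula (replacing $\max$ by $\min$) for $h\uparrow 0$. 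The rearrangement inequality then settles the inner optimisation over $\sigma$: with $w_1\ge\dots\ge w_r$ fixed, $\sum_i w_i z_{\sigma(i)}$ is maximised when $\vec z$ is placed in decreasing order and minimised when in increasing order. Setting $\vec z$ equal to the decreasing rearrangement of $\vec x^\star\pm\vec y^\star$ rewrites the two formulas as $\sum_i z_i w_i$ and $\sum_i z_i w_{r+1-i}$ respectively, which is the claim.

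The main obstacle is the first step: extracting from Theorem \ref{thm:fe-mag} the precise form of the $h$-linear perturbation, in particular the role of the inner permutation $\sigma\in S_r$ and the $\pm$-sign distinguishing $\ab$ (sum $x+y$) from $\wb$ (difference $x-y$), the latter being forced by the transpose on the $B$-side in \eqref{eq:H-WB-mag}. Once this representation is in hand, the convexity/envelope and rearrangement steps are routine.
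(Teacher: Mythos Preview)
Your envelope-plus-compactness strategy is the paper's, but the variational formula you write down is not what Theorem~\ref{thm:fe-mag} delivers, and this is where the real obstacle lies. Theorem~\ref{thm:fe-mag} expresses $\Phi^\#(\beta,h)$ (for $h>0$, say in the \ab\ case) as
\[
\max_{\vec z\in\Delta^+}\Big(\max_{(X,Y)\in\cH^+_\rho(\vec z)}\phi(X,Y)+h\textstyle\sum_i z_iw_i\Big),
\]
a maximum over \emph{all} Hermitian pairs with $\operatorname{spec}(X+Y)=\vec z$. Your display is this restricted to commuting $(X,Y)$, in which case $\vec z$ is forced to be a permutation of the entrywise vector $\vec x+\vec y$; by Horn's theorem that is a strictly smaller feasible set, and your formula is in general only a lower bound for $\Phi^\#$ (equal at $h=0$, but not for all $h$: for $c<0$ and $h$ large the optimal $(X,Y)$ need not commute). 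Running the envelope argument on your expression therefore computes the one-sided derivative of this lower bound, not of $\Phi^\#$ itself, and a priori yields only an inequality.

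The paper sidesteps this by performing the difference-quotient and compactness argument directly in the $\vec z$-variable of \eqref{eq:fe-max-mag}: with $K$ the set of $\vec z$ at which the inner maximum equals $F_{\max}$, one obtains $\partial_h\Phi|_{h\downarrow 0}=\max_{\vec z\in K}\sum_i z_iw_i$ straight from Theorem~\ref{thm:fe-mag}, and only \emph{afterwards} invokes the trace inequality \eqref{eq:trace-ineq} to identify $K$ with the sorted vectors $\vec x^\star\pm\vec y^\star$ coming from $F$-maximisers. Your route can be repaired by the same ingredient---one checks that every $\phi$-maximiser has $[X,Y]=0$ (saturation in \eqref{eq:trace-ineq} forces commutativity, via stationarity of $\tr XY$ on the unitary orbit), so the restriction to diagonal $(X,Y)$ does not alter the set of $h{=}0$ maximisers nor the one-sided derivatives there---but this is an additional argument that must be supplied, not the bookkeeping about $\sigma$ and the $\pm$ sign that you flag as the main obstacle.
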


	It is natural to take $W$ to have trace zero.  Then, 
	from Proposition \ref{prop:crittemp},
	for all $\b<\b_\crit$ the only maximiser is $\om_0$
	\eqref{eq:om0} and we have
	\begin{equation}\label{eq:trivmagn}
	\frac{\partial \Phi}{\partial h}\big|_{h \downarrow 0}
	=\frac{\partial \Phi}{\partial h}\big|_{h \uparrow 0}=0,
	\end{equation}
	for both {\ab}- and {\wb}-models and for both $c>0$ and $c<0$.  
	This holds also for $\b=\b_\crit$ when $r=2$. 
	
	Let us discuss the case
	$r\geq 3$ in Proposition \ref{prop:tcrittemp} at $\b=\b_\crit$. 
	Recall that $c\geq 0$ in this case.
	Calculations with the point $\om_1$ \eqref{eq:specialxy} give the
	following:
	\begin{itemize}[leftmargin=*]
		\item In the \ab-case, at $\om_1$ the values
		\be
		z_1=\tfrac{r-1}{r}, \quad z_2=\dotsb=z_r=\tfrac1{r(r-1)}
		\ee
		are already decreasing.  Still assuming that $W$ has trace zero, 
		it follows that
		\be
		\tfrac{\partial \Phi^\ab}{\partial h}\big|_{h \downarrow 0}
		=\tfrac{r-2}{r-1} w_1
		\qquad
		\tfrac{\partial \Phi^\ab}{\partial h}\big|_{h \uparrow 0}
		=\tfrac{r-2}{r-1} w_r.
		\ee
		For non-trivial $W$ we have $w_1>0>w_r$, thus the magnetisation is
		discontinuous at the point of phase-transition.
		\item In the \wb-case, at $\om_1$ the ordering of the
		values $x_i-y_{i}$ depends on $\rho$.  
		If $\rho>\tfrac12$ we get
		\be
		z_1=(2\rho-1)\tfrac{r-1}{r}, 
		\quad z_2=\dotsb=z_{r}=\tfrac{2\rho-1}{r(r-1)},
		\ee
		and from there
		\be\label{eq:wbmag}\begin{split}
			\tfrac{\partial \Phi^\wb}{\partial h}\big|_{h \downarrow 0}
			&=(2\rho-1)\tfrac{r-2}{r-1}w_1 
			\\
			\tfrac{\partial \Phi^\wb}{\partial h}\big|_{h \uparrow 0}
			&=(2\rho-1)\tfrac{r-2}{r-1}w_r.
		\end{split}\ee
		For non-trivial $W$, this gives a discontinuous magnetisation.
		In the case $\rho<\tfrac12$, the magnetisation is obtained by
		exchanging $w_1$ and $w_r$ in \eqref{eq:wbmag}.  For
		$\rho=\tfrac12$, the magnetisation is continuous at the point of
		phase-transition.
	\end{itemize}

	\subsection{Ground-state phase diagrams}
\label{ssec:gs}
By analysing the location of the maximiser of the function $F$
 (given in \eqref{eq:F})
%from $\Phi^{\ab}_\beta$ and $\Phi^{\wb}_\beta$, 
in the limit
as $\beta\to\infty$, we can identify the ground-state phase
diagram.  We provide two 
diagrams, one of the $(a,b)$ plane for $c>0$ fixed and one for
$c<0$ fixed.  Since the diagram is invariant under the scaling
$(a,b,c)\to(\alpha a,\alpha b,\alpha c)$ with $\alpha>0$, this
will suffice to describe the whole diagram for $c\neq0$.  The
case $c=0$ is just two uncoupled models on complete graphs
with $T_{i,j}$ transposition interaction; this is covered by
the results of \cite{Bjo16}.

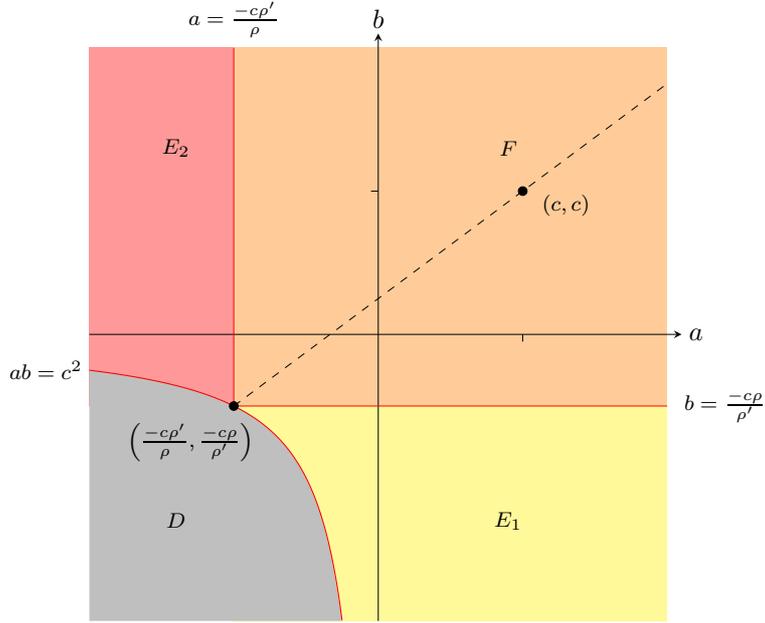
\begin{figure}[h]
  \centering
  \begin{tikzpicture}[scale=1.9]
    
    % \path[pattern = dots, pattern color = yellow, opacity = 1] (0,0) -- (-2, 0) -- (-2,2) -- (2,2) -- (0,0);
    % \path[pattern color=teal, pattern=hatch, hatch size = 5pt, hatch line width = .5pt, opacity = 0.7] (0,0) -- (0,-2) -- (2,-2) -- (2,2) -- (0,0);
    \path[fill = myorange, opacity = 0.4] (-1,-0.5) -- (-1,2) -- (2,2) -- (2,-0.5) -- (-1,-0.5);
    %\path[pattern={mystars[points=4,radius=5pt]}, opacity = 1] (0,0) -- (-2, 0) -- (-2,-2) -- (0,-2) -- (0,0);
    \path[fill = red, opacity = 0.4] (-1,-0.5) -- (-1,2) -- (-2,2) -- (-2,-0.5) -- (-1,-0.5);
    \path[fill = yellow, opacity = 0.4] (-1,-0.5) -- (-1,-2) -- (2,-2) -- (2,-0.5) -- (-1,-0.5);
    \fill [lightgray, domain=-2:-0.25, variable=\x]
  (-2, -2)
  -- plot ({\x}, {1/(2*\x)})
  -- (-0.25,-2)
  -- cycle;
    
    %\filldraw[color=white] (0,0) circle (50pt);
    \draw[->, >=stealth, line width=0.3pt](-2, 0) -- (2.1,0);
    \draw[->, >=stealth, line width=0.3pt](0,-2) -- (0,2.1);
    
    \draw [smooth,samples=100,domain=-2:-0.25,variable=\x, red] plot(\x,{1/(2*\x)});
    \draw[-, >=stealth, line width=0.3pt, red](-1,-0.5) -- (-1,2);
    \draw[-, >=stealth, line width=0.3pt, red](-1,-0.5) -- (2,-0.5);
    \draw[-, dashed, >=stealth, line width=0.3pt](-1,-0.5) -- (2,1.75);
    
    \draw[font=\fontsize{10}{10}] (0, 2.2) node {$b$};
    \draw[font=\fontsize{10}{10}] (2.2, 0) node {$a$};
    
    \draw[fill=black] (1,1) circle (0.03cm);
    \draw[font=\fontsize{8}{10}] (1.3, 0.9) node {$(c,c)$};
    
    \draw[fill=black] (-1,-0.5) circle (0.03cm);
    \draw[font=\fontsize{8}{10}] (-1.3,-0.75) node {$\left(\frac{-c\rho'}{\rho},\frac{-c\rho}{\rho'}\right)$};
    %\draw[font=\fontsize{8}{10}] (-0.63,-1.75) node {$ab\geq c^2$};
    
   % \draw[font=\fontsize{8}{10}] (3,1.8) node {$(a-c)\rho=(b-c)(1-\rho)$};
    \draw[font=\fontsize{8}{10}] (-1,2.2) node {$a=\frac{-c\rho'}{\rho}$};
    \draw[font=\fontsize{8}{10}] (2.4,-0.5) node {$b=\frac{-c\rho}{\rho'}$};
    \draw[font=\fontsize{8}{10}] (-2.3,-0.25) node {$ab=c^2$};
    
    \draw[-, >=stealth, line width=0.3pt](0,1) -- (-0.05, 1);
    \draw[-, >=stealth, line width=0.3pt](1,0) -- (1,-0.05);
    
    \draw[font=\fontsize{8}{10}] (0.9, 1.3) node {{$F$}};
    \draw[font=\fontsize{8}{10}] (-1.4, -1.3) node {$D$};
     \draw[font=\fontsize{8}{10}] (0.9, -1.3) node {$E_1$};
      \draw[font=\fontsize{8}{10}] (-1.4, 1.3) node {$E_2$};
    
    \end{tikzpicture}
    \caption{The ground state phase diagram 
for $c>0$. The dashed line indicates where we have a closed formula for the critical temperature.}
    \label{fig:PDintro-c>0}
\end{figure}

The $c>0$ diagram is portrayed in Figure \ref{fig:PDintro-c>0}.
It displays four distinct regions, separated by the curve $ab=c^2$ ($a,b<0$) and
the lines $a=-c\rho'/\rho$ and $b=-c\rho/\rho'$. The dashed line
 $(a-c)\rho=(b-c)(1-\rho)$ is where we have a precise formula for the critical
 temperature, see Proposition \ref{prop:tcrittemp}. 
%, which can be
%interpreted fairly clearly 
%{\color{red} when $r=2$ (spin $\tfrac12$).}
%\note{the interpretations hold for all $r$?}
%(so the interactions are the Heisenberg interactions 
%$(\SS_i\cdot \SS_j)$). 
The upper right region $F$ is called \emph{ferromagnetic}; the
$c$-interaction between the two blocks is ferromagnetic and the $a$-
and $b$-interactions are either ferromagnetic, or 
 not strong enough to make a difference. 
In this region, we obtain from Theorem
\ref{thm:mag} that the magnetisation %in the $S^{(3)}$ direction 
is maximal. The lower left  region $D$ we
 call \emph{disordered}; it coincides with the range of parameters
 for which there is no
 phase transition at finite temperature, by Proposition
 \ref{prop:crittemp}.  Here the $a$- or $b$-interactions overcome the
 $c$-interactions, and the model 
behaves like two copies of
 the antiferromagnet on the complete graph, which has no phase
 transition \cite{Bjo16}. The magnetisation in this case is $0$. There are also two intermediate regions denoted $E_1$ and $E_2$. Here, at least one of the
 $a$- or $b$-interactions is antiferromagnetic, and the model begins
 to feel this effect. In these regions 
 the magnetisation interpolates between $0$ and its maximal value. As $|a|+|b|$ becomes
 large, we approach the $c=0$ limit of a ferromagnet on one subgraph
 and an antiferromagnet on the other. 
    
    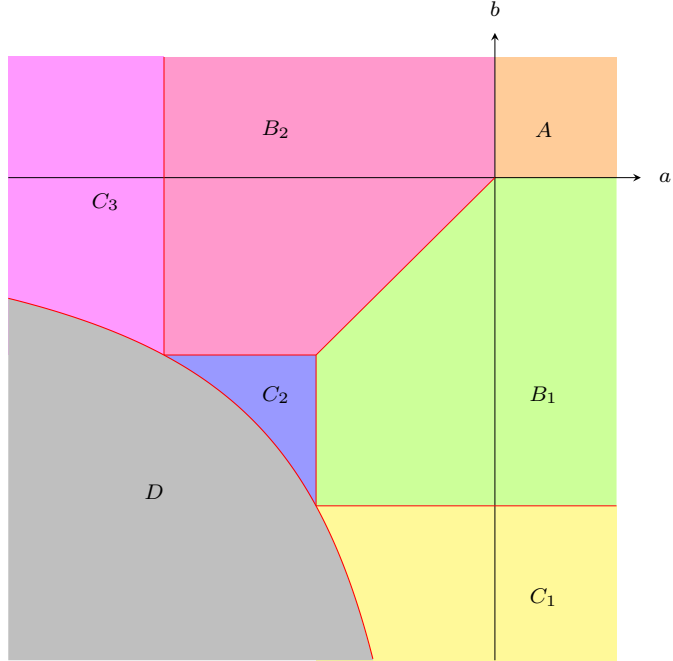
\begin{figure}[h]
    \centering
    \begin{tikzpicture}[scale=3.2]
    
    %region B_2
      \path[fill = blueredred, opacity = 0.4] (0,0) -- (0,.5) --
    (-1.36,.5) -- (-1.36,-0.735) -- (-0.735,-0.735) -- cycle;
    \draw[font=\fontsize{8}{10}] (-0.9,0.2) node {$B_2$};
    % region C_3
    \path[fill = bluered, opacity = 0.4](-1.36,0.5) %-- (-1.36,-0.532)
    -- (-1.36,-0.735) -- (-2,-.735)--(-2,.5) -- cycle;
     \draw[font=\fontsize{8}{10}] (-1.6,-0.1) node {$C_3$};
 % region C_2
    \path[fill = blue, opacity = 0.4]  (-1.36,-0.735) --
    (-0.735,-0.735) -- (-0.735,-1.36 ) --cycle;
\draw[font=\fontsize{8}{10}] (-0.9,-0.9) node {$C_2$};
%region C_1
\path[fill = yellow, opacity = 0.4] (0.5,-2) -- (.5,-1.36) --
(-.735,-1.36) -- (-0.735,-2) -- cycle; 
\draw[font=\fontsize{8}{10}] (0.2,-1.74) node {$C_1$};
%region B_1
    \path[fill = yellowgreen, opacity = 0.4](0,0) -- (-0.735,-.735) --
    (-0.735,-1.36) -- (.5,-1.36) -- (.5,0) -- cycle;
    \draw[font=\fontsize{8}{10}] (0.2,-0.9) node {$B_1$};
%region D
    \fill [lightgray, domain=-2:-0.5, variable=\x](-2, -2) -- plot ({\x}, {1/(1*\x)}) -- (-0.125,-2) -- cycle;
    \draw[font=\fontsize{8}{10}] (-1.4, -1.3) node {$D$};
 %   \draw[font=\fontsize{8}{10}] (-1.4,-1.5) node {$ab\geq c^2$};
    % region A
     \path[fill = myorange, opacity = 0.4] (0.5,0) -- (0,0) --
    (0,0.5) -- (0.5,0.5) -- cycle; 
    \draw[font=\fontsize{8}{10}] (0.2,0.2) node {$A$};
  %  \draw[font=\fontsize{8}{10}] (-1.4,-1.7) node {$\boxed{k=r=5}$};

%axes    
    %\filldraw[color=white] (0,0) circle (50pt);
    \draw[->, >=stealth, line width=0.3pt](-2, 0) -- (.6,0);
    \draw[font=\fontsize{8}{10}] (.7, 0) node {$a$};
    \draw[->, >=stealth, line width=0.3pt](0,-2) -- (0,.6);
    \draw[font=\fontsize{8}{10}] (0, .7) node {$b$};

%delimiters
%curve
    \draw [smooth,samples=100,domain=-2:-0.5,variable=\x, red]
    plot(\x,{1/(1*\x)});
%vertical
    \draw[-, >=stealth, line width=0.3pt, red](-1.36,0.5) -- (-1.36,-.735);
%    \draw[font=\fontsize{8}{10}] (-1.6,.6) node {$a=\frac{c\rho'(r-1)}{\rho}$};
%small vert/hor lines
    \draw[-, >=stealth, line width=0.3pt, red](-1.36,-0.735) -- (-0.735,-0.735);
    \draw[-, >=stealth, line width=0.3pt, red] (-0.735,-0.735) --  (-0.735,-1.36);
%horizontal
  \draw[-, >=stealth, line width=0.3pt, red]   (-0.735,-1.36) --  (0.5,-1.36);
 % \draw[font=\fontsize{8}{10}] (.8,-1.36) node {$b=\frac{c\rho(r-1)}{\rho'}$};
%negative diag
    \draw[-, >=stealth, line width=0.3pt, red](0,0) -- (-0.735,-0.735);
\end{tikzpicture}
    \caption{The ground state phase diagram
for $c<0$, in the case $r=3$.}
    \label{fig:PDintro-c<0}
\end{figure}
    
When $c<0$ and $r=2$ the phase diagram looks identical to the case
when $c>0$, but we  
refer to the upper-right region as \emph{antiferromagnetic}. 
As $r\geq 3$, the diagram looks more complicated, with $2r-1$
intermediate regions 
between the antiferromagnetic and disordered regions. This is
illustrated in  
Figures \ref{fig:PDintro-c<0} (for $r=3$) and \ref{fig:PDproof-c<0}
(for $r=5$),
and described in detail in Proposition
\ref{prop:pdc<0}.

We can give a tentative interpretation of the diagram when $r=3$, $c<0$. Here, the $c$-interaction is $-(\SS_i\cdot \SS_j)^2$ in the $\wb$ model, so spins in one
block want to be orthogonal to those in the other, and is
$-[(\SS_i\cdot \SS_j)+(\SS_i\cdot \SS_j)^2]$ in the $\ab$ model, so spins in one
block want to be at $120^\circ$ to those in the other. The $a$ and $b$
interactions are both $(\SS_i\cdot \SS_j)+(\SS_i\cdot
\SS_j)^2$, so spins want to be aligned. 

One might interpret the diagram as follows. The region $A$ is truly ``anti''-ferromagnetic, in the sense that spins in $A$ are all aligned, and spins in $B$ are all aligned, in some direction orthogonal/at $120^\circ$ to those in $A$. We write ``anti'' in quotation marks since the angle between the spins is not $180^\circ$. There are two regions $B_1, B_2$, and three $C_1,C_2,C_3$. In the $B_1$ region, the spins in $A$ are aligned, and the spins in $B$ are disordered, but lie on the circle which is orthogonal/at $120^\circ$ to the spins in $A$; and vice-versa for $B_2$. As we decrease $b$ into the region $C_1$, the spins in $B$ become more and more disordered, until they are completely decoupled from those in $A$, which remain aligned. Similar for the $C_3$ region. It is difficult to interpret the most interesting region, $C_2$, in this way; there is some disorder in the spins in each block, but enough $c$-interaction to prevent them from completely decoupling.

	\subsection{Heuristics for extremal Gibbs states}
	\label{sec:heuristics}
	
	In \cite{BFU20}, for several models on 
$\ZZ^d$, including the interchange 
	model \eqref{eq:intch}, the authors
	give a heuristic argument which points towards the structure of the
	set $\Psi_\b$ of extremal Gibbs states at inverse temperature
	$\b$.  The description given there is expected to hold for $d$
        large enough, with $d\geq 3$ perhaps being enough.
Rather than explicitly defining the extremal Gibbs states in infinite
volume on the complete graph, the working is by analogy. Specifically,
their heuristics consist of two expected equalities: first that 
	\be\label{eq:Gibbs-states}
	\lim_{\Lambda\to\ZZ^d} \langle e^{\frac{h}{|\Lambda|}\sum_iW_i} \rangle_{\b,\Lambda} = 
	\int_{\Psi_\beta}e^{h\langle W_0\rangle_\psi}d\mu(\psi),
	\ee
	for $r\times r$ matrices $W$, where
	$\langle\cdot\rangle_\psi$ is an extremal Gibbs state, 
$\Psi_\beta$ is the set of extremal Gibbs states,
$d\mu$
        is the measure on $\Psi_\beta$
corresponding to the symmetric Gibbs state, $W_0$ is the operator $W$ at
the lattice site $0$, and the left hand side is the limit of
successively larger boxes $\Lambda\in\ZZ^d$; second that  
	\begin{equation}\label{eq:Gibbs-states-2}
	\lim_{n\to\oo}\langle e^{\frac{h}{n}\sum_iW_i} \rangle_{\b,n} =
	\lim_{\Lambda\to\ZZ^d}\langle e^{\frac{h}{|\Lambda|}\sum_iW_i} \rangle_{\b,\Lambda},
	\end{equation}
	where the left hand term is the observable on the complete graph.
	The left hand side of (\ref{eq:Gibbs-states-2}) is computed rigorously on the complete graph, and then, with the expected structure of $\Psi_\beta$ inserted, the right hand side of (\ref{eq:Gibbs-states}) is rigorously computed, and the two are shown to be the same. This working is not a proof either of the expected equalities (\ref{eq:Gibbs-states}), (\ref{eq:Gibbs-states-2}) or of the expected structure of $\Psi_\beta$, but it gives a consistency check for the three statements.

Using the results of the present paper, 
	we can provide the same
	calculations and heuristics for the 
\ab- and \wb-models.	
Both models have symmetry under $\mathrm{U}(r)$, the group of unitary 
$r\times r$  matrices, and 
for $c>0$, both models are expected to have
extremal Gibbs states labelled by $\CC\PP^{r-1}$, i.e.\ rank 1
projections in $\CC^r$. 
This means that the expected identites \eqref{eq:Gibbs-states} 
and \eqref{eq:Gibbs-states-2}  take the form
\be
\lim_{n\to\infty} \big\langle e^{
    \frac 1n \sum_{i=1}^n W_i }
    \big\rangle_{\beta,n}^\ab
=
 \int_{\CC\PP^{r-1}}  
e^{ \rho \langle W_1\rangle_\psi^\ab
+ (1-\rho)\langle W_2\rangle_\psi^\ab}
d\mu(\psi)
\ee
and
\be
\lim_{n\to\infty} \big\langle e^{
    \frac 1n (\sum_{i\in A} W_i 
-\sum_{j\in B} W_j^{\intercal})}
    \big\rangle_{\beta,n}^\wb
=
\int_{\CC\PP^{r-1}}  
e^{ \rho \langle W_1\rangle_\psi^\wb
-(1- \rho) \langle W^\intercal_2\rangle_\psi^\wb
} d\mu(\psi),
\ee
where $W_1$ and $W_2$ represent $W$ acting on arbitrary
sites in the $A$- and $B$-parts of the graph.
Using the $\mathrm{U}(r)$-invariance and the Harish-Chandra--Itzykson--Zuber
formula as in \cite{BFU20}, this leads to the predictions
\be
\lim_{n\to\infty} \big\langle e^{
    \frac 1n \sum_{i=1}^n W_i }
    \big\rangle_{\beta,n}^\ab
= R(w_1,\dotsc,w_r; x_1+y_1,\dotsc,x_r+y_r)
\ee
and
\be
\lim_{n\to\infty} \big\langle e^{
    \frac 1n (\sum_{i\in A} W_i 
-\sum_{j\in B} W_j^{\intercal})}
    \big\rangle_{\beta,n}^\wb
= R(w_1,\dotsc,w_r; x_1-y_1,\dotsc,x_r-y_r),
\ee
where $x_i=\langle P_1^{e_i}\rangle_{e_1}$
and $y_i=\langle P_2^{e_i}\rangle_{e_1}$ are the expectations of the
projections $P^{e_i}$ onto the subspace spanned by the $i$-th
coordinate vector $e_i=(0,\dotsc,0,1,0,\dotsc,0)$
under the extremal state associated with $\psi=e_1$.
By $\mathrm{U}(r)$-invariance, we expect $x_2=x_3=\dotsb=x_r$
and $y_2=y_3=\dotsb=y_r$, and it is further natural to assume that
$x_1\geq x_2$ and $y_1\geq y_2$.  Since this fits the picture given
(rigorously) by Theorem \ref{thm:TS-twoblock}
and Proposition \ref{prop:maxc>0}, we are motivated to
lend some credence to the stated heuristics.

We now turn to the case of the complete bipartite graph, given by
$a=b=0$. 
By our comments below \eqref{eq:H-wb}, the \wb-model with
$a=b=0$, $c=1$, has Hamiltonian unitarily equivalent to 
\be\label{eq:H-Singlet}
-\frac1n
\sum_{1\leq i\leq m<j\leq n}
P_{i,j},
\ee
where $P_{i,j}$ is ($r$ times) the projection onto the 
singlet state, given by \eqref{eq:P}.
For spin $S=1$ ($r=3$)
we can interpret our results and heuristics to comment on the 
bilinear-biquadratic model, 
which has Hamiltonian  
\be\label{eq:H-Bil-Biq}
-\frac1n
\sum_{1\leq i\leq m<j\leq n}\Big(
J_1(\SS_i\cdot \SS_j)+J_2(\SS_i\cdot \SS_j)^2
\Big),
\ee
where $\SS_i\cdot \SS_j=\sum_{k=1}^3S_i^{(k)}S_j^{(k)}$, 
and $J_1,J_2\in\RR$.  Indeed, using the relations 
$\SS_i \cdot \SS_j =T_{i,j} - P_{i,j}$ and 
$(\SS_i \cdot \SS_j)^2 = P_{i,j} + 1$
(see Lemma 7.1 from \cite{ueltschi}) one can rewrite
(\ref{eq:H-Bil-Biq}), up to addition of a constant, as 
\be\label{eq:H-Bil-Biq2}
-\frac1n
\sum_{1\leq i\leq m<j\leq n}\Big(
J_1T_{i,j}+(J_2-J_1)P_{i,j}
\Big).
\ee 
Setting
$J_1=J_2=\pm1$ gives the \ab\ model with $a=b=0$, $c=\pm1$, 
while setting $J_1=0$, $J_2=\pm1$
gives the \wb\ model with $a=b=0$, $c=\pm1$, in the form
\eqref{eq:H-Singlet}. The case $J_1=0$, $J_2=1$ (i.e.\ our 
\wb-model with $a=b=0$, $c=1$) is the biquadratic
Heisenberg model. These two special cases are exactly those described
by Ueltschi (\cite{ueltschi}, Section 7B) as having 
$\mathrm{SU}(3)$ invariance; 
in our language this is the $\GL(3)$-invariance that 
we exploit in this paper.
	
The phase diagram of the bilinear-biquadratic Heisenberg model on
$\mathbb{Z}^d$, $d\ge3$, is given in Ueltschi \cite{ueltschi}, and we
expect that the model on the complete bipartite graph has the same
diagram. 
(See also \cite{ueltschi-curious}, but beware that the predictions
using Gell-Mann matrices there are most likely wrong.
The corresponding one-dimensional spin chain has a different
phase-diagram, exhibiting dimerization, see \cite{ADCW,BMNU}.)
The biquadratic model ($J_1=0,J_2=1$) lies on the boundary of the nematic
phase of that diagram, but actually belongs to a N\'eel-ordered (or
antiferromagnetic) phase for bipartite graphs.  
Heuristically, we
expect the spins in the $A$-part to be anti-aligned with those in the
$B$-part. Note that for this model if we add a magnetisation term 
in the $S^{(k)}$ direction at every vertex (for any $k=1,2,3$), then, at $\b=\b_\crit$ 
and for $\rho>\tfrac12$, Theorem \ref{thm:mag} tells us that
the magnetisation is 
\be
\frac{\partial \Phi^\wb}{\partial h}\Big|_{h \downarrow 0}=
\rho-\tfrac12,
\ee
(indeed, see Lemma \ref{lem:AppendixB-mag}) which agrees with the picture of anti-aligned spins in the two blocks.

\subsection{Acknowledgements}

JEB gratefully acknowledges financial support from Vetenskapsr{\aa}det,
grants 2015-05195 and 2019-04185, from \emph{Ruth och Nils Erik
  Stenb\"acks stiftelse}, and from
Sabbatical Program at the Faculty of Science,
University of Gothenburg.
HR gratefully acknowledges support from Vetenskapsr{\aa}det,
grant 2020-04221. KR gratefully acknowledges support from the EPSRC Studentship 1936327, and from the FWF stand-alone grant P 34713.
KR would like to thank Sasha Sodin for many useful
discussions. 
JEB and KR are grateful for hospitality at the University of Warwick
and for several enlightening discussions with Daniel Ueltschi.
We all thank Martin Halln\"as for stimulating discussions at the start of
the project.

	\section{Free energy and correlations}
	\label{sec:twoblock}
	
	In this section we prove 
	Theorems \ref{thm:FE-AB}, \ref{thm:FE-WB},
	\ref{thm:TS-twoblock} and \ref{thm:mag}.
	
	\subsection{Interchange model:
		proof of Theorem \ref{thm:FE-AB}}
	\label{sec:AB}
	As noted in the introduction, our method is to identify the
	eigenspaces of the Hamiltonian \eqref{eq:H-AB}. This is
	facilitated by the classical theory of Schur--Weyl duality. 
	We start by 
	recalling a few basic definitions and facts.
	A \emph{partition} $\la\vdash n$ of $n$ is a non-increasing sequence of non-negative
	integers summing to $n$: $\la=(\la_1,\la_2,\dotsc)$ with
	$\la_1\geq\la_2\geq\dotsb\geq0$ and $\sum_{k\geq 1}\la_k=n$.  
	Its \emph{length} $\ell(\la)$ is the number of non-zero entries.
	
	For $\s\in S_n$ a permutation of
	$1,2,\dotsc,n$,  let $T_\s$ be the linear operator on
	$\VV=(\CC^r)^{\otimes n}$ which permutes the tensor factors according
	to $\s$:
	\be\label{eq:T}
	T_\s(v_1\otimes v_2\otimes\dotsb\otimes v_n)=
	v_{\s^{-1}(1)}\otimes v_{\s^{-1}(2)}\otimes 
	\dotsb\otimes v_{\s^{-1}(n)}.
	\ee
	The mapping  $\sigma\mapsto T_\sigma$  is a representation of $S_n$ and hence 
	extends to a representation of the
	group algebra $\CC[S_n]$ on $\VV$.  
	We may also regard $\VV$ as a module for the group 
	$\GL_r(\CC)$ of invertible $r\times r$ matrices by the diagonal action 
	\be
	g(v_1\otimes v_2\otimes\dotsb\otimes v_n)=
	g(v_1)\otimes g(v_2) \otimes\dotsb\otimes g(v_n).
	\ee
	Classical Schur--Weyl duality \cite[Corollary 4.59]{etingof}
	states that these actions of $S_n$ and
	of $\GL_r(\CC)$ are each others' centralisers, so that $\VV$ may be
	regarded as a representation of 
	the direct product $\GL_r(\CC)\times S_n$, and that $\VV$
	decomposes as a multiplicity-free direct sum of irreducible
	representations of $\GL_r(\CC)\times S_n$.   Specifically,
	\be\label{eq:swd}
	\VV= \bigoplus_{\la\vdash n,\, \ell(\la)\leq r} 
	U_\la \otimes V_\la.
	\ee
	Here $U_\la$ is the irreducible $\GL_r(\CC)$-representation 
	indexed by (its highest weight) $\la$, and 
	$V_\la$ is the irreducible $S_n$-representation (Specht module)
	indexed by $\la$.  We use the same notation $T$ for the representation
	of  $\GL_r(\CC)\times S_n$ on $\VV$.
	
	Recall our Hamiltonian  $H^\ab_n$ given in \eqref{eq:H-AB}.
	We now write this  as  
	$H^\ab_n=T(h^\ab_n)$ where
	\be
	h^\ab_n=-\tfrac1n[
	(a-c) \a_A+(b-c) \a_B+c \,\a_{AB}
	],
	\ee
	and where $\a_A,\a_B,\a_{AB}$ are the following elements of
	$\CC[S_n]$:
	\be
	\a_A=\sum_{1\leq i<j\leq m} (i,j),\quad
	\a_B=\sum_{m+1\leq i<j\leq n} (i,j),\quad
	\a_{AB}=\sum_{1\leq i<j\leq n} (i,j).
	\ee
	We have by linearity that 
	$e^{-\b H^\ab_n}=T(e^{-\b h^\ab_n})$.
	Now let $W$ be an $r\times r$ matrix over $\CC$.
	Then $e^W\in\GL_r(\CC)$ and we have that 
	$T(e^W)=\exp\big(\textstyle\sum_{i=1}^n W_i\big)$.
	Thus we may write 
	\be
	\exp\big(\textstyle\sum_{i=1}^n W_i\big)
	e^{-\b H^\ab_n}
	=T\big(e^W e^{-\b h^\ab_n}\big),
	\ee
	where $e^W e^{-\b h^\ab_n}\in \CC[\GL_r(\CC)\times S_n]$.
	
	Let us now consider how $e^W\times e^{-\b h^\ab_n}$ acts on the
	right-hand-side of \eqref{eq:swd}, starting with how 
	$e^{-\b h^\ab_n}$ acts on $V_\la$.  The term $\a_{AB}$ is the sum
	of all elements of a conjugacy class (the transpositions), hence it
	belongs to the center of $\CC[S_n]$.  By Schur's Lemma, it therefore
	acts as a constant multiple of the identity on $V_\la$.  The constant
	in question is well known 
	\cite[p.~52]{fulton-harris}
	to equal the \emph{content} of the
	partition $\la$, defined by
	\be\label{eq:ct-sum}
	\ct(\la)=\sum_{j\geq 1}\Big(\frac{\la_j(\la_j+1)}{2}-
	j \la_j\Big).
	\ee  
	(This equals the sum of 
	the contents of all boxes in the
	Young diagram of $\la$, where the content of a box in position
	$(x,y)$ is $y-x$.)
	We have
	\be
	\a_{AB}|_{V_\la}=\ct(\la) \Id_{V_\la}.
	\ee
	Now, to deal with the remaining two terms
	$\a_A$ and $\a_B$, note that 
	as a representation of $S_m\times S_{n-m}$,
	the module $V_\la$ splits as
	\be\label{eq:LR-def}
	V_\la= \bigoplus_{\mu\vdash m,\,\nu\vdash n-m}
	c_{\mu,\nu}^\la V_\mu\otimes V_\nu,
	\ee
	where $c_{\mu,\nu}^\la$ are non-negative integers known as 
	the \emph{Littlewood--Richardson coefficients}.  
	We give more details about these numbers later, for now we just note
	that  
	$c_{\mu,\nu}^\la\neq0$ only if $\ell(\mu),\ell(\nu)\leq \ell(\la)$.
	On each term of the sum in \eqref{eq:LR-def}, $\a_A$ acts as
	$\ct(\mu) \Id_{V_\mu}$ and $\a_B$ acts as
	$\ct(\nu) \Id_{V_\nu}$, consequently $h^\ab_n$ acts on that term as 
	\be
	-\tfrac1n
	[(a-c)\ct(\mu)+(b-c)\ct(\nu)+c\,\ct(\la)]\Id_{V_\mu\otimes V_\nu},
	\ee
	and therefore $e^{-\b h^\ab_n}$ acts as
	\be
	\exp\big(\tfrac\b n
	[(a-c)\ct(\mu)+(b-c)\ct(\nu)+c\,\ct(\la)]\big)
	\Id_{V_\mu\otimes V_\nu}.
	\ee
	
	As to the factor $e^W$, we first note that the character of the module
	$U_\la$ evaluated at $g\in \GL_r(\CC)$ with eigenvalues
	$x_1,\dotsc,x_r$ is the Schur polynomial:
	\be\label{eq:GL-char}
	\chi_{U_\la}[g]=s_\la(x_1,\dotsc,x_r)=
	\frac{\det[x_i^{\la_j+r-j}]_{i,j=1}^r}{\prod_{1\leq i<j\leq r}(x_i-x_j)}.
	\ee
	If $W$ has eigenvalues $w_1,\dotsc,w_r$, then $e^W$
	has eigenvalues $e^{w_1},\dotsc,e^{w_r}$.  
	Writing $d_\mu,d_\nu$ for the dimensions of $V_\mu,V_\nu$, we may
	summarise these findings as follows:
	
	\begin{lemma}\label{lem:A-AB}
		Suppose that $W$ has eigenvalues  $w_1,\dotsc,w_r$.  Then
		\be\label{eq:eW-bi}
		\begin{split}
			\tr_\VV[\exp\big(\textstyle{\sum_{i=1}^n} W_i\big) 
			e^{-\b H^\ab_n}]
			=
			&\sum_{\la,\mu,\nu} s_\la(e^{w_1},\dotsc,e^{w_r}) 
			c_{\mu,\nu}^\la  d_\mu d_\nu  \\
			&\quad\cdot\exp\Big(\tfrac\b n[(a-c)\ct(\mu)+(b-c)\ct(\nu)+c\cdot\ct(\la)]
			\Big),
		\end{split}
		\ee
		where the sum is over $\la\vdash n$ with $\ell(\la)\leq r$,
		$\mu\vdash m$, and $\nu\vdash n-m$.
		In particular, setting $W$ to be the zero matrix (so that $e^W=\Id$), 
		\be\label{eq:Z-bi}
		Z_{\b,n}^\ab=
		\sum_{\la,\mu,\nu} s_\la(1,\dotsc,1) 
		c_{\mu,\nu}^\la  d_\mu d_\nu 
		\exp\Big(\tfrac\b n[(a-c)\ct(\mu)+(b-c)\ct(\nu)+c\cdot\ct(\la)] \Big).
		\ee
	\end{lemma}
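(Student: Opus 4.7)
The plan is to assemble the observations in the paragraphs preceding the lemma into a direct trace calculation on the right-hand side of the Schur--Weyl decomposition \eqref{eq:swd}. The key point is that $\exp(\sum_{i=1}^n W_i)\,e^{-\b H_n^\ab}$ equals $T(e^W \otimes e^{-\b h_n^\ab})$, so it comes from an element of $\GL_r(\CC)\times S_n$ acting through $T$; by \eqref{eq:swd} the trace splits as a sum over $\la\vdash n$ with $\ell(\la)\le r$ of the trace on each multiplicity-free summand $U_\la\otimes V_\la$, and on each such summand the two tensor factors act independently, so the trace factorises as $\chi_{U_\la}[e^W]\cdot \tr_{V_\la}[e^{-\b h_n^\ab}]$.

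For the $\GL_r(\CC)$-factor, I would simply quote \eqref{eq:GL-char}: since $e^W$ has eigenvalues $e^{w_1},\dotsc,e^{w_r}$, the character value is $s_\la(e^{w_1},\dotsc,e^{w_r})$. For the $S_n$-factor, I would use the Littlewood--Richardson branching \eqref{eq:LR-def} to refine $V_\la$ as an $S_m\times S_{n-m}$-module, obtaining $c_{\mu,\nu}^\la$ copies of each $V_\mu\otimes V_\nu$. The elements $\a_A,\a_B,\a_{AB}$ are central in $\CC[S_m],\CC[S_{n-m}],\CC[S_n]$ respectively, so by Schur's lemma combined with the content identity \eqref{eq:ct-sum}, they act on $V_\mu\otimes V_\nu$ by the scalars $\ct(\mu),\ct(\nu),\ct(\la)$. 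Exponentiating, $e^{-\b h_n^\ab}$ acts on each copy of $V_\mu\otimes V_\nu$ as the scalar $\exp\bigl(\tfrac\b n[(a-c)\ct(\mu)+(b-c)\ct(\nu)+c\,\ct(\la)]\bigr)$, whose trace on that $d_\mu d_\nu$-dimensional copy is $d_\mu d_\nu$ times the scalar.

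Summing over all triples $(\la,\mu,\nu)$ with $\la\vdash n$, $\mu\vdash m$, $\nu\vdash n-m$ produces the identity \eqref{eq:eW-bi}; specialising $W=0$ and noting that $s_\la(1,\dotsc,1)=\dim U_\la$ (so that each $U_\la$ contributes its dimension) yields \eqref{eq:Z-bi}.

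The one subtlety worth checking is the compatibility of the two applications of Schur's lemma: one uses centrality of $\a_{AB}$ in $\CC[S_n]$ acting on $V_\la$, while the other uses centrality of $\a_A,\a_B$ in $\CC[S_m]$ and $\CC[S_{n-m}]$ acting on the finer pieces $V_\mu\otimes V_\nu$. This is fine because \eqref{eq:LR-def} is a refinement of the $S_n$-action: $\a_{AB}$ still acts as $\ct(\la)\,\Id$ on every $V_\mu\otimes V_\nu$ summand of $V_\la$, so all three scalars can be read off simultaneously on each piece. Given how thoroughly the ingredients have been set up in the preceding paragraphs, I do not anticipate a substantive obstacle; the proof is essentially an organisational exercise assembling the stated facts.
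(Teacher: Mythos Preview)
Your proposal is correct and follows essentially the same approach as the paper: the paper does not give a separate proof of the lemma but rather states it as a summary of the preceding discussion, which is exactly what you have reconstructed. Your remark about the compatibility of the two applications of Schur's lemma is a helpful clarification that the paper leaves implicit.
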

	
	We will use that 
	\be\label{eq:dimU}
	s_\la(1,\dotsc,1)=
	\dim(U_\la)=\prod_{1\leq i<j\leq r} 
	\frac{\la_i-i-\la_j+j}{j-i}.
	\ee
	As to $d_\mu$, a convenient formula is
	\be\label{eq:dimV}
	d_\mu=\dim(V_\mu)=
	\frac{n!}{m_1!\dotsb m_r!} 
	\prod_{1\leq i<j\leq r}(m_i-m_j)
	\ee
	where $m_i=\mu_i+r-i$, see \cite[(4.11)]{fulton-harris}.
	
	In Lemma \ref{lem:A-AB} we have written the partition function as a
	sum of terms exponentially large in $n$, with relatively few
	summands.  Such a sum is dominated by its largest term.  To prove
	Theorem \ref{thm:FE-AB} we need to understand the asymptotic behavior
	of each of the factors in \eqref{eq:Z-bi}, and since only terms with $c^\la_{\mu,\nu}\neq0$ appear in the sum, we need a condition for
	$c^\la_{\mu,\nu}\neq0$.  
	
	\begin {proof}[Proof of Theorem \ref{thm:FE-AB}]
	First, from \eqref{eq:dimU} we see that
	$\dim(U_\la)=s_\la(1,\dotsc,1)$   
	is positive whenever $\ell(\la)\leq r$, and that 
	$\dim(U_\la)=\exp(o(n))$ where the $o(n)$ is
	uniform in $\la$.
	Now consider the coefficients $c^\la_{\mu,\nu}$.
	These are known (see e.g.\ 
	\cite[Chapter 5, Proposition 3]{fulton:young}) to equal the size of a
	certain subset of semi-standard tableaux with shape 
	$\la\sm\mu$ filled with $\nu_1$ 1's, $\nu_2$ 2's, etc.
	In particular, $c^\la_{\mu,\nu}>0$ only if $\mu$ is contained in
	$\la$,  and then $\ell(\mu)\leq\ell(\la)\leq r$.  Since
	$c^\la_{\mu,\nu}=c^\la_{\nu,\mu}$ (see \cite{fulton:young} again)
	we also need $\ell(\nu)\leq r$ for $c^\la_{\mu,\nu}>0$.  
	The combinatorial description also gives the upper bound
	$c_{\mu,\nu}^\la\leq (n+1)^{r^2}=\exp(o(n))$ where the $o(n)$ is
	uniform in $\la,\mu,\nu$.

	We now turn to the remaining factors in \eqref{eq:Z-bi}.
	First, as one can see in \eqref{eq:dimV}, for fixed $r$ we have that
	$d_\mu$ is essentially a multinomial coefficient.  Thus (see
	e.g.\ \cite[pp. 14--15]{Bjo16} for details), we have
	\be\label{eq:d-asy}
	\textstyle
	\frac1n\log d_\mu=
	-\sum_{j=1}^r \tfrac{\mu_j}{n}\log\tfrac{\mu_j}{n}
	+O(\tfrac{\log n}{n}).
	\ee
	Next, from \eqref{eq:ct-sum} we have that
	\be\label{eq:ct-asy}
	\textstyle
	\ct(\la)=\tfrac{n^2}2\sum_{j=1}^r \big(\tfrac{\la_j}n\big)^2+O(n).
	\ee
	Taken altogether, these facts  mean 
	that we can write \eqref{eq:Z-bi} as 
	\be\label{eq:Z-bi-approx}
	Z_{\b,n}^\ab=
	\sum_{\la,\mu,\nu}
	\one\{c_{\mu,\nu}^\la>0\}  
	\exp\Big(n \Big\{
	\tilde F(\tfrac{\mu}{n},\tfrac{\nu}{n},\tfrac{\la}{n})+o(1)\Big\} \Big),
	\ee
	where $\la\vdash n$, $\mu\vdash m$ and $\nu\vdash n-m$, 
	all having $\leq r$ rows, and where
	\be\begin{split}\label{eq:phi}
		\tilde F(\vec x,\vec y,\vec z)=&\textstyle 
		-\sum_{j=1}^r x_j\log x_j
		-\sum_{j=1}^r y_j\log y_j\\
		&+\textstyle
		\frac\b2 \big[(a-c)\sum_{j=1}^r x_j^2
		+(b-c) \sum_{j=1}^r y_j^2
		+c \sum_{j=1}^r z_j^2\big].
	\end{split}\ee
	There is a necessary and sufficient condition for
	$c^\la_{\mu,\nu}>0$ 
	which is very useful for our purposes,
	known as \emph{Horn's conjecture}, proved by 
Knutson and Tao \cite{knutson-tao}.  It
	is best stated for our purposes in terms of eigenvalues of Hermitian matrices,
	as follows:
	$c_{\mu,\nu}^\la>0$ if and only if there are Hermitian $r\times r$
	matrices $X$ and $Y$ with eigenvalues $\mu_1,\dotsc,\mu_r$ 
	and $\nu_1,\dotsc,\nu_r$, respectively,
	such that $X+Y$ has eigenvalues $\la_1,\dotsc,\la_r$.   
	For information about this, see  e.g.\ \cite{fulton:horn}.
	We thus have
	\be\label{eq:Om-plus} 
	c_{\mu,\nu}^\la>0 \mbox{ if and only if }
	(\tfrac\mu n,\tfrac\nu n,\tfrac\la n)\in\Om_{m/n}^+
	\ee
	where $\Om_\rho^+$ is the set of triples $(\vec x,\vec y,\vec z)$ 
	such that there exist 
	positive semidefinite Hermitian matrices $X$, $Y$ with
	$\tr(X)=1-\tr(Y)=\rho$ having eigenvalues $x_1,\dotsc,x_r$ 
	and $y_1,\dotsc,y_r$, respectively, such that $Z=X+Y$ has 
	eigenvalues $z_1,\dotsc,z_r$.  
	
	From \eqref{eq:Z-bi-approx} and the fact that $\tilde F$
	is continuous in its arguments, we conclude that 
	\be\label{eq:lim1}
	\tfrac1n\log Z_{\b,n}^\ab\to \max_{(\vec x,\vec y,\vec z)\in\Om^+_\rho}
	\tilde F(\vec x,\vec y,\vec z).
	\ee
	See e.g.\ \cite[Section~3]{Bjo16} for a detailed argument in a similar
	setting.   
	Now note that if $X,Y,Z$ are as above, then 
	\be\textstyle
	\sum_{j=1}^r x_j^2=\tr(X^2),\qquad
	\sum_{j=1}^r y_j^2=\tr(Y^2),
	\ee
	and also
	\be\textstyle
	\sum_{j=1}^r z_j^2=\tr(Z^2)=\tr\big((X+Y)^2\big)
	=\tr(X^2)+\tr(Y^2)+2\,\tr(XY).
	\ee
	Thus
	\be
	(a-c)\sum_{j=1}^r x_j^2
	+(b-c) \sum_{j=1}^r y_j^2
	+c \sum_{j=1}^r z_j^2
	=
	\tr\big[ a X^2+b Y^2+ 2cXY \big].
	\ee
	So for $(\vec x,\vec y,\vec z)\in\Om_\rho$, we have that 
	\be \label{eq:phi-newer}
	\tilde F(\vec x,\vec y,\vec z)=
	\phi(X,Y):= S(X)+S(Y)+\tfrac\b2 
	\tr\big[ a X^2+b Y^2+ 2cXY  \big],
	\ee
	where $S$ is the von Neumann entropy 
	\be\label{eq:neumann}
	S(X)=-\tr(X \log X)=-\sum_{i=1}^r x_i\log x_i.
	\ee
	It follows that 
	\be\label{eq:maxXY}
	\tfrac1n\log Z_{n}^\ab(\b)\to \max_{X,Y}
	\phi(X,Y)
	\ee
	where the maximum is over positive definite Hermitian matrices $X,Y$
	with $\tr(X)=1-\tr(Y)=\rho$.
	
	The final step is to use the fact that for 
	positive semidefinite Hermitian matrices $X,Y$
	with fixed spectra $x_1,\dotsc,x_r$ and
	$y_1,\dotsc,y_r$, respectively, 
	ordered so that $x_1\geq x_2\geq \dotsb\geq x_r$ and
	$y_1\geq y_2\geq \dotsb\geq y_r$, we have the inequality
	\be\label{eq:trace-ineq}
	\sum_{j=1}^r x_j y_{r+1-j} \leq \tr[XY]\leq \sum_{j=1}^r x_j y_j,
	\ee
	see e.g.\ \cite[Prop.~9.H.1.g-h]{majorization}
	(we discuss this result in Appendix \ref{sec:trXY}).
	In particular, both the maximum and the minimum of $\tr[XY]$ are 
	attained when $X,Y$ are simultaneously diagonal.
	Since the other terms in $F(\vec x,\vec y)$ are symmetric under
	permuting the $x_i$ or the $y_i$, 
	the result follows.
\end{proof}

\subsection{Walled Brauer algebra:
	proof of Theorem \ref{thm:FE-WB}}
\label{sec:WB}

As noted above, our analysis of the model in \eqref{eq:H-wb} 
uses the walled Brauer
algebra.  We will now define this algebra, and collect some facts which allow us to approach a proof in a similar way to that of Theorem \ref{thm:FE-AB}.  An accessible introduction to the walled Brauer algebra is
given in \cite{nikitin}, and its Schur--Weyl duality is proved in
\cite{benkart-et-al}, at least for the range $r \ge n$. The extension to all $r,n$ is a straightforward extension of the work in \cite{benkart-et-al}.

Let us first define the (usual)
Brauer algebra.  Fix $n \in \NN, r \in\CC$. 
Arrange two rows each of $n$ labelled vertices, one above the
other.  We call a \textit{diagram} a graph on these $2n$ vertices, with
each vertex having degree one. Let $B_n$ be the set of such
diagrams.  The Brauer algebra $\BB_{n}(r)$ is the formal complex span of
$B_n$.  Multiplication of two diagrams is defined as follows.  Taking
two diagrams $g,h$, identify the upper vertices of $h$ with the lower
of $g$. Then form a new diagram by concatenation and removing any
closed loops, 
as in Figure \ref{fig:4:multiplicationofdiagramsexample}.
The product $gh$ is the concatenation, multiplied by 
$r^{\# \mathrm{loops}}$, where $\# \mathrm{loops}$ is the number of
loops removed. 

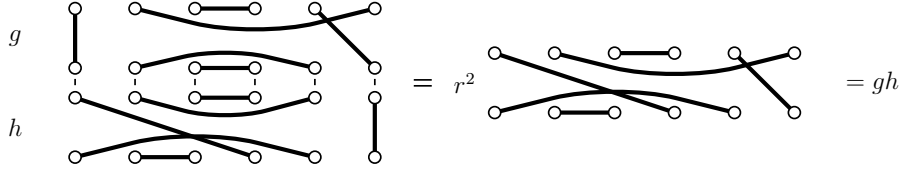
\begin{figure}[h]
	\centering
	\resizebox{0.95\textwidth}{!}{%
		\begin{tikzpicture}[scale=1]

		%product dotted lines
		\draw[-, rounded corners=10pt, dashed, thick](2, 1) -- (2, 1.5);
		\draw[-, rounded corners=10pt, dashed, thick](0, 1) -- (0, 1.5);
		\draw[-, rounded corners=10pt, dashed, thick](1,1) -- (1, 1.5);
		\draw[-, rounded corners=10pt, dashed, thick](3, 1) -- (3, 1.5);
		\draw[-, rounded corners=10pt, dashed, thick](5,1) -- (5, 1.5);
		\draw[-, rounded corners=10pt, dashed, thick](4,1) -- (4, 1.5);

		%brauer diagram #2
		%label
		\draw[font=\large] (-1, 0.5) node {$h$};
		%paths
		\draw[-, rounded corners=20pt, line width=2pt, black](1, 1) -- (2.5, 0.6) -- (4, 1);
		\draw[-, rounded corners=10pt, line width=2pt, black](0, 1) -- (3,0);
		\draw[-, rounded corners=10pt, line width=2pt, black](2, 1) -- (3, 1);
		\draw[-, rounded corners=10pt, line width=2pt, black](2, 0) -- (1, 0);
		\draw[-, rounded corners=30pt, line width=2pt, black](4,0) -- (2,0.5) -- (0, 0);
		\draw[-, rounded corners=10pt, line width=2pt, black](5,0) -- (5, 1);
		
		%brauer
		%startpoints and endpoints: northern
		\draw[thick, fill=white] (0, 1) circle (0.1cm);
		\draw[thick, fill=white] (1, 1) circle (0.1cm);
		\draw[thick, fill=white] (2, 1) circle (0.1cm);
		\draw[thick, fill=white] (3, 1) circle (0.1cm);
		\draw[thick, fill=white] (4, 1) circle (0.1cm);
		\draw[thick, fill=white] (5, 1) circle (0.1cm);
		%southern:
		\draw[thick, fill=white] (0, 0) circle (0.1cm);
		\draw[thick, fill=white] (1, 0) circle (0.1cm);
		\draw[thick, fill=white] (2, 0) circle (0.1cm);
		\draw[thick, fill=white] (3, 0) circle (0.1cm);
		\draw[thick, fill=white] (4, 0) circle (0.1cm);
		\draw[thick, fill=white] (5, 0) circle (0.1cm);

		%brauer diagram #1
		%label
		\draw[font=\large] (-1, 2) node {$g$};
		%paths
		\draw[-, rounded corners=10pt, line width=2pt, black](2, 2.5) -- (3, 2.5);
		\draw[-, rounded corners=10pt, line width=2pt, black](0, 2.5) -- (0, 1.5);
		\draw[-, rounded corners=20pt, line width=2pt, black](4,1.5) -- (2.5,1.85) -- (1, 1.5);
		\draw[-, rounded corners=10pt, line width=2pt, black](2, 1.5) -- (3, 1.5);
		\draw[-, rounded corners=10pt, line width=2pt, black](4,2.5) -- (5, 1.5);
		\draw[-, rounded corners=30pt, line width=2pt, black](5,2.5) -- (3, 2) -- (1, 2.5);
		
		%brauer
		%startpoints and endpoints: northern
		\draw[thick, fill=white] (0, 1.5) circle (0.1cm);
		\draw[thick, fill=white] (1, 1.5) circle (0.1cm);
		\draw[thick, fill=white] (2, 1.5) circle (0.1cm);
		\draw[thick, fill=white] (3, 1.5) circle (0.1cm);
		\draw[thick, fill=white] (4, 1.5) circle (0.1cm);
		\draw[thick, fill=white] (5, 1.5) circle (0.1cm);
		%southern:
		\draw[thick, fill=white] (0, 2.5) circle (0.1cm);
		\draw[thick, fill=white] (1, 2.5) circle (0.1cm);
		\draw[thick, fill=white] (2, 2.5) circle (0.1cm);
		\draw[thick, fill=white] (3, 2.5) circle (0.1cm);
		\draw[thick, fill=white] (4, 2.5) circle (0.1cm);
		\draw[thick, fill=white] (5, 2.5) circle (0.1cm);

		%equals sign
		\draw[-, rounded corners=10pt, thick](5.65,1.3) -- (5.95, 1.3);
		\draw[-, rounded corners=10pt, thick](5.65,1.2) -- (5.95, 1.2);

		%brauer diagram the result
		%label
		\draw[font=\large] (13.3, 1.25) node {$=gh$};
		\draw[font=\large] (6.5, 1.25) node {$r^2$};
		%paths
		\draw[-, rounded corners=10pt, line width=2pt, black](10, 1.75) -- (9, 1.75);
		\draw[-, rounded corners=10pt, line width=2pt, black](7, 1.75) -- (10, 0.75);
		\draw[-, rounded corners=30pt, line width=2pt, black](8, 1.75) -- (10, 1.25) -- (12, 1.75);
		\draw[-, rounded corners=10pt, line width=2pt, black](8, 0.75) -- (9, 0.75);
		\draw[-, rounded corners=30pt, line width=2pt, black](7, 0.75) -- (9, 1.25) -- (11, 0.75);
		\draw[-, rounded corners=10pt, line width=2pt, black](12,0.75) -- (11, 1.75);
		
		%brauer
		%startpoints and endpoints: northern
		\draw[thick, fill=white] (7, 0.75) circle (0.1cm);
		\draw[thick, fill=white] (8, 0.75) circle (0.1cm);
		\draw[thick, fill=white] (9, 0.75) circle (0.1cm);
		\draw[thick, fill=white] (10, 0.75) circle (0.1cm);
		\draw[thick, fill=white] (11, 0.75) circle (0.1cm);
		\draw[thick, fill=white] (12, 0.75) circle (0.1cm);
		%southern:
		\draw[thick, fill=white] (7, 1.75) circle (0.1cm);
		\draw[thick, fill=white] (8, 1.75) circle (0.1cm);
		\draw[thick, fill=white] (9, 1.75) circle (0.1cm);
		\draw[thick, fill=white] (10, 1.75) circle (0.1cm);
		\draw[thick, fill=white] (11, 1.75) circle (0.1cm);
		\draw[thick, fill=white] (12, 1.75) circle (0.1cm);
		
		\end{tikzpicture}
	}
	\caption{Two diagrams $g$ and $h$ (left), and their product (right). The concatenation contains two loops, so we multiply the concatenation with middle vertices removed by $r^2$.}
	\label{fig:4:multiplicationofdiagramsexample}
	
\end{figure}

The walled Brauer algebra is a subalgebra of $\BB_{n}(r)$.  Let
$m\leq n$.  Returning to the $2n$ labelled vertices, draw
a line (a ``wall'') separating the leftmost $2m$ vertices and the
rightmost $2(n-m)$.  Let $B_{n,m}$ be the set of diagrams in $B_n$ with
the condition that any edge connecting two upper vertices or two lower
vertices \textit{must} cross the wall, and any edge connecting an
upper vertex and a lower vertex \textit{must not} cross the wall; see Figure \ref{fig:WB-diagram}. The
walled Brauer algebra $\BB_{n,m}(r)$ is the span of $B_{n,m}$, with
multiplication as in the Brauer algebra.

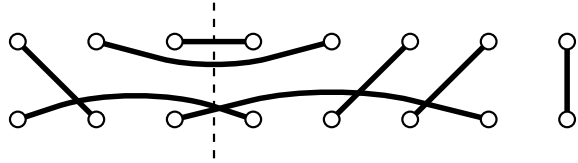
\begin{figure}[h]
	\centering
	\resizebox{0.6\textwidth}{!}{%
		\begin{tikzpicture}[scale=1]
		
		%the wall
		\draw[-, rounded corners=10pt, dashed, thick](2.5, -0.5) -- (2.5, 1.5);
		
		%walled brauer diagram #2
		%paths
		\draw[-, rounded corners=20pt, line width=2pt, black](1, 1) -- (2.5, 0.6) -- (4, 1);
		\draw[-, rounded corners=10pt, line width=2pt, black](0, 1) -- (1,0);
		\draw[-, rounded corners=10pt, line width=2pt, black](2, 1) -- (3, 1);
		\draw[-, rounded corners=30pt, line width=2pt, black](2, 0) -- (4,0.5) -- (6, 0);
		\draw[-, rounded corners=30pt, line width=2pt, black](3,0) -- (1.5,0.5) -- (0, 0);
		\draw[-, rounded corners=10pt, line width=2pt, black](4,0) -- (5, 1);
		\draw[-, rounded corners=10pt, line width=2pt, black](5,0) -- (6, 1);
		\draw[-, rounded corners=10pt, line width=2pt, black](7,0) -- (7, 1);
		
		%startpoints and endpoints: northern
		\draw[thick, fill=white] (0, 1) circle (0.1cm);
		\draw[thick, fill=white] (1, 1) circle (0.1cm);
		\draw[thick, fill=white] (2, 1) circle (0.1cm);
		\draw[thick, fill=white] (3, 1) circle (0.1cm);
		\draw[thick, fill=white] (4, 1) circle (0.1cm);
		\draw[thick, fill=white] (5, 1) circle (0.1cm);
		\draw[thick, fill=white] (6, 1) circle (0.1cm);
		\draw[thick, fill=white] (7, 1) circle (0.1cm);
		%southern:
		\draw[thick, fill=white] (0, 0) circle (0.1cm);
		\draw[thick, fill=white] (1, 0) circle (0.1cm);
		\draw[thick, fill=white] (2, 0) circle (0.1cm);
		\draw[thick, fill=white] (3, 0) circle (0.1cm);
		\draw[thick, fill=white] (4, 0) circle (0.1cm);
		\draw[thick, fill=white] (5, 0) circle (0.1cm);
		\draw[thick, fill=white] (6, 0) circle (0.1cm);
		\draw[thick, fill=white] (7, 0) circle (0.1cm);

		\end{tikzpicture}
	}
	\caption{A diagram in the basis $B_{8,3}$ of the walled Brauer algebra $\BB_{8,3}(r)$. Notice that all edges connecting two upper vertices (or two lower) cross the wall, and all edges connecting an upper vertex to a lower vertex do not.}
	\label{fig:WB-diagram}
	
\end{figure}

Some useful representation-theoretic facts follow.
First, the group algebra $\CC[S_m \times S_{n-m}]$ is a
subalgebra of $\BB_{n,m}(r)$ whose basis $S_m \times S_{n-m}$
consists of  those
diagrams with no edges crossing the wall.
As above, we let $(i,j)$ denote the transposition
exchanging $i$ and $j$.  Note that in the walled Brauer algebra, we must
have $1 \le i,j \le m$ or $m+1 \le i,j \le n$. 
For $1 \le i \le m < j \le n$, let $(\overline{i,j})$ denote the 
diagram with all edges
vertical, except that the $i^{\mathrm{th}}$ and $j^{\mathrm{th}}$ 
upper vertices are
connected, and the $i^{\mathrm{th}}$ and $j^{\mathrm{th}}$ 
lower vertices are
connected; see Figure \ref{fig:transpositions}. The elements $(i,j)$ and $(\overline{i,j})$ generate the
walled Brauer algebra.

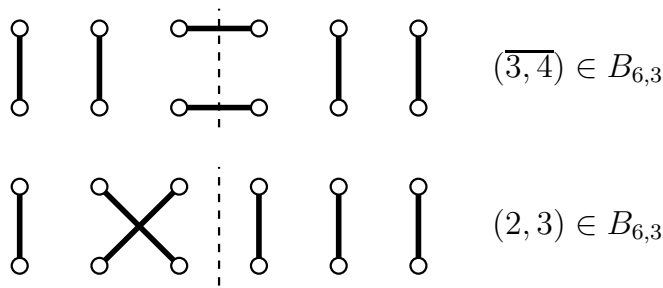
\begin{figure}[h]
	\centering
	\resizebox{0.7\textwidth}{!}{%
		\begin{tikzpicture}[scale=1]
		
		%the walls
		\draw[-, rounded corners=10pt, dashed, thick](2.5, -0.25) -- (2.5, 1.25);
		\draw[-, rounded corners=10pt, dashed, thick](2.5, 1.75) -- (2.5, 3.25);
		
		%symmetric group transposition
		%label
		\draw[font=\large] (7, 0.5) node {$(2,3)\in B_{6,3}$};
		%paths
		\draw[-, rounded corners=20pt, line width=2pt, black](1, 1) -- (2, 0);
		\draw[-, rounded corners=10pt, line width=2pt, black](0, 1) -- (0,0);
		\draw[-, rounded corners=10pt, line width=2pt, black](2, 1) -- (1, 0);
		\draw[-, rounded corners=10pt, line width=2pt, black](3, 0) -- (3, 1);
		\draw[-, rounded corners=30pt, line width=2pt, black](4,0)  -- (4, 1);
		\draw[-, rounded corners=10pt, line width=2pt, black](5,0) -- (5, 1);
		
		%brauer
		%startpoints and endpoints: northern
		\draw[thick, fill=white] (0, 1) circle (0.1cm);
		\draw[thick, fill=white] (1, 1) circle (0.1cm);
		\draw[thick, fill=white] (2, 1) circle (0.1cm);
		\draw[thick, fill=white] (3, 1) circle (0.1cm);
		\draw[thick, fill=white] (4, 1) circle (0.1cm);
		\draw[thick, fill=white] (5, 1) circle (0.1cm);
		%southern:
		\draw[thick, fill=white] (0, 0) circle (0.1cm);
		\draw[thick, fill=white] (1, 0) circle (0.1cm);
		\draw[thick, fill=white] (2, 0) circle (0.1cm);
		\draw[thick, fill=white] (3, 0) circle (0.1cm);
		\draw[thick, fill=white] (4, 0) circle (0.1cm);
		\draw[thick, fill=white] (5, 0) circle (0.1cm);
		
		%brauer transposition
		%label
		\draw[font=\large] (7, 2.5) node {$(\overline{3,4}) \in B_{6,3}$};
		%paths
		\draw[-, rounded corners=10pt, line width=2pt, black](2, 3) -- (3, 3);
		\draw[-, rounded corners=10pt, line width=2pt, black](0, 3) -- (0, 2);
		\draw[-, rounded corners=20pt, line width=2pt, black](4,2) -- (4, 3);
		\draw[-, rounded corners=10pt, line width=2pt, black](2, 2) -- (3, 2);
		\draw[-, rounded corners=10pt, line width=2pt, black](1,3) -- (1, 2);
		\draw[-, rounded corners=30pt, line width=2pt, black](5,3) -- (5, 2);
		
		%brauer
		%startpoints and endpoints: northern
		\draw[thick, fill=white] (0, 2) circle (0.1cm);
		\draw[thick, fill=white] (1, 2) circle (0.1cm);
		\draw[thick, fill=white] (2, 2) circle (0.1cm);
		\draw[thick, fill=white] (3, 2) circle (0.1cm);
		\draw[thick, fill=white] (4, 2) circle (0.1cm);
		\draw[thick, fill=white] (5, 2) circle (0.1cm);
		%southern:
		\draw[thick, fill=white] (0, 3) circle (0.1cm);
		\draw[thick, fill=white] (1, 3) circle (0.1cm);
		\draw[thick, fill=white] (2, 3) circle (0.1cm);
		\draw[thick, fill=white] (3, 3) circle (0.1cm);
		\draw[thick, fill=white] (4, 3) circle (0.1cm);
		\draw[thick, fill=white] (5, 3) circle (0.1cm);
		
		\end{tikzpicture}
	}
	\caption{Examples of the elements $(\overline{i,j})$ and the transpositions $(i,j)$.}
	\label{fig:transpositions}
	
\end{figure}

Next, the irreducible representations of $\BB_{n,m}(r)$ are
indexed by 
\be 
\{ (\lambda, \mu) \ | \ \lambda
\vdash m-t, \ \mu \vdash n-m-t, \ 
t=0, \dots, \min\{m,n-m\} \ \}, 
\ee
where $\lambda$ and $\mu$ are partitions (see Proposition 2.4 of \cite{cox07}).
Henceforth, we will use the notation $\hat m=\min\{m,n-m\}$
so that the standing condition on $t$ is that
$t\in\{0,1,\dotsc,\hat m\}$.
The element 
\be \label{eq:JM}
J_{n,m}=
\sum_{\substack{1 \le i<j \le  m \\ m< i<j <n }}
(i,j) - \sum_{1 \le i\le m < j \le n} (\overline{i,j})
\ee 
is central in $\BB_{n,m}(r)$, and acts as the scalar
$\ct(\lambda) + \ct(\mu) - rt$ on the irreducible representation
$(\lambda, \mu)$, where $\lambda \vdash m-t$, 
$\mu \vdash n-m-t$ and $\ct(\cdot)$ denotes the content defined in
\eqref{eq:ct-sum} (a consequence of, for example, Lemma 4.1 of \cite{cox07}).

The walled Brauer algebra, like the symmetric group
algebra, has a Schur--Weyl duality with the general linear group.
To describe this, let us first recall some facts about representations 
of the general linear group $\GL_r(\CC)$.
The irreducible \emph{rational} representations of
$\GL_r(\CC)$ are indexed by their highest weights, which are $r$-tuples 
$\nu =(\nu_1 \ge \cdots \ge \nu_r) \in \ZZ^r$.  Such a tuple can be
equivalently written as a pair $\nu=[\lambda, \mu]$
of partitions $\lambda, \mu$ 
with $\ell(\lambda) + \ell(\mu) \le r$,
by letting $\nu_i =[\lambda, \mu]_i = \lambda_i - \mu_{r-i+1}$ for 
$i=1, \dots, r$. 
Note that at most one of the terms $\lambda_i$ or $\mu_{r-i+1}$ is
non-zero for each $i$, due to the constraint 
$\ell(\lambda) + \ell(\mu) \le r$, thus $\nu$ uniquely
determines $\la$ and $\mu$. See Figure \ref{fig:youngdiagrams} for an illustration.

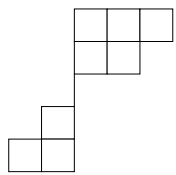
\begin{figure}[h]
	\centering
	\resizebox{0.2\textwidth}{!}{%
		\begin{tikzpicture}[scale=1]
		\draw[font=\large] (0, 0) node {\gyoung(::;;;,::;;,::'01,:;,;;,)};
		\end{tikzpicture}
	}
	\caption{The $r$-tuple $\nu=(3,2,0,-1,-2)$ illustrated in the style of a Young diagram, where negative entries are shown by boxes to the left of the main vertical line. Here $r=5$. From the figure it is straightforward to see that $\nu=[\la,\mu]$, where $\la=(3,2)$ and $\mu=(2,1)$.}
	\label{fig:youngdiagrams}
\end{figure}

We write $U_{[\lambda, \mu]}$ for the corresponding
irreducible $\GL_r(\CC)$-module.  These rational representations are
closely related to the polynomial representations $U_\la$ appearing in
\eqref{eq:swd}; the polynomial representations are the rational representations with non-negative $r$-tuple $\nu$. One can also relate the rational and polynomial representations by the 
Pieri-rule \cite{stembridge}.  Indeed,
writing $\det(\cdot)$ for the determinant
representation of $\GL_r(\CC)$, which has highest weight 
$(1,1,\dotsc,1)$ and character $x_1x_2\dotsb x_r$, we have that 
$\det^{\otimes k}\otimes U_{\nu} =U_{\nu+\underline k}$
where $\underline k=(k,k,\dotsc,k)$.   For $k=\mu_1$ we  have that 
$U_{[\lambda, \mu]+\underline{\mu_1}}$  is a polynomial
representation.  It follows from this and \eqref{eq:GL-char} 
that the character of $U_{[\lambda, \mu]}$ is
\be\label{eq:GL-char-2}
\chi_{U_{[\lambda, \mu]}}[g]=
\frac{s_{[\lambda, \mu]+\underline{\mu_1}} (x_1,\dotsc,x_r)}
{(x_1x_2\dotsb x_r)^{\mu_1}}
%=\frac{\det[x_i^{\la_j-\mu_{r+1-j}+r-j}]_{i,j=1}^r}
%{\prod_{1\leq i<j\leq r}(x_i-x_j)}
=\frac{\det[x_i^{[\la,\mu]_j+r-j}]_{i,j=1}^r}
{\prod_{1\leq i<j\leq r}(x_i-x_j)},
\ee
where $x_1,\dotsc,x_r$ are the eigenvalues of $g$.

We can now state the Schur-Weyl duality for the walled Brauer algebra and the general linear group. Let $\GL_r(\CC)$ act on 
$\VV = (\CC^r)^{\otimes n}=(\CC^r)^{\otimes m}\otimes(\CC^r)^{\otimes
	(n-m)}$ 
as $m$ tensor powers of
its defining representation, and $n-m$ tensor powers of the dual of its
defining representation (multiplication by the
inverse transpose):
\[
g(v_1\otimes \dotsb\otimes v_m
\otimes v_{m+1}\otimes\dotsb\otimes v_n)=
g(v_1)\otimes \dotsb\otimes g(v_m)
\otimes g^{-\intercal}(v_{m+1})\otimes\dotsb\otimes 
g^{-\intercal} (v_n).
\]
Let $\BB_{n,m}(r)$ act on $\VV$ by sending
$(i,j)$ to the transposition operator $T_{i,j}$, and
$(\overline{i,j})$ to $Q_{i,j}$ \eqref{eq:Q}. Then, as a representation of 
$\CC[GL_r(\CC)] \otimes \BB_{n,m}(r)$, 
\be\label{eq:swd-WB} 
\VV =
\bigoplus_{t=0}^{\hat m}\bigoplus_{\substack{\lambda\vdash m-t
		\\ \mu \vdash n-m-t \\ \ell(\lambda) + \ell(\mu) \le r}} U_{[\lambda, \mu]}
\otimes V_{(\lambda, \mu)}, 
\ee 
with $V_{(\lambda, \mu)}$
irreducible $\BB_{n,m}(r)$-representations as above (as noted above, this is a straightforward extension of the work in \cite{benkart-et-al}).\\

Notice now that our Hamiltonian \eqref{eq:H-wb} 
can be rewritten as 
\be
\begin{split} 
	H_n^\wb=-\frac1n  \Big( &
	(a+c) \sum_{1\leq i<j\leq m} T_{i,j} +
	(b+c) \sum_{m+1\leq i<j\leq n} T_{i,j} 
	%\\ & 
	- c J_{n,m}
	%\Big[
	%\sum_{\substack{m+1\leq i<j\leq n \\ 1\leq i<j\leq m}} T_{i,j} -
	%\sum_{1\leq i\leq m<j\leq n} Q_{i,j} \Big] 
	\Big),
\end{split} 
\ee
where $J_{n,m}$ is the central element given in \eqref{eq:JM}. 
Now in an identical way to how we developed equation
(\ref{eq:Z-bi}), we have 
\be\label{eq:trace-WB}
\begin{split} \mathrm{tr}_{\VV}[e^{-\beta H^{\wb}_n}] =&
	\sum_{\substack{\pi \vdash m \\ \tau \vdash n-m}} 
	\sum_{t=0}^{\hat m} \sum_{\substack{\lambda\vdash m-t \\ \mu
			\vdash n-m-t \\ \ell(\lambda) + \ell(\mu) \le r}} 
	\dim(U_{[\lambda, \mu]})
	b^{n,m,r}_{(\lambda, \mu), (\pi, \tau)} 
	d_\pi d_\tau
	\\ &\cdot \exp \big( \tfrac\beta n \big[ (c+a)\ct(\pi) +
	(c+b)\ct(\tau) - c(\ct(\lambda) + \ct(\mu) -rt) \big] \big),
\end{split} 
\ee 
where $b^{n,m,r}_{(\lambda, \mu), (\pi, \tau)}$ is
the branching coefficient from $\CC[S_m \times S_{n-m}]$ to
$\BB_{n,m}(r)$, i.e.\ the multiplicity of the 
$\CC[S_m\times S_{n-m}]$-module $V_\pi\otimes V_\tau$
in $V_{(\la,\mu)}$ when the latter is regarded as a 
$\CC[S_m \times S_{n-m}]$-module.
These branching coefficients 
play the same role as the Littlewood--Richardson
coefficient did in the {\ab}-model. 
Our next step is to determine when
$b^{n,m,r}_{(\lambda, \mu), (\pi, \tau)}$ is strictly positive.
%We write $\varnothing$ for a partition with all parts
%equal to zero.

\begin{lemma}\label{lem:WB-COEFS} 
	The branching coefficient
	$b^{n,m,r}_{(\lambda, \mu), (\pi, \tau)}$ is strictly positive if and
	only if 
	there exist $r \times r$ Hermitian matrices $X, Y, Z$ with
	respective spectra $\pi, \tau, [\lambda, \mu]$, such
	that $X-Y=Z$.
\end{lemma}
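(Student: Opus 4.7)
The plan is to identify the branching coefficient $b^{n,m,r}_{(\la,\mu),(\pi,\tau)}$ with a tensor product multiplicity for $\GL_r(\CC)$, and then apply the Knutson--Tao theorem \cite{knutson-tao}, mirroring the use of Horn's conjecture in the proof of Theorem \ref{thm:FE-AB}. To this end, regard $\VV=(\CC^r)^{\otimes m}\otimes(\CC^r)^{\otimes(n-m)}$ as a $\GL_r(\CC)\times S_m\times S_{n-m}$-module. Applying the classical Schur--Weyl duality \eqref{eq:swd} separately to each of the two tensor factors, and remembering that $\GL_r(\CC)$ acts on the second factor via $g\mapsto g^{-\intercal}$, yields
\be
\VV=\bigoplus_{\pi\vdash m,\,\tau\vdash n-m}\bigl(U_\pi\otimes U_{[0,\tau]}\bigr)\otimes(V_\pi\otimes V_\tau),
\ee
since the dual of the polynomial representation $U_\tau$ is precisely the rational representation $U_{[0,\tau]}$ (with highest weight $(-\tau_r,\dotsc,-\tau_1)$). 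Comparing this with the walled Brauer Schur--Weyl duality \eqref{eq:swd-WB} after restricting $\BB_{n,m}(r)$ to the subalgebra $\CC[S_m\times S_{n-m}]$, the definition of the branching coefficients gives
\be\label{eq:bpi-hom}
b^{n,m,r}_{(\la,\mu),(\pi,\tau)}=\dim\Hom_{\GL_r(\CC)}\bigl(U_{[\la,\mu]},\,U_\pi\otimes U_{[0,\tau]}\bigr).
\ee

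The next step is to reduce the multiplicity in \eqref{eq:bpi-hom} to a Littlewood--Richardson coefficient. Choose an integer $K\geq\max\{\mu_1,\tau_1\}$, so that $[0,\tau]+\underline K$ and $[\la,\mu]+\underline K$ are both partitions with at most $r$ parts. Tensoring each $\GL_r(\CC)$-module in \eqref{eq:bpi-hom} by the one-dimensional determinant representation to the power $K$ preserves the dimension of the Hom-space, so
\be
b^{n,m,r}_{(\la,\mu),(\pi,\tau)}=\dim\Hom_{\GL_r(\CC)}\bigl(U_{[\la,\mu]+\underline K},\,U_\pi\otimes U_{[0,\tau]+\underline K}\bigr)=c^{[\la,\mu]+\underline K}_{\pi,\,[0,\tau]+\underline K},
\ee
a genuine Littlewood--Richardson coefficient. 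By the theorem of Knutson and Tao, this coefficient is strictly positive if and only if there exist $r\times r$ Hermitian matrices $X,\wtilde Y,\wtilde Z$ with respective spectra $\pi,\,[0,\tau]+\underline K,\,[\la,\mu]+\underline K$ satisfying $X+\wtilde Y=\wtilde Z$.

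Finally, I would translate this spectral condition into the form stated in the lemma via the affine change of variables $Y:=KI-\wtilde Y$ and $Z:=\wtilde Z-KI$. The decreasing spectrum of $\wtilde Y$ is $(K-\tau_r,\dotsc,K-\tau_1)$, so the decreasing spectrum of $Y$ is $(\tau_1,\dotsc,\tau_r)=\tau$; similarly the spectrum of $Z$ is $[\la,\mu]+\underline K-\underline K=[\la,\mu]$, while $X$ keeps the spectrum $\pi$. The identity $X+\wtilde Y=\wtilde Z$ becomes $X-Y=Z$, and the correspondence is manifestly bijective, proving the lemma. The main obstacle, and the step requiring most care, is the bookkeeping for the dual $\GL_r(\CC)$-action together with the reduction from rational to polynomial representations via the determinant twist; the essential external input is Knutson--Tao's theorem, already invoked at the analogous step \eqref{eq:Om-plus} in the proof of Theorem \ref{thm:FE-AB}.
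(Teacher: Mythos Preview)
Your proof is correct and follows essentially the same approach as the paper's. The paper separates the argument into two steps: first Lemma~\ref{lem:wbbranch} establishes \eqref{eq:bpi-hom} by comparing the two Schur--Weyl decompositions of $\VV$, and then the proof of Lemma~\ref{lem:WB-COEFS} performs the determinant twist (with the specific choice $K=\tau_1$) and the affine change of variables, exactly as you do.
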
 

Note that the parameter $t$ is encoded the branching coefficient, in the sense that  $b^{n,m,r}_{(\lambda, \mu), (\pi, \tau)}>0$ implies that $\la\vdash m-t=|\pi|-t$ and $\mu\vdash n-m-t=|\tau|-t$ for some $0\le t\le \hat{m}$. 
To see how $t$ appears from the Hermitian matrices, 
assume for the sake of argument that $X$ and $Y$ commute. Then, for
each $i$, 
 $[\la,\mu]_i=\pi_j-\tau_k$, for some $j,k$. Figure \ref{fig:t-parameter-Hermitian} then illustrates via an example how it follows that $\la\vdash m-t=|\pi|-t$ and $\mu\vdash n-m-t=|\tau|-t$ for some $0\le t\le \hat{m}$.

%
%The same conclusion can be seen to follow from the Hermitian matrices side of Lemma \ref{lem:WB-COEFS}. Indeed, if $X, Y, Z$ are Hermitian with respective spectra $\pi, \tau, [\lambda, \mu]$, such that $X-Y=Z$, then $X,Y$ are simultaneously diagonalisable, so for each $i$, $[\la,\mu]_i=\pi_j-\tau_k$, for some $j,k$. Figure \ref{fig:t-parameter-Hermitian} then illustrates via an example how $\la\vdash m-t=|\pi|-t$ and $\mu\vdash n-m-t=|\tau|-t$ for some $0\le t\le \hat{m}$ follows.

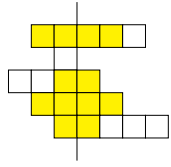
\begin{figure}[h]
	\centering
        \resizebox{0.2\textwidth}{!}{%
        \newcommand\ylw{\Yfillcolour{yellow}}
        \newcommand\wht{\Yfillcolour{white}}
          \begin{tikzpicture}[scale=1]
            \draw[font=\large] (0, 0) node {\gyoung(:::'01,:!\ylw;;;;!\wht;,::;,;;!\ylw;;,:;;;;,::;;!\wht;;;,:::'01,)};
          \end{tikzpicture}
	}
	\caption{The spectra $\pi=(3,0,1,2,4)$ and $\tau=(2,1,3,2,1)$, respectively of $X$ and $Y$ (simultaneously diagonalised), displayed in the style of Young diagrams, either side of the main vertical line. The spectrum of $Z=X-Y$ is $(1,-1,-2,0,3)$ (and so when ordered becomes $[\la,\mu]=(3,1,0,-1,-2)$). The yellow boxes are those eliminated in the subtraction. Naturally there are the same number either side of the main vertical; this is the parameter $0\le t\le \min{|\pi|,|\tau|}$. In this example, $t=6$.}
	\label{fig:t-parameter-Hermitian}
\end{figure}

The first step to prove Lemma \ref{lem:WB-COEFS} is another lemma, analogous to the well
known fact that the Littlewood--Richardson coefficients are both the
branching coefficients from $\CC[S_m \times S_{n-m}]$ 
to $\CC[S_n]$, and
the coefficients of the decomposition of the tensor product of two
irreducible polynomial representations of $\GL_r(\CC)$.

\begin{lemma} \label{lem:wbbranch}
	Let $\pi, \tau, \lambda, \mu$ denote partitions 
	with at most
	$r$ parts, with $\ell(\lambda)+ \ell(\mu) \le r$,
	and $U_\pi, U_{[\varnothing, \tau]}, U_{[\lambda, \mu]}$ denote irreducible rational representations of $\GL_r(\CC)$. Let
	\be\label{eq:tensor-of-GLr-reps} 
	U_\pi \otimes U_{[\varnothing, \tau]}
	= \bigoplus_{\substack{\lambda, \mu \\ \ell(\lambda) + \ell(\mu)\le r}}
	\hat{b}^{n,m,r}_{[\lambda, \mu], (\pi, \tau)} U_{[\lambda, \mu]}.
	\ee 
	Then $\hat{b}^{n,m,r}_{[\lambda, \mu], (\pi, \tau)} =
	b^{n,m,r}_{(\lambda, \mu), (\pi, \tau)}$.
\end{lemma}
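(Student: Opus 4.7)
The plan is to decompose $\VV=(\CC^r)^{\otimes n}$ as a module for $\CC[\GL_r(\CC)]\otimes\CC[S_m\times S_{n-m}]$ in two independent ways and then match multiplicities of isotypic components.

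First I would apply classical Schur--Weyl duality \eqref{eq:swd} separately to the two blocks of tensor factors.  On $(\CC^r)^{\otimes m}$ the $\GL_r(\CC)$-action is the defining one, giving $\bigoplus_\pi U_\pi\otimes V_\pi$.  On $(\CC^r)^{\otimes (n-m)}$ the $\GL_r(\CC)$-action is via $g\mapsto g^{-\intercal}$, so the irreducible $\GL_r(\CC)$-summands that appear are the duals of the polynomial irreducibles $U_\tau$; by the definition of $[\la,\mu]$ coordinates preceding \eqref{eq:GL-char-2}, these duals are exactly $U_{[\varnothing,\tau]}$, giving $\bigoplus_\tau U_{[\varnothing,\tau]}\otimes V_\tau$.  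Tensoring the two decompositions and then applying \eqref{eq:tensor-of-GLr-reps} to each factor $U_\pi\otimes U_{[\varnothing,\tau]}$ yields
\be
\VV=\bigoplus_{\pi,\tau,\la,\mu}\hat b^{n,m,r}_{[\la,\mu],(\pi,\tau)}\,U_{[\la,\mu]}\otimes(V_\pi\otimes V_\tau).
\ee

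Second I would start from the walled Brauer Schur--Weyl duality \eqref{eq:swd-WB} and restrict each $\BB_{n,m}(r)$-module $V_{(\la,\mu)}$ to the subalgebra $\CC[S_m\times S_{n-m}]$; by the very definition of the branching coefficients this gives
\be
\VV=\bigoplus_{\la,\mu,\pi,\tau}b^{n,m,r}_{(\la,\mu),(\pi,\tau)}\,U_{[\la,\mu]}\otimes(V_\pi\otimes V_\tau).
\ee
Since both sides are decompositions of the same bimodule into pairwise non-isomorphic irreducible bimodule summands, equating the multiplicity of $U_{[\la,\mu]}\otimes(V_\pi\otimes V_\tau)$ in the two formulas produces the claimed equality.

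The only step that could cause trouble is verifying that the two computations really are decompositions of the same $\CC[\GL_r(\CC)]\otimes\CC[S_m\times S_{n-m}]$-bimodule.  The $\GL_r(\CC)$-action is identical by construction.  For the symmetric group, the embedding $\CC[S_m\times S_{n-m}]\hookrightarrow\BB_{n,m}(r)$ is realised by diagrams with no edges crossing the wall, and the excerpt stipulates that such a generator $(i,j)$ acts on $\VV$ as the tensor-factor permutation $T_{i,j}$; this is exactly the $S_m\times S_{n-m}$-action used on each block in the first decomposition.  With this compatibility in hand the argument is essentially formal, since the hard representation-theoretic input, namely the walled Brauer Schur--Weyl duality \eqref{eq:swd-WB}, is already available and the tensor-product coefficients $\hat b$ are defined to be precisely what is needed.
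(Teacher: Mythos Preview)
Your proposal is correct and follows essentially the same approach as the paper: decompose $\VV$ as a $\CC[\GL_r(\CC)]\otimes\CC[S_m\times S_{n-m}]$-bimodule in two ways---once via the walled Brauer duality \eqref{eq:swd-WB} restricted to $S_m\times S_{n-m}$, and once via block-wise Schur--Weyl duality followed by the tensor decomposition \eqref{eq:tensor-of-GLr-reps}---and then equate multiplicities. Your additional paragraph verifying that the two $S_m\times S_{n-m}$-actions agree is a helpful clarification that the paper leaves implicit.
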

\begin{proof} 
	This is proved using Schur--Weyl
	duality.  We restrict (\ref{eq:swd-WB}) to 
	$\CC [GL_r(\CC)] \otimes\CC[S_m \times S_{n-m}]$ 
	to see that 
	\be 
	\VV =
	\bigoplus_{t=0}^{\hat m} \bigoplus_{\substack{\lambda\vdash m-t
			\\ \mu \vdash n-m-t \\ \ell(\lambda) + \ell(\mu) \le r}}
	\bigoplus_{\substack{\pi \vdash m \\ \tau \vdash n-m \\ 
			\ell(\pi),\ell(\tau) \le r}} 
	b^{n,m,r}_{(\lambda, \mu), (\pi, \tau)} 
	U_{[\lambda,\mu]} \otimes 
	(V_\pi\otimes V_\tau).  
	\ee 
	On the other hand, the
	Schur--Weyl duality between $GL_r(\CC)\times GL_r(\CC)$ 
	and $\CC[S_m \times S_{n-m}]$ is
	\be 
	\VV = \bigoplus_{\substack{\pi \vdash m \\ \tau \vdash n-m \\
			\ell(\pi), \ell(\tau) \le r}} 
	(U_{\pi} \otimes U_{[\varnothing,\tau]}) \otimes
	(V_\pi\otimes V_\tau).  
	\ee 
	Expanding $U_{\pi} \otimes U_{[\varnothing,\tau]}$
	as in \eqref{eq:tensor-of-GLr-reps} 
	and equating coefficients from the two
	equations above gives the result.
\end{proof}

\begin{proof}[Proof of Lemma \ref{lem:WB-COEFS}] 
	We take equation
	(\ref{eq:tensor-of-GLr-reps}) and modify it using the Pieri rule: 
	\be U_\pi \otimes U_{[\varnothing,
		\tau] + \underline{\tau_1}} = 
	\bigoplus_{\substack{\lambda, \mu \\
			\ell(\lambda) + \ell(\mu) \le r}} 
	\hat{b}^{n,m,r}_{[\lambda, \mu], (\pi,
		\tau)} U_{[\lambda, \mu]+ \underline{\tau_1}}.  
	\ee 
	Now the highest
	weights appearing on both sides have no negative parts, so by 
	Lemma \ref{lem:wbbranch} and the
	Littlewood--Richardson Rule, 
	\be
	b^{n,m,r}_{(\la,\mu),(\pi,\tau)}=
	\hat{b}^{n,m,r}_{[\lambda, \mu], (\pi,\tau)} 
	= c^{[\lambda, \mu]+ \underline{\tau_1}}_{\pi, [\varnothing,
		\tau] + \underline{\tau_1}}.
	\ee  
	We know from Horn's inequalities that
	$c^{[\lambda, \mu]+ \underline{\tau_1}}_{\pi, [\varnothing, \tau] +
		\underline{\tau_1}} >0$ if and only if there exist $r \times r$
	Hermitian $\bar{X}, \bar{Y}, \bar{Z}$ with respective spectra $\pi,
	[\varnothing, \tau] + \underline{\tau_1}$ and $[\lambda, \mu]+
	\underline{\tau_1}$ such that $\bar{X} + \bar{Y} = \bar{Z}$. Now it is
	straightforward to show that such matrices exist if and only if there
	exist $r \times r$ Hermitian $X, Y, Z$ with respective spectra 
	$\pi$, $\tau$ and  $[\lambda, \mu]$ such that $X-Y=Z$. 
	Indeed, let $X= \bar{X}$, $Y = -\bar{Y} +\tau_1 \Id$, 
	and $Z = \bar{Z} - \tau_1 \Id$
	for the first implication, and similarly for the reverse implication.
\end{proof}

We can now return to equation (\ref{eq:trace-WB}). Using similar
workings as in Section \ref{sec:AB}, we let $m,n \to
\infty$ such that $m/n \to \rho \in (0,1)$, 
$\pi/n \to \vec x$, $\tau/n \to \vec y$ and 
$[\lambda,\mu]/n \to \vec z$.
Note that $\vec z$ can now have negative entries,
and that from \eqref{eq:ct-sum}
\be
\frac{\ct(\lambda) + \ct(\mu) -rt}{n^2}=
\sum_{i=1}^r \big((\tfrac{\la_i}{n})^2
+(-\tfrac{\mu_i}{n})^2\big)+o(1)
=\sum_{i=1}^r \big(\tfrac{[\la,\mu]_i}{n}\big)^2+o(1).
\ee 
We find that 
\be\label{eq:Z-WB-part2}
Z^{\wb}_n(\b)=
\sum_{\substack{\pi \vdash m \\ \tau \vdash n-m}} 
%\sum_{t=0}^{\hat m} 
\sum_{\substack{\lambda,\mu \\ 
		(\pi/n,\tau/n,[\la,\mu]/n)\in\Om_{m/n}^-
	}} 
	\exp\Big(n\Big\{\tilde G(\tfrac\pi n,\tfrac\tau n,\tfrac{[\la,\mu]} n)+o(1)\Big\}\Big),
	\ee
	where $\Om_\rho^-$ is the set of triples of  $r$-tuples 
	$\vec x,\vec y,\vec z$
	such that $x_1,\dots, x_r\geq 0$,
	$y_1,\dotsc,y_r\geq 0$,
	$\sum_{i=1}^r x_i = \rho = 1- \sum_{i=1}^r y_i$, and there exist 
	$r\times r$ Hermitian matrices $X,Y,Z$ with respective spectra 
	$\vec x,\vec y,\vec z$
	such that $X-Y=Z$, and where
	\be\label{eq:tilde-G}
	\tilde G(\vec x,\vec y,\vec z)
	=\sum_{i=1}^r \big[\tfrac{\b}{2}((a+c)x_i^2 + (b+c)y_i^2 - cz_i^2) 
	-x_i\log x_i - y_i \log y_i\big].
	\ee 
	Notice that the sum over $t$ appearing in \eqref{eq:trace-WB} is hidden in \eqref{eq:Z-WB-part2}, as it is implicit in the definition of $\Om_\rho^-$, due to our remark after the statement of Lemma \ref{lem:WB-COEFS}. Therefore
	\be
	\begin{split} \Phi^{\wb}_\beta(a,b,c) := \lim_{n\to \infty}
		\frac{1}{n}\log Z^{\wb}_n(\b)= 
		\max_{(\vec x, \vec y,\vec z)\in\Om_\rho^-} 
		\tilde G(\vec x,\vec y,\vec z).
	\end{split} 
	\ee 
	As in \eqref{eq:phi-newer} and \eqref{eq:maxXY}, we can
	rewrite this in terms of the matrices $X$ and $Y$:
	\be\label{eq:FE-WB-matrices}
	\begin{split} \Phi^{\mathrm{WB}}_\beta(a,b,c) = \max_{X,Y}
		\big[S(X) + S(Y)+
		\tfrac{\beta}{2} \big( a\,\tr[X^2] + b\, \tr[Y^2] +2c \,\tr[XY] \big) 
		\big],
	\end{split} \ee 
	where now the maximum is only over $r \times r$ Hermitian
	matrices $X, Y$ with respective spectra 
	$\vec x,\vec y$ as above.  This is the same as \eqref{eq:maxXY}, and this completes the proof of Theorem \ref{thm:FE-WB}.
	\qed
	
	\subsection{Correlation functions:
		proof of Theorem \ref{thm:TS-twoblock}}
	
	Let us prove the result for the \ab-model first. 
	We use \eqref{eq:eW-bi} and the argument leading up to
	\eqref{eq:Z-bi-approx} to get that, as $n\to\oo$,
	\be\label{eq:total_spin_working}
	\begin{split}
		\big\langle \exp &\big\{
		\tfrac 1n {\textstyle\sum}_{i=1}^n W_i \big\} 
		\big\rangle_{\beta,n}^{\ab} 
		=\\&\frac{\sum_{\la,\mu,\nu}
			\one\{c_{\mu,\nu}^\la>0\}  
			\frac{s_\la(e^{w_1/n},\dotsc,e^{w_r/n})}{s_\la(1,\dotsc,1)}
			\exp\big(n \big\{
			\tilde F(\tfrac{\mu}{n},\tfrac{\nu}{n},\tfrac{\la}{n})+o(1)\big\} \big)}
		{\sum_{\la,\mu,\nu}
			\one\{c_{\mu,\nu}^\la>0\}  
			\exp\big(n \big\{
			\tilde F(\tfrac{\mu}{n},\tfrac{\nu}{n},\tfrac{\la}{n})+o(1)\big\} \big)},
	\end{split}
	\ee
	where  $\tilde{F}$ is as in \eqref{eq:phi}.
	Both sums on the right-hand-side are over $\la\vdash n$,
	$\mu\vdash m$ and $\nu\vdash n-m$, all having at most $r$ parts, and 
	in the numerator we have multiplied and divided by 
	$\dim(U_\la)=s_\la(1,\dotsc,1)$ in order that the $o(1)$ terms in the
	exponents are exactly equal.  Then the arguments of 
	\cite[Section~6]{BFU20} apply, meaning that 
	\be
	\lim_{n\to\oo}
	\big\langle \exp \big\{
	\tfrac 1n{\textstyle\sum}_{i=1}^n W_i \big\} \big\rangle_{\beta,n}^{\ab} 
	=\lim_{\la/n\to \vec z^\star}
	\frac{s_\la(e^{w_1/n},\dotsc,e^{w_r/n})}{s_\la(1,\dotsc,1)},
	\ee
	where $\vec z^\star=(z^\star_1,\dotsc,z^\star_r)$ 
	lists the eigenvalues of $X+Y$ where
	$X,Y$ are the Hermitian 
	matrices which maximise the right-hand-side of \eqref{eq:maxXY}.  
	But we know from
	\eqref{eq:trace-ineq} that the maximum is attained when $X,Y$ are
	simultaneously diagonal, with ordering of eigenvalues decreasing for
	both $X$ and $Y$ if $c>0$, respectively decreasing for $X$ and
	increasing for $Y$ if $c<0$.  Then clearly the eigenvalues of $Z=X+Y$
	are the sums of the eigenvalues of $X$ and of $Y$, ordered
	appropriately, giving $z^\star$ as in \eqref{eq:TS-z}.

	Turning to the \wb-model, very similarly to equation
	(\ref{eq:total_spin_working}) we have 
	\be\begin{split}
		&\big\langle \exp \big\{
		\tfrac 1n \big({\textstyle\sum_{i=1}^m W_i 
			-\sum_{i=m+1}^n} W_i^\intercal\big)\big\} 
		\big\rangle_{\beta,n}^{\wb} \\
		&\quad=
		\frac{
			\sum_{\lambda,\mu,\pi,\tau} \one\{b^{n,m,r}_{[\lambda, \mu], (\pi, \tau)}>0\}  
			\frac{\chi_{U_{[\la,\mu]}}(e^{W/n})}{\dim(U_{[\la,\mu]})}
			\exp\Big(n \Big\{
			\tilde G(\tfrac{\pi}{n},\tfrac{\tau}{n},\tfrac{[\la,\mu]}{n})+o(1)\Big\}
			\Big)
		}
		{
			\sum_{\lambda,\mu,\pi,\tau} \one\{b^{n,m,r}_{[\lambda, \mu], (\pi, \tau)}>0\}  
			\exp\Big(n \Big\{
			\tilde G(\tfrac{\pi}{n},\tfrac{\tau}{n},\tfrac{[\la,\mu]}{n})+o(1)\Big\}
			\Big),
		}
	\end{split}\ee
	where once again the $o(1)$ terms
	in the exponents are exactly equal and 
	$\tilde G$ is given in \eqref{eq:tilde-G}.
	The arguments of
	\cite[Section~6]{BFU20} apply once again, meaning
	that by \eqref{eq:GL-char-2} the limit equals  
	\be
	\lim_{[\la,\mu]/n\to z^\dagger}
	\frac{\chi_{U_{[\la,\mu]}}(e^{W/n})}{\dim(U_{[\la,\mu]})},
	\ee
	where this time, 
	$(\vec x^\star,\vec y^\star,\vec z^\dagger)$ maximises
	$\tilde G(\vec x,\vec y,\vec z)$,
	with the conditions that $x_i, y_i\geq 0$,
	$\sum_{i=1}^r x_i = \rho = 1-\sum_{i=1}^r y_i$, and that there
	exist Hermitian matrices $X,Y,Z$ with respective spectra $x,y,z$ with
	$X-Y=Z$. Following equation (\ref{eq:FE-WB-matrices}), we can rewrite
	$\tilde G$ as the function of the matrices $X$ and $Y$ being maximised
	in \eqref{eq:FE-WB-matrices}.
	If the entries of $\vec x$ are ordered decreasingly, then as before the
	trace-inequality \eqref{eq:trace-ineq} implies that for $c>0$ the entries of 
	$\vec y$ should also be ordered decreasingly, while for $c<0$ they
	should be ordered increasingly. 
	This gives the form of $\vec z^\dagger$
	stated in \eqref{eq:TS-z}.
	
	It remains only to show that  
	\be
	\lim_{[\la,\mu]/n\to z}
	\frac{\chi_{U_{[\la,\mu]}}(e^{W/n})}{\dim(U_{[\la,\mu]})}
	=
	R(w_1,\dotsc,w_{r};z_1,\dotsc,z_{r}),
	\ee
	where $R$ is given by \eqref{eq:R}.
	This is proved almost identically to Lemma 6.1 from
	\cite{BFU20}.  Indeed, using  \eqref{eq:GL-char-2} we get
	\be
	\begin{split}
		\frac{\chi_{U_{[\la,\mu]}}(e^{W/n})}{\dim(U_{[\la,\mu]})}
		&=
		\det[e^{w_i[\la,\mu]_j/n + w_i(r-j)/n}] \cdot \\ 
		&\ \ \ \cdot \prod_{1\le i<j\le r} \frac{j-i}{(e^{w_i/n}-
			e^{w_j/n})([\la,\mu]_i-[\la,\mu]_j+j-i)},
	\end{split}
	\ee
	which, noting all the products (including in the determinant)
	are finite, tends to $R(w_1,\dotsc,w_{r};z_1,\dotsc,z_{r})$ as
	$[\la,\mu]/n \to z$. 
	\qed

	\subsection{Magnetisation term:  
		proof of Theorem \ref{thm:mag}}
	
	We start by giving expressions for the free energy with a
	magnetisation term, and then afterwards we will take the 
	appropriate derivatives. 
	We will need the following notation:
	\begin{itemize}[leftmargin=*]
		\item $\Delta^+$ will denote the set of vectors 
		$\vec  z=(z_1,z_2,\dotsc,z_r)$ that can arise as spectra of $X+Y$
		where $X$ and $Y$ are positive semidefinite Hermitian
		matrices with $\tr[X]=1-\tr[Y]=\rho$, ordered so that
		$z_1\geq\dotsb\geq z_r$.  In fact, $\Delta^+$ consists of all $\vec z$
		satisfying $z_1\geq\dotsb\geq z_r\geq 0$ and $\sum_{i=1}^r z_i=1$.
		Given $\vec z\in\Delta^+$, we write 
		$\cH^+_\rho(\vec z)$ for the set of pairs
		$(X,Y)$ of such matrices with $X+Y$ having spectrum $\vec z$.
		\item $\Delta^-_\rho$ will denote the set of vectors 
		$\vec  z=(z_1,z_2,\dotsc,z_r)$ that can arise as spectra of $X-Y$
		where $X$ and $Y$ are as above, again ordered so that
		$z_1\geq\dotsb\geq z_r$.  
		Now $\Delta^-_\rho$ consists of all $\vec z$
		satisfying $\rho \geq z_1\geq\dotsb\geq z_r\geq -(1-\rho)$ 
		and $\sum_{i=1}^r z_i=2\rho-1$.
		Given $\vec z\in\Delta^-_\rho$, we write 
		$\cH^-_\rho(\vec z)$ for the set of pairs
		$(X,Y)$ of such matrices with $X-Y$ having spectrum $\vec z$.
	\end{itemize}
	Let $\Phi^\#(\b,h)=\Phi^\#_{\b,h}(a,b,c,\vec{w})$ be as in 
	\eqref{eq:phi-mag} and recall from \eqref{eq:phi-newer} that 
	\[
	\phi(X,Y)= S(X)+S(Y)+\tfrac\b2 
	\tr\big[ a X^2+b Y^2+ 2cXY  \big].
	\]

	\begin{theorem}\label{thm:fe-mag}
		Let $a,b,c\in\RR$ and $w_1 \ge \cdots \ge w_r$ be fixed. 
		%Let $\#$ denote $\ab$ or $\wb$. 
		If $n,m\to\oo$ such that $m/n\to\rho\in(0,1)$, then the free energy
		of the models \eqref{eq:H-AB-mag} and \eqref{eq:H-WB-mag} satisfy:
		\be\label{eq:fe-max-mag}
		\begin{split}
			\Phi^\ab({\b,h})
			&=\max_{\vec z\in\D^+} \left( 
			\max_{(X,Y)\in\cH^+_\rho(\vec z)} \phi(X,Y)+
			\left\{\begin{array}{ll}
				h\sum_{i=1}^r z_iw_i, & \mbox{if } h>0, \\
				h\sum_{i=1}^r z_iw_{r+1-i}, & \mbox{if } h<0,
			\end{array}\right. \right)\\
			\Phi^\wb({\b,h})
			&=\max_{\vec z\in\D^-_\rho} \left( 
			\max_{(X,Y)\in\cH^-_\rho(\vec z)} \phi(X,Y)+
			\left\{\begin{array}{ll}
				h\sum_{i=1}^r z_iw_i, & \mbox{if } h>0, \\
				h\sum_{i=1}^r z_iw_{r+1-i}, & \mbox{if } h<0,
			\end{array}\right. \right).
		\end{split}
		\ee
		%where the set $\Omega$ is as in \eqref{eq:Om} and where
		%\be\label{eq:TS-z-mag}
		%   z_j^\ab=    x_j+y_j, 
		%   \qquad
		%   z_j^\wb=
		%	x_j-y_j.
		%	\ee
	\end{theorem}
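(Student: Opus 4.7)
The plan is to adapt the Schur--Weyl arguments from Theorems \ref{thm:FE-AB} and \ref{thm:FE-WB}, incorporating the character asymptotics that underpinned Theorem \ref{thm:TS-twoblock}. The central observation is that for any $r\times r$ matrix $W$, the diagonal $\GL_r(\CC)$-action gives $T(e^{hW})=\exp\bigl(h\sum_{i=1}^n W_i\bigr)$ in the \ab-setting and, using the mixed-tensor action underlying \eqref{eq:swd-WB}, $T(e^{hW})=\exp\bigl(h\sum_{i\in A}W_i-h\sum_{i\in B}W_i^{\intercal}\bigr)$ in the \wb-setting. Since $\GL_r(\CC)$ centralises $\CC[S_n]$ (respectively $\BB_{n,m}(r)$), this operator commutes with $H_n^\#$, so $e^{-\b H_n^\#+h\sum_iW_i}=e^{-\b H_n^\#}\,T(e^{hW})$ and the trace factorises along the Schur--Weyl decomposition exactly as in Lemma \ref{lem:A-AB}.

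Compared with the $h=0$ case, the only new factor in each summand of the partition function is the character $\chi_{U_\la}(e^{hW})=s_\la(e^{hw_1},\dotsc,e^{hw_r})$ (\ab-case) or $\chi_{U_{[\la,\mu]}}(e^{hW})$ (\wb-case). By \eqref{eq:GL-char} and \eqref{eq:GL-char-2}, each is a Vandermonde-normalised determinant $\det[e^{hw_i(\kappa_j+r-j)}]$ with $\kappa=\la$ or $\kappa=[\la,\mu]$; the Vandermonde denominator is $e^{o(n)}$ uniformly. Expanding the determinant as $\sum_\sigma\sgn(\sigma)\prod_i e^{hw_i(\kappa_{\sigma(i)}+r-\sigma(i))}$ and applying the rearrangement inequality, when $\kappa/n\to\vec z$ with $z_1\ge\dotsb\ge z_r$ the dominant permutation is the identity for $h>0$ and the reversal for $h<0$, yielding
\[
\tfrac1n\log\chi\to h\textstyle\sum_i w_iz_i \quad(h>0),\qquad \tfrac1n\log\chi\to h\textstyle\sum_i w_iz_{r+1-i}\quad(h<0),
\]
uniformly in $\kappa/n$ over compact subsets of the relevant simplex.

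Feeding these asymptotics into the Laplace-type argument used in the proofs of Theorems \ref{thm:FE-AB} and \ref{thm:FE-WB}, and then re-expressing the maximisation in Hermitian-matrix form via Horn's theorem \cite{knutson-tao} (just as in those proofs, through \eqref{eq:phi-newer}), yields \eqref{eq:fe-max-mag}. The $\phi(X,Y)$ contribution is inherited from the $h=0$ analysis; the extra term $h\sum_iz_iw_i$ or $h\sum_iz_iw_{r+1-i}$ is the character contribution identified above; and the distinction between $\D^+$ and $\D^-_\rho$ arises because $\vec z$ is the ordered spectrum of $X+Y$ in the \ab-case and of $X-Y$ in the \wb-case, in line with \eqref{eq:TS-z}. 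The outer maximisation is then split as $\max_{X,Y}=\max_{\vec z}\max_{(X,Y)\in\cH^\pm_\rho(\vec z)}$, which is precisely the form stated.

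The main obstacle is verifying that the replacement of the full determinantal character by its dominant-permutation exponential incurs a uniform $e^{o(n)}$ error over $(\la,\mu)$ in the relevant simplex, so that the Laplace principle applies simultaneously with the maximisation over partitions. This is routine: each subdominant permutation has a strictly smaller leading exponent than the dominant one, their number is bounded by $r!$, and the Vandermonde factor, being a rational function of $e^{hw_i/n}$, is subexponential in $n$, so all of these contributions are absorbed into the $o(n)$ correction already present in the dimension and content asymptotics used for Theorems \ref{thm:FE-AB} and \ref{thm:FE-WB}.
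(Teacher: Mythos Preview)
Your overall strategy is the paper's: push the magnetisation operator through the Schur--Weyl decomposition, recognise the extra factor as the $\GL_r$-character at $e^{hW}$, extract its exponential rate, and then run the same Laplace/Horn argument as for Theorems \ref{thm:FE-AB}--\ref{thm:FE-WB}. That part is fine.

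Where you diverge from the paper is in controlling the character, and here there is a gap. You expand the Weyl determinant $\det[e^{hw_i(\kappa_j+r-j)}]$ as an \emph{alternating} sum over permutations and argue that the identity term dominates. The rearrangement inequality indeed gives $|\chi|\le r!\,e^{h\sum_i w_i\kappa_i+O(1)}$, which is the upper bound you need. But it does \emph{not} give a lower bound: when $\kappa/n\to\vec z$ with repeated entries (for instance at $\om_0$, which is the maximiser for all $\b<\b_\crit$), the subdominant exponents $h\sum_i w_i\kappa_{\sigma(i)}$ differ from the dominant one by $O(1)$, not by anything growing with $n$, and you have done nothing to rule out cancellation among the signed terms driving the determinant below $e^{h\sum w_i\kappa_i}$ by more than a constant factor. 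Calling this ``routine'' skips the actual work. (A minor slip: the Vandermonde here is $\prod_{i<j}(e^{hw_i}-e^{hw_j})$, a fixed nonzero constant---there is no $1/n$ in this theorem.)

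The paper sidesteps the issue entirely by using the semistandard-tableaux expansion $s_\la(e^{hw})=\sum_\TT e^{h\sum_i m_i(\TT)w_i}$, a sum of \emph{positive} terms. The largest is $e^{h\sum_i\la_iw_i}$ (attained by the tableau with row $i$ filled with $i$'s) and there are $\dim(U_\la)=e^{o(n)}$ terms, giving the two-sided bound $e^{h\sum\la_iw_i}\le s_\la\le\dim(U_\la)e^{h\sum\la_iw_i}$ uniformly in $\la$. That is exactly what the Laplace principle needs, and it carries over to the \wb-character via \eqref{eq:GL-char-2}. If you want to keep the determinantal route, you would need either to invoke this monomial positivity anyway, or to prove directly that the generalised Vandermonde $\det[x_i^{m_j}]$ is bounded below by a positive multiple of $\prod_i x_i^{m_i}$ uniformly in the strictly decreasing integer sequence $m$.
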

	\begin{proof}
		Let us start with the $\ab$ case.
		Using the expression \eqref{eq:eW-bi} and arguing similarly to
		\eqref{eq:Z-bi-approx} we have 
		\be\label{eq:Z-bi-mag}
		\begin{split}
			Z_{n,h}^{\ab}=&
			\sum_{\mu,\nu,\la} s_\la(e^{hw_1},\dotsc,e^{hw_r}) \\
			&\cdot c_{\mu,\nu}^\la  d_\mu d_\nu 
			\exp\Big(\tfrac\b n[(a-c)\ct(\mu)+(b-c)\ct(\nu)+c\cdot\ct(\la)]
			\Big)\\
			=&
			\sum_{(\mu/n,\nu/n,\la/n)\in\Om^+_{m/n}}
			s_\la(e^{hw_1},\dotsc,e^{hw_r})
			\exp\Big(n \Big\{
			\tilde F(\tfrac{\mu}{n},\tfrac{\nu}{n},\tfrac{\la}{n})+o(1)\Big\} \Big),
		\end{split}
		\ee
		where $\tilde F$ is given in \eqref{eq:phi} and
		$\Om^+_\rho$ in \eqref{eq:Om-plus}.
		%  Using our earlier working, it now suffices to show that for
		% $h>0$ and $\la_i/n \to z_i$, we have 
		% $\frac{1}{n}\log[s_\la(e^{hw_1},\dotsc,e^{hw_r})] \to  h\sum_{i=1}^r
		% w_i z_i$, 
		% while for $h<0$ the limit is $h\sum_{i=1}^r w_i z_{r+1-i}$.  
		Recall  that  \cite[Section 2.2]{fulton:young}
		\be\label{eq:schur-tableaux-mag}
		s_\la(e^{hw_1},\dots,e^{hw_r})=
		\sum_{\TT}\prod_{i=1}^r e^{h m_i w_i}=
		\sum_{\TT} e^{\sum_{i=1}^r h m_i w_i},
		\ee
		where the sum is over all semistandard Young tableaux $\TT$ with shape
		$\la$ and entries in $\{1, \dots, r\}$, and where for each $i$, $m_i$
		is the number of times the number $i$ appears in $\TT$. The tableau
		with each box in the $i^{\mathrm{th}}$ row labelled $i$ appears in the sum, and
		in fact, for $h>0$, it maximises the sum in the exponent: 
		\be
		e^{\sum_{i=1}^r h m_i w_i} \le 
		e^{\sum_{i=1}^r h \la_i w_i},
		\ee
		for each valid $\TT$. Indeed, note that in a semistandard
		tableau, the entries of row $i$ must be at least $i$. Then,
		taking any semistandard $\TT$, shape $\lambda$, changing an
		entry $j \ge i$ in row $i$ to $i$ changes the sum in the
		exponent by $h(w_i-w_j)$, which is non-negative by our
		ordering of $\vec{w}$ as $w_1 \ge \cdots \ge w_r$. 
		Hence for $h>0$,
		\be\label{eq:schur-bound-mag}
		e^{\sum_{i=1}^r h \la_i w_i}\le
		s_\la(e^{ hw_1},\dots,e^{ hw_r})\le 
		\dim(U_\la) e^{\sum_{i=1}^r  h \la_i w_i}.
		\ee
		Recalling that $\frac{1}{n}\log \dim(U_\la) \to 0$ we get, for $h>0$,
		\be
		Z_{n,h}^{\ab}=
		\sum_{(\mu/n,\nu/n,\la/n)\in\Om^+_{m/n}}
		\exp\Big(n \Big\{
		\tilde F(\tfrac{\mu}{n},\tfrac{\nu}{n},\tfrac{\la}{n})
		+h{\textstyle\sum_{i=1}^r} \tfrac{\la_i}{n} w_i
		+o(1)\Big\} \Big).
		\ee
		In the case $h<0$, the sum in the exponent in
		\eqref{eq:schur-tableaux-mag} is maximised when $m_i =
		\la_{r+1-i}$ for each $i$; indeed, let $h' = -h$, and $w'_i =
		-w_{r+1-i}$, and apply the same reasoning as above. So, for
		$h<0$, we have 
		\be\label{eq:schur-bound-mag-2}
		e^{\sum_{i=1}^r h \la_{r+1-i} w_i}\le
		s_\la(e^{ hw_1},\dots,e^{ hw_r})\le 
		\dim(U_\la) e^{\sum_{i=1}^r  h \la_{r+1-i} w_i},
		\ee
		and consequently
		\be
		Z_{n,h}^{\ab}=
		\sum_{(\mu/n,\nu/n,\la/n)\in\Om^+_{m/n}}
		\exp\Big(n \Big\{
		\tilde F(\tfrac{\mu}{n},\tfrac{\nu}{n},\tfrac{\la}{n})
		+h{\textstyle\sum_{i=1}^r} \tfrac{\la_i}{n} w_{r+1-i}
		+o(1)\Big\} \Big).
		\ee
		The result for the {\ab}-case then follows by
		arguing as in \eqref{eq:lim1} and \cite[Lemma~3.4]{Bjo16}.
		
		For the {\wb}-case, a very similar argument as for \eqref{eq:Z-bi-mag}
		gives
		\be
		Z^\wb_n(\b,h) =\sum_{(\pi/n,\tau/n,[\la,\mu]/n)\in\Om^-_{m/n}}
		\chi_{U_{[\la,\mu]}}(e^{hw_1},\dotsc,e^{hw_r})
		\exp\Big(n \Big\{
		\tilde G(\tfrac{\mu}{n},\tfrac{\nu}{n},\tfrac{\la}{n})+o(1)\Big\} \Big),
		\ee
		where $\tilde G$ is given in \eqref{eq:tilde-G}, $\Om^-_\rho$ is
		defined just above \eqref{eq:tilde-G}, and 
		$\chi_{U_{[\la,\mu]}}$ is given in  \eqref{eq:GL-char-2}.
		In particular,   from \eqref{eq:GL-char-2}, we see that upper and lower
		bounds from \eqref{eq:schur-bound-mag} and
		\eqref{eq:schur-bound-mag-2} extend to this
		case.
		The result for the {\wb}-case then follows by
		arguing as in \eqref{eq:FE-WB-matrices} and \cite[Lemma~3.4]{Bjo16}
		again.
	\end{proof}

	\begin{proof}[Proof of Theorem \ref{thm:mag}]
		The proof closely follows that of Theorem 4.1 from
		\cite{Bjo16}. We start from the expressions \eqref{eq:fe-max-mag}
		where, for ease of notation, we
		drop the superscript.  We give details only 
		in the \ab-case with $h>0$ as the
		other cases are very similar. 
		
		Let %$\cH^+_\rho=\cup_{\vec z\in\D^+} \cH^+_\rho(\vec z)$ and let 
		$F_{\max}=\Phi(\beta,0)=\max_{\vec z\in\D^+}\big(
		\max_{(X,Y)\in\cH^+_\rho(\vec z)} \phi(X,Y)\big)$
		and let 
		\be
		K=\Big\{\vec z \in\D^+: \max_{(X,Y)\in\cH^+_\rho(\vec z)} 
		\phi(X,Y)=F_{\max}\Big\}
		\ee
		denote the set of maximisers.
Note that $K$ is compact.
		Clearly,
		\be\label{eq:left-deriv-mag}
		\begin{split}
			\frac{\Phi(\b,h) - \Phi(\b,0)}{h}&=
			\max_{\vec z\in\D^+} \Big[
			\sum_{i=1}^rz_iw_i+
			\frac{ \max_{(X,Y)\in\cH^+_\rho(\vec z)} \phi(X,Y)
				- F_{\max}}{h} \Big]\\
			&\geq\max_{\vec z\in K}  \sum_{i=1}^rz_iw_i.
		\end{split}
		\ee
		We want to prove that the left-hand side of \eqref{eq:left-deriv-mag}
		tends to the right-hand side as $h\rightarrow 0$. For a contradiction,
		assume that there is a sequence $h_n\rightarrow 0$ such that the
		corresponding limit exists and is strictly larger than the right-hand
		side. For each  $h_n$, pick an element $\vec{z}(h_n)\in \D^+$ 
		that achieves the first maximum in
		\eqref{eq:left-deriv-mag}. 
		Since $\D^+$ is compact,  we can assume after passing to a subsequence
		if necessary that $\vec{z}(h_n)\rightarrow \vec z^\star$ 
		as  $h_n\rightarrow 0$. 
		We claim that $\vec z^\star\in K$. Otherwise, 
		$\max_{(X,Y)\in\cH^+_\rho(\vec z^\star)} \phi(X,Y)<F_{\max} $, 
		which would mean that the left-hand side of \eqref{eq:left-deriv-mag}
		tends to $-\infty$ as $h=h_n\rightarrow 0$, contradicting the lower
		bound on the right. It follows that 
		\be\begin{split}
			\frac{\Phi(\b,h_n) - \Phi(\b,0)}{h_n}&=
			\sum_{i=1}^rz_i(h_n)w_i+
			\frac{\max_{(X,Y)\in\cH^+_\rho(\vec z(h_n))} \phi(X,Y) 
				- F_{\max}}{h_n}\\
			&\leq\sum_{i=1}^rz_i(h_n)w_i\rightarrow \sum_{i=1}^r
			z_i^\star w_i\leq\max_{\vec z\in K}\sum_{i=1}^r
			z_i^\star w_i,
		\end{split}
		\ee
		as required.  
		
		In the \wb-case, we follow the same reasoning but with 
		$\D^+$ replaced by $\D^-_\rho$, with $\cH^+_\rho$ replaced by
		$\cH^-_\rho$, and the maxima in \eqref{eq:left-deriv-mag} replaced by
		minima (as well as $w_i\leftrightarrow w_{r+1-i}$).
		
		It remains to show that the $z_i$ may be expressed as in the statement
		of the Theorem.  Indeed, we know from \eqref{eq:trace-ineq} that
		$\phi(X,Y)$ is maximised when $X$ and $Y$ are simultaneously diagonal,
		with entries $x_1,\dotsc,x_r$ and $y_1,\dotsc,y_r$, respectively,
		ordered as follows:
		\begin{itemize}[leftmargin=*]
			\item if $c>0$, if $x_1\geq\dotsb\geq x_r\geq 0$
			then $y_1\geq\dotsb\geq y_r\geq 0$;
			\item if $c<0$, if $x_1\geq\dotsb\geq x_r\geq 0$
			then $0\leq y_1\leq\dotsb\leq y_r$.
		\end{itemize}
		This gives the result.
	\end{proof}

	\section{The phase-transition}
	\label{sec:max}
	
	In this section we prove Propositions \ref{prop:crittemp},
	\ref{prop:r2crittemp}, \ref{prop:tcrittemp} and \ref{prop:unique}.
	Let us start by recalling the basic quantities of interest:
	we wish to maximise the function 
	\be
	F(\om)=F(\vec x;\vec y)= \textstyle\sum_{i=1}^r f(x_i,y_i),
	\ee
	over the domain
	\be
	\Om=\big\{\om=(\vec x; \vec y):
	x_1,\dotsc,x_r,y_1,\dotsc,y_r\geq 0,\;
	\textstyle\sum_{i=1}^rx_i=1-\sum_{i=1}^ry_i=\rho
	\big\}.
	\ee
	Here
	\be
	f(x,y)=-x\log x-y\log y+
	\tfrac\b2\big(a x^2+by^2+2cxy\big),
	\ee
	and we write $Q(x,y)=\tfrac12\big(a x^2+by^2+2cxy\big)$
	for the quadratic form appearing in $f(x,y)$.
	We will write $\rho'=1-\rho$ to lighten the notation.
	
	We are particularly interested in whether the maximum
	of $F$  is attained at the point
	\be\label{eq:trivmax}
	\om_0=\big(\tfrac\rho r, \tfrac\rho r,\dotsc, \tfrac\rho r; 
	\tfrac{\rho'}r,\tfrac{\rho'}r,\dotsc,\tfrac{\rho'}r\big),
	\ee
	or at some other point in $\Om$. 
%	We defined $\b_\crit$ to be the
%	maximum of those values of $\b$ for which $F$ is maximised at $\om_0$. 
	
	\subsection{Existence of a phase transition: proof of Proposition
		\ref{prop:crittemp}}

	We are now ready to prove our result on the existence of a critical
	point. Recall that we want to prove that $\b_\crit$ exists (is
	positive and finite) if and only if $Q$ is not negative semidefinite,
	where $\b_\crit$ is the maximum of the $\b$ for which $\om_0$ is a
	maximiser of $F$. 
	We will need the following elementary identity.
	
\begin{lemma}\label{lem:quadform}
  If $Q$ is a quadratic form of two variables, then
  \be
  r\sum_{j=1}^r Q(x_j,y_j)=
  Q(x_1+\dots+x_r,y_1+\dots+y_r)+
\sum_{1\leq i<j\leq r} Q(x_i-x_j,y_i-y_j). 
  \ee
\end{lemma}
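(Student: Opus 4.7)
The identity is linear in $Q$, so it suffices to verify it on a basis of the space of quadratic forms in two variables. I would take the basis $\{x^2,\,y^2,\,xy\}$ and check each case separately.

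For $Q(x,y)=x^2$ (and symmetrically for $Q(x,y)=y^2$) the identity reduces to the classical ``variance'' identity
\[
r\sum_{j=1}^r v_j^2 = \Big(\sum_{j=1}^r v_j\Big)^2 + \sum_{1\le i<j\le r}(v_i-v_j)^2,
\]
applied to $v_j=x_j$, which follows from a one-line expansion of the right-hand side. For $Q(x,y)=xy$ one expands
\[
\Big(\sum_j x_j\Big)\Big(\sum_j y_j\Big)+\sum_{i<j}(x_i-x_j)(y_i-y_j)
\]
and observes that the off-diagonal terms $x_iy_j$ with $i\ne j$ cancel, leaving exactly $r\sum_j x_jy_j$.

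A cleaner unified version of the same plan is to write $Q(u)=B(u,u)$ for the symmetric bilinear form $B$ on $\RR^2$ associated with $Q$, set $u_j=(x_j,y_j)\in\RR^2$, and then verify the vector-valued identity
\[
r\sum_j B(u_j,u_j) = B\Big(\sum_j u_j,\sum_j u_j\Big)+\sum_{i<j} B(u_i-u_j,u_i-u_j),
\]
which is simply the variance identity with scalar multiplication replaced by $B$. The verification is immediate by bilinearity of $B$: expand the right-hand side, collect the diagonal contributions (giving $\sum_j B(u_j,u_j)$ from the first term and $(r-1)\sum_j B(u_j,u_j)$ from the second), and check that the cross terms $B(u_i,u_j)$ with $i\ne j$ cancel.

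There is no real obstacle here; the identity is elementary and follows from bookkeeping. I would present it as a single calculation using the bilinear form $B$ rather than the three-case breakdown, since it avoids repetition and makes the role of bilinearity transparent.
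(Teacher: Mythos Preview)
Your proposal is correct and follows essentially the same approach as the paper: reduce by linearity to the three monomials $x^2$, $y^2$, $xy$ and verify each by direct expansion. The only cosmetic difference is that the paper handles $xy$ first and then obtains $x^2$ and $y^2$ by specializing $x_j=y_j$, whereas you do the squares first via the variance identity; your bilinear-form packaging is a nice alternative presentation of the same computation.
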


\begin{proof}
  When $Q(x,y)=xy$ we need to prove that
  \be
  r\sum_{j=1}^r x_jy_j =(x_1+\dots+x_r)(y_1+\dots+y_r)+
  \sum_{1\leq i<j\leq r}(x_i-x_j)(y_i-y_j).
  \ee
  This is easy  to see by comparing the coefficient of each monomial on
  the two sides. Specializing $x_j=y_j$ proves the result for
  $Q(x,y)=x^2$ and $Q(x,y)=y^2$, and the general case then follows by
  linearity. 
\end{proof}

\begin{proof}[Proof of Proposition \ref{prop:crittemp}]
  We will write 
  \be\label{eq:F-F0=bE+H}
  F(\om)-F(\omega_0)=
  \beta\cE(\om) +\cH(\om),
  \ee
  where 
  \be
  \cE(\vec x;\vec y)=
  \sum_{j=1}^rQ(x_j,y_j)
  -rQ\big(\tfrac{\rho}r,\tfrac{\rho'}r\big),
  \ee
and
  \begin{equation}\label{eq:Phi}
    \cH(\vec x;\vec y)=
    \sum_{j=1}^r(-x_j\log x_j-y_j\log y_j)
    +\rho\log\tfrac\rho r+\rho'\log\tfrac{\rho'}{r}.
  \end{equation}
  The term $\cE$ is in some sense an energy term, and $\cH$ an entropy term.
  Note that $F$ is maximised at $\omega_0$ if and only if 
  $\beta\cE(\om) +\cH(\om)\leq 0$
  on $\Omega$. 

  On $\Omega$, we can write
  \begin{multline*}
    \tfrac1r{\cH(\vec x;\vec y)}=
    -h\left(\frac{x_1+\dots+ x_r}r\right)+\frac{h(x_1)+\dots+h(x_r)}{r}\\
    - h\left(\frac{y_1+\dots+ y_r}r\right)+\frac{h(y_1)+\dots+h(y_r)}{r},
  \end{multline*}
  where $h(x)=-x\log x$. Since $h$ is strictly concave, 
  $\cH(\om)\leq 0$ with equality only at the point 
  $\omega_0$.  Moreover, by Lemma \ref{lem:quadform},
  \begin{equation}\label{eq:psiid}
    \cE(\vec x;\vec y)=\frac 1 r\sum_{1\leq i<j\leq r}Q(x_i-x_j,y_i-y_j).
  \end{equation}
  Thus, if $Q$ is negative semidefinite, we have $\cE(\om)\leq 0$ and
  consequently $\omega_0$ is the unique maximum point of $F$.
		
  Assume now that $Q$ is not negative semidefinite. 
  We claim that $\cE$ assumes strictly positive values in $\Om$.  To
  see this, it suffices to 
  consider the case when
  $x_2=\dots=x_r$, $y_2=\dots=y_r$. Then
  \begin{equation}\label{eq:psim2}
    \cE(\vec x;\vec y)=\frac{r-1}{r} \,Q(\xi,\eta), 
  \end{equation}
  where $\xi=x_1-x_2$ and $\eta=y_1-y_2$. 
  Here $(\xi,\eta)$ can take any value in 
  $\big[-\frac\rho{r-1},\rho\big]\times\big [-\frac{\rho'}{r-1},\rho'\big]$.
  By assumption, $Q$ assumes  positive values  in parts of this
  rectangle. Then it is clear that 
  $\cE$ takes positive values, hence that
  $\cH(\om)+\beta\cE(\om)$ assumes  positive
  values for $\beta$ large enough, and that the set of $\beta>0$ for
  which this is true is an 
  interval $\beta>\beta_\crit$. 
  To see that $\om_0$ is the unique maximiser for $\b<\b_\crit$, take
  $\om\in\Om\setminus\{\om_0\}$.  Then either $\cE(\om)>0$, in which
  case $\cH(\om)+\b\cE(\om)<\cH(\om)+\b_\crit\cE(\om)\leq0
  =\cH(\om_0)+\b\cE(\om_0)$, or $\cE(\om)\leq0$, in which case
  $\cH(\om)+\b\cE(\om)\leq \cH(\om)<0=\cH(\om_0)+\b\cE(\om_0)$.
  
  It remains to show that $\beta_\crit\neq 0$, that is, that $F$ assumes its
  maximum value at $\omega_0$ for $\beta$ close to zero. 
  We will show that this is in fact true if we maximise $F$ over the larger set
  \be
  U=\big\{	(\vec x;\vec y):
  \,0\leq x_j\leq \rho,
  \,0\leq  y_j\leq \rho',\,j=1,\dots,r
  \big\}. 
  \ee
  To do this we will show that
  the  Hessian $H(F)$ is negative definite in $U$ for $\b$ close to 0,
  meaning that $F$ is concave in $U$ for such $\b$ and
  that  $\om_0$ is a  global maximum in $U$.  
  The  Hessian $H(F)$ is a direct sum of the Hessians 
  \be\label{eq:hess} 
  H(f)=\left(\begin{matrix}f_{xx}& f_{xy}\\ 
      f_{xy} & f_{yy}\end{matrix}\right) 
  =\left(\begin{matrix}\beta a-\frac 1x& \beta c\\ 
      \beta c &\beta b-\frac 1y\end{matrix}\right),
  \ee 
  which is negative definite if and only if
  \be \label{eq:negdef} 
  \big(\beta a-\tfrac 1 x\big)\big(\beta b-\tfrac 1 y\big)>
  \beta^2c^2, 
  \qquad \tfrac 1 x>\beta a,
  \qquad \tfrac 1 y>\beta b.
  \ee 
  By monotonicity, when $x\leq \rho$ and $y\leq\rho'$ 
  the inequalities \eqref{eq:negdef} are implied by
  \be \label{eq:negdef2} 
  \big(\beta a-\tfrac 1 \rho\big)\big(\beta b-\tfrac 1 {\rho'}\big)>
  \beta^2c^2, 
  \qquad \tfrac 1 \rho>\beta a,
  \qquad \tfrac 1 {\rho'}>\beta b.
  \ee 
  But \eqref{eq:negdef2} holds for $\b=0$, hence by continuity also for
  small  positive $\b$, as required.
\end{proof}

From the proof above we note that $\beta\leq\beta_\crit$
if and only if $\cH(\om)+\b\cE(\om)\leq 0$ for all $\om\in\Om$, and
	also that we have the  expression
	\begin{equation}\label{eq:crittempformula}
	\beta_\crit=\inf_{\om \in
		\Omega^+}\Big(-\frac{\cH(\om)}{\cE(\om)}\Big), 
	\quad\mbox{where }
	\Omega^+=\big\{	\om \in\Omega:\,\cE(\om)>0\big\}. 
	\end{equation}
	
	\subsection{Formulas for $\b_\crit$:  proofs of Propositions
		\ref{prop:r2crittemp} and 
		\ref{prop:tcrittemp}}
	
	We now turn to the proofs of our formulas for $\beta_\crit$,
	Proposition \ref{prop:r2crittemp} for the case $r=2$ and Proposition
	\ref{prop:tcrittemp} for the case $r\ge3$, $c\ge0$ and $(a-c)\rho=(b-c)\rho'=:t$. 
	
	Our strategy is to obtain general lower and upper bounds on $\beta_\crit(r)$, given in Propositions \ref{Prop:crittemplow} and \ref{prop:crittemphigh} respectively, 
	which are tight in the two cases that we consider. Both bounds are given in terms of the critical temperature $\beta_\crit^{\text{h}}(r)$ of the homogeneous case $a=b=c=1$
	(the superscript h is for ``homogeneous'').
	In  \cite[Theorem 4.2]{Bjo16}, it was found that
	\begin{equation}\label{eq:crittemphom}
	\beta_\crit^{\text{h}}(r)=
	\begin{cases}2, & r=2,\\
	\displaystyle\frac{2(r-1)\log(r-1)}{r-2}, & r\geq 3.
	\end{cases} \end{equation}
	Note that this agrees with our Proposition
	\ref{prop:tcrittemp};  the corresponding form
	$Q(x,y)=\tfrac12(x+y)^2$ 
	is not negative semidefinite and  \eqref{eq:tcond} holds with $t=0$.

	To get a  better understanding of 
	Proposition \ref{prop:tcrittemp}, we note that
	\eqref{eq:tcond}  implies the explicit diagonalization
	\begin{equation}\label{eq:qtdiag}
	Q( x, y)=\frac{t\rho\rho'}{2}\left(\frac x{\rho}-\frac
	y{\rho'}\right)^2
	+\frac{c+t}{2}(x+y)^2. 
	\end{equation}
	That $Q$ is not negative semidefinite means that at least one of $t$
	and $c+t$ are positive. Since we assume that $c\geq 0$  
	this means that $c+t>0$. In particular,  the expression for
	$\beta_\crit(r)$ in  
	Proposition \ref{prop:tcrittemp} is always positive.

	Let us now obtain a lower bound for $\b_\crit$. We deduce from
	\eqref{eq:crittempformula} and 
	\cite[Theorem 4.2]{Bjo16} with
	$\rho=1$ that  
	$-\cH(\vec x;\vec 0)\geq \beta_\crit^{\text{h}}(r)\cE(\vec x;\vec 0)$.
	This inequality takes the form
	\be
	\sum_{j=1}^rx_j\log x_j-\log\tfrac 1r
	\geq \tfrac{\beta_\crit^{\text{h}}(r)}{2r}
	\sum_{1\leq i<j\leq r}(x_j-x_i)^2,
	\quad
	\mbox{where } \textstyle\sum_{j=1}^r x_j=1.
	\ee
	Replacing each $x_j$ by $x_j/\rho$ gives
	\begin{equation}\label{eq:entropyestimate}
	\sum_{j=1}^rx_j\log x_j- \rho\log\tfrac{\rho}{r}
	\geq \tfrac{\beta_\crit^{\text{h}}(r)}{2\rho r}\sum_{1\leq i<j\leq r}(x_j-x_i)^2,
	\quad
	\mbox{where } \textstyle\sum_{j=1}^r x_j=\rho.
	\end{equation}
	As was observed in \cite{Bjo16}, equality
	in \eqref{eq:entropyestimate} holds both at the point 
	$x_1=\dots=x_r=\rho/r$
	and at \eqref{eq:specialx}. 
	(They are the same point if $r=2$.)
	
	We will temporarily write $\gamma$ for the explicit expression \eqref{eq:r2crittemp} (we aim to show that $\beta_\crit(2)=\gamma$).
	We will need the following description of $\gamma$. 
	
	\begin{lemma}\label{lem:perturbation}
		Assume that $Q(x,y)=\tfrac12(ax^2+by^2+2cxy)$ is not negative 
		semidefinite and that $\beta,\,\rho,\,\rho'> 0$. Then, the form
		\begin{equation}\label{eq:perturbq}
		\beta Q(x,y)-\frac{x^2}{\rho}-\frac{y^2}{\rho'}
		\end{equation}
		is negative semidefinite if and only if $\beta\leq \gamma$, and negative definite if and only if $\beta<\gamma$.
	\end{lemma}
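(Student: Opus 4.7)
The form in question has matrix representation
\[A(\beta)=\begin{pmatrix}\beta a-2/\rho & \beta c\\ \beta c & \beta b-2/\rho'\end{pmatrix},\]
since $\beta Q(x,y)-x^2/\rho-y^2/\rho'=\tfrac12(x,y)\,A(\beta)\,(x,y)^{\intercal}$. A real symmetric $2\times 2$ matrix is negative semidefinite if and only if its trace is $\leq 0$ and its determinant is $\geq 0$, and negative definite if and only if both inequalities are strict; at $\beta=0$ both are strict. The plan is to show that, as $\beta$ grows, the determinant hits zero strictly before the trace does, and that this happens precisely at $\beta=\gamma$.

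\emph{Determinant.} A direct computation gives
\[D(\beta):=\rho\rho'\det A(\beta)=\rho\rho'(ab-c^2)\beta^2-2(\rho a+\rho' b)\beta+4,\]
with $D(0)=4>0$. In the case $ab=c^2$, this is linear; the non-negative-semidefiniteness of $Q$ combined with $ab=c^2\geq 0$ forces $\rho a+\rho' b>0$, giving the unique positive root $\gamma=2/(\rho a+\rho' b)$, matching \eqref{eq:r2crittemp}. In the case $ab\neq c^2$, the discriminant of $D$ simplifies to $4[(\rho a-\rho' b)^2+4\rho\rho' c^2]$, strictly positive under our hypothesis, so $D$ has two distinct real roots. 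A short sign analysis, splitting on the sign of $ab-c^2$ (and using that in the subcase $ab-c^2>0$ the hypothesis forces $a,b>0$, hence $\rho a+\rho' b>0$), identifies the smallest positive root of $D$ as exactly the value of $\gamma$ given in \eqref{eq:r2crittemp}. Consequently $D(\beta)\geq 0$ on $[0,\gamma]$, strictly on $[0,\gamma)$, and $D(\beta)<0$ immediately to the right of $\gamma$.

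\emph{Trace.} Since $\operatorname{tr}A(\beta)=(a+b)\beta-2/(\rho\rho')$ is linear with $\operatorname{tr}A(0)<0$, the trace condition is automatic on $[0,\infty)$ when $a+b\leq 0$; otherwise it fails precisely for $\beta>\beta_{\mathrm{tr}}:=2/((a+b)\rho\rho')$. I would verify $\gamma\leq\beta_{\mathrm{tr}}$ in the case $a+b>0$ by direct substitution, which after clearing denominators collapses to
\[D(\beta_{\mathrm{tr}})=-\frac{4\bigl[(\rho a-\rho' b)^2+c^2\bigr]}{(a+b)^2\rho\rho'}\leq 0.\]
Combined with $D(0)>0$, this forces $D$ to vanish somewhere in $(0,\beta_{\mathrm{tr}}]$, so $\gamma\leq\beta_{\mathrm{tr}}$. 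Hence $A(\beta)$ is negative definite for $0<\beta<\gamma$ and negative semidefinite for $0<\beta\leq\gamma$.

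\emph{Converse.} For $\beta$ slightly greater than $\gamma$, $\det A(\beta)<0$, so $A(\beta)$ is indefinite and in particular not negative semidefinite. The only subcase in which $D$ becomes non-negative again is $ab-c^2>0$, past the larger root $\beta_+$; but the same evaluation $D(\beta_{\mathrm{tr}})\leq 0$ gives $\beta_+\geq\beta_{\mathrm{tr}}$, so on $[\beta_+,\infty)$ the trace is non-negative and $A(\beta)$ is in fact positive (semi)definite, not negative semidefinite. The main obstacle is the case analysis in the determinant step that correctly identifies $\gamma$ across $ab\gtreqless c^2$; once that is settled, the trace comparison and the converse are essentially one-line algebraic checks.
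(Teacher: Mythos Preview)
Your approach is essentially the paper's: both reduce to finding the smallest positive root of the determinant equation $\det A(\beta)=0$ and solve the resulting quadratic, obtaining $\gamma$. The paper organizes it a bit more economically by first arguing, directly from the form $\beta Q(x,y)-x^2/\rho-y^2/\rho'$, that the set of admissible $\beta$ is an interval $(0,\beta_0]$ (once the form is positive at some $(x_0,y_0)$ it stays positive for larger $\beta$, since $Q(x_0,y_0)>0$ there); this makes your separate trace analysis and converse paragraph unnecessary.

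One small gap: your claim that the discriminant $4[(\rho a-\rho' b)^2+4\rho\rho' c^2]$ is \emph{strictly} positive under the hypothesis is false when $c=0$ and $\rho a=\rho' b$ (with $a,b>0$, which is forced by $Q$ not being negative semidefinite and $ab>0=c^2$). In that edge case $D$ has a double root at $\gamma$, so $D(\beta)\geq 0$ for all $\beta$ and your statement ``$D(\beta)<0$ immediately to the right of $\gamma$'' fails. The lemma still holds there: one checks $A(\gamma)=0$, and for $\beta>\gamma$ the matrix $A(\beta)=(\beta-\gamma)\,\mathrm{diag}(a,b)$ is positive definite. Equivalently, your own computation gives $D(\beta_{\mathrm{tr}})=0$ here, so $\beta_{\mathrm{tr}}=\gamma$ and the trace condition takes over for $\beta>\gamma$. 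A one-line patch (allow a repeated root, or invoke the paper's monotonicity argument) closes this.
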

	
	\begin{proof}
		By assumption, the first term in \eqref{eq:perturbq} can assume
		positive values, and the second term is always non-positive. It
		follows that the range of  $\beta$ 
		for which \eqref{eq:perturbq} is negative semidefinite 
		is of the form $\beta\leq \beta_0$ and that it is negative definite if
		and only if $\b<\b_0$.  The precise conditions for 
		\eqref{eq:perturbq} to be negative semidefinite are
		\be \left(\beta a-\frac 1{2\rho}\right)\left(\beta b-\frac 1{2\rho'}\right)\geq \beta^2 c^2,\qquad \beta a\leq \frac 1{2\rho},\qquad
		\beta b\leq \frac 1{2\rho'}.
		\ee
		By continuity, 
		$$\left(\beta_0 a-\frac 1{2\rho}\right)\left(\beta_0 b-\frac 1{2\rho'}\right)= \beta_0^2 c^2. $$
		If
		$ab=c^2$, this is a linear equation with the solution 
		$\beta_0=2/(a\rho+b\rho')=\g$. Otherwise,
		it has two solutions  
		\be
		\beta_{\pm}=\frac{\rho a+(1-\rho) b\pm\sqrt{(\rho a-(1-\rho) b)^2+4\rho(1-\rho)c^2}}{\rho(1-\rho)(ab-c^2)} ,
		\ee
		which satisfy $(ab-c^2)\beta_+\beta_-=1/4\rho\rho'>0$.  
		If $ab>c^2$, both solutions are positive and $\beta_0$ equals the
		smallest solution $\beta_-=\g$. If $ab<c^2$ the solutions have opposite
		sign.  In this case $\beta_0$ is the  largest solution, which is again
		$\beta_-=\g$. 
	\end{proof}

	\begin{proposition}\label{Prop:crittemplow}
		Assume that $Q$  is not negative semidefinite, so that $\beta_\crit$ exists.
		Then,
		\begin{equation}\label{eq:crittemplow}
		\beta_\crit\geq  \tfrac12 \beta_\crit^{\text{h}}(r)\gamma.
		\end{equation}
	\end{proposition}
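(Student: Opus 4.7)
The plan is to bound $-\cH(\om)$ from below and $\cE(\om)$ from above, both in terms of the same quantity $\sum_{i<j}\bigl[(x_i-x_j)^2/\rho+(y_i-y_j)^2/\rho'\bigr]$, and then invoke the formula \eqref{eq:crittempformula} for $\beta_\crit$.

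First, I would apply the entropy estimate \eqref{eq:entropyestimate} to each of the vectors $\vec x$ and $\vec y$ separately (the latter with $\rho$ replaced by $\rho'$), and add. This gives
\be
-\cH(\om)\;\geq\;\frac{\beta_\crit^{\text{h}}(r)}{2r}\sum_{1\leq i<j\leq r}
\left[\frac{(x_i-x_j)^2}{\rho}+\frac{(y_i-y_j)^2}{\rho'}\right]
\ee
for every $\om=(\vec x;\vec y)\in\Omega$. Next I would use Lemma \ref{lem:perturbation}: since $Q$ is not negative semidefinite and $\gamma$ is finite and positive, the form $\gamma Q(x,y)-x^2/\rho-y^2/\rho'$ is negative semidefinite, so for every pair $(\xi,\eta)\in\RR^2$,
\be
\gamma Q(\xi,\eta)\;\leq\;\frac{\xi^2}{\rho}+\frac{\eta^2}{\rho'}.
\ee
Substituting $(\xi,\eta)=(x_i-x_j,y_i-y_j)$, summing over $i<j$, and recalling the identity \eqref{eq:psiid} that $r\cE(\om)=\sum_{i<j}Q(x_i-x_j,y_i-y_j)$, I obtain
\be
\gamma\,r\,\cE(\om)\;\leq\;\sum_{1\leq i<j\leq r}
\left[\frac{(x_i-x_j)^2}{\rho}+\frac{(y_i-y_j)^2}{\rho'}\right].
\ee

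Combining the two displays yields $-\cH(\om)\geq\tfrac12\beta_\crit^{\text{h}}(r)\,\gamma\,\cE(\om)$ for every $\om\in\Omega$. For $\om\in\Omega^+$ we have $\cE(\om)>0$, so we may divide and conclude that $-\cH(\om)/\cE(\om)\geq\tfrac12\beta_\crit^{\text{h}}(r)\gamma$. Taking the infimum over $\Omega^+$ and applying \eqref{eq:crittempformula} gives \eqref{eq:crittemplow}.

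There is no serious obstacle here; the only thing to be careful about is that the entropy estimate \eqref{eq:entropyestimate} holds on the simplex where the $x_j$'s sum to $\rho$ (and analogously for the $y_j$'s), which is exactly the constraint built into $\Omega$, so the two inequalities can be chained cleanly.
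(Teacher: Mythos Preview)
Your proof is correct and follows essentially the same approach as the paper: both use the entropy estimate \eqref{eq:entropyestimate} for $\vec x$ and $\vec y$ together with the identity \eqref{eq:psiid} and Lemma~\ref{lem:perturbation}. The only cosmetic difference is that the paper packages the combination as a single quadratic form $\tilde Q(x,y)=\beta Q(x,y)-\tfrac{\beta_\crit^{\text{h}}(r)}{2}\bigl(x^2/\rho+y^2/\rho'\bigr)$ and checks when it is negative semidefinite, whereas you apply Lemma~\ref{lem:perturbation} at $\beta=\gamma$ to bound $\cE$ first and then combine; the content is identical.
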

	
	\begin{proof}
		Using the estimate \eqref{eq:entropyestimate} in \eqref{eq:Phi} gives
		\be\label{eq:hej3}
		-\cH(\om)\geq \frac{\beta_\crit^{\text{h}}(r)}{2r}
		\sum_{1\leq i<j\leq
			r}\left(\frac{(x_i-x_j)^2}{\rho}+\frac{(y_i-y_j)^2}{\rho'}\right). 
		\ee
		It follows that
		\be\label{eq:hej4}
		\cH(\om)+\beta\cE(\om)
		\leq \frac {1} r\sum_{1\leq i<j\leq r}\tilde Q(x_j-x_i,y_j-y_i), 
		\ee
		where
		\be
		\tilde Q(x,y)=\beta Q(x,y)-\tfrac{\beta_\crit^{\text{h}}(r)}{2}
		\big(\tfrac{x^2}\rho+\tfrac{y^2}{\rho'}\big).
		\ee
		By Lemma \ref{lem:perturbation}, $\tilde Q$ is negative semidefinite if and only if
		$\beta\leq \tfrac12\beta_\crit^{\text{h}}(r)\gamma$. For $\beta$ in this range it follows that
		$\cH(\om)+\beta\cE(\om)\leq 0$ on $\Omega$.  
		This gives the desired  bound on $\beta_\crit$.
	\end{proof}
	
	Let us now move to upper bounds for $\beta_\crit$. We need to find a value of $\beta$ 
	such that	$F(\om)>F(\omega_0)$ for some points 
	$\om\in\Omega$.  We want to find upper bounds that in some case equal
	the lower bound in Proposition \ref{Prop:crittemplow}.  We can only
	expect this to work if we used the inequality  
	\eqref{eq:entropyestimate}  in cases when  it
	holds with equality. By the results of \cite{Bjo16} mentioned above, 
	it is natural to take
	$\om$ either close to $\omega_0$, or to $\om_1$ as in \eqref{eq:specialxy}. 
	This leads to the following two upper bounds.
	
	\begin{proposition}\label{prop:crittemphigh}
		Assume that $Q$  is not negative semidefinite, so that 
		$\beta_\crit$ exists.
		Then,
		\begin{equation}\label{eq:crittemphigh1} 
		\beta_\crit\leq \tfrac12 r\gamma. 
		\end{equation}
		If, in addition, $Q(\rho,\rho')>0$ and $r\geq 3$, then
		\begin{equation}\label{eq:crittemphigh2}	
		\beta_\crit\leq \frac{\beta^{\text{h}}_\crit(r)}
		{2Q(\rho,\rho')}. 
		\end{equation}
	\end{proposition}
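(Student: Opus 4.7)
The plan is to upper-bound $\beta_\crit$ by exhibiting, for each $\beta$ strictly above the claimed threshold, a point $\omega\in\Omega$ with $F(\omega)>F(\omega_0)$; by Proposition \ref{prop:crittemp} this forces $\beta>\beta_\crit$, and letting $\beta$ decrease to the threshold delivers the bound. Following the heuristic stated just before the proposition, I would use small perturbations of $\omega_0$ for \eqref{eq:crittemphigh1} and the explicit point $\omega_1$ of \eqref{eq:specialxy} for \eqref{eq:crittemphigh2}.

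For \eqref{eq:crittemphigh1}, I parametrise $x_j=\rho/r+\xi_j$, $y_j=\rho'/r+\eta_j$ with $\sum_j\xi_j=\sum_j\eta_j=0$. Since $Q$ is a homogeneous quadratic form and the sum constraints annihilate all linear terms, one has $\cE(\omega)=\sum_jQ(\xi_j,\eta_j)$ exactly, while a second-order Taylor expansion of $t\mapsto -t\log t$ at $\rho/r$ and $\rho'/r$ gives
\[
-\cH(\omega)=\frac{r}{2}\sum_j\Big(\frac{\xi_j^2}{\rho}+\frac{\eta_j^2}{\rho'}\Big)+O(\|\omega-\omega_0\|^3).
\]
Hence the leading behaviour of $F(\omega)-F(\omega_0)$ near $\omega_0$ is $\sum_j\psi(\xi_j,\eta_j)$, with
\[
\psi(\xi,\eta)=\frac{r}{2}\Big(\frac{2\beta}{r}Q(\xi,\eta)-\frac{\xi^2}{\rho}-\frac{\eta^2}{\rho'}\Big).
\]
By Lemma \ref{lem:perturbation} applied to the parameter $2\beta/r$, the form $\psi$ fails to be negative semidefinite precisely when $\beta>r\gamma/2$. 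For such $\beta$ I select $(\xi^\star,\eta^\star)$ with $\psi(\xi^\star,\eta^\star)>0$, set $\xi_1=t\xi^\star$, $\xi_2=-t\xi^\star$, $\xi_j=0$ for $j\ge3$ (and similarly for $\eta$), and choose $t>0$ small enough that the perturbed point lies in $\Omega$ and the quadratic term dominates the cubic error; this yields $F(\omega)>F(\omega_0)$.

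For \eqref{eq:crittemphigh2}, the argument is a direct evaluation at $\omega_1$. The key observation is that the entropy estimate \eqref{eq:entropyestimate} is saturated at the spike vector \eqref{eq:specialx} and at its $y$-analogue, which together comprise $\omega_1$. At $\omega_1$ one finds $x_1-x_j=\rho(r-2)/(r-1)$ and $y_1-y_j=\rho'(r-2)/(r-1)$ for $j\ge 2$, with all other differences zero, so summing the two saturated entropy estimates gives
\[
-\cH(\omega_1)=\frac{\beta_\crit^{\text{h}}(r)(r-2)^2}{2r(r-1)},
\]
and the identity \eqref{eq:psiid} together with the same pair structure yields $\cE(\omega_1)=(r-2)^2Q(\rho,\rho')/(r(r-1))$. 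Combining these,
\[
F(\omega_1)-F(\omega_0)=\frac{(r-2)^2}{r(r-1)}\Big(\beta\, Q(\rho,\rho')-\tfrac12\beta_\crit^{\text{h}}(r)\Big),
\]
which under the hypotheses $r\ge 3$ and $Q(\rho,\rho')>0$ is strictly positive exactly when $\beta>\beta_\crit^{\text{h}}(r)/(2Q(\rho,\rho'))$.

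I anticipate the first bound to be the more delicate of the two: the Taylor analysis must be carried out carefully enough to ensure that the quadratic term strictly dominates the cubic remainder on some explicit neighbourhood, and that the two-coordinate perturbation respects the non-negativity constraints defining $\Omega$. The second bound, by contrast, reduces to a direct computation made possible by the equality case of \eqref{eq:entropyestimate} established in \cite{Bjo16}.
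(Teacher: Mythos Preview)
Your proposal is correct and follows essentially the same approach as the paper: a second-order perturbation of $\omega_0$ along two coordinates combined with Lemma~\ref{lem:perturbation} for \eqref{eq:crittemphigh1}, and direct evaluation at $\omega_1$ for \eqref{eq:crittemphigh2}. The only cosmetic difference is that the paper computes $\cH(\omega_1)=-\tfrac{r-2}{r}\log(r-1)$ directly rather than via the saturated entropy estimate \eqref{eq:entropyestimate}, but the two routes give the same value and your use of the equality case is perfectly legitimate.
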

	
	In fact,  \eqref{eq:crittemphigh2} holds also when $r=2$, but in that
	case it is weaker than \eqref{eq:crittemphigh1}.

	\begin{proof}
		We first consider the behaviour of $F$ near $\omega_0$. 
		More precisely, consider the points
		\be
		\omega_{t,u}=\omega_0+(t,-t,0,\dots,0;u,-u,0,\dots,0), 
		\ee
		which belong to $\Omega$ for $t,u$ close to $0$.
		We have the Taylor expansion
		\begin{align*} 
		F(\omega_{t,u})-F(\omega_0)&=f\big(\tfrac{\rho}{r}+t,\tfrac{\rho'}{r}+u\big)
		+f\big(\tfrac{\rho}{r}-t,\tfrac{\rho'}{r}-u\big) 
		-2f\big(\tfrac{\rho}{r},\tfrac{\rho'}{r}\big)\\
		&=
		\big(t^2f_{xx}+u^2f_{yy}+2tuf_{xy}\big)
		\big(\tfrac{\rho}{r},\tfrac{\rho'}{r}\big)
		+\mathcal O((t^2+u^2)^{3/2}).
		\end{align*}
		By \eqref{eq:hess}, the quadratic term is 
		\be
		2\beta Q(t,u)-r\big(\tfrac{t^2}{\rho}+\tfrac{u^2}{\rho'}\big). 
		\ee
		By Lemma \ref{lem:perturbation},  if $\beta>r\gamma/{2}$, 
		this form is not negative semidefinite. 
		It follows that $\omega_0$ is not a local maximum of $F$. 
		This gives the first result.
		
		Next, we consider the point $\om_1$ from
		\eqref{eq:specialxy} and assume $r\geq 3$.
		By a straightforward computation,
		\be\label{eq:hej1}
		\cH(\om_1)=-\tfrac{r-2}r\log(r-1)
		\ee
		and, by \eqref{eq:psim2},
		\be\label{eq:hej2}
		\cE(\om_1)=\tfrac{r-1}{r}\,
		Q\big(\tfrac{\rho(r-2)}{r-1},\tfrac{\rho'(r-2)}{r-1}\big)
		=\tfrac{(r-2)^2}{r(r-1)}\,Q(\rho,\rho').
		\ee
		The second upper bound now follows from \eqref{eq:crittempformula}.
	\end{proof}
	
	We can now put our upper and lower bounds together to prove Propositions \ref{prop:r2crittemp} and \ref{prop:tcrittemp}.
	
	\begin{proof}[Proof of Proposition \ref{prop:r2crittemp}]
		When $r=2$, \eqref{eq:crittemplow} and \eqref{eq:crittemphigh1} reduce to
		$\gamma\leq\beta_\crit\leq\gamma$, that is,  $\b_\crit(2)=\gamma$.
		For the statement about uniqueness of the maximiser, note that if 
		$\b=\b_\crit(2)$
		and $\om=(\vec x;\vec y)$ is a maximiser, then
		the left-hand-side of \eqref{eq:hej4}  equals  zero.  
		Then also the right-hand-side of \eqref{eq:hej4}  equals  zero, since
		$\tilde Q\leq 0$ for $\b\leq \tfrac12\b_\crit^\text{h}(2)\gamma=\b_\crit(2)$
		by the proof of Proposition \ref{Prop:crittemplow}.
		Hence
		\eqref{eq:hej3} holds with equality and therefore
		\eqref{eq:entropyestimate} holds with equality, as does the
		corresponding statement for $\vec y$.  But it follows from the proof
		of Theorem 4.2 in \cite{Bjo16} that (for $r=2$) equality in
		\eqref{eq:entropyestimate} holds only at the point $\om_0$.
	\end{proof}
	
	\begin{proof}[Proof of Proposition \ref{prop:tcrittemp}]
		Note that	 the lower bound in  \eqref{eq:crittemplow}
		and the upper bound in  \eqref{eq:crittemphigh2} are equal
		if $\gamma=Q(\rho,\rho')^{-1}$. 
		Assuming  \eqref{eq:tcond}, we can parametrise
		\be
		a=c+\tfrac t{\rho}, \qquad b=c+\tfrac t{\rho'}. 
		\ee
		It is then straight-forward to check that
		\be
		(\rho a-\rho' b)^2+4\rho\rho'c^2=c^2, \quad\mbox{and}\quad
		ab-c^2=\frac{t(c+t)}{\rho\rho'},
		\ee
		which gives
		\be
		\gamma=\frac{2t+c-\sqrt{c^2}}{t(c+t)}
		=\frac{2}{c+t}, \qquad   c\geq 0.
		\ee
		By \eqref{eq:qtdiag},
		\begin{equation}\label{eq:qrho}Q(\rho,\rho')=\frac{c+t}{2}.\end{equation}
		This shows that, under the conditions of Proposition \ref{prop:tcrittemp},
		the upper and lower bound for $\beta_\crit$ agree and hence $\beta_\crit=\beta_\crit^{\text{h}}(r)/(c+t)$.

		To see that the point $\om_1$ in \eqref{eq:specialxy} gives another
		maximiser at $\b=\b_\crit$, take 
		$\beta=\beta_\crit(r)=\b_\crit^\text{h}(r)/2Q(\rho,\rho')$ 
		to see from \eqref{eq:hej1} and \eqref{eq:hej2}
		that $\cH(\om_1)+\b\cE(\om_1)=0$ which is also the 
		maximum value of 
		$\cH(\om)+\b\cE(\om)$.  To see that $\om_1$ is the only other
		maximiser we argue as at the end of the proof of Proposition
		\ref{prop:r2crittemp}.  Namely, for 
		$\b=\b_\crit(r)=\tfrac12\b_\crit^\mathrm{h}(r)\gamma$,
		we have that \eqref{eq:entropyestimate} holds with equality, as does
		the corresponding statement for $\vec y$.  From \cite{Bjo16}, equality
		in  \eqref{eq:entropyestimate} holds only at the points $\om_0$ and
		$\om_1$ (assuming \eqref{eq:xgeq}).
	\end{proof}
	
	We can now complete the final proof of this section, that of Proposition~\ref{prop:unique}, that the maximiser is unique for $\b>\b_\crit$ close to
	$\b_\crit$ under the conditions in Proposition \ref{prop:tcrittemp}.

	\begin{proof}[Proof of Proposition \ref{prop:unique}]
		We first show that $F$ is strictly concave in neighbourhoods
		of  $\omega_0$ and $\omega_1$ in $\Om$.  More generally,
		consider $F(\vec x+\vec t;\vec y+\vec u)$, where 
		$(\vec x;\vec y)\in\Omega$ is a point with $x_2=\dots=x_r$ and $y_2=\dots=y_r$ and $(\vec t;\vec u)$ a small perturbation with
		\begin{equation}\label{eq:tusum}\sum_{j=1}^r t_j=\sum_{j=1}^r u_j=0.\end{equation}
		By \eqref{eq:hess}, the quadratic term in the Taylor expansion of $F$ is
		\begin{equation}\label{eq:quadterm}Q_1(t_1,u_1)+\sum_{j=2}^rQ_2(t_j,u_j), \end{equation}
		where
		\[ 
		Q_k(t,u)=\beta Q(t,u)-\frac {t^2}{2x_k}-\frac {u^2}{2y_k}.
		\]
		
		At the point $\omega_0$, we have
		\[
		Q_1(t,u)=Q_2(t,u)=\beta Q(t,u)-
		\Big(\frac {r t^2}{2\rho}+\frac{ru^2}{2\rho'}\Big). 
		\]
		It follows from Lemma \ref{lem:perturbation} that this is negative definite if $\beta<\beta_0=r\gamma/2$. 
		By continuity, it follows that $F$ is strictly concave  near $\omega_0$.
		Since $\omega_0$ is a stationary point it must then be a local maximum,  that is,
		$F(\vec x;\vec y)\leq F(\omega_0)$ for $(\vec x;\vec y)$ 
		near $\omega_0$ and $\beta<\beta_0$. 
		Using that
		\[
		\beta_\crit=\frac{(r-1)\log(r-1)}{r-2}\,\gamma,
		\]
		it is easy to check that $\beta_\crit<\beta_0=r\gamma/2$, so this applies in particular to $\beta$ near $\beta_\crit$.    
		
		The point $\omega_1$ cannot be handled as easily since $Q_1$
		is then not negative definite. Instead, we   use Lemma
		\ref{lem:quadform} and \eqref{eq:tusum} to write
		\[ 
		(r-1)\sum_{j=2}^rQ_2(t_j,u_j)=Q_2(t_1,u_1)+\sum_{2\leq i<j\leq
			r}Q_2(t_i-t_j,u_i-u_j).
		\]
		It follows that \eqref{eq:quadterm} equals
		\[ 
		Q_1(t_1,u_1)+\frac 1{r-1}\,Q_2(t_1,u_1)+\frac 1{r-1}\sum_{2\leq
			i<j\leq r}Q_2(t_i-t_j,u_i-u_j).
		\]
		We compute
		\[
		Q_1(t,u)+\frac 1{r-1}\,Q_2(t,u)=\frac r{r-1}\left(\beta Q(t,u)
		-\Big(\frac{r t^2}{2\rho}+\frac {r u^2}{2\rho'}\Big)\right). 
		\]
		As before,
		this is negative definite for $\beta<\beta_0$. Moreover,
		\[
		Q_2(t,u)=\beta Q(t,u)-\frac {r(r-1) t^2}{2\rho}-\frac {r(r-1)
			u^2}{2\rho'} 
		\]
		is negative definite for $\beta<(r-1)\beta_0$, which is a weaker condition. 
		We  conclude that $F$ is strictly concave for $\beta<\beta_0$ and
		$(\vec x;\vec y)$ near $\omega_1$. We note that from \eqref{eq:F-F0=bE+H},
		\[
		F(\omega_1)-F(\omega_0)=\cH(\omega_1)+\beta_\crit\cE(\omega_1)+(\beta-\beta_\crit)\cE(\omega_1),
		\]
		where the sum of the first two terms on the right hand side vanish and the last term is computed by
		\eqref{eq:hej2} and \eqref{eq:qrho}. This gives
		\[
		F(\omega_1)-F(\omega_0)=(\beta-\beta_\crit)\frac{(r-2)^2(c+t)}{2r(r-1)}, 
		\]
		which is clearly positive for $\beta>\beta_\crit$.
		
		For each $\beta>\beta_\crit$, let $\omega(\beta)$ be a maximiser of 
		$F$ in $\Omega$. Permute the coordinates so that \eqref{eq:xgeq} holds. 	  
		We claim that then $\omega(\beta)\rightarrow \omega_1$ as
		$\beta\searrow \beta_\crit$. Otherwise, there exists a sequence
		$\omega({\beta_n})$, $\beta_n\searrow \beta_\crit$, that avoids a
		neighbourhood of $\omega_1$. Since $\Omega$ is compact we may assume
		that this sequence converges. It must then converge to a maximiser of
		$F$ for $\beta=\beta_\crit$ that satisfies \eqref{eq:xgeq}. There are only
		two such points, $\omega_0$ and $\omega_1$, by Proposition \ref{prop:tcrittemp}. 
		However, we have seen that for
		$\beta_\crit<\beta<\beta_0$   
		we have  $F(\vec x;\vec y)\leq F(\omega_0)$  for
		$(\vec x;\vec y)$ near $\omega_0$ whereas $F(\omega_1)>F(\omega_0)$. Thus, a
		sequence of global maximisers cannot converge to $\omega_0$. This is a
		contradiction, and we conclude that $\omega(\beta)\rightarrow
		\omega_1$. These points must then enter a region where $F$ is strictly
		concave and hence maximisers are unique. This completes the proof. 
	\end{proof}

	\subsection{Form of the maximiser of $F$ for $c>0$}
	
In this section we will prove that, for $c>0$, any maximiser of $F$
	\eqref{eq:F} is of the form \eqref{eq:maxformleq2}. 
	This is useful for 
	the heuristic discussion of Gibbs states in Section
	\ref{sec:heuristics} and for the results on ground state phase diagrams in Section
	\ref{sec:gspd}.

	We assume thoughout this section that $\vec x$ is ordered as in
	\eqref{eq:xgeq}, that is $x_1\geq x_2\geq\dotsb\geq x_r$.
	Recall from the discussion after \eqref{eq:xgeq}
	that, for $c>0$, $F$ is maximised when the orders of $\vec{x}$ and
	$\vec{y}$ match, that is when also
	$y_1\ge\cdots\ge y_r$.
	We will adapt the arguments in \cite{Bjo16} and in the appendix of
	\cite{BFU20} to show the following.  
	
	\begin{proposition}\label{prop:maxc>0}
		For $c>0$, any maximiser $(\vec{x}^\star;\vec{y}^\star)$ of $F$ 
		in the set $\Om$ \eqref{eq:Om} 
		is of the form 
		\be\label{eq:maxformleq2}
		\begin{split}
			x_1^\star&\geq x_{2}^\star=\dotsb=x_r^\star,\\
			y_1^\star&\geq y_{2}^\star=\dotsb=y_r^\star.
		\end{split}
		\ee
		Moreover for the special case $a=b=0$, $c>0$, $\rho=1/2$, and
		$\b\neq\b_\crit$ we have that the maximiser is unique, and
		$x_i^\star=y_i^\star$ for all $i=1,\dotsc,r$.  
	\end{proposition}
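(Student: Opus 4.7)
My plan is to adapt the strategy developed for the homogeneous mean-field model in \cite[Section~4]{Bjo16} and in the appendix of \cite{BFU20} to the present two-variable setting. The proof of the first assertion will proceed in two stages: first show that at any interior maximiser the pairs $(x_i^\star, y_i^\star)$ take at most two distinct values in $\mathbb{R}_{>0}^2$; then show that the larger of these values has multiplicity exactly one, the degenerate case of a single value being $\omega_0$. The second assertion will then follow by exploiting the $\vec x \leftrightarrow \vec y$ symmetry available in the special case.

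For the first stage, the Lagrange conditions at an interior maximiser read
\[
-\log x_i^\star - 1 + \beta a x_i^\star + \beta c y_i^\star = \mu,\qquad
-\log y_i^\star - 1 + \beta b y_i^\star + \beta c x_i^\star = \nu,
\]
for constants $\mu,\nu$ independent of $i$. Since $c>0$, the first equation determines $y$ explicitly as $\psi(x) = (\log x + 1 + \mu - \beta a x)/(\beta c)$, and substituting into the second yields a scalar equation $h(x)=0$ that every $x_i^\star$ with $x_i^\star, y_i^\star > 0$ must satisfy. A direct computation shows
\[
h'(x) = -\det H_f(x, \psi(x))/(\beta c),
\]
which ties the zero-structure of $h$ to the sign of the Hessian determinant along the curve $y = \psi(x)$. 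Together with a monotonicity-and-convexity analysis of $h$ in the spirit of the one-variable argument in \cite{Bjo16}, I expect this to force at most two positive zeros. Boundary cases (some $x_i^\star$ or $y_i^\star$ vanishing) will be handled by the KKT conditions; the common decreasing order of $\vec x^\star$ and $\vec y^\star$ for $c>0$ ensures that vanishing entries form a bottom block and can be absorbed into the ``smaller'' value.

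For the second stage, the first stage combined with the common ordering yields an index $k \in \{0,1,\dots,r\}$ with $(x_i^\star,y_i^\star) = (p_x,p_y)$ for $i \leq k$ and $=(q_x,q_y)$ for $i>k$, with $p_x \geq q_x$ and $p_y \geq q_y$. To rule out $2 \leq k \leq r-1$, I plan to treat $k$ as a continuous interpolation parameter, regarding $F$ as a function of $k$ with $(p,q)$ determined by the sum constraints and the Lagrange equations, and to verify that the derivative with respect to $k$ has a definite sign in $(1,r-1)$. This would force $k$ to lie at an endpoint of $[1,r-1]$; the case $k=r-1$ can be interchanged with $k=1$ by relabelling $p \leftrightarrow q$, which reverses the common ordering. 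I expect this to be the main obstacle: unlike the pointwise first-stage analysis, ruling out intermediate $k$ requires an interlocking of the Lagrange conditions with the sum constraints, and the monotonicity in $k$ must be established uniformly in $a,b,c,\beta$ with $c>0$.

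For the special case $a=b=0$, $\rho=1/2$ of the second assertion, the involution $\sigma:(\vec x;\vec y) \mapsto (\vec y;\vec x)$ preserves both $\Omega$ (since $\rho=1/2$) and the function $F$ (since $Q(x,y)=cxy$ is symmetric in its arguments). By the first half of the proposition any maximiser has the form \eqref{eq:maxformleq2}, so is determined by $(x_1^\star, y_1^\star)$ alone, reducing the problem to a two-dimensional optimisation which inherits the involution. Uniqueness, together with the identity $x_i^\star = y_i^\star$, will follow by showing that the reduced function has strictly negative Hessian in the anti-diagonal direction $(\epsilon,-\epsilon)$ for all $\beta \neq \beta_\crit$; the excluded value of $\beta$ corresponds precisely to the onset of non-uniqueness at $\omega_0$ identified in Proposition \ref{prop:r2crittemp}.
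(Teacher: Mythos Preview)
Your overall architecture matches the paper's proof closely: Lagrange conditions to show the pairs $(x_i^\star,y_i^\star)$ take at most two values (the paper does this geometrically, as intersections of two concave curves $y=\phi_\lambda(x)$ and $x=\psi_\mu(y)$, rather than via your substituted scalar equation $h$, but the content is the same; note your formula for $h'$ should have $(\beta c)^2$ in the denominator), followed by treating the multiplicity $k$ as a continuous parameter. For the latter the paper parametrises by $(\xi,\eta,k)=(x_1-x_r,\,y_1-y_r,\,k)$ and shows there are no interior stationary points by a direct computation ending in the pointwise inequality $1-\tfrac{t}{2}\coth\tfrac{t}{2}\leq 0$; you should expect to need something of this kind, not just monotonicity in $k$. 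The paper also isolates as a separate lemma that all coordinates of a maximiser are strictly positive (via the $-t\log t$ blow-up, not standard KKT) and that $x_j=x_k\iff y_j=y_k$; your Lagrange equation $y_i=\psi(x_i)$ already gives the latter once you are at an interior point, so this is only a presentational gap.

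There is, however, a genuine gap in your treatment of the special case $a=b=0$, $\rho=\tfrac12$. Strict concavity of the reduced function $F_0(x,y)$ in the anti-diagonal direction $(\epsilon,-\epsilon)$ (which, incidentally, holds for \emph{all} $\beta>0$, not just $\beta\neq\beta_\crit$: the second derivative in that direction is $-1/x-1/(\tfrac12-x)-1/y-1/(\tfrac12-y)-2\beta r/(r-1)<0$) together with the $x\leftrightarrow y$ symmetry does force any maximiser onto the diagonal $x=y$, giving $x_i^\star=y_i^\star$. But it does \emph{not} give uniqueness: along the diagonal you are left with a genuine one-variable problem, and the non-uniqueness at $\beta=\beta_\crit$ occurs \emph{on} the diagonal (two distinct maximisers $\omega_0$ and $\omega_1$, both satisfying $x_i=y_i$), not by a bifurcation off it. The paper handles this by observing that on the diagonal, after the substitution $z=2x$, one recovers exactly the one-variable free-energy functional of the homogeneous model in \cite{Bjo16} with $\beta$ replaced by $\beta/2$, and then invokes the uniqueness result proved there (and in the appendix of \cite{BFU20}). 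Your reference to Proposition~\ref{prop:r2crittemp} is also off: that result is specific to $r=2$, whereas here $r$ is arbitrary (for $r\geq3$ the relevant statement is Proposition~\ref{prop:tcrittemp}, since $a=b=0$, $\rho=\tfrac12$ satisfies \eqref{eq:tcond} with $t=-c/2$).
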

	
	The proof of this proposition is divided into several steps. We first
	prove that a maximum point $(\vec{x};\vec{y})$ only has positive
	coordinates, and that 
	$x_j=x_k$ if and only if $y_j=y_k$ (this holds also for $c<0$). 
	Then we prove that, when $c>0$, the entries
	$x_i$ (and therefore $y_i$) can take at most two distinct values.  
	This reduces the number of variables we need to consider, leading to
	\eqref{eq:maxformleq2} and the uniqueness statement via  direct
	calculations.

	\begin{lemma}\label{lb}
		For any  $a,b,c\in\RR$ with $c\neq 0$, if $(\vec{x};\vec{y})$ is a
		maximum point  of  $F$ in $\Om$, then
		\begin{enumerate}
			\item all $x_j$ and $y_j$ are strictly positive,
			\item $x_j=x_k$ if and only if $y_j=y_k$. 
		\end{enumerate}
	\end{lemma}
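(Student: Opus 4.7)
The plan is to treat the two parts separately, handling the boundary issue first and then deducing the equivalence in part (2) from a first-order optimality condition in the interior of $\Om$.

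For part (1), I would argue by a local perturbation that exploits the singular derivative of the entropy $-x\log x$ at $x=0$. Suppose some $x_j=0$ at a maximum; since $\sum_i x_i = \rho > 0$, there must exist an index $k$ with $x_k > 0$. Now transfer a small mass $\epsilon>0$ from $x_k$ to $x_j$, keeping $\vec y$ fixed; this perturbation stays in $\Om$. The resulting change in $F$ contains a term $-\epsilon\log\epsilon$ coming from the new entropy at $x_j=\epsilon$, and all remaining contributions (from the shift at $x_k$ and from the quadratic form $Q$) are linear in $\epsilon$. Since $-\epsilon\log\epsilon$ dominates any linear term as $\epsilon\downarrow 0$, the change is strictly positive for small $\epsilon$, contradicting maximality. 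Exchanging the roles of $\vec x$ and $\vec y$ rules out $y_j=0$.

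For part (2), since the maximum lies in the relative interior of $\Om$ by part (1), I would apply Lagrange multipliers with respect to the two affine constraints $\sum_i x_i = \rho$ and $\sum_i y_i = 1-\rho$. The stationarity conditions read
\[
-\log x_j - 1 + \b(a x_j + c y_j) = \mu, \qquad -\log y_j - 1 + \b(b y_j + c x_j) = \nu,
\]
for constants $\mu,\nu$ independent of $j$. If $x_j = x_k$, subtracting the first equation at indices $j$ and $k$ gives $\b c(y_j - y_k) = 0$; since $\b>0$ and $c\neq 0$, this forces $y_j = y_k$. The reverse implication follows symmetrically by subtracting the second equation at indices $j$ and $k$.

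The only real obstacle is the boundary argument in part (1): one must justify that the logarithmic blow-up of the entropy derivative really does overwhelm the smooth contributions from the quadratic form. This rests on nothing more than the fact that the partials of $Q$ are bounded on the compact set $\Om$, so no further structure of $a,b,c$ is needed. Once the maximiser is confined to the interior, part (2) becomes a two-line consequence of Lagrange multipliers that uses $c\neq 0$ in a single subtraction step.
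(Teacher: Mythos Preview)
Your proof is correct and follows essentially the same approach as the paper. Part (1) is identical (entropy blow-up at the boundary via a mass-transfer perturbation); for part (2) the paper uses a direct one-parameter perturbation $\om(t)=\om+t(e_j-e_k)$ and computes the derivative $\b c(y_j-y_k)$, which is exactly the subtraction of your Lagrange equations, so the two arguments are the same first-order condition written in different notation.
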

	\begin{proof}
		In this proof we write $e_j$ for the unit vector with a 1 in the $x_j$-coordinate and remaining entries equal to $0$.
		For the first part, suppose that $\om=(\vec x;\vec y)\in\Omega$ is a maximum point
		such that $x_j=0$ for some $j$, and that $j$ is the smallest index
		with this property. Then, $\om(t)=\om+t(e_j-e_{j-1})\in\Omega$ for small
		enough $t>0$ (recall that $x_{j-1}\geq x_j$ by \eqref{eq:xgeq}). 
		By a direct computation,
		$F(\om(t))-F(\om)=-t\log t+O(t)$ as $t\to0$.
		It follows that $F(\om(t))>F(\om)$ for small $t$, 
		which contradicts $\om$ being a maximum point. 
		The same argument works for the variables $y_j$.
		
		For the second part,
		suppose that $x_j=x_k$ and $y_j\neq y_k$. If necessary, redefine $j$ and $k$ so that
		$\{l:\,x_l=x_k\}=\{j,j+1,\dotsc,k\}$. We still have 
		$y_j\neq y_k$.  Then $\om(t):=(\vec{x};\vec{y})+t(e_j-e_k)\in\Omega$  
		for small enough $t>0$.   (Here we use the first part of the lemma  in
		the case $k=r$.) 
		We have that $\tfrac\partial{\partial t} F(\om(t))|_{t=0}=c(y_j-y_k)>0$.
		This contradicts $\om$ being a maximum point.
		The same argument proves the reverse implication.
	\end{proof}
	
	Lemma \ref{lb} shows that at a maximum point  
	there is a composition $r=k_1+\dots+k_m$ so that 
	\begin{subequations}\label{xys}
		\begin{align} (x^\star_1,\dots,x^\star_r)&=(\underbrace{\xi_1,\dots,\xi_1}_{k_1},\dots,\underbrace{\xi_m,\dots,\xi_m}_{k_m}), \\
		(y^\star_1,\dots,y^\star_r)&=(\underbrace{\eta_1,\dots,\eta_1}_{k_1},\dots,\underbrace{\eta_m,\dots,\eta_m}_{k_m}), 
		\end{align}
	\end{subequations}
	where $\xi_j\neq \xi_k$ and $\eta_j\neq \eta_k$ for $j\neq k$.
	This leads to the problem of maximizing
	\be\label{eq:barF} \bar{F}(\xi;\eta)=k_1f(\xi_1, \eta_1)+\dots+k_m f( \xi_m, \eta_m)\ee 
	over the set $ \Omega^{(m)}$ defined by
	\begin{subequations}\label{eq:Omegam}
		\be \xi_1> \xi_2>\dots> \xi_m> 0,\qquad k_1\xi_1+\dots+k_m\xi_m=\rho,\ee 
		\be  \eta_1> \eta_2>\dots> \eta_m> 0,\qquad k_1\eta_1+\dots+k_m\eta_m=1-\rho.\ee 
	\end{subequations}
	
	For $m\geq2$,
	the set $ \Omega^{(m)}$ is open, so we may find local extreme points by
	using Lagrange multipliers.  At any such point we have 
	\be 
	\nabla\bar{F}(\xi;\eta)=\lambda\nabla (k_1\xi_1+\dots+k_m\xi_m)
	+\mu \nabla (k_1\eta_1+\dots+k_m\eta_m),  
	\ee 
	for some $\la,\mu\in\RR$.   Equivalently
	\begin{equation}\label{lmt}
	\tfrac{\partial f}{\partial \xi}(\xi_i,\eta_i)=\lambda,\quad 
	\tfrac{\partial f}{\partial \eta}(\xi_i,\eta_i)=\mu,\qquad 
	1\leq i\leq m. 
	\end{equation}
	The system \eqref{lmt} can in turn be rewritten in the form
	\begin{equation}\label{lmf}
	\eta_i=\phi_\lambda(\xi_i),\qquad \xi_i=\psi_\mu(\eta_i),
	\qquad 1\leq i\leq m, 
	\end{equation}
	where
	\be 
	\phi_\lambda(x)=\frac{\lambda+1+\log(x)-ax}{c},
	\qquad \psi_\mu(y)=\frac{\mu+1+\log(y)-by}{c}. 
	\ee 
	If we let $P_{\lambda,\mu}$ denote the intersection of the graphs
	$y=\phi_\lambda (x)$ and $x=\psi_\mu(y)$, we can summarise these
	findings as follows:
	the maximum of 
	$F$ in $\Omega$ is attained either at the point 
	$\om_0$ \eqref{eq:om0}, or at a point of the form
	\eqref{xys}, where  $2\leq m\leq r$, $(\xi,\eta)\in \Omega^{(m)}$ and
	$(\xi_i,\eta_i)\in P_{\lambda,\mu}$ for $1\leq i\leq m$.
	Note that
	$\phi_\lambda''(x)=- 1/{cx^2}$, $\psi_\mu''(y)=-1/{cy^2}$,
	so for $c>0$ the graphs are  convex. We can now prove that for $c>0$,
	a maximiser of $F$ can have at most two distinct entries $x_i$ (and
	therefore the same for $y_i$).
	Henceforth we suppress the indices $\la,\mu$ from $\phi,\psi$.
	
	\begin{proposition}\label{prop:mleq2}
		If $c>0$ then the $m$ of \eqref{xys} satisfies $m\leq 2$.
	\end{proposition}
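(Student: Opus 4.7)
The plan is to derive a contradiction from the assumption $m\geq 3$ by exploiting the strict concavity of both $\phi$ and $\psi$ when $c>0$. First I would set up the geometric picture: by Lemma \ref{lb}(1) the coordinates of any maximiser are strictly positive, and by the discussion following \eqref{eq:xgeq}, $c>0$ forces the entries of $\vec y$ to inherit the ordering of $\vec x$. Combined with Lemma \ref{lb}(2), a maximiser of the form \eqref{xys} with $m\geq 3$ therefore yields three distinct common points $(\xi_i,\eta_i)$, $i=1,2,3$, of the curves $y=\phi(x)$ and $x=\psi(y)$, with
\[
\xi_1>\xi_2>\xi_3>0,\qquad \eta_1>\eta_2>\eta_3>0.
\]

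Next I would exploit strict concavity. The identities $\phi''(x)=-1/(cx^2)$ and $\psi''(y)=-1/(cy^2)$ show that for $c>0$ both $\phi$ and $\psi$ are strictly concave on $(0,\infty)$. Let $\ell$ denote the secant line through the outer points $(\xi_1,\eta_1)$ and $(\xi_3,\eta_3)$; because both coordinates decrease together, $\ell$ has strictly positive slope. Strict concavity of $\phi$ applied at the interior abscissa $\xi_2\in(\xi_3,\xi_1)$ forces $\eta_2=\phi(\xi_2)$ to exceed the $y$-coordinate of $\ell$ at $x=\xi_2$, so $(\xi_2,\eta_2)$ lies strictly above $\ell$. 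Symmetrically, strict concavity of $\psi$ applied at the interior ordinate $\eta_2\in(\eta_3,\eta_1)$ forces $\xi_2=\psi(\eta_2)$ to exceed the $x$-coordinate of $\ell$ at $y=\eta_2$, so the same point also lies strictly to the right of $\ell$.

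Finally, I would invoke the elementary geometric fact that, for any line of strictly positive slope, the open half-plane of points strictly above it is disjoint from the open half-plane of points strictly to its right. The two conclusions of the previous step therefore cannot both hold, giving the desired contradiction and proving $m\leq 2$. The main obstacle I anticipate is purely notational: one must carefully track the orientation when translating the strict-concavity inequality for $\psi$ (naturally phrased with $x$ as a concave function of $y$) into a geometric half-plane statement in the $(x,y)$-plane, but this bookkeeping is routine once the positive slope of $\ell$ is fixed.
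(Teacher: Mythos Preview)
Your argument is correct and is genuinely more streamlined than the paper's. The paper splits into cases according to the signs of $a$ and $b$: when one of them is negative it argues that $\psi$ (or $\phi$) is monotone, so one of the two curves can be replaced by its convex inverse, and then a strictly convex graph meets a strictly concave graph in at most two points. When $a,b>0$ neither curve is globally monotone, and the paper has to restrict to the rectangle $0<x<1/a$, $0<y<1/b$, argue separately about crossings inside and outside it, and finally invoke the ordering of $(\xi_i,\eta_i)$ only to exclude long monotone subsequences.

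Your approach bypasses all of this by using the ordering constraint $\xi_1>\xi_2>\xi_3$, $\eta_1>\eta_2>\eta_3$ from the very start. Because you never need either $\phi$ or $\psi$ to be monotone, no case split on $a,b$ is required; the strict concavity $\phi''(x)=-1/(cx^2)<0$ and $\psi''(y)=-1/(cy^2)<0$ for $c>0$ suffices uniformly. The secant-line argument is clean: concavity of $\phi$ places $(\xi_2,\eta_2)$ in the half-plane $\{y>\ell\}$, concavity of $\psi$ places it in $\{x>\ell\}$, and for a line of strictly positive slope these are complementary open half-planes (writing $\ell$ as $y=sx+t$ with $s>0$, ``to the right'' is $x>(y-t)/s$, i.e.\ $y<sx+t$). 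The paper's route has the minor advantage of sometimes bounding the \emph{total} number of crossings by two, not just the number of co-monotone crossings, but that extra information is never used.
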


	\begin{proof} 
		Suppose first that $b<0$. Then, $\psi$ is increasing and concave, so
		$\psi^{-1}$ is increasing and convex. The graph of $\psi^{-1}$ can
		intersect the graph of the concave function $\phi$ in at most two
		points. If $a<0$ the same argument works with $\phi$ and $\psi$
		interchanged.  
		
		This leaves the case when $a>0$ and $b>0$. 
		In the region  
		\be
		\mathcal{R}=\{(x,y):0<x<1/a,\,0<y<1/b\},
		\ee
		$\phi$ is increasing and concave whereas the local inverse $\psi^{-1}$ is increasing and convex. Thus, there are at most two crossing points in $\mathcal{R}$. If there are zero or two crossing points in $\mathcal{R}$, then an elementary convexity argument shows that there are no crossing points outside $\mathcal{R}$. 
		
		In all the cases considered so far there are at most two crossing points, which implies $m\leq 2$. In the remaining case, when there is
		exactly one crossing point in $\mathcal{R}$, there can be several crossing points outside $\mathcal{R}$.
		They can be ordered as a sequence $(x_j,y_j)$ with $x_j$ decreasing and $y_j$ increasing. We are only interested in subsequences of crossing points with $x_j$ and $y_j$ decreasing.  The maximum length of such a subsequence is $2$, where we may pick the unique crossing point in $\mathcal{R}$ and an arbitrary crossing point outside $\mathcal{R}$. This proves that $m\leq 2$ also in this case.
	\end{proof}
	
	We are now ready to prove Proposition \ref{prop:maxc>0}.

	\begin{proof}[Proof of 
		Proposition \ref{prop:maxc>0}]
		We absorb $\b$ in $a,b,c$, effectively setting $\beta=1$.
		It will be convenient to use $\xi=x_1-x_r$ and $\eta=y_1-y_r$ as parameters.
		By Proposition \ref{prop:mleq2} (using $k$ in place of $m$)
		we can write $\vec x$ and $\vec y$ as
		\be
		\begin{split}
			x_1&=\dotsb=x_k=\frac{\rho+(r-k)\xi}{r},\qquad x_{k+1}=\dotsb=x_r=\frac{\rho-k\xi}{r},\\
			y_1&=\dotsb=y_k=\frac{\rho'+(r-k)\eta}{r},\qquad y_{k+1}=\dotsb=y_r=\frac{\rho'-k\eta}{r},
		\end{split}
		\ee
		where $\rho'=1-\rho$.
		The function \eqref{eq:F} can then be written
		$$F(\xi,\eta,k)=kf\left(\tfrac{\rho+(r-k)\xi}{r},\tfrac{\rho'+(r-k)\eta}{r}\right)+(r-k)f\left(\tfrac{\rho-k\xi}{r},\tfrac{\rho'-k\eta}{r}\right). $$
		We need to show that the maximum of $F$ over $\xi\in[\rho,k]$, $\eta\in[\rho',k]$ and $k\in\{0,1,\dots,r\}$ is achieved at $k=1$. 
		Note that $k=0$, which corresponds to the point $\omega_0$ \eqref{eq:om0}, 
		is included in that case as  $k=1$, $\xi=\eta=0$. The idea is now to consider  $k$ as continuous. We will show the stronger statement that the maximum of 
		$F$ on the domain
		\begin{equation}\label{eq:Fdomain}0\leq\xi\leq\frac \rho k,\qquad 0\leq\eta\leq\frac{\rho'}k,\qquad 1\leq k\leq r\end{equation}
		is achieved at $k=1$.

		We first show that $F$ does not have any stationary points in  the interior.
		By a straightforward computation,
		\begin{align*}
		\frac{\partial F}{\partial \xi}&=\tfrac{k(r-k)}{r}\left(a\xi+c\eta-\log\tfrac{\rho+(r-k)\xi}{\rho-k\xi}\right),\\
		\frac{\partial F}{\partial \eta}&=\tfrac{k(r-k)}{r}\left(c\xi+b\eta-\log\tfrac{\rho'+(r-k)\eta}{\rho'-k\eta}\right),\\
		\frac{\partial F}{\partial k}&=\xi+\eta+\tfrac{r-2k}{r}\,Q(\xi,\eta)\\
		&\quad		-\tfrac{\rho+(r-2k)\xi}{r}\log\tfrac{\rho+(r-k)\xi}{\rho-k\xi}
		-\tfrac{\rho'+(r-2k)\eta}{r}\log\tfrac{\rho'+(r-k)\eta}{\rho'-k\eta}.
		\end{align*}		
		By the first two equations, at any stationary point we have
		\begin{equation}\label{eq:logxieta} \log\tfrac{\rho+(r-k)\xi}{\rho-k\xi}=a\xi+c\eta,\qquad
		\log\tfrac{\rho'+(r-k)\eta}{\rho'-k\eta}=c\xi+b\eta.
		\end{equation}
		Inserting this in the third equation and using
		$$Q(\xi,\eta)=\frac12\left(\xi(a\xi+c\eta)+\eta(c\xi+b\eta)\right)$$
		gives
		$$\frac{\partial F}{\partial k}=\xi+\eta-\tfrac{2\rho+(r-2k)\xi}{2r}(a\xi+c\eta)
		-\tfrac{2\rho'+(r-2k)\eta}{2r}(c\xi+b\eta). $$
		We now observe that \eqref{eq:logxieta} implies
		$$\coth\frac{a\xi+c\eta}2=\frac{2\rho+(r-2k)\xi}{\xi r},\qquad 
		\coth\frac{c\xi+b\eta}2=\frac{2\rho'+(r-2k)\eta}{\eta r},
		$$
		which in turn gives
		\begin{equation}\label{dfdk}\frac{\partial F}{\partial k}=\xi\left(1-\tfrac{a\xi+c\eta}2\coth\tfrac{a\xi+c\eta}2\right) +\eta\left(1-\tfrac{c\xi+b\eta}2\coth\tfrac{c\xi+b\eta}2\right).\end{equation}
		Note that $1-(x/2)\coth(x/2)\leq 0$ for all $x$, with equality only if $x=0$.
		So a stationary point must satisfy $\xi(a\xi+c\eta)=\eta(c\xi+b\eta)=0$.
		However, if $a\xi+c\eta=0$ then \eqref{eq:logxieta} gives $\xi=0$ and similarly if $c\xi+b\eta=0$ then  $\eta=0$. Thus, $F$ has no stationary points in the
		interior of \eqref{eq:Fdomain}.
		
		It remains to study $F$ on the boundary of \eqref{eq:Fdomain}. 
		At the boundary component $\xi=0$, all $x$-variables are equal. By Lemma~\ref{lb},  at any such maximum point also the $y$-variables are equal, so it must be the point $\omega_0$. Similarly, any maximum point with $\eta=0$ is $\omega_0$. If $\xi=\rho/k$ then $x_r=0$, but we know from
		Lemma \ref{lb} that $F$ is not maximised at such a
		point. 	Similarly, we exclude the case $\eta=\rho'/k$. The case
		$k=r$ again corresponds to  $\om_0$.
		The only remaining boundary component is $k=1$. This shows that 
		any maximiser of $F$ has the form \eqref{eq:maxformleq2}.
		
		To finish the proof of  Proposition
		\ref{prop:maxc>0}, it remains to show that in the case $a=b=0$,
		$c>0$, $\rho=\frac12$, 
		and $\beta\neq\beta_\crit$, the maximiser is unique and satisfies
		$x_i=y_i$ for all $i=1,\dotsc,r$. Without loss of generality  we can
		let $c=1$. %, and due to Proposition \ref{prop:crittemp} we can assume
		%that $\b>\b_\crit$.  
		Using the fact that the maximiser must be of the form
		\eqref{eq:maxformleq2}, and setting $x_1=x$, $y_1=y$, we can write 
		\be
		\begin{split}
			F(\vec{x};\vec{y})=F_0(x,y):=&
			\b \Big(xy+
			\tfrac{(\frac12-x)(\frac12-y)}{r-1}\Big)
			-x\log x-y\log y\\
			&-
			\big(\tfrac12-x\big)\log\tfrac{\frac12-x}{r-1}-
			\big(\tfrac12-y\big)\log\tfrac{\frac12-y}{r-1}.
		\end{split}
		\ee
		We are maximising $F_0$ in the box $[\frac{1}{2r},\frac12]^2$.  
		Calculations yield that when $x>y$, $\frac{\partial F_0}{\partial
			x}<\frac{\partial F_0}{\partial y}$, and vice-versa, so that the
		maximum points of $F_0$ must satisfy $x=y$ or lie on the boundary.
		Lemma \ref{lb} shows that they cannot lie
		on the boundary unless $(\vec x;\vec y)=\om_0$. 
		So, substituting $x=y$, and reparametrising with
		$z=2x$, we have  
		\be
		F_0\big(\tfrac{z}{2},\tfrac{z}{2}\big)=
		\tfrac{\b}{4}\big(z^2+\tfrac{(1-z)^2}{r-1}\big)
		-z\log z -(1-z)\log\tfrac{1-z}{r-1}+\log2.
		\ee
		Now, apart from the constant $\log 2$, this is precisely the function
		maximised in \cite[Theorem 1.1]{Bjo16}, with $\beta$ in that paper
		replaced with $\beta/2$ here, and $\vec{x}$ in that paper of the form
		$x_1\ge x_2=\cdots=x_r$.  By the working in that paper and the Appendix
		of \cite{BFU20}, the maximiser is unique for all
		$\beta\neq\beta_\crit=\frac{4(r-1)\log(r-1)}{r-2}$ from
		\eqref{eq:tcrittemp}. This concludes the proof of Proposition
		\ref{prop:maxc>0}. 
	\end{proof}
	
	It would be interesting to determine the structure of the maximisers also for $c<0$, but
	that seems more difficult than the case $c>0$ considered above. 
	It is still true that any maximiser
	has the form \eqref{xys}, where the points $(\xi_i,\eta_i)$ solve a system of the form	\eqref{lmf}. 
	However, it is no longer true that all maximisers satisfy $m=2$ or $k_1=1$. In fact, in
 Proposition~\ref{prop:pdc<0} we will see that more complicated maximisers exist even in  the zero-temperature limit $\beta\rightarrow\infty$. 
	
% We will see in Theorem \ref{XXX} that, when $c<0$ and $\beta\rightarrow\infty$, all maximisers
%correspond to $m\leq 2$ but maximisers with $m=2$ and arbitrary $k_1\in\{1,\dots,r-1\}$ do exist.

\section{The ground-state phase diagram}
\label{sec:gspd}

In this section we justify the ground-state phase diagrams 
%of the $\ab$ and $\wb$ models, 
given in Figures~\ref{fig:PDintro-c>0} and
\ref{fig:PDintro-c<0} %given in 
of the introduction. In the zero temperature limit $\beta\rightarrow\infty$, the logarithmic terms in
the function
$F(\vec x;\vec y)$ 
of \eqref{eq:F} become  negligible, and the maximisation problem
in Theorem \ref{thm:FE-AB} and \ref{thm:FE-WB} reduces to maximising the function
\be
    G(\vec{x};\vec{y})=\sum_{i=1}^r Q(x_i,y_i)
    =\sum_{i=1}^r \frac{1}{2}\left(ax_i^2+by_i^2+2cx_iy_i\right)
\ee
on the domain $\Omega$ defined in \eqref{eq:Om}.
We will determine all maximisers of $G$ for  $c\neq 0$, starting with the 
easier case $c>0$. As has been mentioned, the case $c=0$ can be reduced to results of
\cite{Bjo16}.

%\be
%    F_\beta(\vec{x};\vec{y}) = F(\vec{x};\vec{y})
%    =\sum_{i=1}^n \frac{\beta}{2}\left(ax_i^2+by_i^2+2cx_iy_i\right) 
%- x_i\log(x_i)-y_i\log(y_i).
%\ee
%which appears in the free energy of the $\ab$ and $\wb$ models, 

%We recall that when $c=0$ we have two
%uncoupled copies of the quantum interchange model of \cite{Bjo16},
%each on a complete graph, and that paper suffices to study this
%case. Recall also that this limit of maximisers is invariant under the
%scaling $(a,b,c)\to(\alpha a,\alpha b,\alpha c)$, for any
%$\alpha>0$. So, instead of a three-dimensional diagram, we can 

%First, if $Q$ is negative semi-definite, that is 
%$a\leq 0$, $b\leq 0$ and $ab\geq c^2$, then both $F$ and $G$ are
%maximised at $\om_0$ (Proposition
%\ref{prop:crittemp}). 
%\marginpar{If $ab=c^2$, this maximum may not be unique!}
% This accounts for the grey \emph{disordered}
%region in both Figures \ref{fig:PDintro-c>0} and \ref{fig:PDintro-c<0}.
%Henceforth we therefore assume that $Q$ is not negative
%semi-definite. 
%We start with the easier case $c>0$.
%In the rest of this section we write $\rho'=1-\rho$.

\subsection{Diagram for $c>0$}

We first introduce some notation.
For fixed $c$, we split the $ab$-plane into five disjoint regions, defined by
\begin{align*}
D&=\left\{a,\,b<0,\ ab>c^2\right\},& \partial D&=\left\{a,\,b<0,\ ab=c^2\right\},\\
E_1&=\left\{b\leq\frac{-c\rho}{\rho'},\ ab<c^2\right\},&
E_2&=\left\{a\leq\frac{-c\rho'}{\rho},\ ab<c^2\right\},\\
F&=\left\{a>\frac{-c\rho'}{\rho},\ b>\frac{-c\rho}{\rho'}\right\}.
\end{align*}
We refer to $D$ as the disordered and $F$ as the ferromagnetic region. The regions 
$E_1$ and $E_2$ are intermediate between $D$ and $F$. This is illustrated in 
  Figure \ref{fig:PDintro-c>0}.

We also introduce the following points in $\mathbb R^r\times \mathbb R^r$:
 \begin{align*}
 \om_D&=\left({\frac \rho r,\dots,\frac\rho r};{\frac {\rho'}r,\dots,\frac{\rho'} r}\right),\\
  \om_{E_1}&=\left(\rho,{0,\dots,0};\frac{b\rho'-(r-1)c\rho}{br},{\frac{b\rho'+c\rho}{br},\dots,\frac{b\rho'+c\rho}{br}}\right),\\
 \om_{E_2}&=\left(\frac{a\rho-(r-1)c\rho'}{ar},{\frac{a\rho+c\rho'}{ar},\dots,\frac{a\rho+c\rho'}{ar}};\rho',{0,\dots,0}\right),\\
 \om_F&=\left(\rho,{0,\dots,0};\rho',{0,\dots,0}\right).
 \end{align*}
 (Above, we used the notation $\omega_D=\omega_0$.)

The following result completely describes the maximisers of $G\big|_\Omega$.
As before, we may restrict attention to maximisers
$(\vec x^\star;\vec y^\star)$ such that $x_i^\star$ and $y_i^\star$ are decreasing.

\begin{proposition}\label{prop:pdc>0}
Assume that $c>0$ and let
$\om^\star=(\vec x^\star;\vec y^\star)$ be a maximiser of
 $G\big|_{\Om}$ with $x_i^\star$ and $y_i^\star$ decreasing.
 If $(a,b)\in X$, where $X$ is one of $D$, $E_1$, $E_2$ and $F$,
 then $\om^\star$ is unique and equals $\om_X$.
 In the remaining case
 $(a,b)\in \partial D$ there are infinitely many maximisers. Explicitly, they are given by all points  $(x^\star;y^\star)\in \Omega$ such that
    \begin{equation}\label{max5}
\sqrt{-a}\left(x_i^\star-\frac\rho r\right)=\sqrt{-b}\left(y_i^\star-\frac{\rho'}r\right),\qquad 
    1\leq i\leq r.\end{equation}
\end{proposition}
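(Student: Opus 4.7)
The plan is to exploit the identity from Lemma \ref{lem:quadform}, which on $\Omega$ reads
\[
G(\vec x;\vec y)-G(\omega_D)=\frac{1}{r}\sum_{1\le i<j\le r}Q(x_i-x_j,\,y_i-y_j),
\]
and then to treat the five cases according to the sign properties of $Q$. If $(a,b)\in D$, then $Q$ is negative definite, every summand is $\le 0$ with equality only when $x_i=x_j$ and $y_i=y_j$, so $\omega_D$ is the unique maximiser. If $(a,b)\in\partial D$, then $Q(s,t)=-\tfrac12(\sqrt{-a}\,s-\sqrt{-b}\,t)^2$; equality in every summand is equivalent to $\sqrt{-a}(x_i-x_j)=\sqrt{-b}(y_i-y_j)$ for all $i<j$, and summing this over $j$ using $\sum x_j=\rho$, $\sum y_j=\rho'$ converts it into condition \eqref{max5}, producing the affine family described.

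For the three remaining regions, the plan is first to establish a structural reduction and then to solve a two-variable optimisation of $Q$ on a rectangle. The reduction is: at any maximiser $\omega^\star$, after permuting coordinates, $x_2^\star=\cdots=x_r^\star$ and $y_2^\star=\cdots=y_r^\star$. The idea is a first-order perturbation argument. Exchanging mass between indices $i,j$ with $x_i^\star,x_j^\star>0$ forces $a(x_i-x_j)+c(y_i-y_j)=0$, and the symmetric relation in $y$ when $y_i^\star,y_j^\star>0$. When both perturbations are admissible, $(x_i-x_j,y_i-y_j)$ lies in the kernel of $\bigl(\begin{smallmatrix}a&c\\c&b\end{smallmatrix}\bigr)$; outside $\partial D$ this kernel is trivial, forcing $(x_i^\star,y_i^\star)=(x_j^\star,y_j^\star)$. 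The $G$-analogue of Lemma \ref{lb}(2) (same proof) provides the pairing $x_i=x_j\Leftrightarrow y_i=y_j$ on interior coordinates, and a short KKT case split on which of the four sets $\{x_i>0,\,y_i>0\}$, $\{x_i>0,\,y_i=0\}$, $\{x_i=0,\,y_i>0\}$, $\{x_i=0,\,y_i=0\}$ each index belongs to will rule out three or more distinct pairs of values.

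Given the structural reduction, set $T=x_1^\star-x_2^\star\ge 0$ and $U=y_1^\star-y_2^\star\ge 0$ (using the $c>0$ ordering convention). By the identity in \eqref{eq:psim2},
\[
G(\omega^\star)-G(\omega_D)=\frac{r-1}{r}\,Q(T,U),
\]
with $(T,U)$ ranging over the rectangle $R=[0,\rho]\times[0,\rho']$. Since the only critical point of $Q$ on the plane is $(0,0)$ (outside $\partial D$), the maximum on $R$ lies on $\partial R$. On the edge $T=\rho$ the function $Q(\rho,\cdot)$ is concave iff $b<0$, with interior maximum at $U^\star=-c\rho/b$, which lies in $[0,\rho']$ precisely when $b\le -c\rho/\rho'$; the edge $U=\rho'$ is symmetric. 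An edge-by-edge comparison then identifies the maximiser: the corner $(\rho,\rho')$ in region $F$, the interior edge-point $(\rho,-c\rho/b)$ in $E_1$, and the symmetric point on $U=\rho'$ in $E_2$. Translating back yields $\omega_F$, $\omega_{E_1}$, and $\omega_{E_2}$, respectively. Uniqueness in each region follows from strict concavity of $Q$ on the relevant edge.

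The main obstacle will be the structural reduction in the second paragraph. The analogous statement for the entropy-augmented function $F$ in Proposition \ref{prop:maxc>0} crucially used the strict convexity coming from the log terms ($\phi_\lambda''=-1/(cx^2)$), which are absent for $G$. I expect to have to combine the kernel-triviality perturbation argument with the KKT case split above, and to check explicitly that the multi-group configurations that remain KKT-consistent (for example a ``split-$k$'' analogue of $\omega_{E_1}$ with $k$ distinguished indices instead of one) yield strictly smaller values of $G$. This calculation is made tractable by the single-parameter formula $G_k-G(\omega_D)=\tfrac{\rho^2(r-k)(ab-c^2)}{2rkb}$ that emerges after optimising out the interior parameters for a fixed split $k$, which is manifestly decreasing in $k$ in region $E_1$.
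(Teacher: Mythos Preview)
Your treatment of the cases $D$ and $\partial D$ is essentially identical to the paper's, so no comment there.

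For the remaining regions your plan diverges from the paper in an interesting way. The paper does \emph{not} first reduce to the two-parameter family $x_2^\star=\cdots=x_r^\star$, $y_2^\star=\cdots=y_r^\star$ and then optimise $Q(T,U)$ on the rectangle $[0,\rho]\times[0,\rho']$. Instead it writes down the full Lagrange system \eqref{glagmult} with $k$ (resp.\ $l$) the number of nonzero $x$'s (resp.\ $y$'s), shows directly that the nonzero coordinates collapse to at most two values, then uses the perturbation $G(\vec x^\star+t(e_1-e_k);\vec y^\star+u(e_1-e_k))-G(\om^\star)=2Q(t,u)$ together with $Q$ not being negative semidefinite to force $k=1$. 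This yields $\om^\star\in\{\om_{E_1},\om_{E_2},\om_F\}$ immediately, and a single comparison $G(\om_F)-G(\om_{E_1})=(r-1)(c\rho+b\rho')^2/(2br)$ finishes. Your rectangle analysis is a clean alternative finish once the structural reduction is in hand, and the edge computations you describe do produce the right answers; the paper's direct candidate comparison is shorter but less conceptual.

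There is one genuine gap in your reduction step. You claim that ``outside $\partial D$ the kernel is trivial'', meaning $\det\bigl(\begin{smallmatrix}a&c\\c&b\end{smallmatrix}\bigr)\neq 0$. This is false: the locus $ab=c^2$ with $a,b>0$ lies in region $F$, not in $\partial D$, and there the matrix is singular. The paper handles this by observing that $a(x_i-x_j)+c(y_i-y_j)=0$ with $a,c>0$ and $x_i\ge x_j$, $y_i\ge y_j$ already forces both differences to vanish; you should insert the same patch. Beyond this, your KKT case split is only described, not executed; the paper's organisation by the integers $k,l$ is a cleaner bookkeeping device than your four index-types and avoids the need for the ``split-$k$'' value comparison you anticipate at the end, since the perturbation $2Q(t,u)>0$ kills all $k\ge 2$ configurations in one stroke rather than by monotonicity in $k$.
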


\begin{proof}

We
first consider the case when $Q$ is negative semidefinite, that is,   $(a,b)\in\bar D$. Recall the identity  \eqref{eq:psiid}, which can be written 
\begin{equation}\label{gom}G( \vec x;\vec y)=G(\om_D)+\frac 1 r{\sum_{1\leq i<j\leq r}}Q(x_i-x_j,y_i-y_j).
 \end{equation}
 As we already saw in the proof of Proposition \ref{prop:crittemp}, this immediately implies that $\om_D$ is the unique maximiser in case $D$. If $(a,b)\in\partial D$, then 
 \begin{equation}\label{qsemidef}Q(x,y)=-\frac 12(\sqrt{-a}x-\sqrt{-b}y)^2.\end{equation}
  Then, \eqref{gom} implies that
 $G$ is maximised at all points such that
 $\sqrt{-a}\,x_i-\sqrt{-b}\,y_i $ is independent of $i$. Summing over $i$ gives
 $r(\sqrt{-a}\,x_i-\sqrt{-b}\,y_i)=\sqrt{-a}\,\rho-\sqrt{-b}\,{\rho'}$, which leads
 to \eqref{max5}. Note that if $(x_1^\star,\dots,x_r^\star)$ is any decreasing sequence of non-negative numbers summing to $\rho$ and we solve \eqref{max5}
 for $y_i^\star$, then $(x^\star;y^\star)\in\Omega$ provided that
 \begin{equation}\label{xrcond}x_r^\ast\geq  \frac{\rho}r-\sqrt{\frac
     ba}\frac{\rho'}{r}.\end{equation}
Since the right-hand-side is $<\rho/r$,
this shows that the number of maximisers is indeed infinite in this case.
 
 From now on we assume that $Q$ is not negative semidefinite.
Let $k$ and $l$ denote the number of non-zero entries in $x^\star$ and $y^\star$, respectively. Suppose first that $k\leq l$. Then, $\omega^\star$ is a maximiser of
\[
H(\vec x;\vec y)=\sum_{j=1}^k Q(x_j,y_j)+\sum_{j=k+1}^l Q(0,y_j) 
\]
on the  set
\[
U=\left\{(\vec x;\vec y);\,x_1,\dots,x_k,y_1,\dots,y_l>0,\
\textstyle
  \sum_{j=1}^kx_j=\rho,\ \sum_{j=1}^l y_j=\rho'\right\}. 
\]
There must then exist Lagrange multipliers $\lambda$ and $\mu$ such that
\begin{subequations}\label{glagmult}
\begin{align}
\label{dhdx}\frac{\partial H}{\partial x_j}(\omega^\star)&=ax_j^\star+cy_j^\star=\lambda,\quad &1\leq j\leq k,\\
\label{dhdy}\frac{\partial H}{\partial y_j}(\omega^\star)&=cx_j^\star+by_j^\star=\mu,\quad &1\leq j\leq k,\\
\label{dhdyb}\frac{\partial H}{\partial y_j}(\omega^\star)&=by_j^\star=\mu,\quad &k+1\leq j\leq l.
\end{align}
\end{subequations}

If $ab\neq c^2$,  the system
\eqref{dhdx}--\eqref{dhdy} has a unique solution, so $x_1^\star=\dots=x_k^\star$
and $y_1^\star=\dots=y_k^\star$. This also holds  if $ab=c^2$, where $a,\,b>0$.  In that case,   \eqref{dhdx} 
gives $a(x_1^\star-x_j^\star)+c(y_1^\star-y_j^\star)=0$ for  $j\leq k$. Since 
$a>0$ 
and $c>0$, we can still conclude  that $x_1^\star=x_j^\star$ and $y_1^\star=y_j^\star$.

If  $b\neq 0$, \eqref{dhdyb} gives 
$y_{k+1}^\star=\dots=y_l^\star$. Again, this also holds for $b= 0$. 
Indeed, in that case, if $k<l$, then 
\eqref{dhdy} gives $cx_k^\star=\mu$ and  \eqref{dhdyb} gives $0=\mu$.
This is impossible since $c$ and $x_k^\star$ are both assumed positive.
 Thus,  $k=l$ and the equalities $y_{k+1}^\star=\dots=y_l^\star$ are trivially valid.

The above arguments show that, under the assumption $k\leq l$,
$$\omega^\star=(\underbrace{x_1^\star,\dots,x_1^\star}_k,\underbrace{0,\dots,0}_{r-k};\underbrace{y_1^\star,\dots,y_1^\star}_k,\underbrace{y_{l}^\star,\dots,y_{l}^\star}_{l-k},\underbrace{0,\dots,0}_{r-l}). $$
Next, we prove that either $l=k$ or $l=r$. To see this, assume that
$k<l<r$. On the one hand,  \eqref{dhdy} and \eqref{dhdyb}
give $\mu=cx_1^\star+by_1^\star=by_{l}^\star$. This implies $b(y_{l}^\star-y_1^\star)=cx_1^\star>0$
and hence  $b<0$. On the other hand, 
if $t$ is a small positive number, then 
$(\vec x^\star;\vec y^\star+t(e_{l+1}-e_l))\in U$ and hence
$G(\vec x^\star;\vec y^\star)\geq G(\vec x^\star;\vec y^\star+t(e_{l+1}-e_l))$, where $e_j$ are unit vectors. It follows that
\[
\textstyle
0\geq \frac{\partial H}{\partial y_{l+1}}(\om^\star)-\frac{\partial H}{\partial y_{l}} (\om^\star)
=c(x_{l+1}^\star-x_l^\star)+b(y_{l+1}^\star-y_l^\star)=-b y_l^\star,
\]
 which contradicts $b<0$. After a change of variables, we
 conclude that 
 \begin{equation}\label{omprem}\om^\star=(\underbrace{x_1^\star,\dots,x_1^\star}_k,\underbrace{0,\dots,0}_{r-k};\underbrace{y_1^\star,\dots,y_1^\star}_k,\underbrace{y_{2}^\star,\dots,y_{2}^\star}_{r-k}),\end{equation}
 where the previous cases $l=k$ and $l=r$ correspond to $y_2^\star=0$ and 
 $y_2^\star\neq 0$, respectively.

If $k>1$ in \eqref{omprem} then 
\begin{multline}\label{gvar}G(\vec x^\star+t(e_1-e_k);\vec y^\star+u(e_1-e_k)) 
-G(\om^\star)\\
=Q(x_1+t,y_1+u)+Q(x_1-t,y_1-u)-2Q(x_1,y_1)
=2 Q(t,u).
\end{multline}
Since we assume that $Q$ is not negative semidefinite, it assume positive values in any neighborhood of $(0,0)$. 
This contradicts that 
$\om^\star$ is a maximiser. It follows that $k=1$, that is,
\begin{equation}\label{omspec}\om^\star=(\rho,0,\dots,0;y_1^\star,y_{2}^\star,\dots,y_{2}^\star).\end{equation}

If \eqref{omspec} holds with $y_2^\star=0$ then  $y_1^\star=\rho'$,
that is, $\om^\star=\om_F$.
 If  $y_2^\star\neq 0$, then the variables $y_j^\star$ can be determined from
\[
y_1^\star+(r-1)y_2^\star=\rho',\qquad c\rho+by_1^\star=by_2^\star, 
\]
where the second equation follows from \eqref{dhdy} and \eqref{dhdyb}.
Solving these equations, we find that $\om^\star=\om_{E_1}$. 

So far we have assumed that $k\leq l$. The complementary 
  case  follows by interchanging the roles of the $x$- and $y$-variables. It leads to the additional possibility
$\om^\star=\om_{E_2}$. That is, if $(a,b)\in E_1\cup E_2\cup F$, then the maximum is achieved at one of the points $\om_{E_1}$, $\om_{E_2}$ and $\om_F$.

It is easy to check that, at the point $\om_{E_1}$, the conditions $y_1^\star\geq y_2^\star\geq 0$ are equivalent to $b\leq -c\rho/\rho'$.
Likewise,   $\om_{E_2}$ is only an admissible point if $a\leq -c\rho'/\rho$.
In region $F$, neither of these conditions hold and the only possibility is
$\omega^\star=\om_F$. In region $E_1$, we have ruled out $\om_{E_2}$, so we only need to compare the values at $\om_{E_1}$ and $\om_F$.
By an elementary computation, 
 $$G(\om_F)-G(\om_{E_1})=\frac{(r-1)(c\rho+b\rho')^2}{2br}\leq 0 $$
 since $b<0$ in this case. Equality holds only at the boundary with region $F$, where $\om_{E_1}=\om_F$.
 This proves the result in case $E_1$ and case $E_2$ follows by symmetry.
\end{proof}

To give an example of how the model behaves in the different regions, we 
compute the  magnetisation (see Theorem \ref{thm:mag})
\[
\mathcal M=\frac{\partial \Phi^\ab}{\partial h}\Big|_{h \downarrow 0}=
			\sum_{i=1}^r (x_i^\star+y_i^\star) w_i.
\]
We will
assume that $(a,b)\notin \partial D$ 
and that 	$w_1+\dots+w_r=0$. Since
$x_2^\star=\dots= x_r^\star$ and $y_2^\star=\dots= y_r^\star$ we obtain
\[
\mathcal M
                =(x_1^\star+y_1^\star-x_2^\star-y_2^\star)w_1. 
\]
Inserting the explicit expressions from Proposition \ref{prop:pdc>0} gives 
\[
\mathcal M=\begin{cases}
0, &  (a,b)\in D,\\
\left(1-\frac cb\right)\rho w_1, & (a,b)\in E_1,\\
\left(1-\frac ca\right)\rho'w_1, &   (a,b)\in E_2,\\
w_1, & (a,b)\in F.\end{cases}
\]
We see that $\mathcal M$ has a discontinuity across the curve $\partial D$.
At the half-lines separating region $F$ from $E_1$ and $E_2$, it is  continuous but not differentiable.

\subsection{Diagram for $c<0$}

We now turn to the case $c<0$. As before, we view $c$ as fixed and
describe the phase diagram in the $ab$-plane; see
Figure~\ref{fig:PDproof-c<0}.  
There is then an anti-ferromagnetic phase 
\be\label{aj}
A=\{a,\,b>0\},
\ee
and a disordered phase
\be\label{dj}
D=\{a,\, b< 0,\ ab>c^2\},
\ee
which agrees with the case $c>0$.
There are also a number of intermediate phases. To describe them
geometrically, we introduce the points 
\begin{equation}\label{pj}
P_k=\left(\frac{k\rho'c}{(r-k)\rho},\frac{(r-k-1)\rho c}{(k+1)\rho'}\right),\qquad k=1,2,\dots,r-2, 
\end{equation}
which are all in the region $\{a,b<0,\,ab<c^2\}$, and
\be\label{qj}
Q_k=\left(\frac{k\rho'c}{(r-k)\rho},\frac{(r-k)
\rho c}{k\rho'}\right),\qquad k=1,2,\dots,r-1
\ee
which are on $\partial D=\{a,b<0,\,ab=c^2\}$.  We draw $r-2$ line segments
connecting the origin $a=b=0$ to the points $P_j$. We also draw a zig-zag line, consisting of the horizontal half-line
to the right of $Q_1$, a vertical line segment from $Q_1$ to $P_1$, a horizontal segment from $P_1$ to $Q_2$, a vertical segment from $Q_2$ to $P_2$, continuing in this way and ending with the vertical half-line above $Q_{r-1}$. 
Together with the boundaries of $A$ and $D$, these line segments 
divide the plane into $2r-1$ additional open regions. We will write 
$B_1,\dots,B_{r-1}$ for the regions above and 
$C_1,\dots,C_r$
for those below the zig-zag line, in both cases numbered from
southeast to northwest.  
More explicitly,
\be\label{bj}
\begin{split}
B_1&=\left\{a>\frac{\rho' c}{(r-1)\rho},\, \frac{(r-1)\rho c}{\rho'}< b<0,\, 
\rho^2(r-1)(r-2)a>2(\rho')^2 b
\right\}, \\
B_k&=\bigg\{a>\frac{k\rho'  c}{(r-k)\rho},\, b> \frac{(r-k)\rho c}{k\rho' },\\
&\qquad(r-k)(r-k-1)\rho^2a>k(k+1)(\rho')^2b,\\
&\qquad(r-k+1)(r-k)\rho^2 a<(k-1)k(\rho')^2 b
\bigg\},\qquad 2\leq k\leq r-2, \\
B_{r-1}&=\left\{\frac{(r-1)\rho' c}{\rho}<a<0,\, b> \frac{\rho c}{(r-1)\rho'},\, 
2\rho^2 a< (\rho')^2 (r-1)(r-2)b
\right\},
\end{split}
\ee
and
\be\label{cj}
\begin{split}
C_1&=\left\{b<\frac{(r-1)\rho c}{\rho'},\,ab<c^2\right\},\\
C_k&=\left\{a< \frac{(k-1)\rho' c}{(r-k+1)\rho},\,b<\frac{(r-k)\rho c}{k\rho'},\,ab<c^2\right\},\qquad 2\leq k\leq r-1,\\
C_r&=\left\{a< \frac{(r-1)\rho'c}{\rho},\,ab<c^2\right\}.
\end{split}
\ee
As before, we write
\[
\om_D=\left({\frac \rho r,\dots,\frac\rho r};{\frac {\rho'}r,\dots,\frac{\rho'} r}\right).\]
The maximiser in the anti-ferromagnetic phase is
\[ 
\om_A=\left(\rho,{0,\dots,0};{0,\dots,0},\rho'\right).
\]
We will see that the intermediate regions correspond to the maximisers
\be\label{bkdef}
\om_{B_k}=\Big(\underbrace{\frac\rho k,\dots,\frac\rho
  k}_{k},\underbrace{0,\dots,0}_{r-k};\underbrace{0,\dots,0}_{k},\underbrace{\frac{\rho'}{r-k},\dots,\frac{\rho'}{r-k}}_{r-k}\Big)
\ee
and
\[
\om_{C_k}=\Big(\underbrace{x_1,\dots,x_1}_{k-1},x_2,\underbrace{0,\dots,0}_{r-k};\underbrace{0,\dots,0}_{k-1},y_1,\underbrace{y_2,\dots,y_2}_{r-k}\Big),
\]
where
\begin{subequations}\label{ckdef}
\begin{align}x_1&=\frac{(r+1-k)\rho a b+\rho'bc-(r-k)\rho c^2}{k(r+1-k)a b-(k-1)(r-k) c^2},\\
x_2&=\frac{(r+1-k)\rho ab-(k-1)\rho'bc}{k(r+1-k)a b-(k-1)(r-k) c^2},\\
y_1&=\frac{k\rho' ab-(r-k)\rho ac}{k(r+1-k)a b-(k-1)(r-k) c^2},\\
y_2&=\frac{k\rho' a b+\rho ac-(k-1)\rho' c^2}{k(r+1-k)a b-(k-1)(r-k) c^2}. 
\end{align}
\end{subequations}

    \begin{figure}[h]
    \centering
    \begin{tikzpicture}[scale=3.2]
    
    %region B_4 
      \path[fill = blueredred, opacity = 0.4] (0,0) -- (0,.5) --
    (-1.875,.5) -- (-1.875,-0.532) -- (-1.36,-0.532) -- cycle;
    \draw[font=\fontsize{8}{10}] (-0.9,0.2) node {$B_4$};
    % region C_5
    \path[fill = red, opacity = 0.4](-1.875,.5) -- (-1.875,-0.532) -- (-2,-0.532) -- (-2,.5) -- (-1.875,.5);
    \draw[font=\fontsize{8}{10}] (-1.94,0.2) node {$C_5$};
    % region C_4
    \path[fill = bluered, opacity = 0.4](-1.875,-0.532) -- (-1.36,-0.532)
    -- (-1.36,-0.735) -- cycle;
     \draw[font=\fontsize{8}{10}] (-1.5,-0.6) node {$C_4$};
% region B_3
\path[fill = bluebluered, opacity = 0.4](0,0) -- (-1.36,-0.532) --
    (-1.36,-0.735) -- (-0.735,-0.735) -- cycle;
    \draw[font=\fontsize{8}{10}] (-1.05,-0.6) node {$B_3$};
 % region C_3
    \path[fill = blue, opacity = 0.4]  (-1.36,-0.735) --
    (-0.735,-0.735) -- (-0.735,-1.36 ) --cycle;
\draw[font=\fontsize{8}{10}] (-0.9,-0.9) node {$C_3$};
% region B_2
\path[fill = bluegreen, opacity = 0.4]  (-0.735,-0.735) -- (0,0) --
(-0.532,-1.36) --  (-0.735,-1.36) -- cycle; 
\draw[font=\fontsize{8}{10}] (-0.6,-1.05) node {$B_2$};
% region C_2
\path[fill = mygreen, opacity = 0.4]  (-0.532,-1.36) --  (-0.735,-1.36) 
-- (-0.532,-1.88) -- cycle; 
\draw[font=\fontsize{8}{10}] (-0.6,-1.5) node {$C_2$};
%region B_1
    \path[fill = yellowgreen, opacity = 0.4](0,0) -- (-0.532,-1.36) --
    (-0.532,-1.88) -- (.5,-1.88) -- (.5,0) -- cycle;
    \draw[font=\fontsize{8}{10}] (0.2,-0.9) node {$B_1$};
    % region C_1
     \path[fill = yellow, opacity = 0.4] (-0.532,-1.88) -- (.5,-1.88) --
    (.5,-2) -- (-0.532,-2) -- cycle; 
    \draw[font=\fontsize{8}{10}] (0.2,-1.94) node {$C_1$};
%region D
    \fill [lightgray, domain=-2:-0.5, variable=\x](-2, -2) -- plot ({\x}, {1/(1*\x)}) -- (-0.125,-2) -- cycle;
    \draw[font=\fontsize{8}{10}] (-1.4, -1.3) node {$D$};
 %   \draw[font=\fontsize{8}{10}] (-1.4,-1.5) node {$ab\geq c^2$};
    % region A
     \path[fill = myorange, opacity = 0.4] (0.5,0) -- (0,0) --
    (0,0.5) -- (0.5,0.5) -- cycle; 
    \draw[font=\fontsize{8}{10}] (0.2,0.2) node {$A$};
  %  \draw[font=\fontsize{8}{10}] (-1.4,-1.7) node {$\boxed{k=r=5}$};

%axes    
    %\filldraw[color=white] (0,0) circle (50pt);
    \draw[->, >=stealth, line width=0.3pt](-2, 0) -- (.6,0);
    \draw[font=\fontsize{8}{10}] (.7, 0) node {$a$};
    \draw[->, >=stealth, line width=0.3pt](0,-2) -- (0,.6);
    \draw[font=\fontsize{8}{10}] (0, .7) node {$b$};

%delimiters
%curve
    \draw [smooth,samples=100,domain=-2:-0.5,variable=\x, red]
    plot(\x,{1/(1*\x)});
%vertical
    \draw[-, >=stealth, line width=0.3pt, red] (-1.875,.5) -- (-1.875,-0.532);
    \draw[font=\fontsize{8}{10}] (-1.6,.6) node
    {$a=\frac{c\rho'(r-1)}{\rho}$};
%border of A
    %\draw[-, >=stealth, line width=0.3pt, red]   (0,0.5) --
    (0,0);
    %\draw[-, >=stealth, line width=0.3pt, red]  (0.5,0) -- (0,0);
%small vert/hor lines
    \draw[-, >=stealth, line width=0.3pt, red](-1.875,-0.532) -- (-1.36,-0.532);
    \draw[-, >=stealth, line width=0.3pt, red](-1.36,-0.532) -- (-1.36,-.735);
    \draw[-, >=stealth, line width=0.3pt, red](-1.36,-0.735) -- (-0.735,-0.735);
    \draw[-, >=stealth, line width=0.3pt, red] (-0.735,-0.735) --
    (-0.735,-1.36);
    \draw[-, >=stealth, line width=0.3pt, red]   (-0.735,-1.36) --
    (-0.532,-1.36);
    \draw[-, >=stealth, line width=0.3pt, red]  (-0.532,-1.36) --  (-0.532,-1.88);
%horizontal
    \draw[-, >=stealth, line width=0.3pt, red]  (-0.532,-1.88) -- (.5,-1.88);
    \draw[font=\fontsize{8}{10}] (.8,-1.88) node {$b=\frac{c\rho(r-1)}{\rho'}$};
%negative diags
    \draw[-, >=stealth, line width=0.3pt, red](0,0) -- (-1.36,-0.532);
    \draw[-, >=stealth, line width=0.3pt, red](0,0) -- (-0.735,-0.735);
    \draw[-, >=stealth, line width=0.3pt, red](0,0) -- (-0.532,-1.36);

% points Q
\filldraw [black] (-1.875,-0.532) circle (.3pt);
\draw[font=\fontsize{8}{10}] (-1.9,-0.6) node {$Q_4$};

\filldraw [black] (-1.36,-0.735)  circle (.3pt);
\draw[font=\fontsize{8}{10}] (-1.4,-0.8) node {$Q_3$};

\filldraw [black]  (-0.735,-1.36) circle (.3pt);
\draw[font=\fontsize{8}{10}] (-.78,-1.44) node {$Q_2$};

\filldraw [black]   (-0.532,-1.88)  circle (.3pt);
\draw[font=\fontsize{8}{10}] (-.58,-1.92) node {$Q_1$};

% points P
\filldraw [black] (-1.36,-0.532) circle (.3pt);
\draw[font=\fontsize{8}{10}] (-1.36,-.47) node {$P_3$};

\filldraw [black]  (-0.735,-0.735) circle (.3pt);
\draw[font=\fontsize{8}{10}] (-0.665,-0.745)  node {$P_2$};

\filldraw [black]  (-0.532,-1.36) circle (.3pt);
\draw[font=\fontsize{8}{10}] (-0.47,-1.37)  node {$P_1$};

    \end{tikzpicture}
    \caption{The ground state phase diagram
for $c<0$, in the case $r=5$, with the points $P_k$ \eqref{pj}
and $Q_k$ \eqref{qj} as well as the regions 
$A$ \eqref{aj},
$B_k$ \eqref{bj}, $C_k$ \eqref{cj} and
$D$ \eqref{dj}
indicated.
}
    \label{fig:PDproof-c<0}
\end{figure}
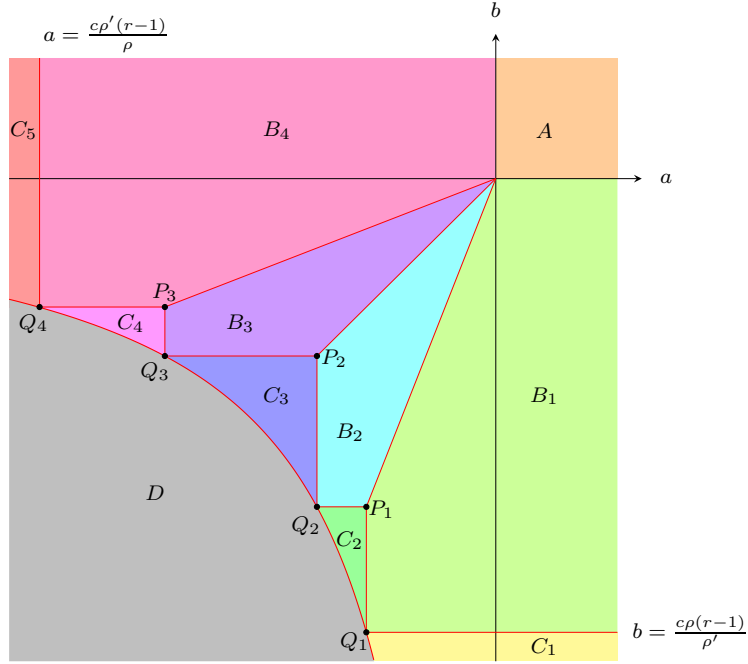

The complete description of the ground state phase diagram for $c<0$ is then as follows.

\begin{proposition}\label{prop:pdc<0}
Assume that $c<0$, $r\geq 3$ and let
$\om^\star=(\vec x^\star;\vec y^\star)$ be a maximiser of
 $G\big|_{\Om}$ with $x_i^\star$ decreasing and $y_i^\star$ increasing.
 If $(a,b)\in X$, where $X$ is one of $A$, $B_k$, $C_k$ or $D$,
 then $\omega^\star$ is unique and equal to $\omega_X$.
 If $(a,b)$ is in the interior of the  line segment separating $B_k$ from
 $C_k$, then $\omega^\star$ is also unique and given by $\omega^\star=\omega_{B_k}=\omega_{C_k}$.
 Likewise, if $(a,b)$ is in the interior of the line segment separating
 $B_{k}$ from  $C_{k+1}$ then 
 $\omega^\star=\omega_{B_k}=\omega_{C_{k+1}}$. If $(a,b)$ is in the interior of
 the line segment separating $B_k$ from $B_{k+1}$, then there are exactly two maximisers, namely, $\omega_{B_{k}}$ and $\omega_{B_{k+1}}$. If $(a,b)=P_k$
 (the corner between $B_k$, $B_{k+1}$ and $C_{k+1}$) then there are infinitely many maximisers, which form the line segment $t \omega_{B_k}+(1-t) \omega_{B_{k+1}}$
 for $0\leq t\leq 1$. In the remaining cases,
 $(a,b)\in\partial A$ or $(a,b)\in\partial D$
 there are also infinitely many maximisers. In the case $\partial D$ they are determined by the conditions
 \begin{equation}\label{semidmax}
\sqrt{-a}\left(x_i^\star-\frac\rho
     r\right)+\sqrt{-b}\left(y_i^\star-\frac{\rho'}r\right)=0,\qquad
   1\leq i\leq r,
\end{equation}
 in the case
 $a>0$, $b=0$ by the conditions
 \begin{subequations}\label{fbound}
 \begin{equation}\label{b0max}x_1^\star=\rho,\qquad x_2^\star=\dots=x_r^\star=y_1^\star=0, 
 \end{equation}
  in the case $a=0$, $b>0$ by the conditions
 \begin{equation}\label{a0max}x_r^\star=y_1^\star=\dots=y_{r-1}^\star=0,\qquad y_r^\star=\rho'\end{equation}
 and, finally, for $a=b=0$ by
 \begin{equation}\label{originmax}x_1^\star y_1^\star=\dots=x_r^\star y_r^\star=0. \end{equation}
 \end{subequations}
\end{proposition}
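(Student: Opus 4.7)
The plan adapts the strategy of Proposition \ref{prop:pdc>0} to the substantially richer landscape of candidate maximisers for $c<0$. First I would handle the case of negative semidefinite $Q$ (i.e., the closure $\bar D$) using the identity
\[
G(\vec x;\vec y)=G(\omega_D)+\tfrac{1}{r}\sum_{i<j}Q(x_i-x_j,y_i-y_j)
\]
from the proof of Proposition \ref{prop:pdc>0}. On the interior of $D$ this immediately gives uniqueness of $\omega_D$, and on $\partial D$, where $Q(x,y)=-\tfrac12(\sqrt{-a}\,x+\sqrt{-b}\,y)^2$, it forces each difference $(x_i-x_j,y_i-y_j)$ to lie in the kernel of this form; summing the resulting linear relation over $j$ yields \eqref{semidmax}. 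The axis cases $a=0$ or $b=0$ are handled directly: with $c<0$, maximising the cross term $c\sum_i x_iy_i$ pushes the supports of $\vec x$ and $\vec y$ to be disjoint, which, combined with the monotonicity conventions, produces \eqref{fbound} (with infinitely many maximisers because the missing quadratic term leaves one vector essentially free).

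For $Q$ not negative semidefinite and $a,b\neq 0$, I would run the Lagrange-multiplier argument from the proof of Proposition \ref{prop:pdc>0}. Let $(\vec x^\star;\vec y^\star)$ be a maximiser with $x$-support $\{1,\dots,k\}$ and $y$-support $\{l,\dots,r\}$ (both intervals because of the orderings). Stationarity gives $ax_i^\star+cy_i^\star=\lambda$ throughout $\{1,\dots,k\}$ and $cx_i^\star+by_i^\star=\mu$ throughout $\{l,\dots,r\}$. Since $ab\neq c^2$, these force $\vec x^\star$ and $\vec y^\star$ to be piecewise constant with at most three plateaus each, organised according to three possible support patterns: (a) $l>k$ (disjoint, with possible gap), (b) $l=k$ (single-index overlap), and (c) $l<k$ (wider overlap).

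Case (b) yields the $\omega_{C_k}$ family by solving the associated $2\times 2$ linear system, producing \eqref{ckdef}. Case (a) decouples into independent $x$- and $y$-optimisations: if $a\leq 0$ one spreads $\vec x$-mass evenly and takes $k$ as large as possible ($k=l-1$, no gap), while if $a\geq 0$ one concentrates at $k=1$; symmetric considerations apply to $\vec y$. This produces $\omega_A$ (when $a,b>0$) and the $\omega_{B_k}$ family. For case (c), the second-order expansion of $G$ in the direction $e_i-e_j$ with $i,j$ both in the interior of the $x$-support gives a coefficient $a\varepsilon^2$, forcing $a\leq 0$, and symmetrically $b\leq 0$; a direct evaluation of $G$ at the resulting case-(c) critical point then shows it is strictly dominated by an adjacent $\omega_{B_k}$ obtained by collapsing the narrower plateau. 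Hence only $\omega_A$, $\omega_D$, and the $\omega_{B_k}$, $\omega_{C_k}$ families remain as candidates.

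The final step compares $G$-values region by region. The relation $G(\omega_{B_k})=G(\omega_{B_{k+1}})$ simplifies to $(r-k)(r-k-1)\rho^2 a=k(k+1)\rho'^2 b$, reproducing the ray through $P_k$ in Figure~\ref{fig:PDproof-c<0}; the relations $G(\omega_{B_k})=G(\omega_{C_k})$ and $G(\omega_{B_k})=G(\omega_{C_{k+1}})$ yield the vertical and horizontal legs of the zig-zag through $Q_k$ and $Q_{k+1}$. At the corner $P_k$ the configuration $\omega_{C_{k+1}}$ coincides with a specific convex combination of $\omega_{B_k}$ and $\omega_{B_{k+1}}$, and because the quadratic form driving $G$ on the segment $t\omega_{B_k}+(1-t)\omega_{B_{k+1}}$ vanishes at these parameters, $G$ is constant along the entire segment, producing the stated one-parameter family of maximisers. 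The principal obstacle throughout is combinatorial bookkeeping: exhaustively classifying support patterns, cleanly ruling out case (c) in each subregion, and verifying the pairwise $G$-comparisons, although each individual computation is elementary.
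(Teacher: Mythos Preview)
Your plan follows essentially the same architecture as the paper's proof: handle $\bar D$ via the quadratic identity, use Lagrange multipliers to classify support patterns, then compare $G$-values region by region. Two points deserve tightening.

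First, your ruling-out of case~(c) is more laborious than necessary and the final step is not justified. Perturbing only $\vec x$ at two overlap indices yields second-order term $a\varepsilon^2$, so you correctly deduce $a\leq 0$ (and symmetrically $b\leq 0$), but this does not yet exclude case~(c): the whole region $\{a,b<0,\ ab<c^2\}$ is still in play, and your appeal to a ``direct evaluation\dots strictly dominated by an adjacent $\omega_{B_k}$'' is asserted rather than shown. The paper dispatches case~(c) in a single stroke: perturb \emph{both} $\vec x$ and $\vec y$ simultaneously at two indices $i,j$ in the overlap by $(t,-t;u,-u)$. The second-order change is then $2Q(t,u)$, and since $Q$ is assumed not negative semidefinite this is strictly positive for some $(t,u)$, so the case-(c) critical point is not even a local maximum.

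Second, your treatment of the axes is slightly off. You set aside all of $a=0$ or $b=0$ and claim they yield the $\partial A$ description~\eqref{fbound}. But \eqref{fbound} is only the answer on $\partial A=\{a,b\geq 0,\ ab=0\}$; the negative half-axes (e.g.\ $a=0$, $b<0$) lie inside the $B_k$/$C_k$ regions, where the maximiser is unique (for instance $\omega_{B_1}$ for small $|b|$), not an infinite family. The paper keeps these half-axes inside the Lagrange analysis and handles the degeneracy $a=0$ (or $b=0$) by a short ad~hoc argument showing the plateaus are still forced. You should do the same: restrict your separate ``axis'' discussion to $\partial A$ only.
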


For convenience,
we formulated Proposition \ref{prop:pdc<0} only for $r\geq 3$. In the case $r=2$ the
 same statement is  correct, except for the fact that 
the equations \eqref{fbound} have the unique solution $\om=\om_A$. In this case $\om_{B_1}=\om_A$, so  $\partial A$ and $B_1$ should be considered as parts of the anti-ferromagnetic phase. Note also that   there are no points $P_k$, and only one region $B_k$. This leads to exactly the same diagram as for $c>0$. We already know this from the discussion after Theorem \ref{thm:FE-WB}.

The proof of Proposition \ref{prop:pdc<0} follows the same strategy as that of
Proposition \ref{prop:pdc>0}. Since the details are more involved, we divide it into a series of lemmas.

\begin{lemma}\label{lem:df}
Proposition \ref{prop:pdc<0} holds if $(a,b)\in\bar A$ or $(a,b)\in\bar D$.
\end{lemma}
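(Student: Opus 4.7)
The proof splits naturally into the two cases $(a,b)\in\bar D$ and $(a,b)\in\bar A$, and in both the key object is the pointwise value of the quadratic form $Q$ on the range of the coordinates.

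For $\bar D$, I plan to reuse the identity \eqref{gom} from the proof of Proposition \ref{prop:pdc>0}, which is purely about $Q$ and so applies verbatim regardless of the sign of $c$. Inside $D$, $Q$ is negative definite, so \eqref{gom} forces $\omega^\star=\omega_D$ uniquely. On $\partial D$, since $a,b<0$ and $c<0$ we have $c=-\sqrt{-a}\sqrt{-b}$, whence
\[
Q(x,y)=-\tfrac12\bigl(\sqrt{-a}\,x+\sqrt{-b}\,y\bigr)^2.
\]
The equality case in \eqref{gom} then requires $\sqrt{-a}\,x_i+\sqrt{-b}\,y_i$ to be independent of $i$; summing and dividing by $r$ identifies the common value and produces \eqref{semidmax}. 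One then checks that any admissible decreasing $\vec x^\star$ (satisfying a single inequality to keep the induced $\vec y^\star$ non-negative) yields a maximiser, giving infinitely many.

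For $\bar A$ I will avoid \eqref{gom} and instead combine three elementary bounds that hold on all of $\Omega$:
\[
\sum_{i=1}^r x_i^2\le\rho^2,\qquad \sum_{i=1}^r y_i^2\le(\rho')^2,\qquad \sum_{i=1}^r x_iy_i\ge 0,
\]
each trivial because $x_i,y_i\ge 0$ and $\sum x_i=\rho$, $\sum y_i=\rho'$. Since $a,b\ge 0$ and $c<0$ in $\bar A$, these combine to $G(\vec x;\vec y)\le \tfrac12(a\rho^2+b(\rho')^2)=G(\omega_A)$, and equality in the corresponding bound is needed whenever its coefficient is strictly positive. The equality analysis then subdivides $\bar A$: for $(a,b)\in A$ all three equalities are required and the orderings force $x_1=\rho$, $y_r=\rho'$ (the cross term then vanishes automatically), giving the unique $\omega_A$; for $a>0$, $b=0$ the $y$-bound is dropped, leaving only $x_1=\rho$ and (to enforce $\sum x_iy_i=0$) $y_1=0$, which is \eqref{b0max}; the case $a=0$, $b>0$ follows by symmetry, giving \eqref{a0max}; and for $a=b=0$ only the cross-term bound survives, yielding \eqref{originmax}.

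The argument is essentially mechanical, and the main obstacle I anticipate is simply bookkeeping in the equality analysis for $\bar A$: one must check that loosening $a$ or $b$ to zero correctly frees the corresponding variables while non-negativity and the ordering constraints are preserved, and that the explicit families \eqref{b0max}, \eqref{a0max}, \eqref{originmax} are not only necessary but also sufficient. The latter is immediate from substitution into $G$, which evaluates to $\tfrac12(a\rho^2+b(\rho')^2)$ in each case.
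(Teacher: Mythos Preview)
Your proposal is correct and follows essentially the same approach as the paper: the $\bar D$ case is handled identically via \eqref{gom} and the factorisation of $Q$ (with the sign of $c$ flipping the relative sign inside the square), while for $\bar A$ the paper's single estimate---dropping the non-positive terms $cx_jy_j$ and inserting the non-negative cross terms $ax_ix_j$, $by_iy_j$---is exactly the combination of your three elementary bounds, and the equality analysis is the same.
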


\begin{proof}  The case $(a,b)\in D$ follows immediately from
\eqref{gom}. If  $(a,b)\in\partial D$,  \eqref{qsemidef} is replaced by
$$Q(x,y)=-\frac 12(\sqrt{-a}x+\sqrt{-b}y)^2.$$
This leads to the sign change in \eqref{semidmax} compared to \eqref{max5}.
Moreover the condition \eqref{xrcond} is replaced by
 $$x_1^\star\leq  \frac{\rho}r+\sqrt{\frac ba}\frac{\rho'}{r},$$
which shows that the number of maximisers is indeed infinite.

If $(a,b)\in \bar A$, that is, $a,\,b\geq 0$, we can  estimate
\begin{align*}Q(\vec x;\vec y)&=\sum_{j=1}^r\left( \frac a 2\,x_j^2+c x_jy_j +\frac b 2\,y_j^2\right)\\
&\leq \frac a 2(x_1+\dots+x_r)^2+\frac b 2(y_1+\dots+y_r)^2=\frac {a\rho^2+b(\rho')^2}2,\end{align*}
where we deleted the non-positive terms $cx_jy_j$ and added the non-negative terms $ax_ix_j$ and $by_iy_j$ for $i< j$. Equality holds if and only if all those terms vanish. If $a>0$ and $b>0$ this can only happen if $\om=\om_A$.
It is also clear that if $(a,b)\in\partial A$ it happens under the conditions
 \eqref{fbound}.
\end{proof}

\begin{lemma}\label{lem:ce} Assume that $(a,b)\notin\bar A\cup \bar D$.
Then the maximiser $\omega^\star$ in Proposition \ref{prop:pdc<0} is
equal to one of the points $\omega_{B_k}$, $\omega_{C_k}$  or  
 $t \omega_{B_k}+(1-t) \omega_{B_{k+1}}$
 for $0\leq t\leq1$. The last case can only happen if $(a,b)= P_k$.
 \end{lemma}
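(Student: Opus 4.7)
The plan is to combine a Lagrange-multiplier analysis on the support of $\om^\star$ with a perturbation argument that exploits the hypothesis $(a,b)\notin\bar D$, which by Proposition~\ref{prop:crittemp} translates into $Q$ being not negative semidefinite. Because $\vec x^\star$ is decreasing and $\vec y^\star$ increasing, their supports are intervals $I_x=\{1,\dots,k\}$ and $I_y=\{r-l+1,\dots,r\}$ for some $k,l\geq 1$. Applying Lagrange multipliers on the stratum with these prescribed supports (and strict orderings) yields $\lambda,\mu\in\RR$ satisfying $ax_i^\star+cy_i^\star=\lambda$ for $i\in I_x\cap I_y$, $ax_i^\star=\lambda$ for $i\in I_x\setminus I_y$, and the symmetric relations on $I_y$.

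The crucial first step is to show $|I_x\cap I_y|\leq 1$. Assuming for contradiction that $i<j$ both lie in the overlap, consider the perturbation $\om(t,u)=(\vec x^\star+t(e_i-e_j);\vec y^\star+u(e_i-e_j))$, which stays in $\Om$ for small $(t,u)$ since $x_i^\star,x_j^\star,y_i^\star,y_j^\star>0$ and the ordering constraints admit slack. The Lagrange identities cause the first-order variation of $G$ to cancel, and exactness of the Taylor expansion of the quadratic $G$ gives
\begin{equation*}
G(\om(t,u))-G(\om^\star)=2Q(t,u).
\end{equation*}
Since $Q$ is not negative semidefinite, there exist arbitrarily small $(t,u)$ with $Q(t,u)>0$, contradicting maximality.

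The remaining cases $|I_x\cap I_y|\in\{0,1\}$ are treated by analogous internal perturbations. In the overlap-zero case, a perturbation inside $I_x$ (where the coordinates $y_i^\star$ vanish) forces the nonzero $x_i^\star$'s to the common value $\rho/k$ provided $a\neq 0$; the case $a=0$ needs separate handling since the $I_x$-summand of $G$ is then identically zero, making the distribution on $I_x$ irrelevant and naturally leading to a convex-combination expression. Symmetric reasoning applies to $\vec y^\star$. When $k+l=r$ this identifies $\om^\star$ with $\om_{B_k}$; when $k+l<r$, $\om^\star$ is recognizable as a convex combination of some $\om_{B_j}$ and $\om_{B_{j+1}}$. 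In the overlap-one case $I_x\cap I_y=\{p\}$, the same perturbation argument pins $x_1^\star=\dots=x_{p-1}^\star$ and $y_{p+1}^\star=\dots=y_r^\star$, while the $2\times 2$ Lagrange system at the overlap index together with the sum constraints is a linear system whose unique solution is \eqref{ckdef}, i.e.\ $\om^\star=\om_{C_p}$.

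Finally, the segment $\{t\om_{B_k}+(1-t)\om_{B_{k+1}}:0\leq t\leq 1\}$ can consist entirely of maximisers only when the quadratic function $t\mapsto G(t\om_{B_k}+(1-t)\om_{B_{k+1}})$ is constant on $[0,1]$; expanding this polynomial and requiring both its linear and quadratic coefficients in $t$ to vanish yields precisely the two coordinate equations that define $P_k$. The main obstacle will be the careful bookkeeping of degenerate sub-cases, particularly where $a$ or $b$ vanishes or where $Q$ is negative semidefinite in specific directions: in these situations one must choose tailored perturbations that respect the strict ordering constraints while still producing the claimed first-order contradictions.
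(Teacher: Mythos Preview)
Your overall architecture---pin down the supports $I_x=\{1,\dots,k\}$ and $I_y=\{r-l+1,\dots,r\}$, use Lagrange on the open stratum, and then the perturbation $G(\om(t,u))-G(\om^\star)=2Q(t,u)$ to force $|I_x\cap I_y|\le 1$---is sound and matches the paper's key step. But there are two genuine gaps in the case analysis that follows.

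\textbf{The $k+l<r$ sub-case is mishandled.} Your claim that such an $\om^\star$ ``is recognizable as a convex combination of some $\om_{B_j}$ and $\om_{B_{j+1}}$'' is false. For example, with $r\ge 3$, $k=l=1$ you get $(\rho,0,\dots;0,\dots,\rho')=\om_A$, which lies on no such segment (the segment points always have $j+1$ nonzero $x$-entries and $r-j$ nonzero $y$-entries, with overlap exactly one). The correct step is to \emph{rule out} $k+l<r$ altogether: since $(a,b)\notin\bar A$ forces $a<0$ or $b<0$, moving a small amount of the corresponding mass into the ``gap'' of indices $\{k+1,\dots,r-l\}$ strictly increases $G$. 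This is how the paper disposes of the case, via the explicit value $G=\tfrac{a\rho^2}{2k}+\tfrac{b(\rho')^2}{2l}$.

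\textbf{The degenerate overlap-one case is where the segment actually lives.} You assert that the $2\times 2$ Lagrange system at the overlap index has a ``unique solution'' giving $\om_{C_p}$, but its determinant $k(r+1-k)ab-(k-1)(r-k)c^2$ can vanish. The interior points $t\om_{B_k}+(1-t)\om_{B_{k+1}}$ have exactly one overlap index, so they arise precisely in this degenerate sub-case, not in the overlap-zero or $a=0$ scenarios you suggest. The paper shows that the vanishing determinant together with the solvability condition for the linear system forces $(a,b)=P_{k-1}$. Your final paragraph instead argues ``if the whole segment consists of maximisers then $t\mapsto G(t\om_{B_k}+(1-t)\om_{B_{k+1}})$ is constant, hence $(a,b)=P_k$''; but the lemma only gives you a \emph{single} interior point as a maximiser, not the whole segment, so this implication does not establish what is needed.
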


\begin{proof}
 Let
 $k$ and $l$ be the number of non-zero entries in $x^\star$ and $y^\star$, respectively. Then,  $\omega^\star$ is a maximiser of
$$\sum_{j=1}^{\min(k,r-l)} Q(x_j,0)+\sum_{j=r-l+1}^kQ(x_j,y_j)+\sum_{j=\max(k+1,r-l+1)}^r Q(0,y_j), $$
where the middle sum is empty if $k+l\leq r$. This gives the Lagrange multiplier equations
\begin{subequations}\label{lagr}
\begin{align}
\label{lagr1}ax_j^\star&=\lambda, & 1\leq j\leq\min(k,r-l),\\
\label{lagr2a}ax_j^\star+c y_j^\star&=\lambda, & r-l+1\leq j\leq k,\\
\label{lagr2b}cx_j^\star+b y_j^\star&=\mu, & r-l+1\leq j\leq k,\\
\label{lagr3}by_j^\star &=\mu, &\max(k+1, r-l+1)\leq j\leq r.
\end{align}
\end{subequations}

We will first show that the variables $x_j^\star$ and $y_j^\star$ involved in each group of equations  \eqref{lagr1},  \eqref{lagr2a}--\eqref{lagr2b} and
 \eqref{lagr3} are independent of $j$. 
 This is obvious if, respectively, $a\neq 0$, $ab\neq c^2$ (which holds by assumption) and $b\neq 0$.  By symmetry, it remains to consider the case $a=0$,
 when we must show that  $x_1^\star=\dots=x_{\min(k,r-l)}^\star$.
If $l=r$ there is nothing to prove. If $l<r$  and $k+l>r$ then 
  \eqref{lagr1} and \eqref{lagr2a} give $\lambda=ax_1^\star=0$ and  $\lambda=ax_k^\star+cy_{k}^\star=cy_k^\star$, which is impossible. 
  Finally, suppose $k+l\leq r$. Note that $b<0$ since 
   $(a,b)\notin\partial A$. It then follows from  \eqref{lagr3} that $y_j^\star=\rho'/l$ for $j\geq r-l+1$.
  This gives
 \[
G(\omega^\star)=\sum_{j=1}^k Q(x_j^\star,0)
+lQ\big(0,\tfrac{\rho'} l\big)
  =0+\frac{b(\rho')^2}{l},
\]
   which is maximised when $l=r-1$ and hence 
$k=1$, so the condition we want to prove holds automatically.

So far we have  proved that that, if  $k+l\leq r$,
\begin{equation}\label{ecase}\om^\star=\Big(\underbrace{\frac\rho k,\dots,\frac\rho k}_{k},\underbrace{0,\dots,0}_{r-k};\underbrace{0,\dots,0}_{r-l},\underbrace{\frac{\rho'}{l},\dots,\frac{\rho'}{l}}_{l}\Big), \end{equation}
and if $k+l>r$  (after a change of variables)
\begin{equation}\label{ccase}
\om^\star=\Big(\underbrace{x_1,\dots,x_1}_{r-l},\underbrace{x_2,\dots,x_2}_{k+l-r},\underbrace{0,\dots,0}_{r-k};\underbrace{0,\dots,0}_{r-l},\underbrace{y_1,\dots,y_1}_{k+l-r},\underbrace{y_2,\dots,y_2}_{r-k}\Big).
\end{equation}

In the case \eqref{ecase} we have
$$G(\om^\star)=kQ(\rho/k,0)+lQ(0,\rho'/l)=\frac{a\rho^2}{2k}+\frac{b(\rho')^2}{2l}. $$
Since we assume that at least one of $a$ and $b$ is negative, this can
only be a global maximum if $k+l=r$, that is, $\om^\star=\om_{B_k}$
(see \eqref{bkdef}).  
 
In the case \eqref{ccase}, we claim that $k+l=r+1$. Indeed, if $k+l\geq r+2$ we  find as in \eqref{gvar} that 
$$G(\vec x^\star+t(e_{r-l+1}-e_{r-l+2});\vec y^\star+u(e_{r-l+1}-e_{r-l+2})) =G(\om^\star)+2 Q(t,u),$$
which shows that $\om^\star$ is not a local maximum. We now know that
$$\om^\star=\Big(\underbrace{x_1,\dots,x_1}_{k-1},x_2,\underbrace{0,\dots,0}_{r-k};\underbrace{0,\dots,0}_{k-1},y_1,\underbrace{y_2,\dots,y_2}_{r-k}\Big), $$
where $1\leq k\leq r$.  Suppose first that $2\leq k\leq r-1$. Then,  the Lagrange equations \eqref{lagr} give
$$ax_1=ax_2+c y_1,\qquad cx_2 +b y_1=by_2.$$
Inserting $x_2=\rho-(k-1)x_1$ and $y_1=\rho'-(r-k)y_2$ gives
\begin{subequations}\label{ckeq}
\begin{align}
k ax_1+(r-k)cy_2&=a\rho+c\rho',\label{ckeqa}\\
(k-1) cx_1+(r-k+1)by_2&=c\rho+b\rho'. \label{ckeqb}
\end{align}
\end{subequations}
If the determinant $k(r+1-k)a b- (k-1)(r-k) c^2 \neq 0$, 
we can solve this system and find that
$\om^\star=\om_{C_k}$. If $k=1$, there is no $x_1$ and we must have  $x_2=\rho$.
We can still determine  $y_2$ from \eqref{ckeqb} and obtain
$\om^\star=\om_{C_1}$. Similarly, the case $k=r$ gives
$\om^\star=\om_{C_r}$. 

It remains to consider solutions of \eqref{ckeq} when 
\begin{equation}\label{vandet}k(r+1-k)a b= (k-1)(r-k) c^2\end{equation}
with $2\leq k\leq r-2$.
For solutions to exist we must have (from \eqref{ckeq})
\begin{align*}(k-1)c(a\rho+c\rho')&=ka(c\rho+b\rho'),\\
 (r-k+1)b(a\rho+c\rho')&=(r-k)c(c\rho+b\rho').\end{align*}
It is easy to solve this  for $(a,b)$, and obtain that either 
$(a,b)=(-c\rho'/\rho,-c\rho/\rho')$ or $(a,b)=P_{k-1}$. The first solution  does not satisfy \eqref{vandet} and can be discarded. At the point $P_{k-1}$, \eqref{ckeq} reduces to
\begin{equation}\label{line}k(k-1)\rho'x_1+(r-k)(r-k+1)\rho y_2=r\rho\rho'. \end{equation}
The conditions $x_1\geq x_2\geq 0$ and $0\leq y_1\leq y_2$ mean that
$(x_1,y_2)$ is in the rectangle $[\rho/k,\rho/(k-1)]\times[\rho'/(r-k+1),\rho'/(r-k)]$. The line
\eqref{line} passes through the corners $(\rho/k,\rho'/(r-k))$, $(\rho/(k-1),\rho'/(r-k+1))$ which correspond 
to the points $\om_{B_k}$ and $\om_{B_{k-1}}$. Thus, there are potential maximisers at the line segment between these points.
\end{proof}

It remains to pair up the maximisers with the correct region. 
 
 \begin{lemma}\label{lem:c}
 In the context of Lemma \ref{lem:ce}, if $\om^\star=\om_{C_k}$, then
 either $(a,b)\in \bar C_k$ or $(a,b)$ is on the extensions of the line segments
 separating $C_k$ from $B_k$ and $B_{k-1}$. In the latter case,
  $\om_{C_k}=\om_{B_k}$ and $\om_{C_k}=\om_{B_{k-1}}$, respectively.
 \end{lemma}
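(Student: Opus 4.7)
The approach is to combine the explicit formulas \eqref{ckdef} with a second-order test at the restricted Lagrange critical point. Introduce
\[
N_1 = k\rho' b - (r-k)\rho c, \qquad N_2 = (r+1-k)\rho a - (k-1)\rho' c,
\]
whose vanishing loci are precisely the two affine lines bounding $C_k$ in \eqref{cj}, namely those separating $C_k$ from $B_k$ and from $B_{k-1}$ respectively, and set $\Delta = k(r+1-k)ab - (k-1)(r-k)c^2$, the common denominator in \eqref{ckdef}. A direct calculation from \eqref{ckdef} yields
\[
x_1-x_2 = \frac{cN_1}{\Delta}, \quad x_2 = \frac{bN_2}{\Delta}, \quad y_2-y_1 = \frac{cN_2}{\Delta}, \quad y_1 = \frac{aN_1}{\Delta},
\]
turning each admissibility inequality into a simple sign condition. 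In particular, along $\{N_1 = 0\}$ one has $x_1 = x_2$ and $y_1 = 0$ simultaneously, collapsing $\omega_{C_k}$ to a point with $k$ equal positive $x$-entries and $r-k$ equal positive $y$-entries, i.e.\ to $\omega_{B_k}$; along $\{N_2 = 0\}$ one has $x_2 = 0$ and $y_1 = y_2$, collapsing $\omega_{C_k}$ to $\omega_{B_{k-1}}$. These are the two extension loci in the statement.

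For the non-degenerate case, the strict admissibility inequalities $x_1 > x_2 > 0$ and $0 < y_1 < y_2$ alone do not force $(a,b) \in C_k$, since they also admit solutions with $\Delta < 0$. I would rule these out by a second-order test. Since $\omega_{C_k}$ arises as the critical point of $G$ restricted to the invariant affine subspace $\{x_1 = \dots = x_{k-1},\, y_{k+1} = \dots = y_r,\, x_{k+1} = \dots = x_r = 0 = y_1 = \dots = y_{k-1}\}$, parametrised by $(x_1, y_2)$ after applying $\sum x = \rho$ and $\sum y = \rho'$, the restricted $G$ is a quadratic whose Hessian is
\[
H = \begin{pmatrix} k(k-1)a & (k-1)(r-k)c \\ (k-1)(r-k)c & (r-k)(r-k+1)b \end{pmatrix}, \qquad \det H = (k-1)(r-k)\Delta.
\]
If $\omega^\star = \omega_{C_k}$, this restricted maximum is in particular a local maximum, so $H$ must be negative semidefinite, forcing $a \leq 0$, $b \leq 0$ and $\Delta \geq 0$. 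Combined with the sign statements from the displayed factorisations (which, given $c < 0$ and $\Delta \geq 0$, yield $N_1 \leq 0$ and $N_2 \leq 0$) and with the standing hypothesis $(a,b) \notin \bar D$ providing $ab < c^2$, this is precisely the defining inequality system of $\bar C_k$ from \eqref{cj}.

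The main obstacle will be the edge indices $k = 1$ and $k = r$, where the invariant subspace is only one-dimensional (one of $x_1$ or $y_2$ is absent) and the prefactor $(k-1)(r-k)$ in $\det H$ vanishes. These cases require a separate one-variable Hessian test together with an inspection of the specialised form of \eqref{ckdef}; here the statement of the lemma simplifies, since only one of the two extension loci exists (for instance $N_2$ reduces to $r\rho a$ when $k = 1$, whose vanishing reproduces the boundary with $\bar A$ rather than with a non-existent $B_0$).
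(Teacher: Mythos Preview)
Your argument is correct and takes a genuinely different route from the paper's. The paper rules out $\Delta<0$ by an explicit value comparison: it computes
\[
G(\om_{C_k})-G(\om_{B_k})=-\frac{a\,N_1^2}{2k(r-k)\Delta},\qquad
G(\om_{C_k})-G(\om_{B_{k-1}})=-\frac{b\,N_2^2}{2(k-1)(r+1-k)\Delta},
\]
and, using only that at least one of $a,b$ is negative (from $(a,b)\notin\bar A$), concludes from $G(\om_{C_k})\geq G(\om_{B_j})$ that either $\Delta>0$ or the relevant $N_i$ vanishes. Your Hessian test on the two-dimensional slice is the local version of this: since $G$ restricted to the slice is quadratic with critical point $\om_{C_k}$, the sign of each value difference above is exactly a Hessian evaluation in the direction of the corresponding corner $\om_{B_k}$ or $\om_{B_{k-1}}$. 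Your approach extracts slightly more---both $a<0$ and $b<0$ come out of the diagonal entries of $H$, whereas the paper bootstraps the second sign from $\Delta>0$---but the paper's explicit differences are reused verbatim in the next lemma (Lemma~\ref{lem:e}), which is a mild economy your route forfeits.

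One small point: for the Hessian test to apply you need $\om_{C_k}$ to lie in the interior of the slice's intersection with $\Omega$, i.e.\ $x_1,x_2,y_1,y_2>0$ strictly. This is guaranteed by the context of Lemma~\ref{lem:ce} (these were the nonzero entries of the maximiser), but it is worth making explicit. Your handling of the edge cases $k=1,r$ is adequate; note that for $k=1$ the one-dimensional Hessian gives only $b\leq 0$, and the case $b=0$ must be excluded separately (the restricted $G$ is then affine in $y_2$ with nonzero slope, so no interior critical point exists).
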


 \begin{proof}
 Since  $(a,b)\notin \bar A$, at least one of $a$ and $b$ is negative. Suppose that $a<0$.
 We compute
 \begin{equation}\label{gce}G(\om_{C_k})-G(\om_{B_k})=-\frac{a(k \rho'b-(r-k)\rho c)^2}{2 k(r-k)\Delta_k}, \end{equation}
 where $\Delta_k=k(r+1-k)a b- (k-1)(r-k) c^2$.
 If $\om_{C_k}$ is a global maximiser,
 it follows that either $\Delta_k>0$ or $k\rho'b=(r-k)\rho c$. 
 The second case is the extensions of the line segment
 separating $C_k$ from $B_k$. It is easy to verify that in that case $\om_{C_k}=\om_{B_k}$.
 If $\Delta_k>0$ then
 both $a$ and $b$ are negative. It is then clear from \eqref{ckdef}
 that the conditions $x_2,\,y_1\geq 0$ give $(a,b)\in\bar C_k$.
 
The case when $b<0$ follows in the same way, using instead
 \begin{equation}
 \label{gceb}G(\om_{C_k})-G(\om_{B_{k-1}})=-\frac{b((r+1-k)\rho a-(k-1)\rho' c)^2}{2(k-1)(r+1-k)\Delta_k}.
 \end{equation}
 \end{proof}
 
\begin{lemma}\label{lem:e}
In the context of Lemma \ref{lem:ce}, if $\om^\star=\om_{B_k}$, then
  $(a,b)\in \bar B_k$.
\end{lemma}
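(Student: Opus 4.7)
The plan is to derive each inequality defining $\bar{B}_k$ in \eqref{bj} as a necessary condition for $\omega_{B_k}$ to be a global maximiser of $G\big|_{\Omega}$. A direct computation from \eqref{bkdef} gives
\[
G(\omega_{B_j})=\frac{a\rho^2}{2j}+\frac{b(\rho')^2}{2(r-j)},\qquad 1\le j\le r-1.
\]
The two inequalities in \eqref{bj} featuring $(r-k)(r-k-1)$ and $(r-k+1)(r-k)$ then come out of rearranging $G(\omega_{B_k})\ge G(\omega_{B_{k\pm1}})$ over a common denominator; these comparisons are forced whenever $\omega_{B_{k\pm1}}$ exists, i.e.\ $k\le r-2$ and $k\ge 2$ respectively.

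For the individual bounds on $a$ and $b$, I would exploit that $\omega_{B_k}$ lies on $\partial\Omega$, so first-order optimality along admissible inward directions is necessary. Two perturbations do the job, both parametrised by a small $\epsilon>0$: move $\epsilon$ from $x_k=\rho/k$ to $x_{k+1}=0$ (keeping $\vec y$ fixed), and move $\epsilon$ from $y_{k+1}=\rho'/(r-k)$ to $y_k=0$ (keeping $\vec x$ fixed). Both preserve the orderings $x_1\ge\dots\ge x_r$ and $y_1\le\dots\le y_r$ for sufficiently small $\epsilon$, so they give admissible points of $\Omega$. A short Taylor expansion shows the linear-in-$\epsilon$ changes in $G$ are
\[
\epsilon\Bigl(-\tfrac{a\rho}{k}+\tfrac{c\rho'}{r-k}\Bigr)\quad\text{and}\quad\epsilon\Bigl(\tfrac{c\rho}{k}-\tfrac{b\rho'}{r-k}\Bigr),
\]
and requiring each coefficient to be non-positive produces exactly $a\ge k\rho'c/((r-k)\rho)$ and $b\ge(r-k)\rho c/(k\rho')$.

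The edge cases $k=1$ and $k=r-1$ need a brief extra remark, since \eqref{bj} carries an additional sign condition there ($b<0$ for $B_1$, $a<0$ for $B_{r-1}$) in place of a missing adjacent $\omega_{B_{k\pm1}}$. For $B_1$: if $b>0$ then the standing hypothesis $(a,b)\notin\bar A$ from Lemma~\ref{lem:ce} forces $a<0$, but the already-derived inequality $(r-1)(r-2)\rho^2 a\ge 2(\rho')^2 b$ then has negative left-hand side and positive right-hand side, a contradiction; so $b\le 0$. The case of $B_{r-1}$ is symmetric. The main obstacle is not substance but bookkeeping: one must verify that every perturbation used respects the ordering constraints on $\Omega$, so the first-order inequalities it produces are genuine necessary conditions for global maximality.
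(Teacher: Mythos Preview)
Your argument is correct and somewhat different from the paper's. Both approaches obtain the two ``diagonal'' inequalities in \eqref{bj} by comparing $G(\omega_{B_k})$ with $G(\omega_{B_{k\pm 1}})$. For the remaining bounds on $a$ and $b$, the paper takes an indirect route: it observes that the wedge cut out by the two diagonal inequalities (intersected with the complement of $\bar A\cup\bar D$) is contained in $\bar B_k\cup C_k\cup C_{k+1}$, and then rules out $C_k$ and $C_{k+1}$ by invoking the explicit formulas \eqref{gce} and \eqref{gceb} from the proof of Lemma~\ref{lem:c}, which show $G(\omega_{C_k})>G(\omega_{B_k})$ in $C_k$. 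Your approach instead derives the bounds $a\ge k\rho'c/((r-k)\rho)$ and $b\ge (r-k)\rho c/(k\rho')$ directly from first-order necessary conditions along two admissible boundary perturbations of $\omega_{B_k}$. This is more elementary and self-contained, since it avoids relying on the somewhat involved expressions for $\omega_{C_k}$ and the identities computed in the previous lemma; the paper's route has the virtue of recycling work already done. One small remark: the domain $\Omega$ in \eqref{eq:Om} carries no ordering constraint, so you need only check that your perturbations preserve nonnegativity and the sum constraints, not the monotonicity of $\vec x$ and $\vec y$; this simplifies the bookkeeping you flag at the end.
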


\begin{proof}
For $2\leq k\leq r-1$, we compute
$$G(\om_{B_k })-G(\om_{B_{k-1}})=\frac{k(k-1)(\rho')^2 a-(r-k)(r+1-k)\rho^2b}{2k(r-k)(k-1)(r+1-k)}. $$
It follows that, if $\om_{B_k}$ is a global maximiser, then $(a,b)$ is above or on the line separating $B_k$ from $B_{k-1}$. Replacing $k$ by $k+1$ we see that, if
$1\leq k\leq r-2$  then $(a,b)$ is below or on the line separating $B_k$ from $B_{k+1}$. This means that either $(a,b)\in \bar B_k$, $(a,b)\in C_k$ or $(a,b)\in C_{k+1}$. However, if $(a,b)\in C_k$ then the expression \eqref{gce} is strictly positive, so $\om_{B_k}$ is not a maximiser.
Similarly, the case $(a,b)\in C_{k+1}$ is excluded by \eqref{gceb}.
\end{proof}

We can now complete the proof of Proposition \ref{prop:pdc<0}. The case
$(a,b)\in\bar A\cup\bar D$ is handled by Lemma \ref{lem:df}. 
In all other cases except at the points $P_k$ it follows from  Lemma \ref{lem:ce} that $\om^\star=\om_{B_j}$ or $\om^\star=\om_{C_j}$ for some $j$. We can then use Lemma
\ref{lem:c} and Lemma \ref{lem:e} to exclude all possibilities for $\om^\star$ except those mentioned in Proposition \ref{prop:pdc<0}. In most cases this leaves a unique possibility. At the boundary between $B_k$ and $B_{k+1}$ there are two possibilities, but it is easy to verify (and clear from continuity arguments)
that $G(\om_{B_k})=G(\om_{B_{k+1}})$ in this case. At the points $P_k$ 
there are infinitely many possibilities, but it is again easy to verify (and clear from the Lagrange equations) that they are all maximisers.

	\section{Multi-block models}
	\label{sec:MB}
In this section we  generalise the free energy calculation of Theorem
	\ref{thm:FE-AB} to a class of models with 
	$p\geq1$ blocks rather than just the two blocks $A$ and $B$, and with certain
	many-body interactions. 
	
	We first need some
	notation. 
%	For $\s\in S_n$ a permutation of
%	$1,2,\dotsc,n$,  let $T_\s$ be the linear operator on
%	$\VV=(\CC^r)^{\otimes n}$ which permutes the tensor factors according
%	to $\s$:
%	\be\label{eq:T}
%	T_\s(v_1\otimes v_2\otimes\dotsb\otimes v_n)=
%	v_{\s^{-1}(1)}\otimes v_{\s^{-1}(2)}\otimes 
%	\dotsb\otimes v_{\s^{-1}(n)}.
%	\ee
%	(The mapping $T$ is a representation of $S_n$.)
	Let $\g$ be a partition with all parts $>1$, that is
	$\g=(\g_1,\dotsc,\g_\ell)$ is a sequence of integers 
	$\g_1\geq \g_2\geq\dotsb\geq \g_\ell\geq2$.
	We say that a permutation $\s\in S_n$ has cycle-type $\g$ if its
	non-trivial cycles, ordered from longest to shortest, have lengths
	$\g_1,\dotsc,\g_\ell$.  Then $|\g|:=\g_1+\dotsb+\g_\ell\leq n$.
	Let $C_n^\g$ be the set of permutations in
	$S_n$ with cycle-type $\g$;  this is a conjugacy-class of $S_n$.  
	For example, if $\g=(2)$ then $C_n^\g=C_n^{(2)}$ 
	is the set of transpositions in $S_n$, 
	and if $\g=(3)$ then $C_n^\g=C_n^{(3)}$ 
	is the set of three-cycles in $S_n$. %, and $|C_n^{(3)}|=2\binom{n}{3}$.
	Similarly, for $A\se\{1,2,\dotsc,n\}$, let $C_A^\g$ denote the set of
	permutations of the elements of $A$ with cycle-type $\g$.
	
	Let $A_1,\dotsc,A_p$ form a partition of $\{1,\dotsc,n\}$
	with $|A_k|=m_k$.
	Fix a finite set $\G$ of partitions 
	$\g$ with all parts $>1$.  We assume that $n$ and all $m_k$
	are large enough that  $C_n^\g\neq\es$ 
	and $C_{A_k}^\g\neq\es$  for all $\g\in\G$.  
	For $a^\g_1,\dotsc,a_p^\g,c^\g\in\RR$, 
	consider the Hamiltonian 
	\be\label{eq:H-MB}
	H_n^\mb=-n \sum_{\g\in\G}\Big(
	\sum_{k=1}^p \tfrac{a^\g_k}{|C_{A_k}^\g|}
	\sum_{\s\in C_{A_k}^\g} T_\s+
	\tfrac{c^\g}{|C_n^\g|}
	\sum_{\s\in C_n^\g} T_\s
	\Big),
	\ee
	and the partition function $Z_{n}^\mb(\b)=\tr_\VV[e^{-\b H^\mb_n}]$.
	Note that we have the scaling factor $n$ in front of \eqref{eq:H-MB}
	rather than $\tfrac1n$ as in \eqref{eq:H-AB}.  This is because the
	sizes of the conjugacy classes $C_A^\g$ depend on $n$,
	for example for transpositions we have $|C_n^{(2)}|=\binom{n}{2}$.
	
	The form of the Hamiltonian \eqref{eq:H-MB} means that spins at vertices in each block $A_k$ interact with each other through the many-body interaction $T_\s$ (as opposed to the pair-interaction $T_{i,j}=T_{(i,j)}$ before), with strength constants $a_k^\g$ dependent on the cycle type $\g$ of $\s$; as well as this, spins in all blocks together interact with each other similarly, this time with strength constants $c^\g$.
	
	The operators $T_\s$ appearing in \eqref{eq:H-MB}
	may all be written in terms of
	spin-matrices.  Indeed, for transpositions $\s=(i,j)$ this was
	discussed above, and for general $\s$ we may write $T_\s$ as
	a product of $T_{i,j}$'s.  However, we do not pursue an explicit
	formula for $T_\s$ in terms of spin-matrices.
	
	Our result about the free energy of this model is most compactly
	expressed in terms of positive semidefinite Hermitian 
	$r\times r$ matrices $X$.  For such a matrix, having eigenvalues
	$x_1,\dotsc,x_r\geq0$, we use the von Neuman entropy
\eqref{eq:neumann}.
	We have the following:
	
	\begin{theorem}\label{thm:FE-MB}
		Let $p\geq1$ be fixed,  and
		suppose that for all $k=1,\dotsc,p$ we have that 
		$m_k/n\to \rho_k\in(0,1)$ as $n\to\oo$.
		For the Hamiltonian \eqref{eq:H-MB}, we have that
		the free energy is given by
		\be
		%\Phi^\mb_\b(a_1,\dotsc,a_p,c):=
		\lim_{n\to\oo} \tfrac1n\log Z_{n}^\mb(\b)
		=\max \; \phi_\b( X_1,\dotsc,X_p),
		\ee
		where the maximum is taken over all  positive semidefinite Hermitian 
		$r\times r$ matrices $X_1,\dotsc,X_p$  with 
		$\tr[X_k]=\rho_k$, and where
		\be \label{eq:phi-mb}
		\begin{split}
			\phi_\b( X_1,\dotsc,X_p)
			&= \sum_{k=1}^p S(X_k)\\
			&+\b \sum_{\g\in\G} \Big(
			\sum_{k=1}^p a^\g_k\prod_{j\geq1} \tr[X_k^{\g_j}]+
			c^\g\prod_{j\geq1} \tr[(X_1+\dotsb+X_p)^{\g_j}]
			\Big).
		\end{split}
		\ee
	\end{theorem}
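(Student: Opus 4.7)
The plan is to extend the Schur--Weyl argument of Theorem \ref{thm:FE-AB} to an iterated branching from $S_n$ to $S_{m_1}\times\dotsb\times S_{m_p}$. Using \eqref{eq:swd} I would write $\VV=\bigoplus_\la U_\la\otimes V_\la$ and then restrict each $V_\la$, obtaining
\[
V_\la\big|_{S_{m_1}\times\dotsb\times S_{m_p}}=\bigoplus_{\mu^{(1)}\vdash m_1,\dotsc,\mu^{(p)}\vdash m_p} c^\la_{\mu^{(1)},\dotsc,\mu^{(p)}}\,V_{\mu^{(1)}}\otimes\dotsb\otimes V_{\mu^{(p)}},
\]
where $c^\la_{\mu^{(1)},\dotsc,\mu^{(p)}}$ is the iterated Littlewood--Richardson coefficient. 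Since $\sum_{\s\in C_n^\g}T_\s$ is a class sum, it is central in $\CC[S_n]$ and acts on $V_\la$ as $|C_n^\g|\chi^\la(\g)/\dim V_\la$; likewise each block term $\sum_{\s\in C_{A_k}^\g}T_\s$ acts as a scalar on $V_{\mu^{(k)}}$. Thus $\tr_\VV[e^{-\b H_n^\mb}]$ reduces to a finite sum indexed by tuples $(\la,\mu^{(1)},\dotsc,\mu^{(p)})$ of exponentially-sized summands, in direct analogy with \eqref{eq:Z-bi}.

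The key analytic input is the standard asymptotic for normalised characters: as $\mu\vdash m$ with $\mu/m\to\vec y$ a probability vector,
\[
\chi^\mu(\g\cup 1^{m-|\g|})/\dim V_\mu=\prod_{j\geq 1}p_{\g_j}(\vec y)\cdot(1+o(1)),
\]
where $p_k(\vec y)=\sum_i y_i^k$. For $\g=(2)$ this follows at once from the content formulas \eqref{eq:ct-sum} and \eqref{eq:ct-asy}; for general $\g$ it is a classical Vershik--Kerov / Biane result, provable by induction on $|\g|$ via Jucys--Murphy elements. Translating via the identifications $\mu^{(k)}/n\to$ spectrum of $X_k$ and $\la/n\to$ spectrum of $X_1+\dotsb+X_p$, the power sums become the traces $\prod_j\tr[X_k^{\g_j}]$ and $\prod_j\tr[(X_1+\dotsb+X_p)^{\g_j}]$ appearing in \eqref{eq:phi-mb}, up to normalisation constants depending on $\rho_k$ which match the $n$-prefactor in \eqref{eq:H-MB} and the scalings $|C^\g_{A_k}|$, $|C^\g_n|$.

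The positive support of the branching coefficients is obtained by iterating Knutson--Tao: $c^\la_{\mu^{(1)},\dotsc,\mu^{(p)}}>0$ if and only if there exist positive semidefinite Hermitian $r\times r$ matrices $X_1,\dotsc,X_p$ with the prescribed spectra such that $X_1+\dotsb+X_p$ has spectrum $\la$; moreover a polynomial upper bound gives $\log c^\la_{\mu^{(1)},\dotsc,\mu^{(p)}}/n\to 0$. Combined with the entropy asymptotic $\log d_{\mu^{(k)}}/n\to S(X_k)$ (as in \eqref{eq:d-asy}) and $\log\dim U_\la/n\to 0$ (from \eqref{eq:dimU}), this casts $Z_n^\mb(\b)$ as $\sum\exp\!\bigl(n\{\phi_\b(X_1,\dotsc,X_p)+o(1)\}\bigr)$ over discrete spectra satisfying the Horn-type constraint on the sum. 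A Laplace argument analogous to \eqref{eq:lim1} yields the claimed maximum, and the constraint on the spectrum of $X_1+\dotsb+X_p$ is eliminated as in \eqref{eq:maxXY} by observing that every admissible positive semidefinite matrix sum is realisable with the required spectra.

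The main obstacle is the character asymptotic for $\g$ of weight larger than $2$: for transpositions one simply reads it off the content formula, but for longer cycles one must either invoke or directly prove the general polynomial-in-power-sums formula. A secondary subtlety is the iterated application of Knutson--Tao for $p\geq 3$ factors, which follows inductively from the two-factor case but requires keeping track of intermediate partitions.
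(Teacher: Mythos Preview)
Your proposal is correct and follows essentially the same route as the paper. The paper likewise restricts $V_\la$ to $S_{m_1}\times\dotsb\times S_{m_p}$, invokes the multi-factor Horn criterion (citing Theorem~17 of \cite{fulton:horn} directly rather than iterating the two-factor case), quotes Thoma/Vershik--Kerov for the character asymptotic $\chi_\mu(\g)/d_\mu\to p_\g(\vec x)$, and proves the polynomial bound $c^\la_{\mu(1),\dotsc,\mu(p)}\leq (n+1)^{pr^2}$ via the rectification description of these coefficients from \cite{fulton:young}.
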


	Before proving Theorem \ref{thm:FE-MB} we discuss a few special cases.
	If we set $p=2$, $\G=\{(2)\}$ and 
	$a^{(2)}_1=(a-c)/2$, $a^{(2)}_2=(b-c)/2$ and 
	$c^{(2)}=c/2$, then 
	\be
	\phi_\b(X_1,X_2)= S(X_1)+S(X_2)+
	\tfrac\b2
	\tr\big[ a X_1^2+b X_2^2+ 2cX_1X_2  \big].
	\ee
	In fact, in this case we recover Theorem \ref{thm:FE-AB},
	i.e.\ we have  $\max\; \phi_\b(X_1,X_2)=\Phi^\ab_\b(a,b,c)$.
	For details, see the discussion around \eqref{eq:trace-ineq}.
	%Setting instead $\G=\{(k)\}$ would give a model with $k$-body
	%interaction.  
	
	If instead we set $p=1$ and all $a^\g_k=0$ then
	\eqref{eq:H-MB} becomes
	\be\label{eq:H-MB-hom}
	H_n^\mb=-n
	\sum_{\g\in\G} \tfrac{c^\g}{|C_n^\g|}
	\sum_{\s\in C_n^\g} T_\s.
	\ee
	We thus obtain a
	homogeneous model of many-body interaction on the complete graph
	$K_n$.  (In fact, \eqref{eq:H-MB-hom} is the image of a general
	central element of $\CC[S_n]$ under the representation $T$.)  
	In this case we get that
	\be
	\tfrac 1n \log Z_{\b,n}^\mb \to
	\max \Big(
	-\sum_{i=1}^r x_i\log x_i +
	\b \sum_{\g\in\G} c^\g p_{\g}(x_1,\dotsc,x_r)
	\Big),
	\ee
	where the maximum is over all
	$x_1,\dotsc,x_r$ satisfying $x_i\geq0$ and 
	$\sum_{i=1}^r x_i=1$, 
	and where 
	$p_\g(x_1,\dotsc,x_r)$ denotes the power-sum symmetric polynomial
	\be\label{eq:powersum}
	p_\g(x_1,\dotsc,x_r)=\prod_{j=1}^\ell (x_1^{\g_j}+\dotsb+x_r^{\g_j}). 
	\ee
	
	It seems likely that Theorems \ref{thm:TS-twoblock}
	and \ref{thm:mag} can be extended to multi-block cases, though we do
	not pursue such extensions here.

We now turn to the proof of  Theorem \ref{thm:FE-MB},
which follows a similar pattern to that of 
Theorem \ref{thm:FE-AB}.  We start by writing
\be
H^\mb_n=-nT\Big(
\sum_{\g\in\G}\Big[
\sum_{k=1}^p a^\g_k\a^\g_{A_k}
+ c^\g \a^\g_n
\Big]
\Big)
=-n\sum_{\g\in\G}\Big[
\sum_{k=1}^p a^\g_k T(\a^\g_{A_k})
+ c^\g T(\a^\g_n)
\Big]
\Big),
\ee
where $T$ is the representation of $\CC[S_n]$ on $\VV$ 
given in \eqref{eq:T}, and 
	\be
	\a^\g_{A_k}=\frac1{|C^\g_{A_k}|}
	\sum_{\s\in C^\g_{A_k}} \s \in\CC[S_{A_k}],
	\qquad
	\a^\g_{n}=\frac1{|C^\g_{n}|}
	\sum_{\s\in C^\g_{n}} \s \in\CC[S_{n}].
	\ee
	As in \eqref{eq:swd} we have a decomposition
	\be
	\VV\cong \bigoplus_{\la\vdash n,\ell(\la)\leq r}
	\dim(U_\la) V_\la.
	\ee
	Here we consider $\VV$ as an $\CC[S_n]$-module only
	(we do not need the $\GL_r(\CC)$-part since we consider only the free
	energy and not correlations).
	As a 
	$\CC[S_{m_1}\times\dotsb\times S_{m_p}]$-module, we have the
	decomposition 
	\be
	V_\la\cong\bigoplus_{\mu(1),\dotsc,\mu(p)} 
	c_{\mu(1),\dotsc,\mu(p)}^\la 
	V_{\mu(1)}\otimes\dotsb\otimes
	V_{\mu(p)},
	\ee
	which generalises \eqref{eq:LR-def}.
	Here $\mu(k)\vdash m_k$ for each $k$ and the multiplicities 
	$c_{\mu(1),\dotsc,\mu(p)}^\la$ are analogs of the Littlewood--Richardson
	coefficients $c_{\mu,\nu}^\la$ and have many similar properties.  In
	particular, a full analog of Horn's inequalities holds:
	$c_{\mu(1),\dotsc,\mu(p)}^\la>0$ if and only if there are
	Hermitian matrices $M(1),\dotsc,M(p)$ with spectra 
	$\mu(1),\dotsc,\mu(p)$ such that 
	$M(1)+\dotsb+M(p)$ has spectrum $\la$
	(see Theorem 17 of \cite{fulton:horn}).
	
	Let us next see how $T(\a^\g_{A_k})$ and $T(\a^\g_n)$ act on these subspaces $V_{\mu(k)}$.  For $m\leq n$
	and $C=C^\g_m$ the conjugacy class of $\g$ in $S_m$, 
	consider $\a=\frac1{|C|}\sum_{\s\in C}\s\in\CC[S_m]$.
	For $\mu\vdash m$, since $\a$ is central in $\CC[S_m]$, it acts on the
	irreducible $V_\mu$ as a scalar, and in fact we have
	\be
	\a|_{V_\mu}=\frac{\chi_\mu(\a)}{d_\mu} \Id_{V_\mu}
	=\frac{\chi_\mu(\g)}{d_\mu} \Id_{V_\mu},
	\ee
	where $\chi_\mu(\g)$ is the character of $V_\mu$ evaluated at any
	permutation of  cycle-type $\g$.   This leads to the following
	expression analogous to \eqref{eq:Z-bi}:
	\be\begin{split}
		Z^\mb_n=\sum_{\la\vdash n,\ell(\la)\leq r} \dim(U_\la)
		&\sum_{\mu(1),\dotsc,\mu(p)} 
		c_{\mu(1),\dotsc,\mu(p)}^\la 
		d_{\mu(1)}\dotsb d_{\mu(p)} \\
		&\cdot\exp\Big(
		n\b \sum_{\g\in\G} \Big[
		\sum_{k=1}^p a^\g_k \tfrac{\chi_{\mu(k)}(\g)}{d_{\mu(k)}} 
		+c^\g \tfrac{\chi_{\la}(\g)}{d_{\la}} 
		\Big]
		\Big).
	\end{split}\ee
	
	As before, the relevant scaling for the limit 
	$\lim_{n\to\oo} \tfrac1n\log Z^\mb_n$ is given by letting 
	$\la/n\to\vec z$ and 
	$\mu(k)/n\to\vec x(k)$ for all $k$.  Also as before, $\dim(U_\la)$ is
	negligible on the relevant scale, and the $d_{\mu(k)}$ obey the
	asymptotics of \eqref{eq:d-asy}.  Below, we prove that 
	$c_{\mu(1),\dotsc,\mu(p)}^\la\leq (n+1)^{pr^2}$ which is also too small
	to contribute to the limit.  
	
	What remains is to identify the limits of the expressions of the
	form $\frac{\chi_\mu(\g)}{d_\mu}$.  The latter limits are well-known
	in the asymptotic representation theory of the symmetric group:
	Thoma's Theorem and the Vershik--Kerov Theorem 
	(see e.g.\ \cite[Corollary 4.2 and Theorem 6.16]{BO}) imply that
	if $\mu/n\to \vec x=(x_1,\dotsc,x_r)$, then
	\be
	\frac{\chi_\mu(\g)}{d_\mu}\to 
	p_{\g}(x_1,\dotsc,x_r), 
	\ee
	where $p_\g$ is the power-sum symmetric polynomial given in
	\eqref{eq:powersum}.  Writing $\vec x(k)=\lim_{n\to\oo} \mu(k)/n$
	and $\vec z=\lim_{n\to\oo} \la/n$, we conclude that the contributing 
	$\vec x(k)$ and $\vec z$ are spectra  of %positive semidefinite
	Hermitian
	matrices $X_1,\dotsc,X_p$ and $Z=X_1+\dotsb+X_p$, respectively,
	where $\tr[X_k]=\rho_k$.
	Re-writing the free energy in terms of these matrices, 
	as in \eqref{eq:maxXY}
	and \eqref{eq:FE-WB-matrices}, 
	we obtain the claim \eqref{eq:phi-mb}.
	
	It remains to verify the bound $c_{\mu(1),\dotsc,\mu(p)}^\la\leq (n+1)^{pr^2}$.
	We  use the following 
	combinatorial description of $c_{\mu_1,\dotsc,\mu_p}^\la$ 
	which is mentioned just after 
	Proposition 13 of \cite{fulton:horn}.
	Form a skew
	shape $\nu$ by stacking $\mu(1),\dotsc,\mu(p)$ 
	from bottom left to top right,
	such that the lower left corner of $\mu({k})$ just touches the upper right
	corner of $\mu(k-1)$ as in Figure \ref{ytab-fig}.
	\begin{figure}
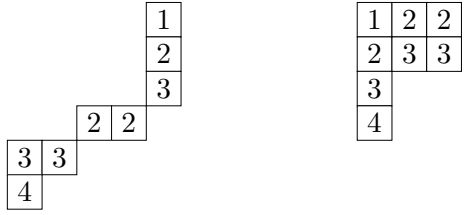

		\centering
		$\gyoung(::::;1,::::;2,::::;3,::;22,33,4)$\qquad\qquad\qquad
		$\young(122,233,3,4)$
		\caption{Left: A skew tableau with shape $\nu$ 
			formed from the three
			partitions $\mu(1)=(2,1)$, $\mu(2)=(2)$ and $\mu(3)=(1,1,1)$.
			Right: its rectification.}
		\label{ytab-fig}
	\end{figure}
	Fix any semistandard tableau 
	$\tau_\la$ of shape
	$\la$, to be concrete let us say that the first row of 
	$\tau_\la$ consists of
	$\la_1$ 1's, the second row of $\la_2$ 2's etc.  Then 
	$c_{\mu(1),\dotsc,\mu(p)}^\la$ is the number of semistandard tableaux 
	$\s_\nu$ of
	skew shape $\nu$ whose \emph{rectification} equals $\tau_\la$. 
	For a full description of the rectification, 
	see \cite[Section 1.2]{fulton:young}, but
	in brief terms the rectification is obtained by `sliding' the numbered
	boxes of $\s_\nu$ until a non-skew shape is obtained.  To see the
	claimed bound, note that in order to obtain the tableau $\tau_\la$,
	the number of boxes labelled 1 in $\nu$ must equal the number of boxes
	labelled 1 in $\la$, and similarly for labels 2, 3, etc.
	Thus, for each row of $\nu$ we have at most
	\[
	(\la_1+1)(\la_2+1)\dotsb(\la_r+1)\leq (n+1)^r
	\]
	choices of entries (from 0 to $\la_1$ 1's, from 0 to $\la_2$ 2's
	etc).  Since $\nu$ has at most $pr$ rows, the total number of choices
	is $\leq [(n+1)^r]^{pr}$, as claimed. \qed

	\appendix

	\section{The trace-inequality \eqref{eq:trace-ineq}}
	\label{sec:trXY}
	
	The inequality \eqref{eq:trace-ineq}
	appears e.g.\ in \cite[Prop.~9.H.1.g-h]{majorization}, but
	we give here  an almost self-contained proof based on Birkhoff's
	theorem, adapted from the discussion in \cite{elegant}.
	The
	problem is to maximise (respectively, minimise) $\tr[XY]$ subject to
	the condition that $X,Y$ are nonnegative definite Hermitian matrices
	with fixed spectra $x_1\geq x_2\geq \dotsb\geq x_r\geq 0$ and
	$y_1\geq y_2\geq \dotsb\geq y_r\geq 0$.  Equivalently, 
	since there are unitary matrices $U$ and $V$ such that 
	$U^*XU=D_x=\mathrm{diag}(x_1,\dotsc,x_r)$ and
	$V^*YV=D_x=\mathrm{diag}(x_1,\dotsc,x_r)$, the goal is to
	to extremise 
	\be
	\tr[UD_xU^*VD_yV^*]=\tr[D_xU^*VD_yV^*U]
	\ee
	over unitaries $U,V$.  Writing $W=U^*V$ we may equivalently extremise
	over the unitary $W$,
	\be
	\tr[D_x W D_yW^*]=\sum_{i,j=1}^r x_iw_{i,j}y_jw^*_{j,i}
	=\sum_{i,j=1}^r x_iy_j|w_{i,j}|^2.
	\ee
	Define the matrix $P=(p_{i,j})_{i,j=1}^r$ where
	$p_{i,j}=|w_{i,j}|^2$.  Since $W$ is unitary, $P$ is doubly
	stochastic (rows and columns sum to 1).  We have by the above
	\be
	\max_W \; \tr[D_x W D_yW^*]\geq 
	\max_P \sum_{i,j=1}^r x_iy_jp_{i,j},
	\ee
	where the second max is over doubly-stochastic matrices $P$ (and
	similarly for the min).  The function to be maximised on the
	right-hand-side is linear in $P$ and the set of doubly-stochastic
	matrices is convex and compact.  Thus the maximum (as well as the
	minimum) is attained at an extreme point of the set of
	doubly-stochastic matrices.  By Birkhoff's theorem  
	\cite[Theorem 2.A.2]{majorization}, the
	extreme points are the permutation matrices $\Pi$.  Since permutation
	matrices are real orthogonal (hence unitary) it follows that 
	\be
	\max_W \;\tr[D_x W D_yW^*]=\max_\Pi \;\tr[D_x \Pi D_y\Pi^*]
	\ee
	and similarly for the minimum.  Thus, we must only find the
	permutation $\pi$ which maximises or minimises the function
	\be
	\sum_{j=1}^r x_j y_{\pi(j)}.
	\ee
	The maximum is obtained for the identity permutation and the minimum
	for the reversal of $12\dotsc r$.  \qedhere

	\section{Equivalence of $Q_{i,j}$ and $P_{i,j}$ in the $\wb$-model}
	In this second appendix we study two representations of the walled
	Brauer algebra $\BB_{n,m}(r)$. We will prove in Lemma \ref{lem:iso_reps_WB} that they are isomorphic
	for all $r\ge2$. This will in particular give the equivalence of our
	$\wb$-model with the same model, but with each $Q_{i,j}$ replaced with
	$P_{i,j}$. More generally Lemma \ref{lem:iso_reps_WB} gives the same statement 
	on general graphs. To be precise, if $G=A\cup B$ 
	is any graph (with $A\cap B=\varnothing$), with $E_A$ the set of edges 
	between two vertices in $A$, $E_B$ similar, and $E_{AB}$ those between 
	a vertex of $A$ and a vertex of $B$, then for all $a,b,c\in\RR$, the following 
	two Hamiltonians are unitarily equivalent:
	\be\label{eq:AppendixB-2models-bipartite}
	    \begin{split}
	        H&=-\sum_{\{i,j\}\in E_A} aT_{i,j} - \sum_{\{i,j\}\in E_B} bT_{i,j} - \sum_{\{i,j\}\in E_{AB}} cP_{i,j} \\
	        H'&=-\sum_{\{i,j\}\in E_A} aT_{i,j} - \sum_{\{i,j\}\in E_B} bT_{i,j} - \sum_{\{i,j\}\in E_{AB}} cQ_{i,j}.
	    \end{split}
	\ee
	This in particular shows that the models with interactions $P_{i,j}$ and 
	$Q_{i,j}$ are equivalent on any bipartite graph; the equivalence of partition 
	functions was proved by Aizenman and Nachtergaele in \cite{an}. The same 
	statement (and in fact slightly stronger) holds on non-bipartite graphs, 
	but only for $r$ odd. Indeed,
	%(it also gives the corresponding statement on any bipartite graph).
	% and indeed, the equivalent model on any bipartite graph, not just the
	% complete bipartite graph. 
    \eqref{eq:AppendixB-2models-bipartite} is very similar to a statement 
    on the model \eqref{eq:ryan}: for any graph $G$ with edge set $E$, for 
    any $L_1,L_2\in\RR$, the following two Hamiltonians are unitarily equivalent 
    for $r$ odd:
	\be\label{eq:AppendixB-2models-complete}
	    \begin{split}
	        H&=-\sum_{\{i,j\}\in E} L_1T_{i,j}+L_2P_{i,j}\\
	        H'&=-\sum_{\{i,j\}\in E} L_1T_{i,j}+L_2Q_{i,j}.
	    \end{split}
	\ee
	This is proved with Lemma B.1 of \cite{Ryan21}, which is the equivalent of our Lemma \ref{lem:iso_reps_WB} below, but for the full Brauer algebra.\\
	
	The representations we consider are defined as follows.
	First, we let $|a\rangle$ denote the standard basis for $\CC^r$,
	indexed using $a\in\{-S,-S+1,\dotsc,S\}$ where $S=(r-1)/2$, and recall
	that $\VV=(\CC^r)^{\otimes n}$.
	Let $T:\BB_{n,m}(r)\to \End(\VV)$ satisfy
	\begin{equation}
	T(\overline{i,j})
	=
	Q_{i,j}, 
	\hspace{1cm}
	T(i,j)
	=
	T_{i,j},
	\end{equation}
	where we  recall that $T_{i,j}$ is the transposition operator, and $\langle
	a_i, a_j| Q_{i,j} |b_i, b_j \rangle = \delta_{a_i, a_j} \delta_{b_i,
		b_j}$.  Similarly, define $\tilde{T}: \BB_{n,m}(r)\to \End(\VV)$ by
	\begin{equation}\label{eqn:defn_of_B_n's_action}
	\tilde{T}(\overline{i,j})
	=
	P_{i,j}, 
	\hspace{1cm}
	\tilde{T}(i,j)
	=
	T_{i,j},
	\end{equation}
	where we recall that $\langle a_i, a_j| P_{i,j} |b_i, b_j \rangle =
	(-1)^{a_i-b_i}\delta_{a_i, -a_j} \delta_{b_i, -b_j}$.  
	\begin{lemma}\label{lem:iso_reps_WB}
		For all $r\ge2$, and all $n$, the representations $T$ and 
		$\tilde{T}$ of $\BB_{n,m}(r)$ are isomorphic via a unitary transformation.
	\end{lemma}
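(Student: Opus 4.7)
The plan is to construct a unitary $\mathbf{W}\in\End(\VV)$ with $\tilde T(g)=\mathbf{W}^{-1}T(g)\mathbf{W}$ for all $g\in\BB_{n,m}(r)$. Since $T$ and $\tilde T$ already agree on transpositions, and since $\BB_{n,m}(r)$ is generated by the elements $(i,j)$ and $(\overline{i,j})$, it suffices to arrange that (a) $\mathbf{W}^{-1}Q_{i,j}\mathbf{W}=P_{i,j}$ for every $1\le i\le m<j\le n$, and (b) $\mathbf{W}$ commutes with $T_{i,j}$ whenever $i,j$ both lie in $A=\{1,\dotsc,m\}$ or both lie in $B=\{m+1,\dotsc,n\}$. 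Guided by the $r=2$ computation given right after Theorem \ref{thm:FE-WB}, I will take $\mathbf{W}=\prod_{j\in B}W_j$, a product of single-site unitaries supported on the $B$-factors.

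The key step is to identify $W\in\End(\CC^r)$. A direct computation from \eqref{eq:Q} shows that, for any unitary $W$, the matrix entries of $W_j^{-1}Q_{i,j}W_j$ factor as $\overline{W_{\a_1,\a_2}}\cdot W_{\a_3,\a_4}$. Comparison with \eqref{eq:P} suggests $W$ should be anti-diagonal, and the natural choice is
\[
W|a\rangle = e^{i\pi a}\,|{-a}\rangle,\qquad a\in\{-S,-S+1,\dotsc,S\},
\]
which is manifestly unitary. Substituting yields $\overline{W_{\a_1,\a_2}}\cdot W_{\a_3,\a_4}=e^{i\pi(\a_3-\a_1)}\d_{\a_1,-\a_2}\d_{\a_3,-\a_4}$; since $\a_1$ and $\a_3$ are either both integer (when $r$ is odd) or both half-integer (when $r$ is even), their difference is always an integer and the phase reduces to $(-1)^{\a_1-\a_3}$, matching $P_{i,j}$ exactly. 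This establishes (a). The use of $e^{i\pi a}$ rather than $(-1)^a$ in the definition of $W$ is what handles both parities of $r$ uniformly, without any ambiguity in interpreting $(-1)^{\text{half-integer}}$.

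Property (b) is then immediate. For $i,j\in A$ there is nothing to check, as $\mathbf{W}$ acts as the identity on all $A$-tensor factors. For $i,j\in B$, one needs $T_{i,j}(W\otimes W)=(W\otimes W)T_{i,j}$ on the two factors at positions $i,j$, which holds for any $W$ because $W\otimes W$ is invariant under swap of the two factors. With (a) and (b) verified on the generators, the intertwining relation $\tilde T(g)=\mathbf{W}^{-1}T(g)\mathbf{W}$ extends to all of $\BB_{n,m}(r)$ by multiplicativity, finishing the proof. I do not expect a serious obstacle here; the only mildly delicate point is the parity discussion for the phase, and the chosen form of $W$ resolves it cleanly.
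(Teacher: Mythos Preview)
Your argument is correct and in fact cleaner than the paper's. Both proofs seek an intertwiner of the form $\alpha^{\otimes m}\otimes\gamma^{\otimes(n-m)}$; the paper keeps both $\alpha$ and $\gamma$ general, derives the condition that $\alpha^\intercal\gamma$ must equal a certain anti-diagonal sign matrix, then splits into cases (deferring $r$ odd to \cite{Ryan21} and building $\alpha,\gamma$ from $2\times2$ blocks for $r$ even). You instead take $\alpha=\Id$ from the outset and write down the single explicit unitary $W|a\rangle=e^{i\pi a}|{-a}\rangle$, which realises the required anti-diagonal matrix directly and handles both parities of $r$ at once via the observation that $a_1-a_3\in\ZZ$ regardless. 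The gain is uniformity and brevity; the paper's more general ansatz buys nothing extra here, since any solution with $\alpha\neq\Id$ can be reduced to yours by replacing $(\alpha,\gamma)$ with $(\Id,\alpha^\intercal\gamma)$. One cosmetic point: your displayed phase $e^{i\pi(\a_3-\a_1)}$ comes out as $e^{i\pi(\a_1-\a_3)}$ if one tracks the conjugation carefully, but since the exponent is an integer this makes no difference.
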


	\begin{proof}
		The proof follows closely that of Lemma B.1 of \cite{Ryan21}.
		For $r$ odd, the lemma actually follows from that result
		by restricting the two representations there to the walled Brauer
		algebra.  So let $r$ be even.
		%, meaning that $S$ is an odd multiple    of $\tfrac12$.   
		The elements $(i,j)$ and $(\overline{i,j})$
		generate the algebra $\BB_{n,m}(r)$, so we aim to find an
		invertible linear function $A: \VV \to\VV$ such that  
		\begin{equation}\label{eqn:isomorphic_reps_condition}
		A^{-1}T_{i,j}A = T_{i,j},
		\end{equation}
		for all $1\le i<j\le m$ and $m<i<j\le n$, and
		\be
		A^{-1}Q_{i,j}A = P_{i,j},
		\ee
		for all $1\le i\le m<j\le n$. By the Schur--Weyl duality for the
		general linear and symmetric groups \eqref{eq:swd}, the first
		condition holds if and only if $A = \alpha^{\otimes
			m}\otimes\gamma^{\otimes n-m}$ for some $\alpha,\gamma \in
		\GL_r(\CC)$. Then the second condition also holds if and only if
		$(\alpha\otimes\gamma)^{-1}Q_{i,j}(\alpha\otimes\gamma) = P_{i,j}$
		for all $1\le i\le m<j\le n$, which holds if and only if: 
		\begin{equation}\label{eq:bla1}
		\begin{split}
		(-1)^{a_i-b_i}\delta_{a_i, -a_j} \delta_{b_i, -b_j}
		&=
		\sum_{c_i,c_j,d_i,d_j} (\alpha^{-1})_{a_i,c_i} (\gamma^{-1})_{a_j,c_j} 
		\delta_{c_i, c_j} \delta_{d_i, d_j}
		\alpha_{d_i,b_i} \gamma_{d_j,b_j}\\
		&=
		\sum_{c,d} (\alpha^{-1})_{a_i, c} (\gamma^{-1})_{a_j, c}
		\alpha_{d,b_i} \gamma_{d,b_j}\\
		&=
		(\alpha^{-1} \gamma^{-\intercal})_{a_i, a_j}
		(\alpha^{\intercal } \gamma)_{b_i, b_j}.
		\end{split}
		\end{equation}
		Now recall that we assumed $r$ to be even, meaning that  $S$ and all
		the indices $a_i,a_j,b_i,b_j$ are odd multiples of $\tfrac12$.  Thus
		$(-1)^{a_i}=-(-1)^{-a_i}$ and \eqref{eq:bla1} holds if 
		%	Hence the two representations are isomorphic if there exist
		%       invertible $r\times r$ matrices $\alpha$, $\gamma$, such that 
		\begin{equation}\label{eq:antidiag}
		\alpha^\intercal \gamma=
		-(\g^\intercal \a)^{-1}
		=
		\begin{bmatrix}
		& & & & (-1)^{-S}\\
		& & & (-1)^{1-S} & \\
		& & \reflectbox{$\ddots$} & & \\
		& (-1)^{S-1} & & & \\
		(-1)^{S} & & & &
		\end{bmatrix}.
		\end{equation}
		The matrix on the right  in \eqref{eq:antidiag} 
		is an involution whose transpose is its negative, so it suffices to check
		this for $\a^\intercal \g$.  Further, the matrix
		consists 	of the block matrices 
		$(-1)^{r/2}\big[\begin{smallmatrix} 0 & {i} \\ 
		{-i}  & 0 \end{smallmatrix}\big]$ aligned along the antidiagonal, 
		where $i=\sqrt{-1}$. % and the $\pm$ depends on $S$.
		
		Such a pair $\alpha$, $\gamma$ exists: for example let
		\begin{equation*}
		g_1
		=
		\frac{1}{\sqrt{2}}
		\begin{bmatrix}
		i & i\\
		-1 & 1
		\end{bmatrix},
		\qquad 
		g_2
		=
		\frac{1}{\sqrt{2}}
		\begin{bmatrix}
		-1 & 1\\
		-i & -i
		\end{bmatrix},
		\end{equation*}
		take $\a$ to be
		block-antidiagonal with blocks $g_1$, and
		take $\g$ to be block-diagonal with blocks $(-1)^{r/2} g_2$.
		Since
		$g_1^\intercal g_2=\big[\begin{smallmatrix} 0 & {i} \\ 
		{-i}  & 0 \end{smallmatrix}\big]$, $\a^\intercal \g$ is as required.
		Further, since  both $\a$ and $\g$  are unitary,  so is $A$. 
		%This concludes the proof.
	\end{proof}
	
	We can further prove the following statement, that in the $S=1$ ($r=3$) case, under a certain choice of the isomorphism of representations, the spin matrices are anti-symmetric. This verifies that we can use Theorems \ref{thm:mag} and \ref{thm:TS-twoblock} on the $S=1$ ($r=3$) nematic model with magnetisation term given by a spin matrix $S^{(k)}$, $k=1,2,3$, at each vertex, as noted at the end of Section \ref{sec:heuristics}.
	
	\begin{lemma}\label{lem:AppendixB-mag}
	    For all $k=1,2,3$, there exists a (unitary) isomorphism $\psi_n=\psi^{\otimes n}$ of the representations $T$ and $\tilde{T}$ of $\BB_{n,m}(3)$ (with $\psi_n^{-1}\tilde{T}(b)\psi_n=T(b)$ for all $b\in\BB_{n,m}(3)$), such that $\psi_n^{-1}S^{(k)}\psi_n$ is anti-symmetric (its transpose is its negative).
	\end{lemma}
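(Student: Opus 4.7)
The plan is to exhibit an explicit unitary $\psi \in U(3)$ such that $\psi_n = \psi^{\otimes n}$ intertwines $T$ and $\tilde T$ and simultaneously makes $\psi^{-1}S^{(k)}\psi$ antisymmetric for all $k=1,2,3$ (a slightly stronger statement than the lemma asks). The starting point is to repeat the derivation in the proof of Lemma~\ref{lem:iso_reps_WB} with $\alpha=\gamma=\psi$; this reduces the intertwining $\psi_n^{-1}\tilde T(b)\psi_n = T(b)$ to a single-site condition on $\psi^\intercal \psi$, namely that $\psi^\intercal \psi$ is an antidiagonal matrix with a specific pattern of phases on its antidiagonal. For $r=3$ a direct calculation shows that this is satisfied by $\psi^\intercal \psi = \theta$, where $\theta := e^{-i\pi S^{(2)}}$ denotes the matrix of the $\pi$-rotation about the $y$-axis; explicitly, $\theta_{m',m} = (-1)^{1+m}\delta_{m',-m}$, which is exactly the antidiagonal matrix required.

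Next I would construct $\psi$ as the symmetric unitary square root of $\theta$, namely $\psi := P_+ + iP_-$, where $P_\pm$ are the spectral projections of $\theta$ onto its $\pm 1$-eigenspaces. A short check gives $\psi^\intercal = \psi$, $\psi\psi^*=I$, and $\psi^2 = \theta$, so indeed $\psi^\intercal\psi=\theta$, and $\psi_n = \psi^{\otimes n}$ is a unitary isomorphism of $T$ and $\tilde T$. The main pitfall I anticipate here is the tempting ``Lie-theoretic'' choice $\psi = e^{-i\pi S^{(2)}/2}$: since $S^{(2)}$ is purely imaginary antisymmetric, this matrix turns out to be real orthogonal, whence $\psi^\intercal\psi = I \neq \theta$, so it does \emph{not} satisfy the intertwining. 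The symmetric spectral square root is essential.

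Finally, to verify antisymmetry of $\psi^{-1}S^{(k)}\psi$, I observe that since $\psi$ is symmetric we also have $\psi\psi^\intercal = \psi^2 = \theta$, and the condition $(\psi^{-1}S^{(k)}\psi)^\intercal = -\psi^{-1}S^{(k)}\psi$ rearranges to $S^{(k)}\theta + \theta S^{(k)\intercal}=0$. Conjugation by $\theta$ implements the $\pi$-rotation of $\vec S$ about the $y$-axis, so $\theta S^{(k)}\theta^{-1} = -S^{(k)}$ for $k=1,3$ and $\theta S^{(2)}\theta^{-1}=S^{(2)}$; combined with the transpose identities $S^{(1)\intercal}=S^{(1)}$, $S^{(2)\intercal}=-S^{(2)}$, $S^{(3)\intercal}=S^{(3)}$, the identity $S^{(k)}\theta=-\theta S^{(k)\intercal}$ follows immediately in each case, completing the plan.
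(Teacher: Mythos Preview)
Your argument is correct, and it is genuinely different from the paper's. The paper simply writes down an explicit $3\times 3$ unitary (borrowed from \cite{Ryan21}) and asserts that one can verify by direct computation, using the explicit spin-$1$ matrices, both that $\psi^{\otimes n}$ intertwines $T$ and $\tilde T$ and that each $\psi^{-1}S^{(k)}\psi$ is antisymmetric; no structural reason is offered. Your route instead builds $\psi$ conceptually as the symmetric unitary square root $\psi=P_++iP_-$ of the $\pi$-rotation $\theta=e^{-i\pi S^{(2)}}$, reduces the intertwining to the single-site identity $\psi^2=\theta$, and then deduces antisymmetry from the rotation law $\theta S^{(k)}\theta^{-1}=\pm S^{(k)}$ together with $S^{(k)\intercal}=\pm S^{(k)}$. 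This is cleaner, handles all three $k$ with one $\psi$ and one argument, and makes transparent \emph{why} the construction works (it would extend verbatim to any odd $r$). Your warning about the naive choice $e^{-i\pi S^{(2)}/2}$ being real orthogonal, hence failing $\psi^\intercal\psi=\theta$, is a nice touch.

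One small remark to tighten the write-up: if you literally repeat the computation \eqref{eq:bla1} from Lemma~\ref{lem:iso_reps_WB} with $\alpha=\gamma=\psi$, the antidiagonal matrix that falls out has entries $(-1)^{m}\delta_{m',-m}$, i.e.\ $-\theta$ rather than your $\theta$ (and one should also keep track of the direction $\psi_n^{-1}\tilde T\psi_n=T$ versus $\psi_n^{-1}T\psi_n=\tilde T$). This does not affect your proof, because the factorisation in \eqref{eq:bla1} is only determined up to a scalar and because your $\psi$ is symmetric, so $\psi^\intercal\psi=\psi\psi^\intercal=\psi^2$; any unimodular multiple of $\theta$ works equally well. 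It is worth saying this explicitly so the reader does not get snagged on the sign.
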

	\begin{proof}
	    In Lemma \ref{lem:iso_reps_WB}, we showed that representations $T$ and $\tilde{T}$ of $\BB_{n,m}(3)$ are isomorphic. In particular, since $r=3$ odd, we used the Lemma B.1 of \cite{Ryan21}. In that Lemma, one found that a valid isomorphism $\psi_n$ was given by $\psi_n=\psi^{\otimes n}$, where $\psi$ is a $3\times3$ (unitary) matrix
		\be
	    \psi
	    =
		\begin{bmatrix}
		\tfrac{1}{\sqrt{2}} & 0 & \tfrac{i}{\sqrt{2}}\\
		0 & 1 & 0 \\
		\tfrac{-1}{\sqrt{2}} & 0 & \tfrac{i}{\sqrt{2}} \\
		\end{bmatrix},
		\ee
		where $i=\sqrt{-1}$. One then can verify the required identities directly, using the explicit spin matrices
		\be
	    S^{(1)}
	    =
	    \frac{1}{\sqrt{2}}
		\begin{bmatrix}
		0 & 1 & 0\\
		1 & 0 & 1 \\
		0 & 1 & 0 \\
		\end{bmatrix},
		\
		S^{(2)}
	    =
	    \frac{1}{i\sqrt{2}}
		\begin{bmatrix}
		0 & 1 & 0\\
		-1 & 0 & 1 \\
		0 & -1 & 0 \\
		\end{bmatrix},
		\
		S^{(3)}
	    =
		\begin{bmatrix}
		1 & 0 & 0\\
		0 & 0 & 0 \\
		0 & 0 & -1 \\
		\end{bmatrix}.
		\ee
	\end{proof}


\begin{thebibliography}{99}
		
		\bibitem{ADCW}
		M.\ Aizenman, H.\ Duminil-Copin and S.\ Warzel,
		{\em Dimerization and N\'eel order in different quantum spin chains
			through  a shared loop representation}, 
		Ann.\ Henri Poincar\'e 21 (2020), 2737--2774.
		
		\bibitem{an}
		M.~Aizenman and B.~Nachtergaele,
		{\em Geometric aspects of quantum spin states},
		Comm.\ Math.\ Phys.\ 164 (1994), 17--63.
		
		
		\bibitem{aldous-fill}
		D.\ Aldous and J.\ A.\ Fill,
		{\em Reversible Markov chains and random walks on graphs},
		Unfinished monograph,  2002, available 
		at\\ {http://www.stat.berkeley.edu/$\sim$aldous/RWG/book.html}.
		
		\bibitem{A-K-13}
		G.\ Alon and G.\ Kozma,
		{\em The probability of long cycles in interchange processes},
		Duke Math.\ J.\ 162 (2013), 1567--1585.
		
		\bibitem{A-K-18}
		G.\ Alon and G.\ Kozma,
		{\em The mean-field quantum Heisenberg ferromagnet via representation theory},
		Annales de l'Institut Henri Poincaré, Probabilités et Statistiques. Vol. 57. No. 3. Institut Henri Poincaré (2021).
		
		\bibitem{benkart-et-al}
		G.\ Benkart, M.\ Chakrabarti, T.\ Halverson, R.\ Leduc, C.\ Y.\ Lee and J.\ Stroomer,
		{\em Tensor product representations of general linear groups and 
			their connections with {B}rauer algebras},
		J.\ Algebra 166 (1994), 529--567.
		
		\bibitem{B-K-15}
		N.\ Berestycki and G.\ Kozma,
		{\em Cycle structure of the interchange process and representation theory},
		Bull.\ Soc.\ Math.\ France 143 (2015), 265--280.
		
		\bibitem{Bjo16}
		J.\ E.\ Bj\"ornberg,
		{\em The free energy in a class of quantum spin systems and
			interchange processes},
		J.\ Math.\ Phys.\ 
		57 (2016).
		
		\bibitem{BFU20}
		J.\ E.\ Bj\"ornberg, J.\ Fr\"ohlich and D.\ Ueltschi,
		{\em Quantum spins and random loops on the complete graph},
		Comm.\ Math.\ Phys.\ 375 (2019), 1629--1663.
		
		
		\bibitem{BMNU}
		J.\ E.\ Bj\"ornberg, P.\ M\"uhlbacher, B.\ Nachtergaele and D.\ Ueltschi, 
		{\em
			Dimerization in quantum spin chains with $O(n)$ symmetry}, 
		Comm.\ Math.\ Phys.\ (2021).
		
		
		\bibitem{BO}
		A.\ Borodin and G.\ Olshanski,
		{\em Representations of the infinite symmetric group},
		Cambridge University Press (2017).
		
		\bibitem{seiringer}
		M.\ Correggi, A.\ Giuliani and R.\ Seiringer,
		{\em
			Validity of the spin-wave approximation for the free 
			energy of the {H}eisenberg ferromagnet},
		Comm.\ Math.\ Phys.\ 339 (2015), 279--307.
		
		\bibitem{cox07}
		A.\ Cox, M.\ De Visscher, S.\ Doty and P.\ Martin,
		{\em On the blocks of the walled Brauer algebra}, 
		J.\ Algebra 320 (2008), 169--212.
		
		\bibitem{etingof}
		P.\ I.\ Etingof et al., {\em Introduction to representation theory},
		American Mathematical Society, 2011.
		
		
		\bibitem{fulton:young}
		W.\ Fulton,
		{\em Young tableaux: with applications to representation theory and
			geometry},
		Cambridge University Press, 1997.
		
		\bibitem{fulton:horn}
		W.\ Fulton,
		{\em Eigenvalues, invariant factors, highest weights, and Schubert
			calculus},
		Bull.\ Amer.\ Math.\ Soc.\ 37 (2000), 209--249.
		
		
		\bibitem{fulton-harris}
		W.~Fulton and J.~Harris,
		{\em Representation theory,}
		Springer, 1991.
		
		
		\bibitem{harris}
		T.\ E.\ Harris,
		{\em Nearest-neighbor Markov interaction processes on multidimensional
			lattices}, Adv.\ Math.\ 
		9  (1972),  66--89.
		
		\bibitem{heisenberg}
		W.\ Heisenberg, {\em Zur {T}heorie des {F}erromagnetismus},
		Z.\ Physik 49 (1928), 619--636.

\bibitem{knutson-tao}
A. Knutson and T. Tao. {\em The honeycomb model of 
$\GL_n(\CC)$ tensor products {I}: Proof of the saturation conjecture.} 
Journal of the American Mathematical Society 12.4 (1999): 1055-1090.
		
		\bibitem{macdonald}
		I.\ G.\ Macdonald, {\em Symmetric functions and Hall polynomials},
		Oxford University Press, 1998.
		
		\bibitem{majorization}
		A.\ W.\ Marshall, I.\ Olkin and B.\ C.\ Arnold,
		{\em Inequalities: theory of majorization and its applications},
		Academic Press, 1979.
		
		
		\bibitem{nikitin}
		P.\ P.\ Nikitin,
		{\em The centralizer algebra of the diagonal action of 
			the group $GL_n(\CC)$ in a mixed tensor space},
		J.\ Math.\ Sci. 141 (2007), 1479--1493.
		
		\bibitem{penrose}
		O.~Penrose,
		{\em Bose--Einstein condensation in an exactly soluble system of 
			interacting particles},
		J.\ Stat.\ Phys.\ 63 (1991),  761--781.
		
		\bibitem{powers}
		R.\ T.\ Powers, {\em Heisenberg model and a random walk
			on the permutation group}, Lett.\ Math.\ Phys.\  1 (1976), 125--130.
		
		\bibitem{Ryan21}
		K.\ Ryan,
		{\em On a class of orthogonal-invariant quantum spin systems
			on the complete graph},
		arXiv:2011.07007
		
		
		
		\bibitem{stembridge}
		J.\ Stembridge,
		{\em Rational Tableaux and the tensor algebra of gl$_n$},
		J.\  Combin.\ Theory  Ser.\ A.\ 46 (1987), 79--120.
		
		\bibitem{toth-bec}
		B.~T\'oth,
		{\em Phase transition in an interacting {B}ose system.
			An application of the theory of {V}entsel' and {F}riedlin},
		J.\ Stat.\ Phys.\ 61 (1990), 749--764.
		
		\bibitem{toth-93}
		B.~T\'oth,
		{\em Improved lower bound on the thermodynamic 
			pressure of the spin $1/2$ {H}eisenberg ferromagnet},
		Lett.\ Math.\ Phys.\ 28 (1993), 75--84.
		
		\bibitem{ueltschi}
		D.\ Ueltschi, 
		{\em Random loop representations for quantum spin systems}, 
		J.\ Math.\ Phys.\ 54 (2013), 083301.

\bibitem{ueltschi-curious}
D. Ueltschi,
{\em Ferromagnetism, antiferromagnetism, and the curious nematic phase
 of $S = 1$ quantum spin systems},
Physical Review E 91.4 (2015)

		\bibitem{elegant}
		E.\ Yang,
		{\em Maximizing the trace in an elegant way},    
		{Mathematics Stack Exchange},
		{https://math.stackexchange.com/q/2558297}
		(version: 2020-08-25).
		
		
		
		
		
		
		
		
		
	\end{thebibliography}
\end{document}